\documentclass[]{article}

\usepackage{amsmath,amssymb,amsthm,amsfonts}   %%
\numberwithin{equation}{section}   %%%%%%%%%%%%%%
\usepackage{physics}   %%%%%%%%%%%%%%%%%%%%%%%%%%
\usepackage[english]{babel}   %%%%%%%%%%%%%%%%%%%
\usepackage[margin=3cm]{geometry}   %%%%%%%%%%%%%
\usepackage{csquotes}   %%%%%%%%%%%%%%%%%%%%%%%%%
\usepackage{mathtools}   %%%%%%%%%%%%%%%%%%%%%%%%
\usepackage{relsize}   %%%  mathlarger{}    %%%%%
\usepackage{indentfirst}  %% first paragrph %%%%%
\usepackage{thm-restate}   %%%%%%%%%%%%%%%%%%%%%%
\usepackage{mathrsfs}   %%%%%%%%%%%%%%%%%%%%%%%%%
\usepackage[dvipsnames]{xcolor}   %%%%%%%%%%%%%%%
\usepackage{paracol}    %%%%%%%%%%%%%%%%%%%%%%%%%
\usepackage{enumitem}    %%%%%%%%%%%%%%%%%%%%%%%%
\usepackage{soul}       %%%%%%%%%%%%%%%%%%%%%%%%%
\usepackage{aligned-overset}    %%%%%%%%%%%%%%%%%
\usepackage[thinc]{esdiff}  %%%%%%%%%%%%%%%%%%%%%
\usepackage{upgreek}    %%%%%%%%%%%%%%%%%%%%%%%%%
\usepackage{datetime}   %%%%%%%%%%%%%%%%%%%%%%%%%
\usepackage{tensor}     %%%%%%%%%%%%%%%%%%%%%%%%%
\usepackage{caption}    %%%%%%%%%%%%%%%%%%%%%%%%%
\usepackage{subcaption} %%%%%%%%%%%%%%%%%%%%%%%%%
\usepackage{appendix}   %%%%%%%%%%%%%%%%%%%%%%%%%
\allowdisplaybreaks

\usepackage[dvipsnames]{xcolor}     %%%%%%%%%%%%%
\usepackage{hyperref}	            %%%%%%%%%%%%%
\newcommand\myshade{85}         %%%%%%%%%%%%%%%%%
\colorlet{mylinkcolor}{violet}  %%%%%%%%%%%%%%%%%
\colorlet{mycitecolor}{YellowOrange}    %%%%%%%%%
\colorlet{myurlcolor}{Aquamarine}       %%%%%%%%%
\hypersetup{            %%%%%%%%%%%%%%%%%%%%%%%%%
    linkcolor  = blue!\myshade!black,      %%%%%%
    citecolor  = mycitecolor!\myshade!black,   %%
    urlcolor   = myurlcolor!\myshade!black, %%%%%
    colorlinks = true,      %%%%%%%%%%%%%%%%%%%%%
    hypertexnames=false,
}               %%%%%%%%%%%%%%%%%%%%%%%%%%%%%%%%%
\usepackage{cleveref}           %%%%%
\usepackage[                                %%%%%
style = alphabetic,                         %%%%%
maxbibnames=9,maxcitenames=9,               %%%%%
doi=true,isbn=true,giveninits=true,         %%%%%
backend=biber,                              %%%%%
url=false,arxiv=true]{biblatex}             %%%%%
\theoremstyle{plain}   %%%%%%%%%%%%%%%%%%%%%%%%%%
\crefname{thm}{theorem}{theorems}

\theoremstyle{plain}   %%%%%%%%%%%%%%%%%%%%%%%%%%
\crefname{assumption}{assumption}{assumptions}
\newtheorem{lemma}{Lemma}[section]   %%%%%%%%%%%%
\crefname{lemma}{lemma}{lemmas}
\newtheorem{theorem}{Theorem}[section]  %%%%%%%%%
\crefname{theorem}{theorem}{theorems}

\newtheorem{definition}[lemma]{Definition}    %
\newtheorem{prop}[lemma]{Proposition}   %%%%%%%%%
\crefname{prop}{proposition}{propositions}
\newtheorem{cor}[lemma]{Corollary}   %%%%%%%%%%%%
\newtheorem{remark}[lemma]{Remark}   %%%%%%%%%%%%
\newtheorem{dfn}{Definition}    %%%%%%%%%%%%%%%%%   
\usepackage{pageslts}   %%%%%%%%%%%%%%%%%%%%%%%%%
\usepackage{fancyhdr}   %%%%%%%%%%%%%%%%%%%%%%%%%
\newtheoremstyle{informalthmstyle}%
	{6pt}   % Space above
	{6pt}   % Space below
	{\normalfont}   % Body font
	{}      % Indent amount
	{\bfseries}   % Head font
	{.}     % Punctuation after head
	{0.5em} % Space after head
	{}      % Head spec

\theoremstyle{informalthmstyle}

\newtheoremstyle{examplestyle}%
  {3pt}{3pt}% Space above/below
  {\normalfont}% Body font
  {}% Indent
  {\bfseries}% Head font
  {.}% Punctuation after head
  {0.5em}% Space after head
  {}% Head spec
\theoremstyle{examplestyle}
\newtheorem{example}{Example}
\crefname{example}{example}{Examples}

\crefname{condition}{condition}{conditions}
\Crefname{condition}{Condition}{Conditions}

\usepackage{xparse} %%%%%%%%%%%%%%%%%%%%%%%%	new mb.., mc.., ms       %%%     
\ExplSyntaxOn       %%%%%%%%%%%%%%%%%%%%%%%%                             %%%
\NewDocumentCommand{\definealphabet}{mmmm}{%                             %%%
  \int_step_inline:nnn { `#3 } { `#4 }{%                                 %%%
    \cs_new_protected:cpx { #1 \char_generate:nn { ##1 }{ 11 } }{%       %%%
      \exp_not:N #2 { \char_generate:nn { ##1 } { 11 } }}}}              %%%           
\ExplSyntaxOff      %%%%%%%%%%%%%%%%%%%%%%%%                             %%%
\definealphabet{mb}{\mathbb}{A}{Z}      %%%%%%%%                         %%%  
\definealphabet{mc}{\mathcal}{A}{Z}     %%%%%%%%                         %%%
\definealphabet{scr}{\mathscr}{A}{Z}    %%%%%%%%                         %%%
\definealphabet{mf}{\mathfrak}{A}{Z}    %%%%%%%%                         %%%
\DeclarePairedDelimiterX{\corr}[2]{\mathrm{corr}(}{)}{#1,#2}       %%%%%%%%%%%
\DeclarePairedDelimiterX{\cov}[2]{\mathrm{cov}(}{)}{#1,#2}         %%%%%%%%%%%
\DeclarePairedDelimiterX{\dis}[2]{\mathrm{d}(}{)}{#1, #2} %%%%%%%%%%%%%%%%%%%%
\DeclareMathOperator{\pr}{\mathsf{pr}}           %%%%%%%%%   text{Pr} probability  %%
\newcommand{\matrices}{\mbM_d}
\DeclarePairedDelimiterX{\inner}[2]{\langle}{\rangle}{#1| #2} %%% <1,  2> %%
\renewcommand{\tr}[1]{\operatorname{Tr}\left(#1\right)} %   tr[] sclaes %%%%

\let\oldker\ker         %%%%%%%%%%%%%%%%%%%%%%%%%%%%%%%%%%%%%%%%%%%%%%%%%%%%
\renewcommand{\ker}[1]{\oldker\left(#1\right)}  %%%%%%%%%%%%%%%%%%%%%%%%%%%%
\newcommand{\tracezero}{\mathsf{H}_0}

\title{Asymptotic Replacement for Quantum Channel Products with Applications to Inhomogeneous Matrix Product States}
\usepackage{authblk}

\author[]{Lubashan Pathirana\thanks{lpk@math.ku.dk}}
\affil[]{Department of Mathematical Sciences and QMATH, University of Copenhagen, Denmark.}
\date{}

\begin{document}
\pagenumbering{arabic}
\lhead{\thepage}
\maketitle
\vspace{-1cm}

%%%%%%%%%%%%%%%%%%%%%%%%%%%%%%%%%%%%%%%%%%%%%%%%%%%%%%%%%%%%%%%%%%%%%%%%%%%%
%%%%%%%%%%%%%%%%%%%%%%%%%%%%%%%%%%%%%%%%%%%%%%%%%%%%%%%%%%%%%%%%%%%%%%%%%%%%

\begin{abstract}

    We develop a product-level trace-Dobrushin theory for finite-dimensional quantum channel products and apply it to deterministic and stationary random inhomogeneous matrix product states in left-canonical CPTP gauge.
    For a product of channels, the centered trace-Dobrushin coefficient quantifies the residual dependence on the input state, and its decay is the criterion for trace-norm forgetting.
    In the deterministic setting, this decay is equivalent to asymptotic replacement by a moving replacement channel.
    For two-sided products, pullback forgetting produces a unique boundary state,  which determines the canonical replacement family.
    For stationary random CPTP cocycles, submultiplicativity of the product coefficient yields a trace-Dobrushin Lyapunov exponent.
    We prove that the almost sure negativity of this exponent is equivalent to quenched trace-norm memory loss and gives exponential forward and pullback convergence to a unique dynamically stationary random replacement channel.
    When the \(\varrho\)-mixing profile of the channel environment tends to zero, we obtain annealed super-polynomial estimates, while independence gives annealed exponential estimates.
    Finally, we transfer these estimates to inhomogeneous matrix product states whose auxiliary transfer maps are CPTP.
    These channel estimates transfer to deterministic and stationary random inhomogeneous MPS, giving infinite-volume limits of trace-closed finite-volume states, quantitative boundary stability, and correlation bounds governed by the same auxiliary product coefficients.

\end{abstract}

%%%%%%%%%%%%%%%%%%%%%%%%%%%%%%%%%%%%%%%%%%%%%%%%%%%%%%%%%%%%%%%%%%%%%%%%%%%%
%%%%%%%%%%%%%%%%%%%%%%%%%%%%%%%%%%%%%%%%%%%%%%%%%%%%%%%%%%%%%%%%%%%%%%%%%%%%

 % {\hypersetup{linkcolor=Aquamarine!65!black}\tableofcontents}

%%%%%%%%%%%%%%%%%%%%%%%%%%%%%%%%%%%%%%%%%%%%%%%%%%%%%%%%%%%%%%%%%%%%%%%%%%%%
%%%%%%%%%%%%%%%%%%%%%%%%%%%%%%%%%%%%%%%%%%%%%%%%%%%%%%%%%%%%%%%%%%%%%%%%%%%%

\section{Introduction}
\label{sec:introduction}

%%%%%%%%%%%%%%%%%%%%%%%%%%%%%%%%%%%%%%%%%%%%%%%%%%%%%%%%%%%%%%%%%%%%%%%%%%%%
%%%%%%%%%%%%%%%%%%%%%%%%%%%%%%%%%%%%%%%%%%%%%%%%%%%%%%%%%%%%%%%%%%%%%%%%%%%%

Quantum channels are the basic dynamical objects of finite-dimensional quantum information theory.
If \(\mathcal H\simeq \mbC^d\), a quantum state is a density matrix
\[
	\mcS_d
	:=
	\{\rho\in M_d(\mbC): \rho\geq 0,\ \tr{\rho}=1\}.
\]
A quantum channel is a linear map
\[
	\Phi:M_d(\mbC)\longrightarrow M_d(\mbC)
\]
which is completely positive and trace preserving (CPTP).
Equivalently, by the Choi--Kraus theorem, \(\Phi\) admits a representation
\[
	\Phi(X)
	=
	\sum_{i=1}^r K_iXK_i^*,
	\qquad
	\sum_{i=1}^r K_i^*K_i=I.
\]
Here $K^*$ denotes the conjugate transpose of the matrix $K$. 
Such maps describe the evolution of finite-dimensional open quantum systems and the state transformations available in quantum information theory.
They are the noncommutative analogues of stochastic matrices: trace preservation is conservation of total mass, while complete positivity is stability under extension by an arbitrary finite-dimensional reference system \cite{Stinespring1955,Choi1975,Kraus1983,NielsenChuang2000,Watrous2018}.

The long-time behavior of quantum channels is a central problem in quantum information theory and mathematical physics.
For a single homogeneous channel \(\Phi\), one asks whether repeated application of \(\Phi\) erases the initial state:
\[
	\norm{\Phi^n(\rho)-\Phi^n(\sigma)}_1
	\longrightarrow 0,
	\qquad
	\rho,\sigma\in\mcS_d,
\]
where $\norm{\,\cdot\,}$ denotes the trace-norm. 
Under the usual aperiodicity or primitivity assumptions, this loss of memory is equivalent to convergence toward a unique stationary state,
\[
	\Phi^n(\rho)
	\longrightarrow
	\rho_*,
	\qquad
	\Phi(\rho_*)=\rho_*.
\]
This homogeneous theory is closely related to the spectral theory of completely positive maps, quantum Perron--Frobenius theory, and the distinction between ergodicity, mean ergodicity, mixing, and primitivity \cite{EvansHoeghKrohn1978,FannesNachtergaeleWerner1992,BurgarthEtAl2013,CarbonePautrat2016}.

The present paper studies a more flexible problem.
We consider products of quantum channels, which need not be identical.
Given a time interval \(\mcI\subseteq\mbZ\) and a sequence of channels \((\Phi_t)_{t\in\mcI}\), we write
\[
	\Phi_{t:s}
	:=
	\Phi_{t-1}\circ\Phi_{t-2}\circ\cdots\circ\Phi_s,
	\qquad
	s<t.
\]
Such products arise when the noise mechanism changes in time, when an open system is driven by a time-dependent environment, and when site-dependent matrix product state tensors are represented through finite-dimensional auxiliary transfer channels.
They also arise in stationary random models, where the channel at time \(n\) is sampled along a probability-preserving base system,
\[
	\Phi_{\theta^n\omega},
	\qquad
	(\Omega,\mcF,\pr,\theta).
\]
The corresponding random products form the cocycle
\[
	\Phi_\omega^{(n)}
	:=
	\Phi_{\theta^{n-1}\omega}
	\circ\cdots\circ
	\Phi_\omega.
\]
Random and time-inhomogeneous quantum processes of this type have been studied from several complementary viewpoints, including random quantum cocycles, ergodic quantum processes, and stationary limit theorems \cite{asym,PS23,MovassaghSchenker2021,MovassaghSchenker2022}.

The main feature of the inhomogeneous setting is that there is usually no single limiting state.
Even if the product forgets its initial state, the output may converge toward a moving state rather than toward a fixed density matrix.
Thus, the natural limiting object is not necessarily a fixed channel.
It is a family of replacement channels.
For a state \(\rho\in\mcS_d\), the replacement channel with output \(\rho\) is
\[
	R_\rho(X)
	:=
	\tr{X}\rho.
\]
It erases the input and replaces it by \(\rho\).
We say, informally, that a family of channel products is asymptotically replacing when it converges in induced trace norm to an appropriate family of replacement channels.

The coefficient that detects this replacement behavior is the centered trace-Dobrushin coefficient.
For a positive trace-preserving map \(\Phi\), define
\[
	\kappa_{\rm tr}(\Phi)
	:=
	\sup_{\substack{X=X^*\\ \tr{X}=0\\ X\neq0}}
	\dfrac{\norm{\Phi(X)}_1}{\norm{X}_1}.
\]
Equivalently, by \Cref{prop:dobr},
\[
	\kappa_{\rm tr}(\Phi)
	=
	\sup_{\substack{\rho,\sigma\in\mcS_d\\ \rho\neq\sigma}}
	\dfrac{\norm{\Phi(\rho)-\Phi(\sigma)}_1}{\norm{\rho-\sigma}_1}
	=
	\dfrac{1}{2}
	\sup_{\rho,\sigma\in\mcS_d}
	\norm{\Phi(\rho)-\Phi(\sigma)}_1.
\]
Thus \(\kappa_{\rm tr}(\Phi)\) is the sharp trace-distance contraction coefficient of the channel.
This coefficient is standard at the one-step level.
The point of the present paper is to use the exact coefficient of long products,
\[
	\kappa_{t:s}
	:=
	\kappa_{\rm tr}(\Phi_{t:s}),
\]
as the intrinsic product-level measure of residual trace-norm memory.
Equivalently, \(\kappa_{t:s}\) is one half of the trace-norm diameter of the evolved state space \(\Phi_{t:s}(\mcS_d)\).
It vanishes exactly when the product \(\Phi_{t:s}\) is a replacement channel.
Its decay is therefore the intrinsic product-level notion of loss of memory.

This viewpoint should be distinguished from computable sufficient criteria.
Markov--Dobrushin minorization and quantum Doeblin coefficients provide upper bounds on \(\kappa_{\rm tr}(\Phi)\), often through common lower bounds or semidefinite programs \cite{Dobrushin1956,Dobrushin1970,AccardiLuSouissi2021,SouissiBarhoumi2025,Hirche2024}.
They are useful estimates, but the intrinsic object in this paper is the exact product coefficient \(\kappa_{\rm tr}(\Phi_{t:s})\).
It should also be distinguished from Hilbert--Birkhoff projective contraction.
Projective metrics are natural for positive cone maps and non-normalized transfer operators \cite{ReebKastoryanoWolf2011}.
For CPTP products, trace preservation fixes the trace-zero hyperplane, so the centered trace-norm coefficient is the direct measure of state-memory loss.

A final motivation comes from matrix product states.
Matrix product states provide finite-dimensional descriptions of one-dimensional quantum many-body systems \cite{FannesNachtergaeleWerner1992,PerezGarciaVerstraeteWolfCirac2007,VerstraeteMurgCirac2008}.
When the tensors are placed in a left-canonical CPTP gauge, the auxiliary transfer maps are quantum channels.
Asymptotic replacement of the auxiliary products then yields boundary stability, local thermodynamic limits, and correlation estimates controlled by the transfer product across the separating gap.
In this CPTP-gauge setting, the results below extend and complement existing work on homogeneous and random MPS limits and random MPS correlation lengths \cite{Souissi_2025,MovassaghSchenker2021,MovassaghSchenker2022,MPS}.

%%%%%%%%%%%%%%%%%%%%%%%%%%%%%%%%%%%%%%%%%%%%%%%%%%%%%%%%%%%%%%%%%%%%%%%%%%%%
%%%%%%%%%%%%%%%%%%%%%%%%%%%%%%%%%%%%%%%%%%%%%%%%%%%%%%%%%%%%%%%%%%%%%%%%%%%%

\subsection{Main results}
\label{subsec:main-results}

%%%%%%%%%%%%%%%%%%%%%%%%%%%%%%%%%%%%%%%%%%%%%%%%%%%%%%%%%%%%%%%%%%%%%%%%%%%%
%%%%%%%%%%%%%%%%%%%%%%%%%%%%%%%%%%%%%%%%%%%%%%%%%%%%%%%%%%%%%%%%%%%%%%%%%%%%

\subsubsection*{Deterministic inhomogeneous products}
\label{sec:intro-det-replacement}

%%%%%%%%%%%%%%%%%%%%%%%%%%%%%%%%%%%%%%%%%%%%%%%%%%%%%%%%%%%%%%%%%%%%%%%%%%%%
%%%%%%%%%%%%%%%%%%%%%%%%%%%%%%%%%%%%%%%%%%%%%%%%%%%%%%%%%%%%%%%%%%%%%%%%%%%%

Let \((\Phi_n)_{n\in\mcI}\) be a deterministic sequence of CPTP maps on \(\matrices\).
Here \(\mcI=\mbN_0\) in the one-sided setting and \(\mcI=\mbZ\) in the two-sided setting.
For admissible times \(s<t\), write
\[
	\Phi_{t:s}
	:=
	\Phi_{t-1}\circ\Phi_{t-2}\circ\cdots\circ\Phi_s,
	\qquad
	\kappa_{t:s}
	:=
	\kappa_{\rm tr}(\Phi_{t:s}).
\]

\begin{definition}[Strong asymptotic replacement from an initial time]
\label{def:strong-asymptotic-replacement-from-s}
	Fix an initial time \(s\in\mcI\).
	We say that the products \((\Phi_{t:s})_{t>s}\) are strongly asymptotically replacing from \(s\) if there are states \(\eta_{t:s}\in\mcS_d\) such that, with
	\[
		R_{t:s}(X)
		:=
		\tr{X}\eta_{t:s},
	\]
	one has
	\[
		\lim_{t\to\infty}
		\norm{\Phi_{t:s}-R_{t:s}}_{1\to1}
		=
		0.
	\]
\end{definition}
Here $\norm{\,\cdot\,}_{1\to 1}$ denotes the norm induced on superoperators by the trace norm (\(\norm{\,\cdot\,}_1\)) on matrices. 

The definition is a channel-level formulation of loss of memory.
The output center may depend on \(t\), so the limiting object is a family of replacement channels rather than a single fixed channel.
The next theorem shows that this map-level property is exactly characterized by the product coefficient \(\kappa_{t:s}\).
For a reference state \(\tau\in\mcS_d\), define
\[
	R_{t:s}^{(\tau)}(X)
	:=
	\tr{X}\Phi_{t:s}(\tau).
\]

\begin{restatable}[]{thm}{detReplacementApproximationTheorem}
\label{thm:det-replacement-approximation}
	Fix an initial time \(s\in\mcI\).
	Then the following are equivalent.
	\begin{enumerate}
		\item
        \(\displaystyle\lim_{t\to\infty}\kappa_{t:s}=0\).
		
		\item
		The products \((\Phi_{t:s})_{t>s}\) are strongly asymptotically replacing from \(s\).

		\item
        For every reference state \(\tau\in\mcS_d\), 
        \(\norm{\Phi_{t:s}-R_{t:s}^{(\tau)}}_{1\to1}\to0\) as \(t\to\infty\).
        
        \item
        For some reference state \(\tau\in\mcS_d\), 
        \(\norm{\Phi_{t:s}-R_{t:s}^{(\tau)}}_{1\to1}\to0\) as \(t\to\infty\).
	\end{enumerate}
	Moreover, for every \(s<t\) and every \(\tau\in\mcS_d\),
	\[
		\kappa_{t:s}
		\le
		\norm{\Phi_{t:s}-R_{t:s}^{(\tau)}}_{1\to1}
		\le
		4\kappa_{t:s}.
	\]
	Thus, the quantitative decay of \(\kappa_{t:s}\) is equivalent, up to universal constants, to the quantitative convergence of the product to the reference replacement family.
\end{restatable}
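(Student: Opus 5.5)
The heart of the argument is the two-sided estimate
\[
	\kappa_{t:s}\le\norm{\Phi_{t:s}-R_{t:s}^{(\tau)}}_{1\to1}\le 4\kappa_{t:s};
\]
the equivalences then follow by a short chain of implications.

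\emph{Upper bound.} Write \(\Psi:=\Phi_{t:s}-R_{t:s}^{(\tau)}\), so that \(\Psi(X)=\Phi_{t:s}(X-\tr{X}\tau)\). Since \(\norm{\cdot}_{1\to1}\) is taken over all matrices, I first reduce to Hermitian inputs. Split \(X=H_1+iH_2\) with \(\norm{H_j}_1\le\norm{X}_1\); this gives \(\norm{\Psi(X)}_1\le 2\sup_{H=H^*,\norm{H}_1\le 1}\norm{\Psi(H)}_1\), because \(\Psi\) is linear. For Hermitian \(H\), the matrix \(Y:=H-\tr{H}\tau\) is Hermitian and traceless, with \(\norm{Y}_1\le\norm{H}_1+\abs{\tr{H}}\le 2\norm{H}_1\). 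The definition of \(\kappa_{\rm tr}\) (equivalently \Cref{prop:dobr}) then gives \(\norm{\Psi(H)}_1=\norm{\Phi_{t:s}(Y)}_1\le\kappa_{t:s}\norm{Y}_1\le 2\kappa_{t:s}\norm{H}_1\). Combining the two factors of \(2\) yields \(\norm{\Psi}_{1\to1}\le 4\kappa_{t:s}\). The constant need not be sharp, so I will accept \(4\).

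\emph{Lower bound.} Let \(X\) be Hermitian and traceless. Then \(R_{t:s}^{(\tau)}(X)=0\), so \(\Psi(X)=\Phi_{t:s}(X)\). Taking the supremum over such \(X\) gives \(\kappa_{t:s}\le\norm{\Psi}_{1\to1}\). More generally, for any replacement map \(R_\eta(X)=\tr{X}\eta\) the same computation gives \(\kappa_{t:s}\le\norm{\Phi_{t:s}-R_\eta}_{1\to1}\), because \(R_\eta\) annihilates traceless matrices.

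\emph{Equivalences.} (1)\(\Rightarrow\)(3) follows from the upper bound, applied for each \(\tau\). (3)\(\Rightarrow\)(4) is trivial. (4)\(\Rightarrow\)(2) holds by taking \(\eta_{t:s}:=\Phi_{t:s}(\tau)\), which is a state because \(\Phi_{t:s}\) is CPTP. (2)\(\Rightarrow\)(1) follows from the general lower bound \(\kappa_{t:s}\le\norm{\Phi_{t:s}-R_{t:s}}_{1\to1}\to0\).

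The only point that needs care is the upper bound's passage from Hermitian traceless inputs to arbitrary complex inputs. That reduction is where the constant \(4\) comes from, and it uses only linearity and the fact that \(\Phi_{t:s}\) preserves Hermiticity.
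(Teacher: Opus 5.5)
Your proposal is correct and follows essentially the same route as the paper: the paper first proves the same two-sided bound in \Cref{prop:det-radius-replacement}, including the general lower bound $\kappa_{t:s}\le\norm{\Phi_{t:s}-R_\zeta}_{1\to1}$, and then reads off the equivalences. The differences are cosmetic. You split into Hermitian parts before centering, whereas the paper centers first and applies \Cref{rem:complex-tracezero-control} to the complex traceless matrix $X-\tr{X}\tau$; and you close the equivalences with the single cycle $1\Rightarrow3\Rightarrow4\Rightarrow2\Rightarrow1$ instead of the paper's separate implications.
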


In the two-sided setting, the limiting replacement family can be identified intrinsically by pulling the initial time to the remote past.
This removes the arbitrary reference state from the limiting channel.

\begin{restatable}[]{thm}{detPullbackCanonicalReplacement}
\label{thm:pullback-canonical-replacement}
	Assume that \(\mcI=\mbZ\).
	The following are equivalent.
	\begin{enumerate}
		\item
		For every fixed terminal time \(t\in\mbZ\),
		\(\displaystyle\lim_{s\to-\infty}\kappa_{t:s}=0\).

		\item
        There exists a family of states \((\eta_t)_{t\in\mbZ}\subset\mcS_d\) such that, for 
		\(R_t(X)=\tr{X}\eta_t\), one has, for every fixed terminal time \(t\in\mbZ\),
		\(\norm{\Phi_{t:s}-R_t}_{1\to1}\to0\) as \(s\to-\infty\).
	\end{enumerate}
	When these conditions hold, the states \((\eta_t)_{t\in\mbZ}\) are unique.
	Writing them as \((\rho_t)_{t\in\mbZ}\), one has
	\[
		\rho_{t+1}
		=
		\Phi_t(\rho_t),
		\qquad
		t\in\mbZ.
	\]
	Moreover, for every reference state \(\tau\in\mcS_d\),
	\[
		\rho_t
		=
		\lim_{s\to-\infty}\Phi_{t:s}(\tau),
		\qquad
		\lim_{s\to-\infty}
		\norm{R_{t:s}^{(\tau)}-R_t}_{1\to1}
		=
		0.
	\]
\end{restatable}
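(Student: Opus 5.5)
The plan is to use the two-sided product structure $\Phi_{t:s}=\Phi_{t:r}\circ\Phi_{r:s}$ for $s<r<t$, together with \Cref{thm:det-replacement-approximation} and the diameter interpretation $\kappa_{t:s}=\tfrac12\sup_{\rho,\sigma}\norm{\Phi_{t:s}(\rho)-\Phi_{t:s}(\sigma)}_1$. For (2)$\Rightarrow$(1): for any $s<t$ and any states $\rho,\sigma$,
\[
\norm{\Phi_{t:s}(\rho)-\Phi_{t:s}(\sigma)}_1\le \norm{(\Phi_{t:s}-R_t)(\rho)}_1+\norm{(\Phi_{t:s}-R_t)(\sigma)}_1\le 2\norm{\Phi_{t:s}-R_t}_{1\to1},
\]
because $R_t(\rho)=R_t(\sigma)=\eta_t$. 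Hence $\kappa_{t:s}\le\norm{\Phi_{t:s}-R_t}_{1\to1}\to0$.

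For (1)$\Rightarrow$(2), fix $t$ and a reference state $\tau$. I will show that $s\mapsto\Phi_{t:s}(\tau)$ is Cauchy as $s\to-\infty$. For $s'<s<t$ write $\Phi_{t:s'}(\tau)=\Phi_{t:s}(\Phi_{s:s'}(\tau))$. Both $\Phi_{s:s'}(\tau)$ and $\tau$ are states, so
\[
\norm{\Phi_{t:s'}(\tau)-\Phi_{t:s}(\tau)}_1\le 2\kappa_{t:s}.
\]
This tends to zero uniformly in $s'<s$. The state space is closed, so the limit $\rho_t:=\lim_{s\to-\infty}\Phi_{t:s}(\tau)$ exists and is a state. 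The same estimate, applied with a second reference state $\tau'$, gives $\norm{\Phi_{t:s}(\tau)-\Phi_{t:s}(\tau')}_1\le2\kappa_{t:s}\to0$, so $\rho_t$ does not depend on $\tau$.

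Set $R_t(X)=\tr{X}\rho_t$. The replacement channels satisfy $\norm{R_{t:s}^{(\tau)}-R_t}_{1\to1}=\sup_{\norm{X}_1\le1}\abs{\tr{X}}\,\norm{\Phi_{t:s}(\tau)-\rho_t}_1\le\norm{\Phi_{t:s}(\tau)-\rho_t}_1\to0$. Combining this with the bound $\norm{\Phi_{t:s}-R_{t:s}^{(\tau)}}_{1\to1}\le4\kappa_{t:s}$ from \Cref{thm:det-replacement-approximation} and the triangle inequality gives $\norm{\Phi_{t:s}-R_t}_{1\to1}\to0$, which is (2). Passing to the limit in the Cauchy estimate also gives the quantitative bound $\norm{\rho_t-\Phi_{t:s}(\tau)}_1\le2\kappa_{t:s}$.

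For uniqueness, suppose two families $\eta_t,\eta'_t$ both satisfy (2). Evaluating at a fixed state $\tau$ gives
\[
\norm{\eta_t-\eta'_t}_1\le\norm{(R_t-\Phi_{t:s})(\tau)}_1+\norm{(\Phi_{t:s}-R'_t)(\tau)}_1\to0,
\]
so $\eta_t=\eta'_t$. The same evaluation shows that any such $\eta_t$ equals $\lim_s\Phi_{t:s}(\tau)$, hence coincides with $\rho_t$. For the cocycle relation, $\Phi_t$ is trace-norm continuous, being linear on a finite-dimensional space. So
\[
\Phi_t(\rho_t)=\lim_{s\to-\infty}\Phi_t\bigl(\Phi_{t:s}(\tau)\bigr)=\lim_{s\to-\infty}\Phi_{t+1:s}(\tau)=\rho_{t+1}.
\]

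The only point needing care is the Cauchy step. The key observation is that $\Phi_{s:s'}(\tau)$ is itself a state, so the diameter bound for $\Phi_{t:s}$ controls the difference uniformly in $s'$. This is what makes the pullback limit, rather than the forward limit, converge. Everything else reduces to triangle inequalities and the already-proved comparison with $R_{t:s}^{(\tau)}$.
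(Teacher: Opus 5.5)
Your proposal is correct and follows the paper's proof essentially step for step: the Cauchy argument that applies the diameter bound for $\Phi_{t:s}$ to the state $\Phi_{s:s'}(\tau)$, continuity of $\Phi_t$ for the cocycle relation, the lower bound $\kappa_{t:s}\le\norm{\Phi_{t:s}-R_t}_{1\to1}$ for (2)$\Rightarrow$(1), and evaluation at a fixed reference state for uniqueness. The only small difference is in the upper bound. The paper first proves $\rho_t=\Phi_{t:s}(\rho_s)$ and then uses $\rho_s$ as the reference state, which gives $\norm{\Phi_{t:s}-R_t}_{1\to1}\le4\kappa_{t:s}$ directly. Your triangle inequality through $R_{t:s}^{(\tau)}$ gives $6\kappa_{t:s}$ instead, which is equally sufficient for the qualitative statement.
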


When the equivalent conditions in \Cref{thm:pullback-canonical-replacement} hold, we say that the sequence has pullback strong asymptotic replacement.
The theorem shows that the canonical pullback centers form a pullback boundary state \((\rho_t)_{t\in\mbZ}\).
In particular, the consistency relation
\[
    \rho_{t+1}=\Phi_t(\rho_t)
\]
is not an additional hypothesis; it follows from asymptotic replacement from the remote past.
Assume further that the products lose memory in the forward direction, namely that
\[
    \kappa_{t:s}\longrightarrow 0
    \qquad\text{as }t\to\infty
\]
for every fixed \(s\in\mbZ\).  
Then the same canonical pullback boundary family is also the forward replacement family.  
Indeed, the boundary-state identity gives
\[
    \rho_t=\Phi_{t:s}(\rho_s),
    \qquad s<t,
\]
and hence
\[
    R_t=R_{t:s}^{(\rho_s)}.
\]
Therefore, for every \(s<t\) and every reference state \(\tau\in\mcS_d\),
\[
    \norm{\Phi_{t:s}-R_t}_{1\to1}
    \le
    4\kappa_{t:s},
    \qquad
    \norm{R_{t:s}^{(\tau)}-R_t}_{1\to1}
    \le
    2\kappa_{t:s}.
\]
Consequently, for every fixed initial time \(s\in\mbZ\),
\[
    \norm{\Phi_{t:s}-R_t}_{1\to1}
    \longrightarrow 0,
    \qquad
    \norm{R_{t:s}^{(\tau)}-R_t}_{1\to1}
    \longrightarrow 0
\]
as \(t\to\infty\).  Thus, under pullback memory loss and forward memory loss, the reference-dependent forward replacement family is asymptotic to the canonical pullback replacement family.

%%%%%%%%%%%%%%%%%%%%%%%%%%%%%%%%%%%%%%%%%%%%%%%%%%%%%%%%%%%%%%%%%%%%%%%%%%%%
%%%%%%%%%%%%%%%%%%%%%%%%%%%%%%%%%%%%%%%%%%%%%%%%%%%%%%%%%%%%%%%%%%%%%%%%%%%%

\subsubsection*{Random products over a probability-preserving base}
\label{sec:intro-random-replacement}

%%%%%%%%%%%%%%%%%%%%%%%%%%%%%%%%%%%%%%%%%%%%%%%%%%%%%%%%%%%%%%%%%%%%%%%%%%%%
%%%%%%%%%%%%%%%%%%%%%%%%%%%%%%%%%%%%%%%%%%%%%%%%%%%%%%%%%%%%%%%%%%%%%%%%%%%%

We next place the deterministic inhomogeneous theory over a stationary random base.
Let \((\Omega,\mcF,\pr,\theta)\) be an invertible probability-preserving system, and let \(\omega\mapsto \Phi_\omega\) be a measurable family of CPTP maps on \(\matrices\).
For \(n\ge1\), set
\[
   \Phi_\omega^{(n)}
   :=
   \Phi_{\theta^{n-1}\omega}
   \circ
   \Phi_{\theta^{n-2}\omega}
   \circ\cdots\circ
   \Phi_\omega,
   \qquad
   \Phi_\omega^{(0)}:=\mathrm{id}.
\]
More generally, for integers \(s<t\), write
\[
   \Phi_{\omega;t:s}
   :=
   \Phi_{\theta^{t-1}\omega}
   \circ
   \Phi_{\theta^{t-2}\omega}
   \circ\cdots\circ
   \Phi_{\theta^s\omega},
   \qquad
   \Phi_{\omega;s:s}:=\mathrm{id}.
\]
Thus \(\Phi_\omega^{(n)}=\Phi_{\omega;n:0}\).
We write \(\kappa_{\omega;t:s}:=\kappa_{\rm tr}(\Phi_{\omega;t:s})\) and \(\kappa_n(\omega):=\kappa_{\omega;n:0}\), so that \(\kappa_{\omega;t:s}=\kappa_{t-s}(\theta^s\omega)\).

Related random and inhomogeneous quantum-process frameworks appear in \cite{asym,PS23,MovassaghSchenker2021,MovassaghSchenker2022}.
A recurring theme in these works is the construction of dynamically compatible random states or boundary-state fields.

\begin{definition}[Dynamically stationary random state]
\label{def:dynamically-stationary-random-state}
    A dynamically stationary random state for the cocycle is a measurable map \(\rho:\Omega\to\mcS_d\), written \(\omega\mapsto\rho_\omega\), such that
    \[
       \Phi_\omega(\rho_\omega)
       =
       \rho_{\theta\omega}
       \qquad
       \text{for }\pr\text{-a.e. }\omega.
    \]
    It is unique if every other such map agrees with it \(\pr\)-almost surely.
\end{definition}

By \Cref{thm:random-lyapunov}, the limit
\[
   \lambda_{\rm tr}(\omega)
   =
   \lim_{n\to\infty}\frac{1}{n}\log\kappa_n(\omega),
   \qquad
   \log0:=-\infty,
\]
exists for \(\pr\)-a.e. \(\omega\) and defines a \(\theta\)-invariant random variable \(\lambda_{\rm tr}:\Omega\to[-\infty,0]\).
We call \(\lambda_{\rm tr}\) the trace-Dobrushin Lyapunov exponent.
It is constant on ergodic components; in particular, if \(\theta\) is ergodic, then \(\lambda_{\rm tr}\) is deterministic almost surely.

For each fixed realization \(\omega\), the sequence \(\Phi_\omega,\Phi_{\theta\omega},\Phi_{\theta^2\omega},\ldots\) is an ordinary deterministic time-inhomogeneous sequence of channels.
Hence, the deterministic replacement results apply fiberwise.
The probabilistic structure is used to quantify the decay of the corresponding product coefficients.
The relevant hypothesis in the general probability-preserving setting is \(\lambda_{\rm tr}(\omega)<0\) for \(\pr\)-a.e. \(\omega\).
Criteria for verifying this almost-sure negativity are proved in \Cref{sec:random-negative-criteria}.
The trace-Dobrushin criterion below is formulated directly at the level of product contraction.
In particular, eventual strict positivity gives one sufficient mechanism for a negative exponent, but it is not necessary; see \Cref{cor:strict-positive-block-negative-exponent} and \Cref{ex:amplitude-random-strictness}.

\begin{restatable}[]{thm}{randomrates}
\label{thm:negative-exponent-replacement}
    Assume that \(\lambda_{\rm tr}(\omega)<0\) with probability one.
    Then there exists a unique dynamically stationary random state \(\rho:\Omega\to\mcS_d\) in the sense of \Cref{def:dynamically-stationary-random-state}.
    Let \(R_\omega(X):=\tr{X}\rho_\omega\).
    Then there exist a measurable \(\theta\)-invariant random variable \(\beta:\Omega\to(-\infty,0)\), a full-measure \(\theta\)-invariant set \(\Omega_*\subseteq\Omega\), and almost surely finite measurable random variables \(C_\beta^+\) and \(C_\beta^-\) such that, for every \(\omega\in\Omega_*\) and every \(n\ge1\),
    \[
    \begin{aligned}
       \norm{\Phi_{\omega;n:0}-R_{\theta^n\omega}}_{1\to1}
       &\le
       4C_\beta^+(\omega)e^{\beta(\omega)n},
       \\
       \norm{\Phi_{\omega;0:-n}-R_\omega}_{1\to1}
       &\le
       4C_\beta^-(\omega)e^{\beta(\omega)n}.
    \end{aligned}
    \]
    The corresponding state-level estimates hold with the constant \(4\) replaced by \(2\).
\end{restatable}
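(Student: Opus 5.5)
The plan is to deduce everything from the existence and invariance of $\lambda_{\rm tr}$ (\Cref{thm:random-lyapunov}), submultiplicativity of the product coefficient, and the deterministic results \Cref{thm:det-replacement-approximation} and \Cref{thm:pullback-canonical-replacement}, applied fiberwise.

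\textbf{Choice of rate.} Set $\beta(\omega):=\tfrac12\lambda_{\rm tr}(\omega)$ when $\lambda_{\rm tr}(\omega)>-\infty$, and $\beta(\omega):=-1$ otherwise. This $\beta$ is measurable, $\theta$-invariant, and negative almost surely.

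\textbf{Forward constant.} Define
\[
C_\beta^+(\omega):=\sup_{n\ge1}\kappa_n(\omega)e^{-\beta(\omega)n}.
\]
It is finite wherever $\frac1n\log\kappa_n\to\lambda_{\rm tr}<\beta$. It is measurable because it is a countable supremum and each $\kappa_n$ is continuous in the Kraus data.

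\textbf{Backward constant.} The backward coefficient is $\kappa_{\omega;0:-n}=\kappa_n(\theta^{-n}\omega)$, which is not a shifted ergodic average along a single orbit. The limit $\frac1n\log\kappa_n(\theta^{-n}\omega)\to\lambda_{\rm tr}(\omega)$ therefore has to be obtained separately. I would get it from Kingman's subadditive theorem applied to the inverse system $(\Omega,\theta^{-1})$ with the subadditive process $a_n(\omega)=\log\kappa_{\omega;0:-n}$. Subadditivity follows from
\[
\Phi_{\omega;0:-n-m}=\Phi_{\omega;0:-n}\circ\Phi_{\omega;-n:-n-m}
\]
and submultiplicativity of $\kappa_{\rm tr}$, which gives $a_{n+m}(\omega)\le a_n(\omega)+a_m(\theta^{-n}\omega)$.

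This limit must be identified with $\lambda_{\rm tr}(\omega)$ pointwise, not just in distribution. The standard route is to compare conditional expectations on the invariant $\sigma$-algebra. Both $\kappa_n(\theta^{-n}\cdot)$ and $\kappa_n$ have the same law, so the Kingman limits, which equal $\inf_n \frac1n\mathbb E[a_n\mid\mathcal I]$ (with $\log\kappa\le0$, so integrability of the positive part is automatic), coincide. Here $\mathcal I$ is the invariant $\sigma$-algebra, which is the same for $\theta$ and $\theta^{-1}$. I expect this to be the main technical obstacle, especially when $\lambda_{\rm tr}=-\infty$, which I would handle by truncating at $\max(\log\kappa,-M)$.

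Given this limit, $C_\beta^-(\omega):=\sup_n\kappa_n(\theta^{-n}\omega)e^{-\beta(\omega)n}<\infty$ almost surely. Let $\Omega_*$ be the full-measure set where both constants are finite and $\lambda_{\rm tr}<0$. Replacing $\Omega_*$ by $\bigcap_{k\in\mathbb Z}\theta^k\Omega_*$ makes it $\theta$-invariant.

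\textbf{Random state.} For $\omega\in\Omega_*$, the sequence $\Phi_t:=\Phi_{\theta^t\omega}$ satisfies $\lim_{s\to-\infty}\kappa_{t:s}=0$ for every $t$, since $\kappa_{\omega;t:s}=\kappa_{t-s}(\theta^{s}\omega)$ and $\theta^t\omega\in\Omega_*$. \Cref{thm:pullback-canonical-replacement} then yields a boundary state $\rho_t$. Define
\[
\rho_\omega:=\rho_0=\lim_{n\to\infty}\Phi_{\omega;0:-n}(\tau).
\]
It is measurable as a pointwise limit of measurable maps. Uniqueness in \Cref{thm:pullback-canonical-replacement} gives $\rho_t(\omega)=\rho_{\theta^t\omega}$, and hence $\Phi_\omega(\rho_\omega)=\rho_{\theta\omega}$.

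For uniqueness of the random state, let $\rho'$ be another dynamically stationary random state. Then
\[
\|\rho'_\omega-\rho_\omega\|_1=\|\Phi_{\omega;0:-n}(\rho'_{\theta^{-n}\omega}-\rho_{\theta^{-n}\omega})\|_1\le2\kappa_n(\theta^{-n}\omega)\to0
\]
almost surely.

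\textbf{Bounds.} The identity $R_{\theta^n\omega}=R^{(\rho_\omega)}_{n:0}$ together with \Cref{thm:det-replacement-approximation} gives
\[
\|\Phi_{\omega;n:0}-R_{\theta^n\omega}\|_{1\to1}\le4\kappa_n(\omega)\le4C^+_\beta(\omega) e^{\beta(\omega) n}.
\]
The backward bound is identical, using $R_\omega=R^{(\rho_{\theta^{-n}\omega})}_{0:-n}$. At the state level, $\|\Phi(\sigma)-\Phi(\rho_\cdot)\|_1\le2\kappa$, so the constant $4$ becomes $2$.
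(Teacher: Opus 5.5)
Your proposal is correct and follows essentially the same route as the paper. Like the paper, you use Kingman for the forward exponent, and Kingman on $\theta^{-1}$ with identification through the invariant $\sigma$-algebra for the pullback exponent (this is \Cref{prop:pullback-same-exponent}, which you could simply cite). You also take $\beta=\tfrac12\lambda_{\rm tr}$ (or $-1$), define $C_\beta^\pm$ as countable suprema of $e^{-\beta n}\kappa$, construct the stationary state as the pullback limit, and get the bounds from the $4\kappa$ (resp.\ $2\kappa$) replacement estimate with reference state $\rho$.

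There are two small points of precision. First, equality in law of $\kappa_n\circ\theta^{-n}$ and $\kappa_n$ does not by itself match the conditional expectations given the invariant $\sigma$-algebra; what is needed is $\int_A F_n\circ\theta^{-n}\,d\pr=\int_A F_n\,d\pr$ for invariant $A$. Second, $\beta$ must be redefined off an invariant full-measure set, as the paper does with $\bar\lambda_{\rm tr}$, so that it is strictly negative and $\theta$-invariant everywhere.
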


We now impose a stochastic decorrelation assumption on the unit-time channel sequence.
Set \(X_j(\omega):=\Phi_{\theta^j\omega}\) for \(j\in\mbZ\).
Define
\[
   \mcG_{-\infty}^k:=\sigma(X_j:j<k),
   \qquad
   \mcG_k^\infty:=\sigma(X_j:j\ge k).
\]
For two sub-\(\sigma\)-algebras \(\mcA,\mcB\subseteq\mcF\), define the maximal-correlation coefficient by
\[
   \varrho_{\rm mc}(\mcA,\mcB)
   :=
   \sup
   \left\{
      \left|\mbE[UV]\right|:
      U\in L^2(\mcA),\,
      V\in L^2(\mcB),\,
      \mbE[U]=\mbE[V]=0,\,
      \norm{U}_2=\norm{V}_2=1
   \right\}.
\]
For \(m\ge1\), set
\[
   \varrho_m
   :=
   \sup_{k\in\mbZ}
   \varrho_{\rm mc}
   \left(
      \mcG_{-\infty}^k,\mcG_{k+m}^{\infty}
   \right).
\]
We say that the channel environment is \(\varrho\)-mixing if \(\varrho_m\to0\).
This is a decorrelation assumption on the random environment only.
The loss of quantum memory is supplied separately by the negative trace-Dobrushin Lyapunov exponent.
If the two-sided sequence \((X_j)_{j\in\mbZ}\) is independent, then \(\varrho_m=0\) for every \(m\ge1\).

\begin{restatable}[]{thm}{randomannealeddeviations}
\label{thm:random-annealed-deviations}
    Assume that \(\lambda_{\rm tr}(\omega)<0\) with probability one, and let \(\rho:\Omega\to\mcS_d\) be the stationary random state from \Cref{thm:negative-exponent-replacement}.
    Put \(R_\omega(X):=\tr{X}\rho_\omega\), and define
    \[
       \Delta_n^{\rm op}(\omega)
       :=
       \max
       \left\{
          \norm{\Phi_{\omega;n:0}-R_{\theta^n\omega}}_{1\to1},
          \norm{\Phi_{\omega;0:-n}-R_\omega}_{1\to1}
       \right\}.
    \]
    Then the following hold.
    \begin{enumerate}
        \item
        If \(\varrho_m\to0\), then for every \(p\in\mbN\) there is \(C_p<\infty\) such that \(\mbE[\Delta_n^{\rm op}]\le C_p n^{-p}\) for all \(n\ge1\).

        \item
        If \((X_j)_{j\in\mbZ}\) are independent, then there are \(C<\infty\) and \(\gamma>0\) such that \(\mbE[\Delta_n^{\rm op}]\le Ce^{-\gamma n}\) for all \(n\ge1\).
    \end{enumerate}
\end{restatable}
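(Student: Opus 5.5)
Both parts rest on one reduction: bound $\Delta_n^{\rm op}$ by product coefficients, then estimate the law of $\kappa_n$. Applying the deterministic estimates along each fiber $\omega\in\Omega_*$ with the canonical pullback family (as in the discussion after \Cref{thm:pullback-canonical-replacement}, combined with \Cref{thm:negative-exponent-replacement}) gives
\[
\norm{\Phi_{\omega;n:0}-R_{\theta^n\omega}}_{1\to1}\le 4\kappa_n(\omega),
\qquad
\norm{\Phi_{\omega;0:-n}-R_\omega}_{1\to1}\le 4\kappa_n(\theta^{-n}\omega).
\]
Since $\theta$ preserves $\pr$, it follows that $\mbE[\Delta_n^{\rm op}]\le 8\,\mbE[\kappa_n]$. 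It therefore suffices to bound $\mbE[\kappa_n]$, where $0\le\kappa_n\le1$.

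\textbf{Step 1: a good block length.} Because $\lambda_{\rm tr}<0$ almost surely, $\kappa_n\to0$ almost surely. By dominated convergence, $\mbE[\kappa_L]\to0$. Fix $L$ with $\mbE[\kappa_L]\le\delta$ for a small $\delta$ to be chosen. Split $[0,n)$ into $N=\lfloor n/L\rfloor$ consecutive blocks. Submultiplicativity of $\kappa_{\rm tr}$, together with $\kappa\le1$ for any leftover segment, gives
\[
\kappa_n(\omega)\le\prod_{j\in J}\kappa_L(\theta^{jL}\omega)
\]
for any subset $J$ of the block indices.

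\textbf{Step 2: independent case.} If the $X_j$ are independent, the blocks are independent. Taking $J$ to be all blocks,
\[
\mbE[\kappa_n]\le \prod_{j<N}\mbE[\kappa_L]\le\delta^{N}.
\]
Any $\delta<1$ then yields $Ce^{-\gamma n}$ with $\gamma=-\log\delta/L$.

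\textbf{Step 3: $\varrho$-mixing case.} This is the main obstacle, because $\varrho$-mixing does not factor expectations of products directly. I would keep only blocks separated by gaps of length $m$: take every $k$-th block, with $k$ chosen so that $(k-1)L\ge m$. The kept variables are $V_i=\kappa_L(\theta^{ikL}\omega)\in[0,1]$, each measurable with respect to a window of the $X$'s, and consecutive windows are separated by at least $m$.

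The key tool is the covariance inequality $|\cov{U}{V}|\le\varrho\,\norm{U}_2\norm{V}_2$. Apply it iteratively with $U=\prod_{i<r}V_i$, which is measurable with respect to the past, and $V=V_r$, which is measurable with respect to the future. Using $\norm{U}_2\le(\mbE U)^{1/2}$ and $\norm{V_r}_2\le\delta^{1/2}$, this gives
\[
a_{r+1}\le a_r\,\delta+\varrho_m\,a_r^{1/2}\delta^{1/2},\qquad a_r:=\mbE\Big[\prod_{i<r}V_i\Big].
\]
The square-root term may prevent geometric decay of $a_r$. Two fixes are available:
\begin{itemize}
\item Use the bound $\varrho\ge\alpha$-mixing to handle bounded variables via $|\cov{U}{V}|\le 4\alpha_m\le 4\varrho_m$. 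This yields $a_{r+1}\le\delta a_r+4\varrho_m$, so $a_r\lesssim\delta^r+4\varrho_m/(1-\delta)$. That is only a floor of order $\varrho_m$, which is not enough by itself.
\item Alternatively, take logarithms: control $\pr\bigl(\#\{i<M: V_i>1/2\}\ge M/2\bigr)$ through a variance bound for $S_M=\sum_i\mathbf 1\{V_i>1/2\}$. Under $\varrho$-mixing with $\varrho_m\to0$, higher-moment (Rosenthal-type) inequalities for $\varrho$-mixing sums give $\mbE|S_M-\mbE S_M|^{2p}\le C_pM^{p}$. Once $\mbE\mathbf 1\{V_i>1/2\}\le2\delta<1/4$, Markov's inequality yields a probability of $O(M^{-p})$. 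On the complementary event, $\kappa_n\le2^{-M/2}$.
\end{itemize}
With $M\asymp n/(kL)$ and $k$ fixed, this gives $\mbE[\kappa_n]\le C_pn^{-p}+2^{-cn}$, which proves the super-polynomial bound. The moment inequality for $\varrho$-mixing sequences (Peligrad/Shao type, valid when $\varrho_m\to0$, possibly after passing to a sparser subsequence so that $\sum_i\varrho(2^i)<\infty$) is the step I would need to import or prove. I expect it to be the hardest part of the argument.
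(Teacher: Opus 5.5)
Your reduction $\mbE[\Delta_n^{\rm op}]\le 8\,\mbE[\kappa_n]$ and your independent case coincide with the paper's proof. The $\varrho$-mixing case is where you take a different and heavier route.

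The paper uses the same covariance bound you do, in the form $a_{r+s+t}\le a_ra_t+\varrho_s(a_ra_t)^{1/2}$ for $a_n:=\mbE[\kappa_n]$. It applies this with $r=t=N$, so the cross term is $\varrho_s a_N$, linear rather than a square root. Taking $N_{i+1}=2N_i+M$ and $\varrho_M+a_{N_0}\le 4^{-p}$ gives $a_{N_{i+1}}\le a_{N_i}\bigl(a_{N_i}+\varrho_M\bigr)\le 4^{-p}a_{N_i}$. Since the length only doubles at each step, monotonicity of $a_n$ gives $a_n\le C_pn^{-p}$.

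The square-root obstruction you ran into comes from peeling off one block at a time, i.e.\ a product of many blocks against a single block. A symmetric split removes it with no further input. Your alternative, concentration for the number of blocks with $V_i>1/2$, does work. It even yields more: a tail bound $\pr\{\kappa_n>2^{-cn}\}\le C_pn^{-p}$ rather than only a bound on the mean. The cost is importing a nontrivial moment inequality.

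If you keep that route, state the imported input correctly. Under the sole hypothesis $\varrho_m\to0$, the available moment bounds for $\rho$-mixing partial sums (e.g.\ Shao's maximal inequality) carry a factor $\exp\bigl(K\sum_{i\le\log_2 M}\varrho(2^i)\bigr)$, which is only $M^{o(1)}$. So you get $\mbE|S_M-\mbE S_M|^{2p}\le C_{p,\varepsilon}M^{p+\varepsilon}$, not $C_pM^p$. This still suffices for your Markov step, since $p$ is arbitrary.

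Your suggested fallback of thinning until $\sum_i\varrho(2^i)<\infty$ does not work. Thinning at a fixed spacing cannot create summability: the bound you have for the thinned sequence at lag $2^i$, namely $\varrho_{2^ikL-L}$, is summable in $i$ only if $\sum_i\varrho(2^i)$ already is. What thinning does give is that the mixing coefficients of the indicator sequence are bounded by $\varrho_{(k-1)L}$ at every lag $\ge1$. That is what keeps the $M^{\varepsilon}$ loss small.
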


%%%%%%%%%%%%%%%%%%%%%%%%%%%%%%%%%%%%%%%%%%%%%%%%%%%%%%%%%%%%%%%%%%%%%%%%%%%%
%%%%%%%%%%%%%%%%%%%%%%%%%%%%%%%%%%%%%%%%%%%%%%%%%%%%%%%%%%%%%%%%%%%%%%%%%%%%

\subsection{Inhomogeneous Matrix Product States}
\label{sec:intro-mps}

%%%%%%%%%%%%%%%%%%%%%%%%%%%%%%%%%%%%%%%%%%%%%%%%%%%%%%%%%%%%%%%%%%%%%%%%%%%%
%%%%%%%%%%%%%%%%%%%%%%%%%%%%%%%%%%%%%%%%%%%%%%%%%%%%%%%%%%%%%%%%%%%%%%%%%%%%

The deterministic product theory has a direct application to deterministic inhomogeneous matrix product states in left-canonical gauge.
We use the standard terminology of matrix product states from \cite{PerezGarciaVerstraeteWolfCirac2007}.
We write the auxiliary transfer maps in the Schrödinger picture, so the left-canonical condition gives trace-preserving auxiliary channels.
Equivalently, the dual Heisenberg-picture maps are unital.
This is the same convention used in channel-based approaches to inhomogeneous MPS thermodynamic limits \cite{Souissi_2025}.
This channel viewpoint also connects with random MPS, including the homogeneous random setting of \cite{LancienPerezGarcia2022} and the random or ergodic constructions in \cite{MovassaghSchenker2021,MovassaghSchenker2022,MPS}.
The advantage of the present formulation is that the exact trace-Dobrushin coefficient \(\kappa_{\rm tr}\) gives quantitative boundary-stability and trace-closed thermodynamic-limit estimates.
The resulting matrix product states are deterministic inhomogeneous MPS.
They are not assumed to be translation invariant.
The translation-invariant case is recovered, in this gauge, when the Kraus family \(\{K_i^{[n]}\}_i\) is independent of \(n\).

Let \(\mcK\) be the finite-dimensional physical single-site Hilbert space.
Let \(\mcH\) be the finite-dimensional auxiliary, or bond, Hilbert space.
We write \(B(\mcH)\) for the algebra of bounded operators on \(\mcH\).
We also write \(\mcS(\mcH)
    :=
    \{
        \rho\in B(\mcH):
        \rho\ge0,\,
        \tr{\rho}=1
    \}
\)
for the set of auxiliary density matrices.
Since \(\mcH\) is finite-dimensional, \(B(\mcH)\) is simply the full matrix algebra on \(\mcH\).
Fix an orthonormal basis \(\{\ket{1},\ldots,\ket{d_{\mcK}}\}\) of \(\mcK\).
For each site \(n\ge1\), let
\[
    \left\{
        K_i^{[n]}
    \right\}_{i=1}^{d_{\mcK}}
    \subset B(\mcH)
\]
be deterministic site-dependent MPS tensors.
We assume the left-canonical condition
\[
    \sum_{i=1}^{d_{\mcK}}
    \left(K_i^{[n]}\right)^*
    K_i^{[n]}
    =
    I_{\mcH}.
\]
Equivalently, the auxiliary transfer map
\[
    \Phi_n(\rho)
    :=
    \sum_{i=1}^{d_{\mcK}}
    K_i^{[n]}\rho\left(K_i^{[n]}\right)^*
\]
is completely positive and trace-preserving on \(B(\mcH)\).

For \(1\le a\le b\), write
\[
    \mcK_{[a,b]}
    :=
    \bigotimes_{r=a}^{b}\mcK,
    \qquad
    \mcA_{[a,b]}
    :=
    B(\mcK_{[a,b]}).
\]
We identify \(\mcA_{[1,m]}\) with a subalgebra of \(\mcA_{[1,n]}\), for \(m\le n\), through the embedding \(X\mapsto X\otimes I_{\mcK}^{\otimes(n-m)}\).
The one-sided quasi-local algebra is
\[
    \mcA_{\mbN}
    :=
    \overline{
        \bigcup_{m\ge1}
        \mcA_{[1,m]}
    }^{\norm{\cdot}_\infty}.
\]
A local observable is an element of some finite-volume algebra \(\mcA_{[1,m]}\).
The thermodynamic-limit statements below concern the expectations of fixed local observables as the volume \(n\) tends to infinity.

For \(0\le m\le n\), define the right-tail auxiliary transfer product by
\[
    \Theta_{m,n}
    :=
    \begin{cases}
        \Phi_{m+1}\circ\Phi_{m+2}\circ\cdots\circ\Phi_n,
        & m<n,\\
        \operatorname{id}_{B(\mcH)},
        & m=n.
    \end{cases}
\]
Thus \(\Theta_{m,n}\) is a right-tail, or backward auxiliary, product; in the deterministic-product notation of \Cref{sec:det}, it becomes a pullback product after reversing the site index.
For a multi-index \(\mathbf i=(i_a,\ldots,i_b)\), set
\[
    \ket{\mathbf i}
    :=
    \ket{i_a}\otimes\cdots\otimes\ket{i_b},
    \qquad
    K_{\mathbf i}^{[a,b]}
    :=
    K_{i_a}^{[a]}K_{i_{a+1}}^{[a+1]}\cdots K_{i_b}^{[b]}.
\]
For \(X\in\mcA_{[a,b]}\), define the inserted auxiliary transfer map by
\[
    \widehat X_{[a,b]}(Y)
    :=
    \sum_{\mathbf i,\mathbf j}
    \bra{\mathbf i}X\ket{\mathbf j}
    K_{\mathbf j}^{[a,b]}Y
    \left(K_{\mathbf i}^{[a,b]}\right)^*,
    \qquad
    Y\in B(\mcH).
\]
In particular,
\[
    \widehat I_{[a,b]}
    =
    \Phi_a\circ\Phi_{a+1}\circ\cdots\circ\Phi_b,
\]
and, for \(m<n\),
\[
    \widehat I_{[m+1,n]}
    =
    \Theta_{m,n}.
\]

For \(n\ge1\), define the trace-closed finite-volume MPS vector by
\[
    \ket{\Psi_n}
    :=
    \sum_{i_1,\ldots,i_n=1}^{d_{\mcK}}
    \tr{
        K_{i_1}^{[1]}\cdots K_{i_n}^{[n]}
    }
    \ket{i_1\cdots i_n}.
\]
The trace in the coefficient closes the auxiliary bond.
Whenever \(\inner{\Psi_n}{\Psi_n}\ne0\), let \(\varphi_n\) be the normalized \(n\)-site state on \(\mcA_{[1,n]}\) given by
\[
    \varphi_n(Z)
    :=
    \frac{
        \bra{\Psi_n}Z\ket{\Psi_n}
    }{
        \inner{\Psi_n}{\Psi_n}
    },
    \qquad
    Z\in\mcA_{[1,n]}.
\]
If \(X\in\mcA_{[1,m]}\) and \(m\le n\), we also write \(\varphi_n(X)\) for the expectation of the canonically embedded observable,
\[
    \varphi_n(X)
    :=
    \varphi_n
    \left(
        X\otimes I_{\mcK}^{\otimes(n-m)}
    \right).
\]
Thus the thermodynamic limit concerns \(\varphi_n(X)\) for fixed \(m\in\mbN\) and fixed \(X\in\mcA_{[1,m]}\) as \(n\to\infty\).

\begin{restatable}[]{thm}{introMpsThermodynamicLimit}
\label{thm:intro-mps-thermodynamic-limit}
    Assume that, for every fixed \(q\in\mbN_0\), \(\kappa_{\rm tr}\left(\Theta_{q,n}\right)\to0\) as \(n\to\infty\).
    Then there exists a unique right boundary sequence
    \[
        \rho_r\in\mcS(\mcH),
        \qquad
        \rho_r=\Phi_r(\rho_{r+1}),
        \qquad
        r\ge1,
    \]
    and a unique state \(\varphi_\infty\) on the one-sided quasi-local algebra \(\mcA_{\mbN}\) such that, for every \(m\ge1\) and every \(X\in\mcA_{[1,m]}\),
    \[
        \varphi_\infty(X)
        =
        \tr{
            \widehat X_{[1,m]}(\rho_{m+1})
        }.
    \]
    Let \(\ket{\Psi_n}\) be the trace-closed finite-volume MPS vector, and let \(\varphi_n\) be the normalized \(n\)-site state whenever \(\inner{\Psi_n}{\Psi_n}\ne0\).
    For every fixed local observable \(X\in\mcA_{[1,m]}\), the normalized expectations
    \[
        \varphi_n(X)
        =
        \frac{
            \bra{\Psi_n}
            \left(
                X\otimes I_{\mcK}^{\otimes(n-m)}
            \right)
            \ket{\Psi_n}
        }{
            \inner{\Psi_n}{\Psi_n}
        },
        \qquad
        n\ge m,
    \]
    are defined for all sufficiently large \(n\), and
    \[
        \varphi_n(X)
        \longrightarrow
        \varphi_\infty(X)
        \qquad
        \text{as }n\to\infty.
    \]
\end{restatable}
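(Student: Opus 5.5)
The plan is to reverse the site index so that the right-tail products \(\Theta_{q,n}\) become pullback products, apply \Cref{thm:pullback-canonical-replacement} to obtain the boundary sequence, and then write both the numerator and the denominator of \(\varphi_n(X)\) through \(\Theta_{m,n}\) and replace it by its limiting replacement channel.

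\emph{Step 1: boundary sequence.} Set \(\Psi_t:=\Phi_{-t}\) for \(t\le -1\), and \(\Psi_t:=\mathrm{id}\) for \(t\ge 0\). Then \(\Theta_{q,n}\) is the product \(\Psi_{-q-1}\circ\cdots\circ\Psi_{-n}\), which is a pullback product ending at time \(-q\) and started at time \(-n\). For \(t\ge 0\) the extra identity factors do not change \(\kappa_{\rm tr}\), so the hypothesis gives condition 1 of \Cref{thm:pullback-canonical-replacement} at every terminal time. That theorem then yields unique states \(\rho_{r}\) with \(\rho_r=\Phi_r(\rho_{r+1})\) and
\[
\norm{\Theta_{r-1,n}-R_{\rho_r}}_{1\to1}\longrightarrow 0,\qquad n\to\infty,
\]
where \(R_\rho(Y)=\tr{Y}\rho\). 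Uniqueness of the right boundary sequence is the uniqueness statement of that theorem. Any sequence satisfying the recursion \(\rho_r=\Phi_r(\rho_{r+1})\) gives \(\rho_r=\Theta_{r-1,n}(\rho_{n+1})\), and this is forced to the limit because \(\kappa_{\rm tr}(\Theta_{r-1,n})\to0\).

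\emph{Step 2: transfer formula.} Use the identity \(\tr{A}\,\overline{\tr{B}}=\sum_{a,b}\bra{a}A\ket{a}\bra{b}B^*\ket{b}=\sum_{a,b}\bra{a}A\,\ket{a}\!\bra{b}\,B^*\ket{b}\), with \(\{\ket{a}\}\) an orthonormal basis of \(\mcH\), together with \(K^{[1,n]}_{\mathbf j}=K^{[1,m]}_{\mathbf j'}K^{[m+1,n]}_{\mathbf j''}\). Summing over the tail indices with the identity inserted gives
\[
\bra{\Psi_n}(X\otimes I)\ket{\Psi_n}=\sum_{a,b}\bra{a}\,\widehat X_{[1,m]}\bigl(\Theta_{m,n}(\ket{a}\!\bra{b})\bigr)\ket{b}.
\]
The case \(X=I\), \(m=0\) gives \(\inner{\Psi_n}{\Psi_n}=\sum_{a,b}\bra{a}\Theta_{0,n}(\ket{a}\!\bra{b})\ket{b}\).

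\emph{Step 3: replacement and error bounds.} Replace \(\Theta_{m,n}\) by \(R_{\rho_{m+1}}\). Since \(\tr{\ket{a}\!\bra{b}}=\delta_{ab}\), the main term is \(\sum_a\bra{a}\widehat X_{[1,m]}(\rho_{m+1})\ket{a}=\tr{\widehat X_{[1,m]}(\rho_{m+1})}\). In the denominator the same replacement with \(m=0\) gives \(\tr{\rho_1}=1\).

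The error terms involve the non-Hermitian inputs \(\ket{a}\!\bra{b}\). I would split \(\ket{a}\!\bra{b}\) into Hermitian and anti-Hermitian parts, each of trace norm at most \(1\). The difference \(\Theta_{m,n}-R_{\rho_{m+1}}\) is a Hermiticity-preserving linear map, so its action on each part is bounded by the Hermitian operator norm, which tends to \(0\) by Step 1. This shows the total error is at most a constant times \(d_{\mcH}^2\,\norm{\widehat X_{[1,m]}}_{1\to1}\,\norm{\Theta_{m,n}-R_{\rho_{m+1}}}_{1\to1}\). Because \(m\) and \(X\) are fixed, \(\norm{\widehat X_{[1,m]}}_{1\to1}\) is a finite constant. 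Hence the denominator tends to \(1\), so \(\varphi_n(X)\) is defined for all large \(n\), and \(\varphi_n(X)\to\tr{\widehat X_{[1,m]}(\rho_{m+1})}\).

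\emph{Step 4: the limiting state.} Define \(\varphi_\infty\) on \(\bigcup_m\mcA_{[1,m]}\) by the stated formula. It is well defined because \(\widehat{(X\otimes I)}_{[1,m+1]}=\widehat X_{[1,m]}\circ\Phi_{m+1}\) and \(\Phi_{m+1}(\rho_{m+2})=\rho_{m+1}\). As a pointwise limit of the states \(\varphi_n\), it is linear, positive and unital, hence of norm \(1\). It therefore extends uniquely by continuity to a state on \(\mcA_{\mbN}\), and uniqueness follows from density of the local algebra.

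The main obstacle I expect is the bookkeeping in Step 2: checking the ordering of Kraus products, the placement of adjoints, and that the inserted map is exactly \(\widehat X_{[1,m]}\circ\Theta_{m,n}\). Given that identity, the remaining steps are direct consequences of \Cref{thm:pullback-canonical-replacement} and the norm bounds above.
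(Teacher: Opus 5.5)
Your proposal is correct and follows essentially the same route as the paper: reverse the site index to invoke \Cref{thm:pullback-canonical-replacement}, write $\bra{\Psi_n}(X\otimes I)\ket{\Psi_n}$ as the superoperator trace of $\widehat X_{[1,m]}\circ\Theta_{m,n}$, and replace $\Theta_{m,n}$ by $R_{\rho_{m+1}}$ with an error of order $D_{\mcH}^2\norm{\widehat X_{[1,m]}}_{1\to1}\norm{\Theta_{m,n}-R_{\rho_{m+1}}}_{1\to1}$. The remaining differences are minor and valid. You prove uniqueness of $(\rho_r)$ by direct contraction, and you get positivity of $\varphi_\infty$ as a pointwise limit of the vector states $\varphi_n$; the paper instead extends to a two-sided family and compresses by the isometry $V_m$, and records the explicit rate $16D_{\mcH}^2\norm{X}_\infty\kappa_{\rm tr}(\Theta_{m,n})$. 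Your Hermitian/anti-Hermitian splitting is unnecessary, since the pullback theorem already controls the full complex $1\to1$ norm.
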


The same right-tail memory-loss mechanism also controls spatial clustering.
Indeed, once the observable \(B\) is inserted, the only auxiliary information connecting it to the observable \(A\) must pass through the identity-transfer product across the gap.

\begin{restatable}[]{thm}{introMpsCorrelationBound}
\label{thm:intro-mps-correlation-bound}
    Assume the hypotheses of \Cref{thm:intro-mps-thermodynamic-limit}.
    Let \(A\in\mcA_{[p,q]}\) and \(B\in\mcA_{[r,s]}\), where \(1\le p\le q\) and \(q+1<r\le s\).
    Regard \(A\) and \(B\) as observables on \(\mcA_{\mbN}\) through the canonical embeddings.
    Then
    \[
        \left|
            \varphi_\infty(AB)
            -
            \varphi_\infty(A)\varphi_\infty(B)
        \right|
        \le
        4
        \norm{A}_\infty
        \norm{B}_\infty
        \kappa_{\rm tr}\left(\Theta_{q,r-1}\right).
    \]
\end{restatable}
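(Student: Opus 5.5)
The plan is to express both sides through the formula \(\varphi_\infty(X)=\tr{\widehat X_{[1,m]}(\rho_{m+1})}\) from \Cref{thm:intro-mps-thermodynamic-limit}, and to recognise the covariance as the trace-closed insertion \(\widehat A\) applied to the defect \(\Theta_{q,r-1}-R\). Here \(R\) is the reference replacement channel of that product, with reference state \(\rho_r\). The factor \(4\kappa_{\rm tr}(\Theta_{q,r-1})\) should then come directly from the upper bound in \Cref{thm:det-replacement-approximation}.

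\textbf{Step 1 (factorisation).} The embedded observable \(AB\) lives on \([1,s]\) and equals \(I^{\otimes(p-1)}\otimes A\otimes I^{\otimes(r-q-1)}\otimes B\). The inserted map factorises along the chain:
\[
\widehat{(AB)}_{[1,s]}=\widehat I_{[1,p-1]}\circ\widehat A_{[p,q]}\circ\Theta_{q,r-1}\circ\widehat B_{[r,s]} .
\]
This follows from the product structure of \(K^{[a,b]}_{\mathbf i}\) and the definition of \(\widehat X\). Since \(\widehat I_{[1,p-1]}\) is trace preserving, we get
\[
\varphi_\infty(AB)=\tr{\widehat A_{[p,q]}\Theta_{q,r-1}(Y)},\qquad Y:=\widehat B_{[r,s]}(\rho_{s+1}).
\]
In the same way, \(\varphi_\infty(B)=\tr{Y}\), because \(\widehat I_{[1,r-1]}\) is trace preserving. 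Also \(\varphi_\infty(A)=\tr{\widehat A_{[p,q]}(\rho_{q+1})}\), and the boundary relation \(\rho_r=\Phi_r(\rho_{r+1})\) iterates to \(\rho_{q+1}=\Theta_{q,r-1}(\rho_r)\). Combining these identities,
\[
\varphi_\infty(AB)-\varphi_\infty(A)\varphi_\infty(B)=\tr{\widehat A_{[p,q]}\bigl(\Theta_{q,r-1}-R\bigr)(Y)},\qquad R(Z):=\tr{Z}\,\Theta_{q,r-1}(\rho_r).
\]

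\textbf{Step 2 (two contraction estimates).} Let \(W:=\sum_j\ket{j}\otimes K^{[a,b]}_{j}\), which is an isometry by the left-canonical condition. One checks that \(\tr{\widehat X_{[a,b]}(Z)}=\tr{Z\,W^*(X^{\mathsf T}\otimes I_{\mcH})W}\), up to the transpose convention. Hence \(|\tr{\widehat X(Z)}|\le\norm{X}_\infty\norm{Z}_1\), since transposition preserves the operator norm.

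Similarly, \(\widehat X(Z)\) is a partial trace over the physical factor of \((X^{\mathsf T}\otimes I)\,WZW^*\). Partial trace is trace-norm contractive, so \(\norm{\widehat X(Z)}_1\le\norm{X}_\infty\norm{Z}_1\). Applied to \(B\) with \(Z=\rho_{s+1}\), this gives \(\norm{Y}_1\le\norm{B}_\infty\).

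\textbf{Step 3 (conclusion).} The product \(\Theta_{q,r-1}=\Phi_{q+1}\circ\cdots\circ\Phi_{r-1}\) is a finite composition of channels. The two-sided bound in \Cref{thm:det-replacement-approximation} is a statement about a single CPTP map and its reference replacement channel, so relabelling indices it yields
\[
\norm{\Theta_{q,r-1}-R}_{1\to1}\le4\kappa_{\rm tr}(\Theta_{q,r-1}).
\]
Chaining the three estimates gives the claimed inequality.

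\textbf{Main obstacle.} The delicate point is that \(Y\) is a general, possibly non-Hermitian, matrix. I therefore need the induced norm \(\norm{\cdot}_{1\to1}\) in \Cref{thm:det-replacement-approximation} to be taken over all of \(B(\mcH)\), and the factor \(4\) there is exactly what covers general inputs. If that theorem were stated only for Hermitian inputs, I would split \(Y\) into its Hermitian and anti-Hermitian parts, which would cost an extra constant. So I will check that the stated bound is indeed over complex matrices. The other care point is the transpose and ordering bookkeeping in the factorisation of Step 1.
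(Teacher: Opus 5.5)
Your proposal is correct and is essentially the paper's proof: your defect \((\Theta_{q,r-1}-R)(Y)\) equals \(\Theta_{q,r-1}(Y-\tr{Y}\rho_r)\), which is exactly the paper's \(T_{\rm gap}(Z_B)\). Quoting the \(4\kappa_{\rm tr}\) bound of \Cref{prop:det-radius-replacement} simply packages the paper's explicit centering step (factor \(2\)) together with \Cref{rem:complex-tracezero-control} (factor \(2\)). The obstacle you flag does not arise, because \(\norm{\cdot}_{1\to1}\) is defined over all of \(\matrices\), and no transpose is needed: exactly \(\widehat X(Z)=\operatorname{Tr}_{\mcK}[(X\otimes I)WZW^*]\) and \(\tr{\widehat X(Z)}=\tr{W^*(X\otimes I)WZ}\) (with \(X^{\mathsf T}\) your partial-trace identity would give \(\widehat{X^{\mathsf T}}\) instead), though the norm bounds are unaffected either way.
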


The quantitative estimates behind the preceding theorems are expressed entirely in terms of the same trace-Dobrushin coefficients.
For the local thermodynamic limit, the relevant coefficient is the right-tail coefficient \(\kappa_{\rm tr}(\Theta_{m,n})\), where \(X\in\mcA_{[1,m]}\) is fixed and \(n\to\infty\).
Under the hypothesis of \Cref{thm:intro-mps-thermodynamic-limit}, for every fixed \(m\ge1\), the decay \(\kappa_{\rm tr}(\Theta_{m,n})\to0\) implies that, for all sufficiently large \(n\), the trace-closed normalization satisfies \(\inner{\Psi_n}{\Psi_n}\ne0\).
Moreover, for every fixed local observable \(X\in\mcA_{[1,m]}\), one has, for all sufficiently large \(n\),
\[
    \left|
        \varphi_n(X)
        -
        \varphi_\infty(X)
    \right|
    \le
    16D_{\mcH}^2
    \norm{X}_\infty
    \kappa_{\rm tr}\left(\Theta_{m,n}\right),
    \qquad
    D_{\mcH}:=\dim\mcH.
\]
Therefore, one may use \Cref{sec:det-rates} to obtain the convergence rates above. 

%%%%%%%%%%%%%%%%%%%%%%%%%%%%%%%%%%%%%%%%%%%%%%%%%%%%%%%%%%%%%%%%%%%%%%%%%%%%
%%%%%%%%%%%%%%%%%%%%%%%%%%%%%%%%%%%%%%%%%%%%%%%%%%%%%%%%%%%%%%%%%%%%%%%%%%%%

\subsubsection*{Random MPS with CPTP cocycles}
\label{sec:intro-random-mps}

%%%%%%%%%%%%%%%%%%%%%%%%%%%%%%%%%%%%%%%%%%%%%%%%%%%%%%%%%%%%%%%%%%%%%%%%%%%%
%%%%%%%%%%%%%%%%%%%%%%%%%%%%%%%%%%%%%%%%%%%%%%%%%%%%%%%%%%%%%%%%%%%%%%%%%%%%

We now pass from deterministic inhomogeneous MPS to stationary random inhomogeneous MPS.
Let \((\Omega,\mcF,\pr,\theta)\) be an invertible probability-preserving system.
Let
\[
    \left\{
        K_i(\omega)
    \right\}_{i=1}^{d_{\mcK}}
    \subset B(\mcH)
\]
be a measurable family of left-canonical tensors, meaning that
\[
    \sum_{i=1}^{d_{\mcK}}
    K_i(\omega)^*K_i(\omega)
    =
    I_{\mcH}
\]
for \(\pr\)-a.e. \(\omega\).
After restricting to a full-measure \(\theta\)-invariant set, we assume this identity holds along every orbit.
The associated CPTP transfer map is
\[
    \Phi_\omega(Y)
    :=
    \sum_{i=1}^{d_{\mcK}}
    K_i(\omega)Y K_i(\omega)^*,
    \qquad
    Y\in B(\mcH).
\]
Thus \(\omega\mapsto\Phi_\omega\) is a stationary CPTP cocycle.

For a fixed realization \(\omega\), place the tensor sampled at \(\theta^{-n}\omega\) on site \(n\):
\[
    K_i^{[n]}(\omega)
    :=
    K_i(\theta^{-n}\omega),
    \qquad
    n\ge1.
\]
The resulting MPS is inhomogeneous for each fixed realization \(\omega\), while the tensor sequence is stationary in distribution.
We use the deterministic MPS notation fiberwise.
Thus \(\widehat X_{\omega;[a,b]}\), \(\Theta_{\omega;m,n}\), \(\ket{\Psi_n^\omega}\), and \(\varphi_{n,\omega}\) denote the corresponding objects built from the realization \(\left(K_i^{[n]}(\omega)\right)_{n\ge1}\).

With this indexing convention, the right-tail product is
\[
    \Theta_{\omega;m,n}
    =
    \Phi_{\theta^{-(m+1)}\omega}
    \circ
    \Phi_{\theta^{-(m+2)}\omega}
    \circ
    \cdots
    \circ
    \Phi_{\theta^{-n}\omega}.
\]
Equivalently,
\[
    \Theta_{\omega;m,n}
    =
    \Phi_{\theta^{-m}\omega;0:-(n-m)}.
\]
Thus the random MPS right-tail product is exactly a pullback product for the random CPTP cocycle.
This is the bridge between the random-channel replacement theory of \Cref{thm:negative-exponent-replacement} and the random MPS thermodynamic limit.

The random MPS works \cite{MovassaghSchenker2021,MovassaghSchenker2022,MPS} are based on irreducibility or eventual strict-positivity mechanisms for transfer products.
In the CPTP-gauge setting considered here, eventual strict positivity is one sufficient mechanism for a negative trace-Dobrushin Lyapunov exponent, but it is not necessary.
The hypothesis below is formulated directly at the contraction level through \(\lambda_{\rm tr}<0\).
Consequently, at the level of CPTP transfer cocycles, the present results apply beyond the eventual strict-positivity regime.

For \(n\ge1\), define the trace-closed random finite-volume MPS vector by
\[
    \ket{\Psi_n^\omega}
    :=
    \sum_{i_1,\ldots,i_n=1}^{d_{\mcK}}
    \tr{
        K_{i_1}^{[1]}(\omega)
        \cdots
        K_{i_n}^{[n]}(\omega)
    }
    \ket{i_1\cdots i_n}.
\]
Whenever \(\inner{\Psi_n^\omega}{\Psi_n^\omega}\ne0\), let \(\varphi_{n,\omega}\) be the normalized \(n\)-site state on \(\mcA_{[1,n]}\), namely
\[
    \varphi_{n,\omega}(Z)
    :=
    \frac{
        \bra{\Psi_n^\omega}Z\ket{\Psi_n^\omega}
    }{
        \inner{\Psi_n^\omega}{\Psi_n^\omega}
    },
    \qquad
    Z\in\mcA_{[1,n]}.
\]
If \(X\in\mcA_{[1,m]}\) and \(m\le n\), we also write
\[
    \varphi_{n,\omega}(X)
    :=
    \varphi_{n,\omega}
    \left(
        X\otimes I_{\mcK}^{\otimes(n-m)}
    \right).
\]

\begin{restatable}[]{thm}{introRandomMpsThermodynamicLimit}
\label{thm:intro-random-mps-thermodynamic-limit}
    Assume that \(\lambda_{\rm tr}<0\) \(\pr\)-a.s.
    Then there exist a measurable \(\theta\)-invariant random variable \(\beta:\Omega\to(-\infty,0)\), a full-measure \(\theta\)-invariant set \(\Omega_*\subseteq\Omega\), and an a.s. finite random variable \(C_\beta^-\) such that the following holds.
    There exists a unique stationary random boundary state
    \[
        \rho:\Omega\to\mcS(\mcH),
        \qquad
        \Phi_\omega(\rho_\omega)=\rho_{\theta\omega},
    \]
    up to \(\pr\)-a.e. equality.
    The representative may be chosen so that the stationarity relation holds for every \(\omega\in\Omega_*\).
    There exists a weak-\(*\) measurable random state field
    \[
        \varphi_\infty:\Omega\to\mcS(\mcA_{\mbN}),
        \qquad
        \omega\mapsto\varphi_{\infty,\omega},
    \]
    unique up to \(\pr\)-a.e. equality among weak-\(*\) measurable fields satisfying the local formula below.
    For every \(\omega\in\Omega_*\), every \(m\ge1\), and every \(X\in\mcA_{[1,m]}\),
    \[
        \varphi_{\infty,\omega}(X)
        =
        \tr{
            \widehat X_{\omega;[1,m]}
            \left(
                \rho_{\theta^{-m}\omega}
            \right)
        }.
    \]
    For every \(\omega\in\Omega_*\) and every fixed local observable \(X\in\mcA_{[1,m]}\), the normalized trace-closed expectations \(\varphi_{n,\omega}(X)\) are defined for all sufficiently large \(n\), and for all sufficiently large \(n\),
    \[
        \left|
            \varphi_{n,\omega}(X)
            -
            \varphi_{\infty,\omega}(X)
        \right|
        \le
        16D_{\mcH}^2
        \norm{X}_\infty
        C_\beta^-\left(\theta^{-m}\omega\right)
        e^{\beta(\omega)(n-m)},
        \qquad
        D_{\mcH}:=\dim\mcH.
    \]
    In particular,
    \[
        \varphi_{n,\omega}(X)
        \longrightarrow
        \varphi_{\infty,\omega}(X)
        \qquad
        \text{as }n\to\infty
    \]
    for every fixed local observable \(X\) and every \(\omega\in\Omega_*\).
\end{restatable}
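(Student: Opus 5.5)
The plan is to reduce the statement fiberwise to the deterministic MPS theorem, \Cref{thm:intro-mps-thermodynamic-limit}, and its quantitative companion bound $16D_{\mcH}^2\norm{X}_\infty\kappa_{\rm tr}(\Theta_{m,n})$. The random channel estimates of \Cref{thm:negative-exponent-replacement} then supply the decay of the right-tail coefficients and identify the boundary states.

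First I invoke \Cref{thm:negative-exponent-replacement}. It gives the unique dynamically stationary random state $\rho$, the $\theta$-invariant exponent $\beta<0$, the full-measure invariant set $\Omega_*$, and the constant $C_\beta^-$ with $\norm{\Phi_{\omega;0:-n}-R_\omega}_{1\to1}\le 4C_\beta^-(\omega)e^{\beta(\omega)n}$. Intersecting countably many full-measure sets and passing to $\bigcap_k\theta^k(\cdot)$, I may assume the stationarity relation $\Phi_\omega(\rho_\omega)=\rho_{\theta\omega}$ and both estimates hold for every point of $\Omega_*$. The identity $\Theta_{\omega;m,n}=\Phi_{\theta^{-m}\omega;0:-(n-m)}$ together with $\kappa\le\norm{\Phi-R}_{1\to1}$ (the lower bound in \Cref{thm:det-replacement-approximation}, applied with any replacement channel) then gives
\[
\kappa_{\rm tr}(\Theta_{\omega;m,n})\le 4C_\beta^-(\theta^{-m}\omega)e^{\beta(\omega)(n-m)},
\]
using $\beta(\theta^{-m}\omega)=\beta(\omega)$. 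Hence for each $\omega\in\Omega_*$ and each fixed $q$, $\kappa_{\rm tr}(\Theta_{\omega;q,n})\to0$, so the hypothesis of \Cref{thm:intro-mps-thermodynamic-limit} holds for the realization $(K_i^{[n]}(\omega))_n$. That theorem produces a unique right boundary sequence $\rho^\omega_r$ with $\rho^\omega_r=\Phi_{\theta^{-r}\omega}(\rho^\omega_{r+1})$, a state $\varphi_{\infty,\omega}$ given by $\tr{\widehat X_{\omega;[1,m]}(\rho^\omega_{m+1})}$, and well-definedness of $\varphi_{n,\omega}(X)$ for large $n$.

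The main bookkeeping step, and the one I expect to need the most care, is identifying the fiberwise boundary sequence with the random stationary state, together with the index shift in the local formula. The candidate is $\tilde\rho_r:=\rho_{\theta^{-r+1}\omega}$. Stationarity $\Phi_{\theta^{-r}\omega}(\rho_{\theta^{-r}\omega})=\rho_{\theta^{-r+1}\omega}$ shows that this is consistent once the paper's indexing convention of the right boundary (whether $\rho_{m+1}$ sits to the right of site $m$) is matched with $\rho_{\theta^{-m}\omega}$ in the stated formula. I would check which shift is correct by directly computing the pullback limit $\lim_n\Theta_{\omega;m,n}(\tau)=\rho_{\theta^{-m}\omega}$ from the second estimate. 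Uniqueness of the deterministic boundary sequence then forces the identification, and the local formula follows.

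For the rate, I combine the deterministic quantitative bound with the displayed estimate on $\kappa_{\rm tr}(\Theta_{\omega;m,n})$. The factor $4$ is either absorbed, or the state-level constant $2$ from \Cref{thm:negative-exponent-replacement} is used so that the exact constant $16D_{\mcH}^2$ survives; I would reconcile this with the precise form of the deterministic estimate. Measurability of $\omega\mapsto\varphi_{\infty,\omega}(X)$ follows from measurability of $\rho$ and of the tensors, since the local formula is a continuous function of them; density of local observables then gives weak-$*$ measurability. Uniqueness of the field follows because any two weak-$*$ measurable fields satisfying the local formula agree on local observables wherever the representatives of $\rho$ agree, which is almost surely, and hence agree on all of $\mcA_{\mbN}$ by density.
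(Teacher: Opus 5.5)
Your proposal is correct and follows the paper's proof essentially step by step. The ingredients match: the fiberwise reduction via $\Theta_{\omega;m,n}=\Phi_{\theta^{-m}\omega;0:-(n-m)}$; the identification $\rho^\omega_r=\rho_{\theta^{-(r-1)}\omega}$ through uniqueness of the deterministic right boundary sequence (your candidate and index shift are exactly right, so the hedging is unnecessary); and measurability and uniqueness of $\varphi_\infty$ via the local formula and density.

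The only difference is how the constant is obtained. The paper avoids your extra factor $4$ by using the coefficient-level bound $\kappa_{\rm tr}(\Phi_{\eta;0:-L})\le C_\beta^-(\eta)e^{\beta(\eta)L}$, which is established inside the proof of \Cref{thm:negative-exponent-replacement}. Your option of absorbing the $4$ into $C_\beta^-$ is equally legitimate, because $C_\beta^-$ is only existentially quantified in this statement. Your other option, using the state-level constant $2$, gives only $\kappa_{\rm tr}\le 2C_\beta^-e^{\beta n}$, hence $32D_{\mcH}^2$ rather than $16D_{\mcH}^2$.
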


The same quenched trace-Dobrushin rate gives spatial clustering in the random infinite-volume state.
For two separated observables, the auxiliary information connecting the right observable to the left observable must pass through the random identity-transfer product across the gap.

\begin{restatable}[]{thm}{introRandomMpsCorrelationBound}
\label{thm:intro-random-mps-correlation-bound}
    Assume that \(\lambda_{\rm tr}<0\) \(\pr\)-a.s.
    Let \(\Omega_*\), \(\beta\), \(C_\beta^-\), and the weak-\(*\) measurable representative \(\varphi_\infty\) be chosen as in \Cref{thm:intro-random-mps-thermodynamic-limit}.
    Let \(A\in\mcA_{[p,q]}\) and \(B\in\mcA_{[r,s]}\), where \(1\le p\le q\) and \(q+1<r\le s\).
    Regard \(A\) and \(B\) as observables on \(\mcA_{\mbN}\) through the canonical embeddings.
    Set
    \[
        L:=r-q-1.
    \]
    Then, for every \(\omega\in\Omega_*\),
    \[
        \left|
            \varphi_{\infty,\omega}(AB)
            -
            \varphi_{\infty,\omega}(A)
            \varphi_{\infty,\omega}(B)
        \right|
        \le
        4
        \norm{A}_\infty
        \norm{B}_\infty
        C_\beta^-\left(\theta^{-q}\omega\right)
        e^{\beta(\omega)L}.
    \]
    Thus the random infinite-volume MPS has quenched exponential clustering across gaps.
\end{restatable}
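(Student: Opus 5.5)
The statement to prove is \Cref{thm:intro-random-mps-correlation-bound}. The plan is to reduce it, fiberwise, to the deterministic clustering estimate of \Cref{thm:intro-mps-correlation-bound}, and then to bound the gap coefficient with the pullback rate of \Cref{thm:negative-exponent-replacement}.

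Fix \(\omega\in\Omega_*\) and regard \(K_i^{[n]}(\omega)=K_i(\theta^{-n}\omega)\) as a deterministic inhomogeneous left-canonical MPS. By \Cref{thm:intro-random-mps-thermodynamic-limit}, \(\varphi_{\infty,\omega}\) is given by the local formula \(\varphi_{\infty,\omega}(X)=\tr{\widehat X_{\omega;[1,m]}(\rho_{\theta^{-m}\omega})}\). The right boundary sequence \(\rho_r^\omega:=\rho_{\theta^{-r}\omega}\) satisfies \(\Phi_r(\rho_{r+1}^\omega)=\Phi_{\theta^{-r-1}\omega}(\rho_{\theta^{-r-1}\omega})=\rho_{\theta^{-r}\omega}\) by stationarity on \(\Omega_*\). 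This is exactly the boundary relation in \Cref{thm:intro-mps-thermodynamic-limit}.

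The main step is to check that the hypothesis of \Cref{thm:intro-mps-thermodynamic-limit} holds fiberwise. Using \(\Theta_{\omega;q,n}=\Phi_{\theta^{-q}\omega;0:-(n-q)}\), we have
\[
\kappa_{\rm tr}(\Theta_{\omega;q,n})\le\tfrac12\norm{\Phi_{\theta^{-q}\omega;0:-(n-q)}-R_{\theta^{-q}\omega}}_{1\to1},
\]
because \(\kappa_{\rm tr}(\Phi)\le\frac12\norm{\Phi-R}_{1\to1}\) for any replacement channel \(R\); indeed \(\Phi(\rho)-\Phi(\sigma)=(\Phi-R)(\rho-\sigma)\). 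If even this factor is doubtful, the lower bound \(\kappa\le\norm{\Phi-R^{(\tau)}}\) from \Cref{thm:det-replacement-approximation}, combined with the triangle inequality, gives at worst a constant \(2\). Since \(\Omega_*\) is \(\theta\)-invariant, \(\theta^{-q}\omega\in\Omega_*\), so \Cref{thm:negative-exponent-replacement} gives
\[
\kappa_{\rm tr}(\Theta_{\omega;q,n})\le 2C_\beta^-(\theta^{-q}\omega)e^{\beta(\theta^{-q}\omega)(n-q)}\to0.
\]
By \(\theta\)-invariance of \(\beta\), the exponent equals \(\beta(\omega)\). Hence the deterministic hypothesis holds for this fiber.

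Once the hypothesis holds, the unique state of \Cref{thm:intro-mps-thermodynamic-limit} coincides with \(\varphi_{\infty,\omega}\), since both are determined by the same local formula. \Cref{thm:intro-mps-correlation-bound} then gives
\[
\left|\varphi_{\infty,\omega}(AB)-\varphi_{\infty,\omega}(A)\varphi_{\infty,\omega}(B)\right|\le4\norm{A}_\infty\norm{B}_\infty\,\kappa_{\rm tr}(\Theta_{\omega;q,r-1}).
\]
Here \(\Theta_{\omega;q,r-1}\) has length \(r-1-q=L\). The \emph{state-level} pullback estimate in \Cref{thm:negative-exponent-replacement}, which has constant \(2\) in place of \(4\), bounds \(\sup_{\tau}\norm{\Phi_{\theta^{-q}\omega;0:-L}(\tau)-\rho_{\theta^{-q}\omega}}_1\le 2C_\beta^-(\theta^{-q}\omega)e^{\beta(\omega)L}\). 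Since \(\kappa_{\rm tr}\) is half the trace-norm diameter of the image, the triangle inequality gives
\[
\kappa_{\rm tr}(\Theta_{\omega;q,r-1})\le C_\beta^-(\theta^{-q}\omega)e^{\beta(\omega)L}.
\]
Substituting this yields the claimed inequality.

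The main obstacle is bookkeeping rather than analysis. First, the constants must match: the prefactor \(4\) in the statement requires the sharp bound \(\kappa\le C^-_\beta e^{\beta L}\), which comes from the state-level estimate together with the diameter formula for \(\kappa_{\rm tr}\) in \Cref{prop:dobr}. Second, \(\theta^{-q}\omega\) must lie in \(\Omega_*\) for the pullback estimate to be evaluated there, which \(\theta\)-invariance of \(\Omega_*\) guarantees. If the state-level estimate is stated only for the fixed terminal time \(0\), then the shift identity \(\Theta_{\omega;q,n}=\Phi_{\theta^{-q}\omega;0:-(n-q)}\) is what places it at terminal time \(0\).
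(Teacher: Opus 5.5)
Your reduction is the paper's: apply \Cref{thm:intro-mps-correlation-bound} fiberwise, then bound \(\kappa_{\rm tr}(\Theta_{\omega;q,r-1})=\kappa_{\rm tr}(\Phi_{\eta;0:-L})\) at \(\eta:=\theta^{-q}\omega\in\Omega_*\). However, the step you single out as fixing the constant is off by a factor of \(2\).

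Suppose \(\sup_{\sigma\in\mcS(\mcH)}\norm{\Phi_{\eta;0:-L}(\sigma)-\rho_\eta}_1\le\delta\) with \(\delta:=2C_\beta^-(\eta)e^{\beta(\eta)L}\). The triangle inequality bounds the trace-norm diameter of the image by \(2\delta\). Half the diameter then gives \(\kappa_{\rm tr}(\Phi_{\eta;0:-L})\le\delta=2C_\beta^-(\eta)e^{\beta(\eta)L}\), not \(C_\beta^-(\eta)e^{\beta(\eta)L}\). The radius of the image about any center is only bounded \emph{below} by \(\kappa_{\rm tr}\), exactly as in \Cref{prop:pullback-entrance-law}.

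The operator estimate does no better. The correct comparison is \(\kappa_{\rm tr}(\Phi)\le\norm{\Phi-R_\zeta}_{1\to1}\) (\Cref{prop:det-radius-replacement}), which gives \(4C_\beta^-(\eta)e^{\beta(\eta)L}\). Your claimed \(\kappa_{\rm tr}(\Phi)\le\frac12\norm{\Phi-R}_{1\to1}\) is false: for \(\Phi=\operatorname{id}\) on a qubit and \(R=R_{I/2}\) the two sides are \(1\) and \(\frac12\). That error is harmless where you use it, since the qualitative hypothesis check does not care about constants. So, as written, your argument proves the clustering bound with prefactor \(8\). The two estimates stated in \Cref{thm:negative-exponent-replacement} only yield \(\kappa_{\rm tr}(\Phi_{\eta;0:-L})\le 2C_\beta^-(\eta)e^{\beta(\eta)L}\).

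The missing input is the coefficient-level bound \(\kappa_{\eta;0:-n}\le C_\beta^-(\eta)e^{\beta(\eta)n}\) for all \(n\ge1\). This holds by the construction of \(C_\beta^-\) in the proof of \Cref{thm:negative-exponent-replacement}: off a null set, it dominates \(\sup_{n\ge1}e^{-\beta(\eta)n}\kappa_{\eta;0:-n}\). Choosing \(\Omega_*\), \(\beta\), \(C_\beta^-\) as in \Cref{thm:intro-random-mps-thermodynamic-limit} guarantees this bound at every \(\theta^j\omega\) with \(\omega\in\Omega_*\) (via the set \(E^-_{\rm coef}\) there). This is exactly what the paper inserts into the deterministic bound.

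Alternatively, you can keep the black-box operator estimate and bypass \(\kappa_{\rm tr}\). In the proof of \Cref{thm:intro-mps-correlation-bound}, the identity \(T_{\rm gap}(\rho_r)=\rho_{q+1}\) gives \(T_{\rm gap}(Z_B)=(T_{\rm gap}-R_{\theta^{-q}\omega})(Y_B)\), with \(\norm{Y_B}_1\le\norm{B}_\infty\). Hence \(\norm{T_{\rm gap}(Z_B)}_1\le 4C_\beta^-(\theta^{-q}\omega)e^{\beta(\omega)L}\norm{B}_\infty\), which also yields the factor \(4\).

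Also correct your indexing. The site-\(r\) channel is \(\Phi_{\theta^{-r}\omega}\), and the fiber boundary is \(\rho_r^\omega=\rho_{\theta^{-(r-1)}\omega}\). Your two shifts cancel in the recursion check. But with \(\rho_r^\omega=\rho_{\theta^{-r}\omega}\), the deterministic local formula would evaluate at \(\rho_{\theta^{-(m+1)}\omega}\) and would not coincide with \(\varphi_{\infty,\omega}\).
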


We note that one may obtain high probability annealed decay similar to the techniques developed in \Cite{MPS}. We show one such estimate in \Cref{cor:random-mps-high-probability-clustering}. 

%%%%%%%%%%%%%%%%%%%%%%%%%%%%%%%%%%%%%%%%%%%%%%%%%%%%%%%%%%%%%%%%%%%%%%%%%%%%
%%%%%%%%%%%%%%%%%%%%%%%%%%%%%%%%%%%%%%%%%%%%%%%%%%%%%%%%%%%%%%%%%%%%%%%%%%%%

\section{The centered trace-Dobrushin coefficient}
\label{sec:coefficient}

%%%%%%%%%%%%%%%%%%%%%%%%%%%%%%%%%%%%%%%%%%%%%%%%%%%%%%%%%%%%%%%%%%%%%%%%%%%%
%%%%%%%%%%%%%%%%%%%%%%%%%%%%%%%%%%%%%%%%%%%%%%%%%%%%%%%%%%%%%%%%%%%%%%%%%%%%

We begin with the coefficient used throughout the paper.
This is not a new one-step coefficient.
It is the standard trace-distance contraction coefficient, often called the trace-norm Dobrushin contraction coefficient; see, for instance, \cite{HiaiRuskai2016,Hirche2024}.
Our use of it is product-level.
For a product of channels, \(2\kappa_{\rm tr}\) is exactly the trace-norm diameter of the evolved state space, and hence measures the residual memory of the input state.
The next sections show that the decay of this state-level memory coefficient is equivalent, up to universal constants, to map-level asymptotic replacement.

Let \(\matrices:=M_d(\mathbb C)\) denote the space of \(d\times d\) complex matrices.
The case \(d=1\) is trivial, so throughout this section we assume \(d\ge2\).
Let
\[
    \mcS_d
    :=
    \{\rho\in\matrices:\rho\ge0,\ \tr{\rho}=1\}
\]
be the state space.
We write \(\|\cdot\|_1\) for the trace norm.
The centered self-adjoint trace-zero subspace is
\[
    \tracezero
    :=
    \{X\in\matrices^{\rm sa}:\tr{X}=0\}.
\]
Here \(\matrices^{\rm sa}\) denotes the real vector space of self-adjoint matrices in \(\matrices\).
That is,
\[
    \matrices^{\rm sa}
    :=
    \{X\in\matrices:X=X^*\}.
\]
This is the natural centered space for trace-preserving dynamics.
Every difference of two density matrices lies in \(\tracezero\).
Every positive trace-preserving linear map sends \(\tracezero\) into itself.

\begin{definition}[Centered trace-Dobrushin coefficient]
\label{def:centered-trace-dobrushin}
    Let \(T:\matrices\to\matrices\) be positive and trace preserving.
    The centered trace-Dobrushin coefficient of \(T\) is
    \[
        \kappa_{\rm tr}(T)
        :=
        \sup_{X\in\tracezero\setminus\{0\}}
        \frac{\|T(X)\|_1}{\|X\|_1}.
    \]
    Equivalently, \(\kappa_{\rm tr}(T)\) is the \(1\to1\) operator norm of \(T\) restricted to the real Banach space \((\tracezero,\|\cdot\|_1)\).
\end{definition}

\begin{prop}
\label{prop:dobr}
    For every positive trace-preserving map \(T:\matrices\to\matrices\),
    \begin{equation}
    \label{eq:kappa-state-forms}
        \kappa_{\rm tr}(T)
        =
        \sup_{\substack{\rho,\sigma\in\mcS_d\\ \rho\ne\sigma}}
        \frac{\|T(\rho)-T(\sigma)\|_1}{\|\rho-\sigma\|_1}
        =
        \frac12
        \sup_{\rho,\sigma\in\mcS_d}
        \|T(\rho)-T(\sigma)\|_1 .
    \end{equation}
    Equivalently,
    \begin{equation}
    \label{eq:kappa-rank-one-form}
        \kappa_{\rm tr}(T)
        =
        \frac12
        \sup
        \left\{
            \|T(E-F)\|_1:
            E,F \text{ orthogonal rank-one projections}
        \right\}.
    \end{equation}
    Consequently \(0\le\kappa_{\rm tr}(T)\le1\).
    Moreover, \(\kappa_{\rm tr}(T)=0\) if and only if \(T\) is a replacement channel, namely \(T(X)=\tr{X}\tau\) for some \(\tau\in\mcS_d\).
\end{prop}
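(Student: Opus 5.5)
The plan is to reduce everything to the Jordan decomposition of a centered self-adjoint matrix. Let \(X\in\tracezero\setminus\{0\}\) and write \(X=X_+-X_-\) with \(X_\pm\ge0\) of orthogonal supports. Since \(\tr{X}=0\), we have \(\tr{X_+}=\tr{X_-}=\tfrac12\|X\|_1=:c>0\). Hence \(X=c(\rho-\sigma)\) with \(\rho=X_+/c\) and \(\sigma=X_-/c\) states of orthogonal support, so \(\|\rho-\sigma\|_1=2\).

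Write \(a,b,e\) for the three suprema in \eqref{eq:kappa-state-forms} and \eqref{eq:kappa-rank-one-form}, in order. I would establish the cycle of inequalities
\[
    \kappa_{\rm tr}(T)\le e\le b\le a\le\kappa_{\rm tr}(T).
\]
\begin{itemize}
    \item The bound \(a\le\kappa_{\rm tr}(T)\) is immediate, because \(\rho-\sigma\in\tracezero\).
    \item The bound \(b\le a\) holds because \(\|\rho-\sigma\|_1\le2\), so \(\tfrac12\|T(\rho)-T(\sigma)\|_1\le \|T(\rho-\sigma)\|_1/\|\rho-\sigma\|_1\) when \(\rho\neq\sigma\); the case \(\rho=\sigma\) contributes \(0\).
    \item The bound \(e\le b\) holds because rank-one projections are states.
\end{itemize}

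The main step is \(\kappa_{\rm tr}(T)\le e\). By the decomposition above,
\[
    \frac{\|T(X)\|_1}{\|X\|_1}=\tfrac12\|T(\rho)-T(\sigma)\|_1 ,
\]
where \(\rho,\sigma\) have orthogonal supports. Take spectral decompositions \(\rho=\sum_i p_iE_i\) and \(\sigma=\sum_j q_jF_j\) with rank-one projections, where \(\sum_i p_i=\sum_j q_j=1\). Orthogonality of the supports gives \(E_i\perp F_j\) for every \(i,j\). Then
\[
    \rho-\sigma=\sum_{i,j}p_iq_j(E_i-F_j).
\]
By linearity of \(T\) and the triangle inequality,
\[
    \tfrac12\|T(\rho-\sigma)\|_1\le\sum_{i,j}p_iq_j\,\tfrac12\|T(E_i-F_j)\|_1\le e .
\]
This convex-combination coupling is the only nontrivial point. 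The orthogonality of \(E_i\) and \(F_j\) is exactly what the Jordan decomposition provides.

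For the bound \(\kappa_{\rm tr}(T)\le1\), positivity and trace preservation give \(\|T(\rho)\|_1=\tr{T(\rho)}=1\) for every state. Hence \(\tfrac12\|T(\rho)-T(\sigma)\|_1\le1\) by the triangle inequality, and nonnegativity is clear.

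For the characterization of \(\kappa_{\rm tr}(T)=0\): if \(T=R_\tau\), then \(T\) vanishes on trace-zero matrices, so \(\kappa_{\rm tr}(T)=0\). Conversely, if \(\kappa_{\rm tr}(T)=0\), then by \(b=0\) the state \(T(\rho)=:\tau\) is independent of \(\rho\), and \(\tau\in\mcS_d\) by positivity and trace preservation. The states span \(\matrices\) over \(\mbC\): every self-adjoint matrix is a real combination of rank-one projections, and \(\matrices=\matrices^{\rm sa}+i\matrices^{\rm sa}\). Since \(T(\rho)=\tr{\rho}\tau\) on states, linearity gives \(T(X)=\tr{X}\tau\) for all \(X\in\matrices\).
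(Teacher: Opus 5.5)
Your proposal is correct and follows essentially the same route as the paper: the Jordan decomposition of a centered self-adjoint matrix into a scaled difference of orthogonally supported states, spectral decomposition plus convexity of the trace norm for the rank-one formula, and linear extension from states for the replacement characterization. The only differences are cosmetic. You package the identities as the single cycle \(\kappa_{\rm tr}(T)\le e\le b\le a\le\kappa_{\rm tr}(T)\), and you get \(\kappa_{\rm tr}(T)\le1\) from \(\|T(\rho)\|_1=1\) on states, whereas the paper uses trace-norm contractivity of \(T\) on all self-adjoint matrices.
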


\begin{proof}
    If \(\rho,\sigma\in\mcS_d\), then \(\rho-\sigma\in\tracezero\).
    Hence
    \[
        \|T(\rho)-T(\sigma)\|_1
        \le
        \kappa_{\rm tr}(T)\|\rho-\sigma\|_1 .
    \]
    This proves
    \[
        \sup_{\substack{\rho,\sigma\in\mcS_d\\ \rho\ne\sigma}}
        \frac{\|T(\rho)-T(\sigma)\|_1}{\|\rho-\sigma\|_1}
        \le
        \kappa_{\rm tr}(T).
    \]

    Conversely, let \(X\in\tracezero\setminus\{0\}\).
    Write the positive-negative decomposition of the self-adjoint matrix \(X\) as
    \[
        X=X_+-X_-,
        \qquad
        X_\pm\ge0,
        \qquad
        X_+X_-=0.
    \]
    Since \(\tr{X}=0\), one has
    \[
        \tr{X_+}
        =
        \tr{X_-}
        =
        \frac12\|X\|_1 .
    \]
    Both traces are positive because \(X\ne0\) and \(\tr{X}=0\).
    Define
    \[
        \rho_+:=\frac{X_+}{\tr{X_+}},
        \qquad
        \rho_-:=\frac{X_-}{\tr{X_-}}.
    \]
    Then \(\rho_+,\rho_-\in\mcS_d\), their supports are orthogonal, and
    \[
        X
        =
        \frac{\|X\|_1}{2}
        (\rho_+-\rho_-).
    \]
    Since \(\|\rho_+-\rho_-\|_1=2\), it follows that
    \[
        \frac{\|T(X)\|_1}{\|X\|_1}
        =
        \frac12\|T(\rho_+)-T(\rho_-)\|_1
        =
        \frac{\|T(\rho_+)-T(\rho_-)\|_1}{\|\rho_+-\rho_-\|_1}.
    \]
    Taking the supremum over \(X\in\tracezero\setminus\{0\}\) gives the reverse inequality in the first identity of \eqref{eq:kappa-state-forms}.
    The same decomposition also gives
    \[
        \kappa_{\rm tr}(T)
        \le
        \frac12
        \sup_{\rho,\sigma\in\mcS_d}
        \|T(\rho)-T(\sigma)\|_1 .
    \]
    The opposite inequality follows from the first identity in \eqref{eq:kappa-state-forms} and from the bound \(\|\rho-\sigma\|_1\le2\) for states.
    This proves \eqref{eq:kappa-state-forms}.

    We next prove the rank-one formula.
    Let \(\rho_+\) and \(\rho_-\) be states with orthogonal supports.
    Choose spectral decompositions
    \[
        \rho_+
        =
        \sum_i p_iE_i,
        \qquad
        \rho_-
        =
        \sum_j q_jF_j,
    \]
    where \(E_i\) and \(F_j\) are rank-one projections and \(E_iF_j=0\) for all \(i,j\).
    Since \(\sum_i p_i=\sum_j q_j=1\), we have
    \[
        \rho_+-\rho_-
        =
        \sum_{i,j}p_iq_j(E_i-F_j).
    \]
    By convexity of the trace norm,
    \[
        \|T(\rho_+-\rho_-)\|_1
        \le
        \sup_{i,j}\|T(E_i-F_j)\|_1.
    \]
    Taking the supremum over all orthogonally supported \(\rho_+,\rho_-\in\mcS_d\) gives
    \[
        \kappa_{\rm tr}(T)
        \le
        \frac12
        \sup
        \left\{
            \|T(E-F)\|_1:
            E,F \text{ orthogonal rank-one projections}
        \right\}.
    \]
    The reverse inequality follows because every pair of orthogonal rank-one projections is a pair of states with trace distance \(2\).
    This proves \eqref{eq:kappa-rank-one-form}.

    Positivity and trace preservation imply trace-norm contractivity on self-adjoint matrices.
    Indeed, if \(Y=Y_+-Y_-\) is the positive-negative decomposition of a self-adjoint matrix \(Y\), then \(T(Y_+)\) and \(T(Y_-)\) are positive.
    Hence
    \[
        \|T(Y)\|_1
        \le
        \tr{T(Y_+)}+\tr{T(Y_-)}
        =
        \tr{Y_+}+\tr{Y_-}
        =
        \|Y\|_1 .
    \]
    Applying this to \(Y\in\tracezero\) gives \(0\le\kappa_{\rm tr}(T)\le1\).

    Finally, suppose \(\kappa_{\rm tr}(T)=0\).
    Then \(T(\rho)=T(\sigma)\) for all \(\rho,\sigma\in\mcS_d\).
    Fix \(\rho_0\in\mcS_d\) and set \(\tau:=T(\rho_0)\).
    Since \(T\) is positive and trace preserving, \(\tau\in\mcS_d\).
    If \(A\ge0\) and \(A\ne0\), then \(A/\tr{A}\in\mcS_d\), and therefore
    \[
        T(A)
        =
        \tr{A}\tau .
    \]
    The same formula holds for \(A=0\).
    By decomposing a self-adjoint matrix into its positive and negative parts, and then decomposing a general matrix into real and imaginary self-adjoint parts, linearity gives
    \[
        T(X)=\tr{X}\tau
        \qquad
        \text{for every }X\in\matrices .
    \]
    The converse implication is immediate.
\end{proof}

\begin{remark}
\label{rem:complex-tracezero-control}
	The coefficient \(\kappa_{\rm tr}(T)\) is defined on the real self-adjoint trace-zero space \(\tracezero\), because this is the space generated by differences of states.
	It also controls the complex trace-zero space up to a universal factor.
	Indeed, if \(Z\in\matrices\) satisfies \(\tr{Z}=0\), write
	\[
		Z=A+iB,
		\qquad
		A=A^*,\quad B=B^*,\quad \tr{A}=\tr{B}=0.
	\]
	Then
	\[
		\norm{A}_1\leq\norm{Z}_1,
		\qquad
		\norm{B}_1\leq\norm{Z}_1,
	\]
	and hence
	\[
		\norm{T(Z)}_1
		\leq
		\norm{T(A)}_1+\norm{T(B)}_1
		\leq
		2\kappa_{\rm tr}(T)\norm{Z}_1.
	\]
	This is why decay of \(\kappa_{\rm tr}\) gives induced \(1\to1\) convergence to replacement channels with only universal constants.
\end{remark}

Equation \eqref{eq:kappa-state-forms} gives the precise memory-loss interpretation of \(\kappa_{\rm tr}\).
The number \(2\kappa_{\rm tr}(T)\) is exactly the trace-norm diameter of the evolved state space:
\[
    \sup_{\rho,\sigma\in\mcS_d}
    \norm{T(\rho)-T(\sigma)}_1
    =
    2\kappa_{\rm tr}(T).
\]
Thus \(\kappa_{\rm tr}(T)=0\) exactly when \(T\) is a replacement channel.
For an inhomogeneous product \(\Phi_{t:s}\), the identity
\[
    \sup_{\rho,\sigma\in\mcS_d}
    \norm{\Phi_{t:s}(\rho)-\Phi_{t:s}(\sigma)}_1
    =
    2\kappa_{\rm tr}(\Phi_{t:s})
\]
says that \(\kappa_{\rm tr}(\Phi_{t:s})\) is the exact residual trace-norm memory after the product.
Consequently,
\[
    \kappa_{\rm tr}(\Phi_{t:s})\longrightarrow0
\]
is exactly uniform trace-norm loss of memory of the initial state.
The deterministic results below show that this state-level memory loss is equivalent, up to universal constants, to convergence of the product toward a family of replacement channels.
The key property that makes the coefficient suitable for products is the following submultiplicativity.

\begin{prop}
\label{prop:submult}
    If \(S,T:\matrices\to\matrices\) are positive and trace preserving, then
    \[
        \kappa_{\rm tr}(S\circ T)
        \le
        \kappa_{\rm tr}(S)\kappa_{\rm tr}(T).
    \]
\end{prop}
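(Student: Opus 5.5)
The natural approach is to read $\kappa_{\rm tr}$ as an operator norm on an invariant subspace. By \Cref{def:centered-trace-dobrushin}, $\kappa_{\rm tr}(T)$ is the $1\to1$ operator norm of $T$ restricted to the real Banach space $(\tracezero,\norm{\cdot}_1)$. Submultiplicativity of operator norms under composition should then give the claim directly. The only structural fact needed is that $\tracezero$ is invariant under every positive trace-preserving map, so that the restriction of $S\circ T$ to $\tracezero$ factors as the restriction of $S$ composed with the restriction of $T$.

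The first step is to verify this invariance. Take $X\in\tracezero$. A positive linear map is Hermiticity-preserving: write $X=X_+-X_-$ with $X_\pm\ge0$, so that $T(X)=T(X_+)-T(X_-)$ is a difference of positive matrices and hence self-adjoint. Trace preservation gives $\tr{T(X)}=\tr{X}=0$. Therefore $T(X)\in\tracezero$. The same argument applies to $S$, and the composition $S\circ T$ is again positive and trace preserving, so $\kappa_{\rm tr}(S\circ T)$ is well defined.

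The second step is the norm estimate. Fix $X\in\tracezero\setminus\{0\}$. If $T(X)=0$, then $S(T(X))=0$ and there is nothing to prove. Otherwise $T(X)\in\tracezero\setminus\{0\}$, and the definition applied first to $S$ at the point $T(X)$, then to $T$ at $X$, gives
\[
    \norm{S(T(X))}_1
    \le
    \kappa_{\rm tr}(S)\norm{T(X)}_1
    \le
    \kappa_{\rm tr}(S)\kappa_{\rm tr}(T)\norm{X}_1 .
\]
Dividing by $\norm{X}_1$ and taking the supremum over $X$ yields $\kappa_{\rm tr}(S\circ T)\le\kappa_{\rm tr}(S)\kappa_{\rm tr}(T)$.

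The only step needing care is the invariance of $\tracezero$, specifically that positivity alone implies $T$ preserves self-adjointness, which the positive–negative decomposition settles. A route through the state-space form of \Cref{prop:dobr} would also work: $T$ maps $\mcS_d$ into $\mcS_d$, and one bounds $\norm{S(T\rho)-S(T\sigma)}_1\le\kappa_{\rm tr}(S)\norm{T\rho-T\sigma}_1$. The operator-norm route above is cleaner, so that is the one I would use.
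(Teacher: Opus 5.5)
Your proof is correct and follows essentially the same route as the paper: you first show that \(T\) maps \(\tracezero\) into itself, then handle \(T(X)=0\) separately and chain the two defining inequalities. Your extra justification that positivity alone makes \(T\) preserve self-adjointness, via the positive--negative decomposition, fills in a step the paper only asserts.
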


\begin{proof}
    Since \(T\) is positive, it maps self-adjoint matrices to self-adjoint matrices.
    Since \(T\) is trace preserving, it maps trace-zero matrices to trace-zero matrices.
    Hence \(T(\tracezero)\subseteq\tracezero\).

    Let \(X\in\tracezero\).
    If \(T(X)=0\), then the desired estimate is immediate.
    If \(T(X)\ne0\), then \(T(X)\in\tracezero\setminus\{0\}\), and therefore
    \[
        \|S(T(X))\|_1
        \le
        \kappa_{\rm tr}(S)\|T(X)\|_1
        \le
        \kappa_{\rm tr}(S)\kappa_{\rm tr}(T)\|X\|_1.
    \]
    The same inequality is also true when \(T(X)=0\).
    Taking the supremum over \(X\in\tracezero\setminus\{0\}\) gives the result.
\end{proof}

Consequently, if \(s<r<t\), then
\[
    \kappa_{\rm tr}(\Phi_{t:s})
    \le
    \kappa_{\rm tr}(\Phi_{t:r})\kappa_{\rm tr}(\Phi_{r:s}).
\]
This block submultiplicativity is the basic mechanism behind the deterministic and random product criteria developed below.

%%%%%%%%%%%%%%%%%%%%%%%%%%%%%%%%%%%%%%%%%%%%%%%%%%%%%%%%%%%%%%%%%%%%%%%%%%%%
%%%%%%%%%%%%%%%%%%%%%%%%%%%%%%%%%%%%%%%%%%%%%%%%%%%%%%%%%%%%%%%%%%%%%%%%%%%%

\subsection{Comparison with one-step contraction criteria}
\label{subsec:comparison-one-step-criteria}

%%%%%%%%%%%%%%%%%%%%%%%%%%%%%%%%%%%%%%%%%%%%%%%%%%%%%%%%%%%%%%%%%%%%%%%%%%%%
%%%%%%%%%%%%%%%%%%%%%%%%%%%%%%%%%%%%%%%%%%%%%%%%%%%%%%%%%%%%%%%%%%%%%%%%%%%%

The coefficient \(\kappa_{\rm tr}\) is a standard trace-distance contraction coefficient.
Our use of it is product-level.
For an inhomogeneous product \(\Phi_{t:s}\), the quantity \(\kappa_{\rm tr}(\Phi_{t:s})\) is the exact residual-memory coefficient of the whole block.
Its decay to zero is precisely uniform trace-norm loss of memory.
The replacement results in \Cref{sec:det} show that this state-level memory loss is equivalent, up to universal constants, to map-level asymptotic replacement.
This distinguishes the present approach from one-step sufficient criteria such as Markov--Dobrushin and Doeblin minorization.
Also, we may use certain results in this section (also see \Cref{subsec:bistochastic-hs-gaps} and \Cref{app:comparison-coefficients} in the Appendix) as sufficient criteria to obtain decay of  \(\kappa_{\rm tr}(\Phi_{t:s})\) that is needed in subsequent sections. 

In the notation often used for contraction coefficients, the trace-distance contraction coefficient is
\[
    \eta^{\rm Tr}(\Phi)
    :=
    \sup_{\substack{\rho,\sigma\in\mcS_d\\ \rho\ne\sigma}}
    \frac{\norm{\Phi(\rho)-\Phi(\sigma)}_1}{\norm{\rho-\sigma}_1}.
\]
By \Cref{prop:dobr}, one has \(\eta^{\rm Tr}(\Phi)=\kappa_{\rm tr}(\Phi)\).
The point of the present paper is not this one-step identity, but the use of the exact product coefficient \(\kappa_{\rm tr}(\Phi_{t:s})\) as the intrinsic memory coefficient for an entire block.

Let
\[
    \mcP_1
    :=
    \left\{
        P\in\matrices:
        P=P^*=P^2,\ \tr{P}=1
    \right\}
\]
denote the set of rank-one projections.
The state-level Markov--Dobrushin coefficient is based on a common lower bound for the family \(\{\Phi(P):P\in\mcP_1\}\).

\begin{definition}
\label{def:state-md-minorization}
    Let \(\Phi:\matrices\to\matrices\) be CPTP.
    Define
    \[
        \alpha_{\rm MD}(\Phi)
        :=
        \sup
        \left\{
            \tr{B}:
            B\ge0,\ 
            B\le \Phi(P)
            \text{ for every }P\in\mcP_1
        \right\}.
    \]
    The supremum is attained in finite dimension.
    We also write
    \[
        \eta_{\rm MD}(\Phi)
        :=
        1-\alpha_{\rm MD}(\Phi).
    \]
\end{definition}
Thus \(\alpha_{\rm MD}(\Phi)\) is the mass of the largest common lower bound, while \(\eta_{\rm MD}(\Phi)=1-\alpha_{\rm MD}(\Phi)\) is the resulting one-step contraction bound.
The number \(\eta_{\rm MD}(\Phi)\) is not, in general, the exact trace-distance contraction coefficient.

This is a state-level minorization condition.
It asks whether all output states \(\Phi(P)\), with \(P\) pure, share a common positive part.
In the inhomogeneous-channel framework of \cite{Souissi2026}, the same idea is formulated through a trace-maximal common lower bound \(B_\Phi\in I^{\rm Tr}_{\rm MD}(\Phi)\), and the corresponding estimate can be written as
\[
    \norm{\Phi(\rho)-\Phi(\sigma)}_1
    \le
   \left(1-\tr{B_\Phi}\right)\norm{\rho-\sigma}_1.
\]
The homogeneous MPS application in \cite{Souissi_2025} uses positivity of such a Markov--Dobrushin lower bound to obtain exponential mixing and a thermodynamic limit.

\begin{prop}
\label{prop:md-bound}
    For every CPTP map \(\Phi\),
    \[
        \kappa_{\rm tr}(\Phi)
        \le
        \eta_{\rm MD}(\Phi)
        =
        1-\alpha_{\rm MD}(\Phi).
    \]
\end{prop}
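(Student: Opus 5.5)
The plan is to reduce to the rank-one formula \eqref{eq:kappa-rank-one-form} of \Cref{prop:dobr} and use the common lower bound to cancel a common positive part. Let \(B\ge0\) attain the supremum in \Cref{def:state-md-minorization}, so \(B\le\Phi(P)\) for every \(P\in\mcP_1\) and \(\tr{B}=\alpha_{\rm MD}(\Phi)\). If the supremum is not attained, the same argument run with near-optimal \(B\) and a limit gives the bound, so attainment is only a convenience.

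Fix orthogonal rank-one projections \(E,F\). By minorization, \(\Phi(E)-B\ge0\) and \(\Phi(F)-B\ge0\). Since \(\Phi\) is trace preserving, each has trace \(1-\tr{B}\). The \(B\) terms cancel in the difference, so
\[
    \Phi(E-F)=\bigl(\Phi(E)-B\bigr)-\bigl(\Phi(F)-B\bigr).
\]
The triangle inequality, together with \(\norm{Y}_1=\tr{Y}\) for \(Y\ge0\), then gives
\[
    \norm{\Phi(E-F)}_1\le 2\bigl(1-\tr{B}\bigr)=2\eta_{\rm MD}(\Phi).
\]
Taking the supremum over orthogonal rank-one pairs and applying \eqref{eq:kappa-rank-one-form} yields \(\kappa_{\rm tr}(\Phi)\le\eta_{\rm MD}(\Phi)\).

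An alternative route avoids the rank-one formula. Write an arbitrary state as \(\rho=\sum_i p_iP_i\) by spectral decomposition; then \(\Phi(\rho)-B=\sum_ip_i\bigl(\Phi(P_i)-B\bigr)\ge0\) with trace \(1-\tr{B}\). The same cancellation with \eqref{eq:kappa-state-forms} gives \(\tfrac12\norm{\Phi(\rho)-\Phi(\sigma)}_1\le 1-\tr{B}\).

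There is no real obstacle. The one point to check is that minorization on pure states extends to all states, or equivalently that it suffices to test on rank-one projections, and both \eqref{eq:kappa-rank-one-form} and convexity supply this.
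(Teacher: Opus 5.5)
Your argument is correct and is essentially the paper's: both subtract the common lower bound $B$, which cancels in $\Phi(\rho)-\Phi(\sigma)$, and then bound the two remaining positive operators of trace $1-\tr{B}$; your alternative route is exactly the paper's proof (spectral decomposition to get $\Phi(\rho)\ge B$, followed by the orthogonal-support decomposition of a trace-zero $X$). Your main route via \eqref{eq:kappa-rank-one-form} is a mild streamlining: it needs the minorization only on pure states, and because it never divides by $1-\tr{B}$, it avoids the paper's separate treatment of the case $\tr{B}=1$.
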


\begin{proof}
    Let \(B\ge0\) be an admissible common lower bound and put \(a:=\tr{B}\).
    Since \(B\le\Phi(P)\) and \(\tr{\Phi(P)}=1\) for every \(P\in\mcP_1\), one has \(0\le a\le1\).
    If \(a=1\), then \(B\) is a state.
    Since \(B\le\Phi(P)\) and \(\tr{\Phi(P)}=1\), we have \(\Phi(P)=B\) for every \(P\in\mcP_1\).
    Hence \(\Phi\) is constant on pure states, therefore constant on \(\mcS_d\), and by linearity it is a replacement channel.
    Thus \(\kappa_{\rm tr}(\Phi)=0\).
    
    Assume \(a<1\).
    If \(\rho\in\mcS_d\) has spectral decomposition \(\rho=\sum_i p_iP_i\), then \(B\le\Phi(P_i)\) for each \(i\), and therefore \(\Phi(\rho)\ge B\).
    Thus
    \[
        \omega_\rho
        :=
        \frac{\Phi(\rho)-B}{1-a}
    \]
    is a state.

    Let \(X\in\tracezero\setminus\left\{0\right\}\).
    As in the proof of \Cref{prop:dobr}, write
    \[
        X
        =
        \frac{\norm{X}_1}{2}\left(\rho_+-\rho_-\right),
    \]
    where \(\rho_+,\rho_-\in\mcS_d\).
    Applying the preceding decomposition to \(\rho_+\) and \(\rho_-\), we obtain
    \[
        \Phi(\rho_+)-\Phi(\rho_-)
        =
        \left(1-a\right)\left(\omega_{\rho_+}-\omega_{\rho_-}\right).
    \]
    Hence
    \[
        \norm{\Phi(\rho_+)-\Phi(\rho_-)}_1
        \le
        2\left(1-a\right).
    \]
    Therefore
    \[
        \frac{\norm{\Phi(X)}_1}{\norm{X}_1}
        =
        \frac12\norm{\Phi(\rho_+)-\Phi(\rho_-)}_1
        \le
        1-a.
    \]
    Taking the supremum over \(X\in\tracezero\setminus\left\{0\right\}\) gives \(\kappa_{\rm tr}(\Phi)\le1-\tr{B}\).
    Optimizing over admissible \(B\) proves the claim.
\end{proof}

Thus, every state-level Markov--Dobrushin lower bound gives a valid one-step upper bound on \(\kappa_{\rm tr}\).
The converse is false.
A channel may contract trace distance strictly without admitting a nonzero common lower bound for all pure-state outputs.
For example, the qubit amplitude-damping channel with parameter \(0<\gamma<1\),
\[
    \Gamma_\gamma(\rho)
    =
    K_0\rho K_0^*
    +
    K_1\rho K_1^*,
    \qquad
    K_0=
    \begin{pmatrix}
        1&0\\
        0&\sqrt{1-\gamma}
    \end{pmatrix},
    \qquad
    K_1=
    \begin{pmatrix}
        0&\sqrt{\gamma}\\
        0&0
    \end{pmatrix},
\]
satisfies \(\kappa_{\rm tr}(\Gamma_\gamma)=\sqrt{1-\gamma}<1\), but \(\alpha_{\rm MD}(\Gamma_\gamma)=0\).
Indeed, \(\Gamma_\gamma(|0\rangle\langle0|)=|0\rangle\langle0|\), so any common lower bound \(B\) must have the form \(B=b|0\rangle\langle0|\) with \(b\ge0\).
For a pure input \(|\psi\rangle=\alpha|0\rangle+\beta|1\rangle\), the condition \(B\le\Gamma_\gamma(|\psi\rangle\langle\psi|)\) forces \(b\le\gamma|\beta|^2\).
Letting \(|\beta|\downarrow0\) gives \(b=0\).
Thus \(\alpha_{\rm MD}(\Gamma_\gamma)=0\), even though \(\kappa_{\rm tr}(\Gamma_\gamma)=\sqrt{1-\gamma}<1\).

Other contraction mechanisms fit different geometries.
CP-order Doeblin coefficients impose an algebraic minorization of the whole channel, rather than only a common lower bound on pure-state outputs.
Hilbert--Birkhoff projective coefficients are adapted to cone-preserving maps and are especially useful beyond the trace-preserving setting.
Entropy, Riemannian, and \(\chi^2\) coefficients are adapted to faithful reference-state geometries.
Quantum Wasserstein--Dobrushin coefficients control spatial influence in many-body systems \cite{BakshiLiuMoitraTang2025}.
Unconstrained diamond-norm analogues are not the right object for ordinary state-memory loss, because an untouched ancillary reference system can preserve trace-zero information even when the system channel itself is replacing.
We discuss these comparisons in detail in \Cref{app:comparison-coefficients}.

%%%%%%%%%%%%%%%%%%%%%%%%%%%%%%%%%%%%%%%%%%%%%%%%%%%%%%%%%%%%%%%%%%%%%%%%%%%%
%%%%%%%%%%%%%%%%%%%%%%%%%%%%%%%%%%%%%%%%%%%%%%%%%%%%%%%%%%%%%%%%%%%%%%%%%%%%

\section{Deterministic inhomogeneous products}
\label{sec:det}

%%%%%%%%%%%%%%%%%%%%%%%%%%%%%%%%%%%%%%%%%%%%%%%%%%%%%%%%%%%%%%%%%%%%%%%%%%%%
%%%%%%%%%%%%%%%%%%%%%%%%%%%%%%%%%%%%%%%%%%%%%%%%%%%%%%%%%%%%%%%%%%%%%%%%%%%%

We now pass from a single channel to deterministic products of channels.
Let \((\Phi_n)_{n\in\mcI}\) be a deterministic sequence of CPTP maps on \(\matrices\).
The index set is either one-sided, \(\mcI=\mbN_0\), or two-sided, \(\mcI=\mbZ\), depending on whether only forward products or pullback limits are being discussed.

%%%%%%%%%%%%%%%%%%%%%%%%%%%%%%%%%%%%%%%%%%%%%%%%%%%%%%%%%%%%%%%%%%%%%%%%%%%%
%%%%%%%%%%%%%%%%%%%%%%%%%%%%%%%%%%%%%%%%%%%%%%%%%%%%%%%%%%%%%%%%%%%%%%%%%%%%

\subsection{Product notation and fixed-start replacement}
\label{sec:det-product-criterion}

%%%%%%%%%%%%%%%%%%%%%%%%%%%%%%%%%%%%%%%%%%%%%%%%%%%%%%%%%%%%%%%%%%%%%%%%%%%%
%%%%%%%%%%%%%%%%%%%%%%%%%%%%%%%%%%%%%%%%%%%%%%%%%%%%%%%%%%%%%%%%%%%%%%%%%%%%

For \(s<t\), we use the chronological convention
\[
	\Phi_{t:s}
	:=
	\Phi_{t-1}\circ\Phi_{t-2}\circ\cdots\circ\Phi_s,
	\qquad
	\Phi_{s:s}
	:=
	\operatorname{id}.
\]
Thus \(\Phi_{t:s}\) maps a state prepared at time \(s\) to its state at time \(t\).
The cocycle identity is
\[
	\Phi_{t:s}
	=
	\Phi_{t:u}\circ\Phi_{u:s},
	\qquad
	s<u<t.
\]
We write
\[
	\kappa_{t:s}
	:=
	\kappa_{\rm tr}\bigl(\Phi_{t:s}\bigr).
\]
By \Cref{prop:submult}, the product coefficients satisfy
\[
	\kappa_{t:s}
	\le
	\kappa_{t:u}\kappa_{u:s},
	\qquad
	s<u<t.
\]
This is the basic deterministic submultiplicativity used throughout the sequel.

For \(\zeta\in\mcS_d\), let
\[
	R_\zeta(X)
	:=
	\tr{X}\zeta.
\]
We call \(R_\zeta\) the replacement channel with output \(\zeta\).
If the state \(\zeta\) depends on the time window, then the corresponding replacement channel is a moving replacement channel.

The first point is that \(\kappa_{t:s}\) is the exact state-level memory coefficient of the product.
Since \(\Phi_{t:s}\) is CPTP, \Cref{prop:dobr} applied to \(T=\Phi_{t:s}\) gives
\begin{equation}
\label{eq:det-product-diameter-form}
	\sup_{\rho,\sigma\in\mcS_d}
	\norm{\Phi_{t:s}(\rho)-\Phi_{t:s}(\sigma)}_1
	=
	2\kappa_{t:s}.
\end{equation}
Equivalently,
\[
	\kappa_{t:s}
	=
	\dfrac{1}{2}
	\operatorname{diam}_1\bigl(\Phi_{t:s}(\mcS_d)\bigr),
\]
where
\[
	\operatorname{diam}_1(A)
	:=
	\sup_{x,y\in A}\norm{x-y}_1.
\]
Thus, the product coefficient is one half of the trace-norm diameter of the evolved state space.

\begin{definition}[Asymptotic fixed-start trace-memory loss]
\label{def:fixed-start-memory-loss}
	Fix an initial time \(s\in\mcI\).
	The product sequence has asymptotic trace-memory loss from \(s\) if
	\[
		\lim_{t\to\infty}
		\sup_{\rho,\sigma\in\mcS_d}
		\norm{\Phi_{t:s}(\rho)-\Phi_{t:s}(\sigma)}_1
		=
		0.
	\]
\end{definition}

By \eqref{eq:det-product-diameter-form}, for fixed \(s\),
\Cref{def:fixed-start-memory-loss} is equivalent to
\[
	\lim_{t\to\infty}\kappa_{t:s}=0.
\]
Thus \(\kappa_{t:s}\to0\) is exactly uniform trace-norm loss of memory of the initial state.
The next estimate upgrades this state-level statement to an induced \(1\to1\) map-norm statement by comparing the whole product with a replacement channel.

For a reference state \(\tau\in\mcS_d\), define
\[
	Z_{t:s}^{(\tau)}
	:=
	\Phi_{t:s}(\tau),
	\qquad
	R_{t:s}^{(\tau)}(X)
	:=
	\tr{X}Z_{t:s}^{(\tau)}.
\]
Thus \(R_{t:s}^{(\tau)}\) is the replacement channel centered at the evolved reference state.

For a linear map \(L:\matrices\to\matrices\), we write
\[
	\norm{L}_{1\to1}
	:=
	\sup_{X\in\matrices\setminus\{0\}}
	\frac{\norm{L(X)}_1}{\norm{X}_1},
\]
where \(\norm{\,\cdot\,}_1\) denotes the trace norm, equivalently the Schatten \(1\)-norm, on \(\matrices\).

\begin{prop}
\label{prop:det-radius-replacement}
	For every \(s<t\), every reference state \(\tau\in\mcS_d\), and every \(\zeta\in\mcS_d\),
	\[
		\kappa_{t:s}
		\le
		\norm{\Phi_{t:s}-R_\zeta}_{1\to1}
	\qquad\text{and}\qquad
		\norm{\Phi_{t:s}-R_{t:s}^{(\tau)}}_{1\to1}
		\le
		4\kappa_{t:s}.
	\]
	Consequently,
	\[
		\kappa_{t:s}
		\le
		\inf_{\zeta\in\mcS_d}
		\norm{\Phi_{t:s}-R_\zeta}_{1\to1}
		\le
		\norm{\Phi_{t:s}-R_{t:s}^{(\tau)}}_{1\to1}
		\le
		4\kappa_{t:s}.
	\]
\end{prop}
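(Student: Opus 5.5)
Both inequalities follow from the observation that every replacement channel annihilates the trace-zero subspace \(\tracezero\).

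For the lower bound, fix \(\zeta\in\mcS_d\) and let \(X\in\tracezero\setminus\{0\}\). Since \(\tr{X}=0\), we have \(R_\zeta(X)=0\), and therefore
\[
	\norm{\Phi_{t:s}(X)}_1
	=
	\norm{(\Phi_{t:s}-R_\zeta)(X)}_1
	\le
	\norm{\Phi_{t:s}-R_\zeta}_{1\to1}\norm{X}_1 .
\]
Taking the supremum over such \(X\) in \Cref{def:centered-trace-dobrushin} gives \(\kappa_{t:s}\le\norm{\Phi_{t:s}-R_\zeta}_{1\to1}\). This holds for every \(\zeta\), so it also holds for the infimum over \(\zeta\). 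The middle inequality of the chain is trivial: choose \(\zeta=Z_{t:s}^{(\tau)}\), which is a state because \(\Phi_{t:s}\) is CPTP.

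For the upper bound, let \(X\in\matrices\) be arbitrary and center it at the reference state. Put \(Y:=X-\tr{X}\tau\). Then \(\tr{Y}=0\), and
\[
	(\Phi_{t:s}-R_{t:s}^{(\tau)})(X)
	=
	\Phi_{t:s}(X)-\tr{X}\Phi_{t:s}(\tau)
	=
	\Phi_{t:s}(Y).
\]
The matrix \(Y\) is a complex trace-zero matrix, so \Cref{rem:complex-tracezero-control} applies and gives
\[
	\norm{\Phi_{t:s}(Y)}_1\le 2\kappa_{t:s}\norm{Y}_1 .
\]
Finally, since \(\norm{\tau}_1=1\) and \(|\tr{X}|\le\norm{X}_1\),
\[
	\norm{Y}_1
	\le
	\norm{X}_1+|\tr{X}|\,\norm{\tau}_1
	\le
	2\norm{X}_1 .
\]
Combining the two estimates yields \(\norm{(\Phi_{t:s}-R_{t:s}^{(\tau)})(X)}_1\le4\kappa_{t:s}\norm{X}_1\). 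Taking the supremum over \(X\neq0\) gives the bound \(\norm{\Phi_{t:s}-R_{t:s}^{(\tau)}}_{1\to1}\le4\kappa_{t:s}\).

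The only step needing care is that \(\kappa_{t:s}\) is defined on the real self-adjoint space \(\tracezero\), while the induced norm \(\norm{\,\cdot\,}_{1\to1}\) ranges over all complex matrices. \Cref{rem:complex-tracezero-control} bridges this gap at the cost of a factor \(2\). Together with the factor \(2\) from centering at \(\tau\), this accounts for the constant \(4\).
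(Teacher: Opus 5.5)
Your proof is correct and follows the paper's argument step for step. For the lower bound you use that $R_\zeta$ vanishes on $\tracezero$, and for the upper bound you center $X$ at $\tau$ and invoke \Cref{rem:complex-tracezero-control}, accounting for the two factors of $2$ exactly as the paper does.
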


\begin{proof}
	First fix \(\zeta\in\mcS_d\).
	If \(X=X^*\), \(\tr{X}=0\), and \(X\ne0\), then \(R_\zeta(X)=0\).
	Therefore
	\[
		\norm{\Phi_{t:s}-R_\zeta}_{1\to1}
		\ge
		\sup_{\substack{X=X^*\\ \tr{X}=0\\ X\ne0}}
		\frac{\norm{\Phi_{t:s}(X)}_1}{\norm{X}_1}
		=
		\kappa_{t:s}.
	\]
	This proves the lower bound.

	We now prove the upper bound for the reference-state replacement.
	Fix \(\tau\in\mcS_d\).
	Since \(\Phi_{t:s}\) is CPTP, we have
	\[
		Z_{t:s}^{(\tau)}
		=
		\Phi_{t:s}(\tau)
		\in\mcS_d.
	\]
	For \(X\in\matrices\), set
	\[
		Y:=X-\tr{X}\tau.
	\]
	Then \(\tr{Y}=0\), and
	\[
		\Phi_{t:s}(X)-R_{t:s}^{(\tau)}(X)
		=
		\Phi_{t:s}(X)-\tr{X}\Phi_{t:s}(\tau)
		=
		\Phi_{t:s}(Y).
	\]
	By \Cref{rem:complex-tracezero-control}, applied to the positive trace-preserving map \(\Phi_{t:s}\),
	\[
		\norm{\Phi_{t:s}(Y)}_1
		\le
		2\kappa_{t:s}\norm{Y}_1.
	\]
	Moreover,
	\[
		\norm{Y}_1
		\le
		\norm{X}_1+|\tr{X}|\,\norm{\tau}_1.
	\]
	Since \(\tau\in\mcS_d\), we have \(\norm{\tau}_1=1\).
	By trace duality,
	\[
		|\tr{X}|
		=
		|\tr{IX}|
		\le
		\norm{I}_\infty\norm{X}_1
		=
		\norm{X}_1.
	\]
	Hence
	\[
		\norm{Y}_1
		\le
		2\norm{X}_1.
	\]
	Combining the last two estimates gives
	\[
		\norm{\Phi_{t:s}(X)-R_{t:s}^{(\tau)}(X)}_1
		\le
		4\kappa_{t:s}\norm{X}_1.
	\]
	Taking the supremum over \(X\ne0\) proves
	\[
		\norm{\Phi_{t:s}-R_{t:s}^{(\tau)}}_{1\to1}
		\le
		4\kappa_{t:s}.
	\]

	Finally, since \(R_{t:s}^{(\tau)}=R_\zeta\) with \(\zeta=\Phi_{t:s}(\tau)\), the middle inequality
	\[
		\inf_{\zeta\in\mcS_d}
		\norm{\Phi_{t:s}-R_\zeta}_{1\to1}
		\le
		\norm{\Phi_{t:s}-R_{t:s}^{(\tau)}}_{1\to1}
	\]
	is immediate.
	The claimed chain of inequalities follows.
\end{proof}

\begin{remark}
	The factor \(4\) comes from two elementary losses.
	First, \(\kappa_{\rm tr}\) is defined on the real self-adjoint trace-zero space, while the induced \(1\to1\) norm is taken over all complex matrices.
	This gives the factor \(2\) in \Cref{rem:complex-tracezero-control}.
	Second, replacing \(X\) by the centered matrix \(X-\tr{X}\tau\) costs another factor at most \(2\).
	If the \(1\to1\) norm were restricted to self-adjoint inputs, the first loss would disappear and the constant \(4\) could be replaced by \(2\).
\end{remark}

The product-to-replacement estimate is the technical bridge between the state-level memory coefficient and the channel-level replacement formulation.
It also shows that different reference states give asymptotically the same moving replacement family whenever \(\kappa_{t:s}\to0\).
Indeed, for states \(a,b\in\mcS_d\), define \(R_a(X):=\tr{X}a\) and \(R_b(X):=\tr{X}b\).
Then
\[
	\norm{R_a-R_b}_{1\to1}
	=
	\norm{a-b}_1.
\]
The upper bound follows from
\[
	\norm{R_a(X)-R_b(X)}_1
	=
	|\tr{X}|\,\norm{a-b}_1
	\le
	\norm{X}_1\norm{a-b}_1.
\]
The reverse inequality is attained by taking \(X\) to be any state.
Therefore, for \(\tau,\tau'\in\mcS_d\),
\[
	\norm{R_{t:s}^{(\tau)}-R_{t:s}^{(\tau')}}_{1\to1}
	=
	\norm{\Phi_{t:s}(\tau)-\Phi_{t:s}(\tau')}_1
	\le
	2\kappa_{t:s},
\]
where the last inequality follows from \eqref{eq:det-product-diameter-form}.
Thus, the choice of reference state affects the center only by an error controlled by the same product coefficient.

Recall \Cref{def:strong-asymptotic-replacement-from-s}.
We now prove the fixed-start replacement theorem stated in the introduction.

\detReplacementApproximationTheorem*

\begin{proof}
	Fix the initial time \(s\in\mcI\).
	For every \(t>s\) and every \(\tau\in\mcS_d\), \Cref{prop:det-radius-replacement} gives
	\[
		\kappa_{t:s}
		\le
		\norm{\Phi_{t:s}-R_{t:s}^{(\tau)}}_{1\to1}
		\le
		4\kappa_{t:s}.
	\]
	This proves the quantitative estimate in the theorem.

	We now prove the equivalence of the four statements.
	Assume first that \(\kappa_{t:s}\to0\) as $t \to\infty$
	Then the upper bound gives, for every reference state \(\tau\in\mcS_d\),
	\[
		\norm{\Phi_{t:s}-R_{t:s}^{(\tau)}}_{1\to1}
		\le
		4\kappa_{t:s}
		\longrightarrow0.
	\]
	Hence statement \(1\) implies statement \(3\).
	In particular, statement \(1\) also implies statement \(4\).

	Moreover, taking any fixed \(\tau\in\mcS_d\) and setting \(\eta_{t:s}:=\Phi_{t:s}(\tau)\), we have \(R_{t:s}^{(\tau)}(X)=\tr{X}\eta_{t:s}\).
	The same convergence therefore shows that the products are strongly asymptotically replacing from \(s\).
	Thus statement \(1\) implies statement \(2\).

	Statement \(3\) trivially implies statement \(4\).

	Assume next that statement \(4\) holds for some reference state \(\tau\in\mcS_d\).
	Then the lower bound in \Cref{prop:det-radius-replacement} gives
	\[
		\kappa_{t:s}
		\le
		\norm{\Phi_{t:s}-R_{t:s}^{(\tau)}}_{1\to1}
		\longrightarrow0.
	\]
	Hence statement \(4\) implies statement \(1\).

	Finally, assume statement \(2\).
	Then there are states \(\eta_{t:s}\in\mcS_d\) such that, with \(S_{t:s}(X):=\tr{X}\eta_{t:s}\),
	one has
	\[
		\norm{\Phi_{t:s}-S_{t:s}}_{1\to1}
		\longrightarrow0.
	\]
	For each \(t>s\), the map \(S_{t:s}\) is the replacement channel \(R_{\eta_{t:s}}\).
	Applying the lower bound in \Cref{prop:det-radius-replacement} with \(\zeta=\eta_{t:s}\), we obtain
	\[
		\kappa_{t:s}
		\le
		\norm{\Phi_{t:s}-S_{t:s}}_{1\to1}
		\longrightarrow0.
	\]
	Thus statement \(2\) implies statement \(1\).

	The four statements are therefore equivalent.
\end{proof}

%%%%%%%%%%%%%%%%%%%%%%%%%%%%%%%%%%%%%%%%%%%%%%%%%%%%%%%%%%%%%%%%%%%%%%%%%%%%
%%%%%%%%%%%%%%%%%%%%%%%%%%%%%%%%%%%%%%%%%%%%%%%%%%%%%%%%%%%%%%%%%%%%%%%%%%%%

\subsection{Pullback boundary states and moving replacement channels}
\label{sec:det-pullback}

%%%%%%%%%%%%%%%%%%%%%%%%%%%%%%%%%%%%%%%%%%%%%%%%%%%%%%%%%%%%%%%%%%%%%%%%%%%%
%%%%%%%%%%%%%%%%%%%%%%%%%%%%%%%%%%%%%%%%%%%%%%%%%%%%%%%%%%%%%%%%%%%%%%%%%%%%

Two-sided inhomogeneous products have a natural pullback viewpoint: the terminal time is fixed, while the initial time is sent to the remote past.
If products ending at \(t\) forget remote initial states in trace distance, then the state observed at time \(t\) becomes independent of the state inserted far in the past.
The resulting time-indexed state is a pullback boundary state, or entrance law, and the corresponding channel-level limit is a moving replacement family
\[
	R_t(X)=\tr{X}\rho_t .
\]

Throughout this subsection, assume that the sequence \((\Phi_n)_{n\in\mbZ}\) is indexed by all integers.
For \(s<t\), recall that
\[
	\Phi_{t:s}
	=
	\Phi_{t-1}\circ\Phi_{t-2}\circ\cdots\circ\Phi_s .
\]

\begin{definition}[Pullback boundary state]
\label{def:pullback-boundary-state}
	A pullback boundary state, also called an entrance law, for \((\Phi_n)_{n\in\mbZ}\) is a sequence of states
	\[
		(\rho_t)_{t\in\mbZ}\subset\mcS_d
	\]
	such that
	\[
		\rho_{t+1}=\Phi_t(\rho_t)
		\qquad\text{for all }t\in\mbZ.
	\]
	Equivalently, for every \(s<t\),
	\[
		\rho_t=\Phi_{t:s}(\rho_s).
	\]
\end{definition}

The term entrance law is standard in the theory of time-inhomogeneous Markov processes.
The phrase pullback boundary state emphasizes the construction that will arise below: under pullback memory loss, the state \(\rho_t\) is obtained as the limit of states transported from the remote past.

\begin{definition}[Pullback trace-memory loss]
\label{def:pullback-memory-loss}
	The sequence \((\Phi_n)_{n\in\mbZ}\) has pullback trace-memory loss if, for every fixed terminal time \(t\in\mbZ\),
	\[
		\lim_{s\to-\infty}\kappa_{t:s}=0.
	\]
\end{definition}

By \Cref{prop:dobr}, applied to the product \(\Phi_{t:s}\),
\[
	\sup_{\rho,\sigma\in\mcS_d}
	\norm{\Phi_{t:s}(\rho)-\Phi_{t:s}(\sigma)}_1
	=
	2\kappa_{t:s}.
\]
Thus, pullback trace-memory loss is exactly the statement that two arbitrary states inserted farther and farther in the past become indistinguishable when observed at the fixed terminal time \(t\).

The next result is the state-level deterministic pullback replacement theorem.
It says that pullback decay of the exact trace-Dobrushin coefficient is precisely convergence, uniformly over input states, to a unique moving center \(\rho_t\).
The induced \(1\to1\) replacement statement follows afterward from \Cref{prop:det-radius-replacement}.

\begin{prop}
\label{prop:pullback-entrance-law}
	Let \((\Phi_n)_{n\in\mbZ}\) be a two-sided deterministic sequence of CPTP maps.
	The following are equivalent.

	\begin{enumerate}
		\item \textbf{Pullback trace-memory loss.}
		For every fixed terminal time \(t\in\mbZ\), we have that \(\kappa_{t:s}\to 0\) as \(s\to-\infty\).
	
		\item \textbf{State-level pullback convergence.}
		There exists a pullback boundary state \((\rho_t)_{t\in\mbZ}\subset\mcS_d\)
		such that, for every fixed terminal time \(t\in\mbZ\),
		\[
			\lim_{s\to-\infty} \,
			\sup_{\sigma\in\mcS_d}
			\norm{\Phi_{t:s}(\sigma)-\rho_t}_1
			=
			0 .
		\]
	\end{enumerate}
	When these conditions hold, the pullback boundary state is unique.
	Moreover, for every reference state \(\tau\in\mcS_d\) and every \(t\in\mbZ\),
	\[
		\rho_t
		=
		\lim_{s\to-\infty}\Phi_{t:s}(\tau),
	\]
	and this limit is independent of \(\tau\).
    For every \(s<t\), one also has
	\[
		\kappa_{t:s}
		\le
		\sup_{\sigma\in\mcS_d}
		\norm{\Phi_{t:s}(\sigma)-\rho_t}_1
		\le
		2\kappa_{t:s}.
	\]
	Consequently, for every fixed terminal time \(t\), pullback trace-memory loss is equivalent to uniform state-level convergence of the evolved state space \(\Phi_{t:s}(\mcS_d)\) to the moving center \(\rho_t\) as \(s\to-\infty\).
\end{prop}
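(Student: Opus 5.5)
The plan is to build the boundary state directly as a Cauchy limit. For (1)\(\Rightarrow\)(2), fix \(t\) and a reference state \(\tau\), and consider the net \(s\mapsto\Phi_{t:s}(\tau)\) as \(s\to-\infty\). For \(s'<s<t\), the cocycle identity gives \(\Phi_{t:s'}(\tau)=\Phi_{t:s}(\Phi_{s:s'}(\tau))\). Here both \(\tau\) and \(\Phi_{s:s'}(\tau)\) are states, so by \eqref{eq:det-product-diameter-form}
\[
	\norm{\Phi_{t:s'}(\tau)-\Phi_{t:s}(\tau)}_1\le 2\kappa_{t:s}.
\]
This tends to \(0\) as \(s\to-\infty\), uniformly in \(s'<s\). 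The sequence is therefore Cauchy in the compact, hence complete, set \(\mcS_d\), and it converges to some \(\rho_t\in\mcS_d\). For a second reference state \(\tau'\) we have \(\norm{\Phi_{t:s}(\tau)-\Phi_{t:s}(\tau')}_1\le2\kappa_{t:s}\to0\), so \(\rho_t\) does not depend on \(\tau\).

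Consistency follows from continuity of the finite-dimensional linear map \(\Phi_t\):
\[
	\Phi_t(\rho_t)=\lim_{s\to-\infty}\Phi_t\Phi_{t:s}(\tau)=\lim_{s\to-\infty}\Phi_{t+1:s}(\tau)=\rho_{t+1}.
\]
So \((\rho_t)\) is a pullback boundary state, and \(\rho_t=\Phi_{t:s}(\rho_s)\) for all \(s<t\).

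The two-sided quantitative bound comes next, and it gives (2) at once. For any \(\sigma\), write \(\rho_t=\Phi_{t:s}(\rho_s)\); then \eqref{eq:det-product-diameter-form} gives \(\norm{\Phi_{t:s}(\sigma)-\rho_t}_1\le2\kappa_{t:s}\), which is the upper bound. For the lower bound, let \(M:=\sup_\sigma\norm{\Phi_{t:s}(\sigma)-\rho_t}_1\). The triangle inequality gives \(\norm{\Phi_{t:s}(\rho)-\Phi_{t:s}(\sigma)}_1\le 2M\), and taking the supremum and using \eqref{eq:det-product-diameter-form} yields \(2\kappa_{t:s}\le2M\). Note that the lower bound holds for any state \(\rho_t\) whatsoever, not just the boundary state. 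Hence (2)\(\Rightarrow\)(1): if some pullback boundary state satisfies the uniform convergence, then \(\kappa_{t:s}\le M\to0\).

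For uniqueness, let \((\rho'_t)\) be any pullback boundary state, with or without the convergence property. Then \(\rho'_t=\Phi_{t:s}(\rho'_s)\) and \(\rho_t=\Phi_{t:s}(\rho_s)\), so
\[
	\norm{\rho'_t-\rho_t}_1\le2\kappa_{t:s}\to0 \quad\text{as } s\to-\infty,
\]
and hence \(\rho'_t=\rho_t\). The identity \(\rho_t=\lim_s\Phi_{t:s}(\tau)\) was already established in the construction.

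The only mildly delicate step is the Cauchy argument. The point to get right is that the uniformity in \(s'\) comes from applying the diameter bound to the outer block \(\Phi_{t:s}\); this uses only \eqref{eq:det-product-diameter-form} and the cocycle identity.
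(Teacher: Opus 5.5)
Your proposal is correct and follows essentially the same route as the paper: a Cauchy argument for \(s\mapsto\Phi_{t:s}(\tau)\) via the cocycle identity and the diameter identity, consistency by continuity of \(\Phi_t\), the sandwich bound via \(\rho_t=\Phi_{t:s}(\rho_s)\) and the triangle inequality, and uniqueness by letting \(s\to-\infty\). The only cosmetic difference is in the uniqueness step: you bound \(\norm{\rho'_t-\rho_t}_1\) directly by \(2\kappa_{t:s}\), whereas the paper bounds it by \(\sup_{\sigma\in\mcS_d}\norm{\Phi_{t:s}(\sigma)-\rho_t}_1\), which amounts to the same estimate.
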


\begin{proof}
	Assume first that \(\kappa_{t:s}\to0\) as \(s\to-\infty\), for every fixed terminal time \(t\in\mbZ\).
	Fix once and for all a reference state \(\tau_0\in\mcS_d\).
	For each \(t\in\mbZ\) and each \(s<t\), set \(z_{t,s}:=\Phi_{t:s}(\tau_0)\).
	We first show that, for each fixed \(t\), the family \((z_{t,s})_{s<t}\) is Cauchy in trace norm as \(s\to-\infty\).
	Let \(r<s<t\).
	By the cocycle identity,
	\[
		\Phi_{t:r}(\tau_0)
		=
		\Phi_{t:s}\bigl(\Phi_{s:r}(\tau_0)\bigr).
	\]
	Both \(\Phi_{s:r}(\tau_0)\) and \(\tau_0\) are states.
	Therefore, by \Cref{prop:dobr} applied to the product \(\Phi_{t:s}\),
	\[
		\norm{\Phi_{t:r}(\tau_0)-\Phi_{t:s}(\tau_0)}_1
		\le
		2\kappa_{t:s}.
	\]
	Now let \(\varepsilon>0\).
	Choose \(s_0<t\) such that \(2\kappa_{t:u}<\varepsilon\) for all  \(u\le s_0\).
	If \(r,s\le s_0\), then after relabelling we may assume \(r<s<t\), and the preceding estimate gives
	\[
		\norm{z_{t,r}-z_{t,s}}_1<\varepsilon.
	\]
	Thus \((z_{t,s})_{s<t}\) is Cauchy as \(s\to-\infty\).
	Since \(\mcS_d\) is complete in trace norm, the limit
	\[
		\rho_t
		:=
		\lim_{s\to-\infty}\Phi_{t:s}(\tau_0)
	\]
	exists and belongs to \(\mcS_d\).
	The limit is independent of the reference state.
	Indeed, if \(\tau\in\mcS_d\), then
	\[
		\norm{\Phi_{t:s}(\tau)-\Phi_{t:s}(\tau_0)}_1
		\le
		2\kappa_{t:s}
		\longrightarrow0.
	\]
	Hence, for every \(\tau\in\mcS_d\), we get \(\rho_t
		=
		\lim_{s\to-\infty}\Phi_{t:s}(\tau)\).

	We now show that \((\rho_t)_{t\in\mbZ}\) is a pullback boundary state.
	By continuity of the CPTP map \(\Phi_t\),
	\[
		\Phi_t(\rho_t)
		=
		\lim_{s\to-\infty}\Phi_t\bigl(\Phi_{t:s}(\tau_0)\bigr)
		=
		\lim_{s\to-\infty}\Phi_{t+1:s}(\tau_0)
		=
		\rho_{t+1}.
	\]
	Thus \(\rho_{t+1}=\Phi_t(\rho_t)\), for all  \(t\in\mbZ\). 
	Iterating this identity gives, for every \(s<t\),
	\[
		\rho_t=\Phi_{t:s}(\rho_s).
	\]
	Therefore, for every \(s<t\) and every \(\sigma\in\mcS_d\),
	\[
		\norm{\Phi_{t:s}(\sigma)-\rho_t}_1
		=
		\norm{\Phi_{t:s}(\sigma)-\Phi_{t:s}(\rho_s)}_1
		\le
		2\kappa_{t:s}.
	\]
	Taking the supremum over \(\sigma\in\mcS_d\) gives
	\[
		\sup_{\sigma\in\mcS_d}
		\norm{\Phi_{t:s}(\sigma)-\rho_t}_1
		\le
		2\kappa_{t:s}.
	\]
	For fixed \(t\), the right-hand side tends to zero as \(s\to-\infty\).
	Hence statement \(1\) implies statement \(2\).

	We next prove the lower estimate.
	This estimate is valid for any state \(\eta\in\mcS_d\).
	By \Cref{prop:dobr},
	\[
		2\kappa_{t:s}
		=
		\sup_{\alpha,\beta\in\mcS_d}
		\norm{\Phi_{t:s}(\alpha)-\Phi_{t:s}(\beta)}_1.
	\]
	For any \(\alpha,\beta\in\mcS_d\), the triangle inequality gives
	\[
		\norm{\Phi_{t:s}(\alpha)-\Phi_{t:s}(\beta)}_1
		\le
		\norm{\Phi_{t:s}(\alpha)-\eta}_1
		+
		\norm{\Phi_{t:s}(\beta)-\eta}_1.
	\]
	Hence
	\[
		2\kappa_{t:s}
		\le
		2
		\sup_{\sigma\in\mcS_d}
		\norm{\Phi_{t:s}(\sigma)-\eta}_1,
	\]
	and therefore
	\[
		\kappa_{t:s}
		\le
		\sup_{\sigma\in\mcS_d}
		\norm{\Phi_{t:s}(\sigma)-\eta}_1.
	\]
	Applying this with \(\eta=\rho_t\) and combining it with the upper estimate above gives, for every \(s<t\),
	\[
		\kappa_{t:s}
		\le
		\sup_{\sigma\in\mcS_d}
		\norm{\Phi_{t:s}(\sigma)-\rho_t}_1
		\le
		2\kappa_{t:s}.
	\]

	Conversely, suppose that there exists a pullback boundary state \((\rho_t)_{t\in\mbZ}\subset\mcS_d\) such that, for every fixed terminal time \(t\),
	\[
		\sup_{\sigma\in\mcS_d}
		\norm{\Phi_{t:s}(\sigma)-\rho_t}_1
		\longrightarrow0
		\qquad
		\text{as }s\to-\infty.
	\]
	The lower estimate, applied with \(\eta=\rho_t\), gives
	\[
		\kappa_{t:s}
		\le
		\sup_{\sigma\in\mcS_d}
		\norm{\Phi_{t:s}(\sigma)-\rho_t}_1
		\longrightarrow0.
	\]
	Thus statement \(2\) implies statement \(1\).

	It remains to prove uniqueness.
	Let \((\eta_t)_{t\in\mbZ}\) be another pullback boundary state.
	Then, for every \(s<t\), we have that \(\eta_t=\Phi_{t:s}(\eta_s)\).
	Therefore,
	\[
		\norm{\eta_t-\rho_t}_1
		=
		\norm{\Phi_{t:s}(\eta_s)-\rho_t}_1
		\le
		\sup_{\sigma\in\mcS_d}
		\norm{\Phi_{t:s}(\sigma)-\rho_t}_1.
	\]
	Letting \(s\to-\infty\) gives \(\eta_t=\rho_t\).
	Hence, the pullback boundary state is unique.

	The reference-state formula
	\[
		\rho_t
		=
		\lim_{s\to-\infty}\Phi_{t:s}(\tau)
	\]
	for every \(\tau\in\mcS_d\), and the independence of \(\tau\), were proved in the construction above.
	This completes the proof.
\end{proof}

The preceding proposition identifies the pullback limit at the level of states.
We now upgrade this state-level convergence to convergence of the whole product,
as a linear map, toward a canonical moving replacement channel.

\detPullbackCanonicalReplacement*

\begin{proof}
	Assume first that, for every fixed terminal time \(t\in\mbZ\),
	\[
		\lim_{s\to-\infty}\kappa_{t:s}=0.
	\]
	Then by \Cref{prop:pullback-entrance-law}, there is a unique pullback boundary state
	\[
		(\rho_t)_{t\in\mbZ}\subset\mcS_d.
	\]
	Moreover, for every \(t\in\mbZ\) and every reference state \(\tau\in\mcS_d\),
	\[
		\rho_t
		=
		\lim_{s\to-\infty}\Phi_{t:s}(\tau),
	\]
	and this limit is independent of \(\tau\).

	Define
	\[
		R_t(X):=\tr{X}\rho_t.
	\]
	Since \((\rho_t)_{t\in\mbZ}\) is a pullback boundary state, for every \(s<t\),
	\[
		\rho_t=\Phi_{t:s}(\rho_s).
	\]
	Hence
	\[
		R_t=R_{t:s}^{(\rho_s)}.
	\]
	Applying \Cref{prop:det-radius-replacement} to the product \(\Phi_{t:s}\) with reference state \(\rho_s\), we obtain
	\[
		\norm{\Phi_{t:s}-R_t}_{1\to1}
		\le
		4\kappa_{t:s}.
	\]
	Also, applying the lower bound in \Cref{prop:det-radius-replacement} with \(\zeta=\rho_t\), we obtain
	\[
		\kappa_{t:s}
		\le
		\norm{\Phi_{t:s}-R_t}_{1\to1}.
	\]
	Thus, for every \(s<t\),
	\[
		\kappa_{t:s}
		\le
		\norm{\Phi_{t:s}-R_t}_{1\to1}
		\le
		4\kappa_{t:s}.
	\]
	In particular, for every fixed terminal time \(t\),
	\[
		\lim_{s\to-\infty}
		\norm{\Phi_{t:s}-R_t}_{1\to1}
		=
		0.
	\]
	This proves statement \(2\).

	Conversely, suppose that statement \(2\) holds.
	Thus there are states \((\eta_t)_{t\in\mbZ}\subset\mcS_d\) such that, with
	\[
		S_t(X):=\tr{X}\eta_t,
	\]
	one has, for every fixed terminal time \(t\),
	\[
		\lim_{s\to-\infty}
		\norm{\Phi_{t:s}-S_t}_{1\to1}
		=
		0.
	\]
	By the lower bound in \Cref{prop:det-radius-replacement}, applied to the product \(\Phi_{t:s}\) with \(\zeta=\eta_t\), we have
	\[
		\kappa_{t:s}
		\le
		\norm{\Phi_{t:s}-S_t}_{1\to1}.
	\]
	Hence, for every fixed terminal time \(t\),
	\[
		\lim_{s\to-\infty}\kappa_{t:s}=0.
	\]
	This proves statement \(1\).

	We now identify the replacement centers and prove their uniqueness.
	By statement \(1\) and \Cref{prop:pullback-entrance-law}, there is a unique pullback boundary state
	\[
		(\rho_t)_{t\in\mbZ}\subset\mcS_d
	\]
	satisfying
	\[
		\rho_t
		=
		\lim_{s\to-\infty}\Phi_{t:s}(\tau)
	\]
	for every \(t\in\mbZ\) and every reference state \(\tau\in\mcS_d\).

	Let \((\eta_t)_{t\in\mbZ}\) be any family of states satisfying statement \(2\), and write
	\[
		S_t(X):=\tr{X}\eta_t.
	\]
	Fix \(t\in\mbZ\) and \(\tau\in\mcS_d\).
	Since \(\tr{\tau}=1\), we have
	\[
		S_t(\tau)=\eta_t.
	\]
	Therefore
	\[
		\norm{\eta_t-\rho_t}_1
		=
		\norm{S_t(\tau)-\rho_t}_1.
	\]
	Using the triangle inequality,
	\[
		\norm{S_t(\tau)-\rho_t}_1
		\le
		\norm{S_t(\tau)-\Phi_{t:s}(\tau)}_1
		+
		\norm{\Phi_{t:s}(\tau)-\rho_t}_1.
	\]
	The first term satisfies
	\[
		\norm{S_t(\tau)-\Phi_{t:s}(\tau)}_1
		\le
		\norm{S_t-\Phi_{t:s}}_{1\to1}\norm{\tau}_1
		=
		\norm{S_t-\Phi_{t:s}}_{1\to1},
	\]
	and tends to \(0\) as \(s\to-\infty\) by statement \(2\).
	The second term tends to \(0\) by the reference-state formula in \Cref{prop:pullback-entrance-law}.
	Therefore
	\[
		\eta_t=\rho_t.
	\]
	Since \(t\) was arbitrary, the replacement centers are unique and coincide with the canonical pullback boundary state.

	It remains to prove the convergence of the reference-state replacement channels.
	For every \(t\in\mbZ\) and every \(\tau\in\mcS_d\),
	\[
		R_{t:s}^{(\tau)}(X)-R_t(X)
		=
		\tr{X}\bigl(\Phi_{t:s}(\tau)-\rho_t\bigr).
	\]
	Hence
	\[
		\norm{R_{t:s}^{(\tau)}-R_t}_{1\to1}
		=
		\norm{\Phi_{t:s}(\tau)-\rho_t}_1.
	\]
	By \Cref{prop:pullback-entrance-law},
	\[
		\norm{\Phi_{t:s}(\tau)-\rho_t}_1
		\le
		2\kappa_{t:s}.
	\]
	Since \(\kappa_{t:s}\to0\) as \(s\to-\infty\), we obtain
	\[
		\lim_{s\to-\infty}
		\norm{R_{t:s}^{(\tau)}-R_t}_{1\to1}
		=
		0.
	\]
	The identity
	\[
		\rho_{t+1}=\Phi_t(\rho_t)
	\]
	and the reference-state formula was already obtained from \Cref{prop:pullback-entrance-law}.
	This completes the proof.
\end{proof}

%%%%%%%%%%%%%%%%%%%%%%%%%%%%%%%%%%%%%%%%%%%%%%%%%%%%%%%%%%%%%%%%%%%%%%%%%%%
%%%%%%%%%%%%%%%%%%%%%%%%%%%%%%%%%%%%%%%%%%%%%%%%%%%%%%%%%%%%%%%%%%%%%%%%%%%

\subsection{Deterministic Rates}
\label{sec:det-rates}

%%%%%%%%%%%%%%%%%%%%%%%%%%%%%%%%%%%%%%%%%%%%%%%%%%%%%%%%%%%%%%%%%%%%%%%%%%%
%%%%%%%%%%%%%%%%%%%%%%%%%%%%%%%%%%%%%%%%%%%%%%%%%%%%%%%%%%%%%%%%%%%%%%%%%%%

The preceding subsections identify \(\kappa_{t:s}\) as the exact product-level coefficient for fixed-start and pullback asymptotic replacement.
We now record the deterministic rate mechanisms that will be used later.
The estimates are stated for a single product \(\Phi_{t:s}\), rather than uniformly over all windows.
This is the form needed for the deterministic MPS bounds, because a local observable with support up to time \(m\) only sees the auxiliary tail product after that support.

Recall that \(\mcI\) denotes the set of admissible starting times.
Thus \(\mcI=\mbN_0\) in the one-sided setting and \(\mcI=\mbZ\) in the two-sided setting.
For \(s<t\), set
\[
	d_{t:s}
	:=
	\sup_{\rho,\sigma\in\mcS_d}
	\norm{\Phi_{t:s}(\rho)-\Phi_{t:s}(\sigma)}_1.
\]
By \Cref{prop:dobr},
\[
	d_{t:s}=2\kappa_{t:s}.
\]
Thus, every upper bound on \(\kappa_{t:s}\) immediately gives a trace-diameter bound.
By \Cref{prop:det-radius-replacement}, every such upper bound also gives an induced \(1\to1\) replacement bound:
\[
	\norm{\Phi_{t:s}-R_{t:s}^{(\tau)}}_{1\to1}
	\le
	4\kappa_{t:s},
	\qquad
	R_{t:s}^{(\tau)}(X)
	:=
	\tr{X}\Phi_{t:s}(\tau).
\]

\begin{prop}
\label{prop:det-contraction-clocks}
	Let \(s<t\).
	Suppose that \(a_j\in[0,1]\) and
	\[
		\kappa_{\rm tr}(\Phi_j)
		\le
		1-a_j,
		\qquad
		j=s,\ldots,t-1.
	\]
	Then
	\[
		\kappa_{t:s}
		\le
		\prod_{j=s}^{t-1}(1-a_j)
		\le
		\exp\!\left(-\sum_{j=s}^{t-1}a_j\right).
	\]
	For every \(r\in(0,1)\), define
	\[
		G_r(s,t)
		:=
		\#\{j\in\{s,\ldots,t-1\}:a_j\ge r\}.
	\]
	Then
	\[
		\kappa_{t:s}
		\le
		(1-r)^{G_r(s,t)}.
	\]
	Consequently, for every reference state \(\tau\in\mcS_d\),
	\[
		d_{t:s}
		\le
		2\prod_{j=s}^{t-1}(1-a_j)
		\le
		2\exp\!\left(-\sum_{j=s}^{t-1}a_j\right),
	\]
	and
	\[
		\norm{\Phi_{t:s}-R_{t:s}^{(\tau)}}_{1\to1}
		\le
		4\prod_{j=s}^{t-1}(1-a_j)
		\le
		4\exp\!\left(-\sum_{j=s}^{t-1}a_j\right).
	\]
	In particular, for every \(r\in(0,1)\),
	\[
		d_{t:s}
		\le
		2(1-r)^{G_r(s,t)},
		\qquad
		\norm{\Phi_{t:s}-R_{t:s}^{(\tau)}}_{1\to1}
		\le
		4(1-r)^{G_r(s,t)}.
	\]
\end{prop}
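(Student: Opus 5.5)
The plan is to iterate the submultiplicativity of \Cref{prop:submult} along the chronological factorization
\[
	\Phi_{t:s}=\Phi_{t-1}\circ\cdots\circ\Phi_s .
\]
Induction on \(t-s\) (the base case \(t=s+1\) is the hypothesis itself) gives
\[
	\kappa_{t:s}
	\le
	\prod_{j=s}^{t-1}\kappa_{\rm tr}(\Phi_j)
	\le
	\prod_{j=s}^{t-1}(1-a_j).
\]
The last inequality uses only that each factor is nonnegative and bounded by \(1-a_j\in[0,1]\).

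For the exponential bound, apply the elementary inequality \(1-x\le e^{-x}\) for \(x\in[0,1]\) to each factor and multiply the resulting nonnegative bounds.

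For the counting bound with threshold \(r\in(0,1)\), split the indices \(j\in\{s,\ldots,t-1\}\) into two groups.
\begin{itemize}
	\item For indices with \(a_j\ge r\), bound the factor \(1-a_j\) by \(1-r\).
	\item For the remaining indices, bound the factor by \(1\).
\end{itemize}
Since all factors lie in \([0,1]\), this yields \(\prod_j(1-a_j)\le(1-r)^{G_r(s,t)}\), and therefore \(\kappa_{t:s}\le(1-r)^{G_r(s,t)}\).

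The consequences are then immediate. The identity \(d_{t:s}=2\kappa_{t:s}\), which comes from \Cref{prop:dobr} applied to the CPTP map \(\Phi_{t:s}\), gives the trace-diameter bounds. The upper estimate \(\norm{\Phi_{t:s}-R_{t:s}^{(\tau)}}_{1\to1}\le4\kappa_{t:s}\) of \Cref{prop:det-radius-replacement} gives the replacement bounds, valid for every reference state \(\tau\). In each case we simply substitute the two product bounds just obtained.

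No real obstacle is expected. The only point needing care is that \Cref{prop:submult} requires each factor to be positive and trace preserving. This holds because each \(\Phi_j\), and every partial product, is CPTP.
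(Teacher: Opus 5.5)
Your proposal is correct and follows essentially the same route as the paper's proof: iterate submultiplicativity to get \(\kappa_{t:s}\le\prod_{j=s}^{t-1}(1-a_j)\), apply \(1-a\le e^{-a}\), bound the factors with \(a_j\ge r\) by \(1-r\) and the others by \(1\), and then obtain the diameter and replacement bounds from \(d_{t:s}=2\kappa_{t:s}\) and \Cref{prop:det-radius-replacement}. Your remarks on the nonnegativity of the factors and on the CPTP property of the partial products cover the only points that need care.
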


\begin{proof}
	By submultiplicativity,
	\[
		\kappa_{t:s}
		\le
		\prod_{j=s}^{t-1}\kappa_{\rm tr}(\Phi_j)
		\le
		\prod_{j=s}^{t-1}(1-a_j).
	\]
	The exponential bound follows from the elementary inequality
	\[
		1-a\le e^{-a},
		\qquad
		a\in[0,1].
	\]
	Fix \(r\in(0,1)\).
	For every \(j\in\{s,\ldots,t-1\}\) with \(a_j\ge r\), the corresponding factor satisfies
	\[
		1-a_j\le 1-r.
	\]
	All remaining factors are at most \(1\).
	Therefore
	\[
		\prod_{j=s}^{t-1}(1-a_j)
		\le
		(1-r)^{G_r(s,t)},
	\]
	and hence
	\[
		\kappa_{t:s}
		\le
		(1-r)^{G_r(s,t)}.
	\]
	The estimates for \(d_{t:s}\) and for the replacement channel follow from
	\[
		d_{t:s}=2\kappa_{t:s}
	\]
	and from \Cref{prop:det-radius-replacement}.
\end{proof}

The preceding estimate is a one-step criterion.
The genuinely product-level mechanism is obtained by counting contractive blocks.
Suppose that \(s<t\) and that
\[
	\left[u_1,v_1\right),\ldots,\left[u_m,v_m\right)
	\subseteq
	\left[s,t\right)
\]
are ordered disjoint subintervals.
This means that
\[
	s\le u_1<v_1\le u_2<v_2\le\cdots\le u_m<v_m\le t.
\]
If \(\kappa_{v_r:u_r}\le q_r\le1\) for \(r=1,\ldots,m\), then repeated use of submultiplicativity gives
\begin{equation}
\label{eq:disjoint-block-factorization}
	\kappa_{t:s}
	\le
	\prod_{r=1}^m q_r,
\end{equation}
with the empty product interpreted as \(1\).
Indeed, the product \(\Phi_{t:s}\) factors into the selected blocks and the intervening gap products.
Every gap product is CPTP and hence has trace-Dobrushin coefficient at most \(1\).

Fix \(q\in\left(0,1\right)\).
Let \(\mcB_q\left(s,t\right)\) be the maximum number of pairwise disjoint nonempty subintervals
\[
	\left[u_r,v_r\right)
	\subseteq
	\left[s,t\right)
\]
such that
\[
	\kappa_{v_r:u_r}\le q.
\]
By \Cref{eq:disjoint-block-factorization},
\[
	\kappa_{t:s}
	\le
	q^{\mcB_q\left(s,t\right)},
	\qquad
	d_{t:s}
	\le
	2q^{\mcB_q\left(s,t\right)}.
\]
For every \(\tau\in\mcS_d\), this also gives
\[
	\norm{\Phi_{t:s}-R_{t:s}^{\left(\tau\right)}}_{1\to1}
	\le
	4q^{\mcB_q\left(s,t\right)}.
\]
Thus, memory loss can be certified by contractive products even when no useful one-step contraction estimate is available.

\begin{definition}
\label{def:uniform-good-block}
	Fix integers \(\ell\ge1\), \(M\ge\ell\), and \(q\in\left(0,1\right)\).
	The sequence satisfies the uniform \(\left(\ell,M,q\right)\)-good-block condition if every admissible interval
	\[
		\left[r,r+M\right),
		\qquad
		r,r+M\in\mcI,
	\]
	contains a length-\(\ell\) subblock
	\[
		\left[u,u+\ell\right)
		\subseteq
		\left[r,r+M\right)
	\]
	such that
	\[
		\kappa_{u+\ell:u}\le q.
	\]
\end{definition}

\begin{prop}
\label{prop:uniform-good-block}
	Assume the uniform \(\left(\ell,M,q\right)\)-good-block condition.
	Then for every admissible product \(\Phi_{t:s}\),
	\[
		\kappa_{t:s}
		\le
		q^{\left\lfloor\frac{t-s}{M}\right\rfloor}.
	\]
	Consequently, for every reference state \(\tau\in\mcS_d\),
	\[
		d_{t:s}
		\le
		2q^{\left\lfloor\frac{t-s}{M}\right\rfloor},
		\qquad
		\norm{\Phi_{t:s}-R_{t:s}^{\left(\tau\right)}}_{1\to1}
		\le
		4q^{\left\lfloor\frac{t-s}{M}\right\rfloor}.
	\]
	In particular,
	\[
		\kappa_{t:s}
		\le
		q^{-1}
		\exp\!\left(-\frac{\left|\log q\right|}{M}\left(t-s\right)\right).
	\]
\end{prop}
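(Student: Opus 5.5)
The plan is to partition \([s,t)\) into \(k:=\lfloor (t-s)/M\rfloor\) consecutive disjoint windows of length \(M\), namely \([s+(j-1)M,\,s+jM)\) for \(j=1,\ldots,k\), and to leave the remainder \([s+kM,t)\) untouched. Each window has admissible endpoints, since \(s\in\mcI\) and every later integer is admissible in both the one-sided and two-sided settings. By the uniform \((\ell,M,q)\)-good-block condition, each window contains a subblock \([u_j,u_j+\ell)\) with \(\kappa_{u_j+\ell:u_j}\le q\). These \(k\) subblocks are pairwise disjoint and ordered, because they lie in disjoint consecutive windows. Applying the block factorization \eqref{eq:disjoint-block-factorization} then gives \(\kappa_{t:s}\le q^k\). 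Equivalently, \(\mcB_q(s,t)\ge k\), and we use \(\kappa_{t:s}\le q^{\mcB_q(s,t)}\). If \(k=0\), the bound \(\kappa_{t:s}\le1\) from \Cref{prop:dobr} suffices.

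The two consequences follow directly. The diameter bound comes from \(d_{t:s}=2\kappa_{t:s}\). The replacement bound comes from the upper estimate in \Cref{prop:det-radius-replacement}, which gives \(\norm{\Phi_{t:s}-R_{t:s}^{(\tau)}}_{1\to1}\le 4\kappa_{t:s}\).

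For the exponential form, note that \(\lfloor x\rfloor\ge x-1\) and \(q\in(0,1\)). Hence
\[
q^{\lfloor (t-s)/M\rfloor}\le q^{(t-s)/M-1}=q^{-1}\exp\!\left(-\frac{|\log q|}{M}(t-s)\right).
\]

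The argument has no real obstacle. The only point needing care is that the disjointness of the selected subblocks, together with each intervening gap product having coefficient at most \(1\), is what justifies the factorization. This is exactly what \eqref{eq:disjoint-block-factorization} already records via \Cref{prop:submult}.
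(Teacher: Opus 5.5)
Your proof is correct and follows essentially the same route as the paper: you produce \(\lfloor (t-s)/M\rfloor\) disjoint \(q\)-contractive subblocks, apply the block factorization \eqref{eq:disjoint-block-factorization}, and then use \(d_{t:s}=2\kappa_{t:s}\), \Cref{prop:det-radius-replacement}, and \(\lfloor x\rfloor\ge x-1\). The only cosmetic difference is that the paper chains the windows greedily, starting each new window at the end \(r_{j+1}=u_j+\ell\) of the previous good block, whereas you use the fixed partition \([s+(j-1)M,\,s+jM)\); both give the same block count.
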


\begin{proof}
	Set \(L:=t-s\) and \(m:=\left\lfloor L/M\right\rfloor\).
	We construct \(m\) disjoint \(q\)-contractive blocks inside \(\left[s,t\right)\).

	Set \(r_0=s\).
	Suppose that \(0\le j\le m-1\) and that \(r_j\) has already been chosen with
	\[
		r_j\le s+jM.
	\]
	Then
	\[
		r_j+M
		\le
		s+\left(j+1\right)M
		\le
		s+mM
		\le
		t.
	\]
	Hence \(\left[r_j,r_j+M\right)\subseteq\left[s,t\right)\).
	By the uniform good-block condition, there exists a length-\(\ell\) subblock
	\[
		\left[u_j,u_j+\ell\right)
		\subseteq
		\left[r_j,r_j+M\right)
	\]
	such that
	\[
		\kappa_{u_j+\ell:u_j}\le q.
	\]
	Set \(r_{j+1}:=u_j+\ell\).
	Then
	\[
		r_{j+1}
		\le
		r_j+M
		\le
		s+\left(j+1\right)M.
	\]
	The next block begins after the previous one ends.
	Thus, the selected blocks are pairwise disjoint.

	This constructs \(\left\lfloor\left(t-s\right)/M\right\rfloor\) disjoint \(q\)-contractive blocks in \(\left[s,t\right)\).
	Hence
	\[
		\mcB_q\left(s,t\right)
		\ge
		\left\lfloor\frac{t-s}{M}\right\rfloor.
	\]
	The block-count estimate gives
	\[
		\kappa_{t:s}
		\le
		q^{\left\lfloor\frac{t-s}{M}\right\rfloor}.
	\]
	The bounds on \(d_{t:s}\) and \(R_{t:s}^{\left(\tau\right)}\) follow from \(d_{t:s}=2\kappa_{t:s}\) and \Cref{prop:det-radius-replacement}.
	Finally,
	\[
		q^{\left\lfloor\frac{t-s}{M}\right\rfloor}
		\le
		q^{-1}\exp\!\left(-\frac{\left|\log q\right|}{M}\left(t-s\right)\right),
	\]
	because
	\[
		\left\lfloor\frac{t-s}{M}\right\rfloor
		\ge
		\frac{t-s}{M}-1.
	\]
\end{proof}

A useful special case is uniform contraction at one fixed block length.
If there are \(L_0\ge1\) and \(q\in\left(0,1\right)\) such that
\[
	\kappa_{u+L_0:u}\le q
	\qquad
	\text{for every admissible }u,
\]
then \Cref{prop:uniform-good-block} applies with \(\ell=M=L_0\).
Consequently,
\[
	\kappa_{t:s}
	\le
	q^{\left\lfloor\frac{t-s}{L_0}\right\rfloor},
	\qquad
	d_{t:s}
	\le
	2q^{\left\lfloor\frac{t-s}{L_0}\right\rfloor},
\]
and
\[
	\norm{\Phi_{t:s}-R_{t:s}^{\left(\tau\right)}}_{1\to1}
	\le
	4q^{\left\lfloor\frac{t-s}{L_0}\right\rfloor}.
\]
This is the deterministic analogue of a uniform Doeblin condition.
The bounded-gap good-block condition is weaker.
It does not require every block of a fixed length to contract, only that contractive blocks occur with uniformly bounded gaps.

The block mechanism detects contraction that is invisible to one-step criteria.

\begin{example}
\label{ex:alternating-dephasing}
	Let \(D_Z\) be complete dephasing in the computational basis of a qubit:
	\[
		D_Z\left(\rho\right)
		=
		P_0\rho P_0+P_1\rho P_1,
		\qquad
		P_i=\left|i\right\rangle\left\langle i\right|.
	\]
	Let \(D_X\) be complete dephasing in the Hadamard basis:
	\[
		D_X\left(\rho\right)
		=
		P_+\rho P_+ + P_-\rho P_-,
		\qquad
		P_\pm=\left|\pm\right\rangle\left\langle\pm\right|.
	\]
	Then
	\[
		\kappa_{\rm tr}\left(D_Z\right)=1,
		\qquad
		\kappa_{\rm tr}\left(D_X\right)=1.
	\]
	Indeed, \(D_Z\) preserves the trace distance between \(P_0\) and \(P_1\).
	The channel \(D_X\) preserves the trace distance between \(P_+\) and \(P_-\).
	The one-step Markov--Dobrushin lower bound is also zero for both maps, since in each case there are pure inputs whose outputs are orthogonal.
	However,
	\[
		D_X\circ D_Z
		=
		R_{I/2},
		\qquad
		R_{I/2}\left(X\right)=\tr{X}\frac{I}{2}.
	\]
	Hence
	\[
		\kappa_{\rm tr}\left(D_X\circ D_Z\right)=0.
	\]
	Thus a two-step block can have perfect trace-memory loss even though every one-step coefficient is trivial.
\end{example}

\begin{remark}[Certificates for good blocks]
	Every occurrence of a condition of the form \(\kappa_{v:u}\le q\) may be replaced by any rigorous certificate implying that bound.
	Such a certificate may come from an exact computation of \(\kappa_{\rm tr}\), a Markov--Dobrushin lower bound for the whole block, a CP-order Doeblin bound for the whole block, a projective-diameter estimate for the whole block, or any other valid contraction estimate.
	The proofs use only submultiplicativity and the stated upper bounds on selected block coefficients.
\end{remark}

The preceding criteria separate two independent mechanisms.
One-step clocks quantify the accumulation of small contractions at individual times.
Good-block clocks quantify contraction that appears only after composing several noncontractive channels.
Both mechanisms are used below in the deterministic MPS convergence estimates.

%%%%%%%%%%%%%%%%%%%%%%%%%%%%%%%%%%%%%%%%%%%%%%%%%%%%%%%%%%%%%%%%%%%%%%%%%%%
%%%%%%%%%%%%%%%%%%%%%%%%%%%%%%%%%%%%%%%%%%%%%%%%%%%%%%%%%%%%%%%%%%%%%%%%%%%

\section{Random CPTP cocycles}
\label{sec:random}

%%%%%%%%%%%%%%%%%%%%%%%%%%%%%%%%%%%%%%%%%%%%%%%%%%%%%%%%%%%%%%%%%%%%%%%%%%%
%%%%%%%%%%%%%%%%%%%%%%%%%%%%%%%%%%%%%%%%%%%%%%%%%%%%%%%%%%%%%%%%%%%%%%%%%%%

We now pass from deterministic inhomogeneous products to stationary random products.
Let \(\left(\Omega,\mcF,\pr,\theta\right)\) be an invertible probability-preserving system.
We use the standard terminology for measure-preserving systems and ergodicity from \cite{walters2000introduction}.
Let \(\omega\mapsto \Phi_\omega\) be a measurable map from \(\Omega\) into the finite-dimensional space of linear maps on \(\matrices\), and assume that each \(\Phi_\omega\) is CPTP.
For \(n\ge1\), define the forward random product
\[
   \Phi_\omega^{(n)}
   :=
   \Phi_{\theta^{n-1}\omega}
   \circ
   \Phi_{\theta^{n-2}\omega}
   \circ\cdots\circ
   \Phi_\omega,
   \qquad
   \Phi_\omega^{(0)}:=\mathrm{id}.
\]
The associated trace-Dobrushin product coefficient is
\[
   \kappa_n(\omega)
   :=
   \kappa_{\rm tr}\left(\Phi_\omega^{(n)}\right).
\]
We also set \(\kappa_0(\omega):=1\).
Since \(\matrices\) is finite-dimensional, the space of linear maps on \(\matrices\) is finite-dimensional.
For each fixed \(j\), the map \(\omega\mapsto \Phi_{\theta^j\omega}\) is measurable.
The composition map on linear operators is continuous in finite dimensions.
Hence \(\omega\mapsto \Phi_\omega^{(n)}\) is measurable for every fixed \(n\).
Every CPTP map preserves \(\tracezero\).
By definition, \(\kappa_{\rm tr}(\Phi)\) is the \(1\to1\) operator norm of the restriction of \(\Phi\) to \(\tracezero\), where \(\tracezero\) is equipped with the trace norm.
The operator norm depends continuously on the linear map in finite dimensions.
Therefore \(\omega\mapsto \kappa_{\rm tr}(\Phi_\omega^{(n)})\) is measurable.

The base transformation supplies the stationary environment for the sequence of channels.
The contraction of the channel product is a separate property.
It is measured by the decay of \(\kappa_n(\omega)\).

\begin{definition}[Base ergodicity]
    The base transformation \(\theta\) is ergodic if every \(E\in\mcF\) satisfying \(\theta^{-1}E=E\) has \(\pr(E)\in\{0,1\}\).
    Equivalently, \(\theta\) is ergodic if every \(E\in\mcF\) satisfying \(\pr(E\triangle\theta^{-1}E)=0\) has \(\pr(E)\in\{0,1\}\).
    Here \(\triangle\) denotes the symmetric difference of sets.
\end{definition}

We also refer the reader to \cite{walters2000introduction} for other equivalent formulations of ergodicity.  

\begin{remark}
    Ergodicity of \(\theta\) is a property of the randomness.
    It does not by itself imply contraction of the products \(\Phi_\omega^{(n)}\).
    The channel-level memory loss studied below is the decay of \(\kappa_n(\omega)\).
    Any stronger temporal decorrelation assumption needed for annealed or high-probability estimates will be stated separately.
    For example, let \(\theta\) be an i.i.d. shift and let \(\Phi_\omega(X)=U_\omega XU_\omega^*\) be a random unitary conjugation.
    Then
    \[
       \Phi_\omega^{(n)}(X)
       =
       V_{\omega,n}X V_{\omega,n}^*,
       \qquad
       V_{\omega,n}
       :=
       U_{\theta^{n-1}\omega}\cdots U_\omega.
    \]
    Thus, every product \(\Phi_\omega^{(n)}\) is again a unitary conjugation.
    Since the trace norm is unitarily invariant, unitary conjugations act isometrically on \(\tracezero\).
    Hence, when \(d\ge2\), one has \(\kappa_n(\omega)=1\) for all \(n\).
    Thus, even a strongly mixing or i.i.d. base can generate a channel cocycle with no trace-Dobrushin contraction.
\end{remark}

We now introduce the pathwise notions of memory loss used in the random setting.
The first one is the coefficient-level formulation.
It says that, in almost every environment, the product eventually forgets all trace-zero perturbations of the initial state.

\begin{definition}[Quenched trace-memory loss]
\label{def:quenched-trace-memory-loss}
    The random channel cocycle has quenched trace-memory loss if \(\kappa_n(\omega)\to0\) for \(\pr\)-a.e. \(\omega\).
\end{definition}

By \Cref{prop:dobr}, this is equivalent to uniform forgetting of pairs of initial states along almost every realization.
Indeed,
\[
    \sup_{\rho,\sigma\in\mcS_d}
    \norm{
        \Phi_\omega^{(n)}(\rho)-\Phi_\omega^{(n)}(\sigma)
    }_1
    =
    2\kappa_n(\omega).
\]
Consequently, quenched trace-memory loss holds if and only if
\[
    \sup_{\rho,\sigma\in\mcS_d}
    \norm{
        \Phi_\omega^{(n)}(\rho)-\Phi_\omega^{(n)}(\sigma)
    }_1
    \longrightarrow0
\]
for \(\pr\)-a.e. \(\omega\).
Thus \(\kappa_n(\omega)\) is the random analogue of the deterministic memory coefficient \(\kappa_{t:s}\).

The same loss of memory can also be expressed at the channel level.
For a fixed realization \(\omega\), the product \(\Phi_\omega^{(n)}\) is an ordinary deterministic inhomogeneous product, so the natural random analogue of strong asymptotic replacement is obtained by requiring the deterministic replacement property for \(\pr\)-a.e. fiber.

\begin{definition}[Quenched strong asymptotic replacement]
\label{def:quenched-strong-asymptotic-replacement}
    The random channel cocycle is quenched strongly asymptotically replacing if there is a full-measure set \(\Omega_0\subseteq\Omega\) such that, for every \(\omega\in\Omega_0\), there are states \(\eta_{\omega,n}\in\mcS_d\) with the following property.
    If
    \[
       R_{\omega,n}(X)
       :=
       \tr{X}\eta_{\omega,n},
    \]
    then
    \[
       \lim_{n\to\infty}
       \norm{\Phi_\omega^{(n)}-R_{\omega,n}}_{1\to1}
       =
       0.
    \]
\end{definition}

%%%%%%%%%%%%%%%%%%%%%%%%%%%%%%%%%%%%%%%%%%%%%%%%%%%%%%%%%%%%%%%%%%%%%%%%%%%%
%%%%%%%%%%%%%%%%%%%%%%%%%%%%%%%%%%%%%%%%%%%%%%%%%%%%%%%%%%%%%%%%%%%%%%%%%%%%

\subsection{Trace-Dobrushin Lyapunov exponent}
\label{sec:random-lyapunov}

%%%%%%%%%%%%%%%%%%%%%%%%%%%%%%%%%%%%%%%%%%%%%%%%%%%%%%%%%%%%%%%%%%%%%%%%%%%%
%%%%%%%%%%%%%%%%%%%%%%%%%%%%%%%%%%%%%%%%%%%%%%%%%%%%%%%%%%%%%%%%%%%%%%%%%%%%

We now pass from the random product coefficients to their exponential growth rate.
Throughout this subsection we use the convention \(\log 0=-\infty\).
For \(n\ge1\), set
\[
   F_n(\omega)
   :=
   \log\kappa_n(\omega).
\]
Since \(0\le\kappa_n(\omega)\le1\), each \(F_n\) takes values in \(\left[-\infty,0\right]\).

\begin{prop}
\label{prop:random-submultiplicativity}
    For all \(m,n\ge0\),
    \[
       \kappa_{n+m}(\omega)
       \le
       \kappa_n\left(\theta^m\omega\right)\kappa_m(\omega).
    \]
    Consequently, for \(m,n\ge1\),
    \[
       F_{n+m}(\omega)
       \le
       F_m(\omega)+F_n\left(\theta^m\omega\right).
    \]
\end{prop}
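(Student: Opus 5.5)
The plan is to reduce the statement to the submultiplicativity of \Cref{prop:submult} via the cocycle identity for the random products. First I will handle the degenerate cases. If \(m=0\), then \(\kappa_0(\omega)=1\) and \(\Phi_\omega^{(0)}=\mathrm{id}\), so the claim reads \(\kappa_n(\omega)\le\kappa_n(\omega)\). If \(n=0\), then \(\Phi_{\theta^m\omega}^{(0)}=\mathrm{id}\) and \(\kappa_0(\theta^m\omega)=1\), so the claim is again trivial.

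For \(m,n\ge1\), I will verify the cocycle identity
\[
   \Phi_\omega^{(n+m)}
   =
   \Phi_{\theta^{m}\omega}^{(n)}\circ\Phi_\omega^{(m)}.
\]
This follows by writing out the definition. The left side is \(\Phi_{\theta^{n+m-1}\omega}\circ\cdots\circ\Phi_{\theta^{m}\omega}\circ\Phi_{\theta^{m-1}\omega}\circ\cdots\circ\Phi_\omega\). The first \(n\) factors are \(\Phi_{\theta^{n-1}(\theta^m\omega)}\circ\cdots\circ\Phi_{\theta^m\omega}=\Phi^{(n)}_{\theta^m\omega}\), and the last \(m\) factors are \(\Phi^{(m)}_\omega\). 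Both factors are CPTP, hence positive and trace preserving. So \Cref{prop:submult} with \(S=\Phi^{(n)}_{\theta^m\omega}\) and \(T=\Phi^{(m)}_\omega\) gives \(\kappa_{n+m}(\omega)\le\kappa_n(\theta^m\omega)\kappa_m(\omega)\).

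For the logarithmic statement, I will take logarithms of this inequality, using the convention \(\log0=-\infty\) and the fact that all quantities lie in \([0,1]\). If either factor on the right vanishes, then the left side vanishes too. Both sides are then \(-\infty\), and the inequality holds in \([-\infty,0]\). No \(\infty-\infty\) ambiguity arises, since all terms are at most \(0\). Otherwise, monotonicity of \(\log\) gives \(F_{n+m}(\omega)\le F_m(\omega)+F_n(\theta^m\omega)\).

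There is no real obstacle here. The only points needing care are the index bookkeeping in the cocycle identity and the handling of the value \(-\infty\).
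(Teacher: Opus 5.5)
Your proposal is correct and follows essentially the same route as the paper: dispose of \(m=0\) or \(n=0\) via \(\kappa_0=1\), use the cocycle identity \(\Phi_\omega^{(n+m)}=\Phi_{\theta^m\omega}^{(n)}\circ\Phi_\omega^{(m)}\) together with \Cref{prop:submult}, and then take logarithms. For the logarithmic step, you handle a vanishing factor with the convention \(\log0=-\infty\) exactly as the paper does.
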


\begin{proof}
    If \(m=0\) or \(n=0\), the first estimate follows from \(\kappa_0=1\).
    Assume \(m,n\ge1\).
    The cocycle identity gives
    \[
       \Phi_\omega^{(n+m)}
       =
       \Phi_{\theta^m\omega}^{(n)}
       \circ
       \Phi_\omega^{(m)}.
    \]
    By \Cref{prop:submult},
    \[
       \kappa_{\rm tr}\left(\Phi_\omega^{(n+m)}\right)
       \le
       \kappa_{\rm tr}\left(\Phi_{\theta^m\omega}^{(n)}\right)
       \kappa_{\rm tr}\left(\Phi_\omega^{(m)}\right).
    \]
    This is the desired inequality.
    Taking logarithms gives the second estimate when both factors on the right are nonzero.
    If one factor on the right is zero, then the product estimate forces \(\kappa_{n+m}(\omega)=0\).
    Thus, the logarithmic inequality remains valid with the convention \(\log0=-\infty\).
\end{proof}

We can now apply Kingman's subadditive ergodic theorem \cite{Kingman1973}.

\begin{theorem}[Trace-Dobrushin Lyapunov exponent]
\label{thm:random-lyapunov}
    There exists a \(\theta\)-invariant random variable
    \[
       \lambda_{\rm tr}:\Omega\to\left[-\infty,0\right]
    \]
    such that
    \[
       \lambda_{\rm tr}(\omega)
       =
       \lim_{n\to\infty}
       {\dfrac 1n}\log\kappa_n(\omega)
    \]
    for \(\pr\)-a.e. \(\omega\).
    If \(\theta\) is ergodic, then \(\lambda_{\rm tr}\) is a.s. constant.
    
    In the ergodic case, this constant satisfies
    \[
       \lambda_{\rm tr}
       =
       \inf_{n\ge1}
       {\dfrac 1n}\,
       \mathbb E\left[\log\kappa_n\right].
    \]
\end{theorem}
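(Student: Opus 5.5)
The plan is to apply Kingman's subadditive ergodic theorem to \(F_n=\log\kappa_n\), using the subadditivity of \Cref{prop:random-submultiplicativity}. The one nonstandard feature is that \(F_n\) may equal \(-\infty\) on a set of positive measure. Kingman's theorem, in the version allowing limits in \([-\infty,\infty)\), requires only that \(F_1^+\in L^1\). Here \(F_n\le0\), so \(F_1^+=0\) and integrability is automatic. Measurability of \(F_n\) follows from the measurability of \(\omega\mapsto\kappa_n(\omega)\), established at the start of \Cref{sec:random}, together with continuity of \(\log\) from \([0,1]\) into \([-\infty,0]\).

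If one prefers a version of Kingman stated for real-valued integrable processes, first truncate. For \(M>0\) set \(F_n^{(M)}:=\max\{F_n,-nM\}\). These functions are bounded, hence integrable, and they remain subadditive:
\[
\max\{F_{n+m},-(n+m)M\}\le\max\{F_m,-mM\}+\max\{F_n\circ\theta^m,-nM\},
\]
since \(\max\{a+b,c+d\}\le\max\{a,c\}+\max\{b,d\}\). Kingman's theorem then gives an a.s. limit \(\lambda^{(M)}=\lim_n F_n^{(M)}/n\). Pathwise we have \(F_n^{(M)}/n=\max\{F_n/n,-M\}\), so \(\lambda^{(M)}=\max\{\limsup_n F_n/n,-M\}=\max\{\liminf_n F_n/n,-M\}\) a.s. Letting \(M\to\infty\) along the integers shows that \(\liminf_n F_n/n=\limsup_n F_n/n\) a.s. The limit \(\lambda_{\rm tr}\) therefore exists in \([-\infty,0]\), and \(\lambda_{\rm tr}\le0\) because \(\kappa_n\le1\).

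For invariance, subadditivity with \(m=1\) gives \(F_{n+1}(\omega)\le F_1(\omega)+F_n(\theta\omega)\). On the set where \(F_1(\omega)>-\infty\), dividing by \(n\) and letting \(n\to\infty\) yields \(\lambda_{\rm tr}(\omega)\le\lambda_{\rm tr}(\theta\omega)\). Where \(F_1(\omega)=-\infty\), we have \(\kappa_n(\omega)=0\) for all \(n\ge1\), so \(\lambda_{\rm tr}(\omega)=-\infty\) and the inequality holds trivially. Thus \(\lambda_{\rm tr}\le\lambda_{\rm tr}\circ\theta\) a.s., and since \(\theta\) preserves \(\pr\), the standard argument (compare \(\pr(\lambda_{\rm tr}>c)\) with \(\pr(\lambda_{\rm tr}\circ\theta>c)\) for rational \(c\)) gives \(\lambda_{\rm tr}=\lambda_{\rm tr}\circ\theta\) a.s. We then redefine \(\lambda_{\rm tr}\) on a null set, for instance as \(\limsup_n F_n/n\), which is exactly \(\theta\)-invariant wherever the \(F_1=-\infty\) subtlety is accounted for, or modify it on the null set \(\bigcup_k\theta^k N\). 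Under ergodicity, every invariant \([-\infty,0]\)-valued random variable is a.s. constant, by applying the definition of ergodicity to the sets \(\{\lambda_{\rm tr}>c\}\).

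The main obstacle is the identification \(\lambda_{\rm tr}=\inf_n\frac1n\mbE[F_n]\) in the ergodic case when \(-\infty\) values occur. Kingman's theorem gives \(\lambda^{(M)}=\inf_n\frac1n\mbE[F_n^{(M)}]\), and the plan is to pass \(M\to\infty\) on both sides. On the left, \(\lambda^{(M)}=\max\{\lambda_{\rm tr},-M\}\downarrow\lambda_{\rm tr}\). On the right, \(\mbE[F_n^{(M)}]\downarrow\mbE[F_n]\in[-\infty,0]\) by monotone convergence, applied to the nonpositive functions, which decrease as \(M\) grows. The infimum over \(n\) commutes with the decreasing limit in \(M\), because both are infima: \(\inf_M\inf_n=\inf_n\inf_M\). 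This gives the formula, including the case where both sides equal \(-\infty\).
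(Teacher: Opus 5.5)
Your proposal is correct and follows the paper's route. Both apply Kingman's subadditive ergodic theorem to \(F_n=\log\kappa_n\), using \(F_n^+=0\) and the subadditivity of \Cref{prop:random-submultiplicativity}; your truncation \(F_n^{(M)}=\max\{F_n,-nM\}\) and the interchange \(\inf_M\inf_n=\inf_n\inf_M\) just make explicit the extended-valued form of Kingman's theorem that the paper cites directly.

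One small correction concerns invariance. The function \(\bar\lambda:=\limsup_n F_n/n\) is in general only pointwise sub-invariant, \(\bar\lambda\le\bar\lambda\circ\theta\). For example, a point with \(\Phi_\omega=D_X\) and \(\Phi_{\theta^k\omega}=D_Z\) for all \(k\ge1\) has \(\bar\lambda(\omega)=-\infty<0=\bar\lambda(\theta\omega)\), even though \(F_1(\omega)=0\). Exact invariance should therefore come from your other option: redefine \(\lambda_{\rm tr}\) on the invariant null set \(\bigcup_{k\in\mathbb Z}\theta^kN\).
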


\begin{proof}
    We already saw that each \(F_n\) is measurable.
    Moreover, \(F_n^+=0\in L^1(\pr)\) for every \(n\ge1\).
    By \Cref{prop:random-submultiplicativity}, the sequence \((F_n)_{n\ge1}\) is subadditive over \(\theta\):
    \[
       F_{n+m}(\omega)
       \le
       F_m(\omega)+F_n(\theta^m\omega).
    \]
    Kingman's subadditive ergodic theorem, in its extended-valued form, gives the a.s. existence of
    \[
       \lim_{n\to\infty}{\dfrac{1}{n}}F_n(\omega).
    \]
    We denote this limit by \(\lambda_{\rm tr}(\omega)\).
    The same theorem gives that \(\lambda_{\rm tr}\) is \(\theta\)-invariant.

    If \(\theta\) is ergodic, then every \(\theta\)-invariant random variable is a.s. constant.
    Thus \(\lambda_{\rm tr}\) is a.s. constant.
    For nonpositive extended random variables, we use the convention
    \[
       \mathbb E[F]
       :=
       -\mathbb E[-F]
       \in[-\infty,0].
    \]
    The ergodic form of Kingman's theorem gives
    \[
       \lambda_{\rm tr}
       =
       \inf_{n\ge1}
       {\dfrac{1}{n}}\,
       \mathbb E[F_n]
       =
       \inf_{n\ge1}
       {\dfrac{1}{n}}\,
       \mathbb E[\log\kappa_n].
    \]
\end{proof}

For integers \(s<t\), set
\[
   \Phi_{\omega;t:s}
   :=
   \Phi_{\theta^{t-1}\omega}
   \circ
   \Phi_{\theta^{t-2}\omega}
   \circ\cdots\circ
   \Phi_{\theta^s\omega},
   \qquad
   \kappa_{\omega;t:s}
   :=
   \kappa_{\rm tr}\left(\Phi_{\omega;t:s}\right).
\]
Thus \(\Phi_\omega^{(n)}=\Phi_{\omega;n:0}\) and \(\kappa_n(\omega)=\kappa_{\omega;n:0}\).
The chronological product over the past interval \(\left[-n,0\right]\) is
\[
   \Phi_{\omega;0:-n}
   =
   \Phi_{\theta^{-1}\omega}
   \circ
   \Phi_{\theta^{-2}\omega}
   \circ\cdots\circ
   \Phi_{\theta^{-n}\omega},
\]
and its coefficient is
\[
   \kappa_{\omega;0:-n}
   =
   \kappa_n\left(\theta^{-n}\omega\right).
\]
With these notations, we have the following result. 

\begin{prop}
\label{prop:pullback-same-exponent}
    For \(\pr\)-a.e. \(\omega\),
    \[
       \lim_{n\to\infty}
       {\dfrac{1}{n}}
       \log\kappa_{\omega;0:-n}
       =
       \lambda_{\rm tr}(\omega).
    \]
    Equivalently,
    \[
       \lim_{n\to\infty}
       {\dfrac{1}{n}}
       \log\kappa_n\left(\theta^{-n}\omega\right)
       =
       \lambda_{\rm tr}(\omega)
       \qquad
       \text{for }\pr\text{-a.e. }\omega.
    \]
\end{prop}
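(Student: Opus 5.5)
The plan is to obtain the pullback exponent from Kingman's theorem applied to the inverse system, and then identify it with \(\lambda_{\rm tr}\).

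\textbf{Step 1: a subadditive process over \(\theta^{-1}\).} Set \(G_n(\omega):=\log\kappa_{\omega;0:-n}=\log\kappa_n(\theta^{-n}\omega)\). The cocycle identity gives
\[
	\Phi_{\omega;0:-(n+m)}=\Phi_{\omega;0:-n}\circ\Phi_{\theta^{-n}\omega;0:-m}.
\]
Hence, by \Cref{prop:submult},
\[
	G_{n+m}(\omega)\le G_n(\omega)+G_m(\theta^{-n}\omega).
\]
So \((G_n)\) is subadditive over the invertible measure-preserving map \(\theta^{-1}\). Moreover \(G_n^+=0\), and measurability follows as for \(F_n\). Kingman's theorem in its extended-valued form then gives an a.s. limit \(\lambda^-(\omega)=\lim_n\frac1n G_n(\omega)\in[-\infty,0]\), and \(\lambda^-\) is \(\theta^{-1}\)-invariant, hence \(\theta\)-invariant.

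\textbf{Step 2: identify \(\lambda^-\) with \(\lambda_{\rm tr}\).} Since \(G_n=F_n\circ\theta^{-n}\) and \(\theta\) preserves \(\pr\), \(G_n\) and \(F_n\) have the same distribution for each fixed \(n\). In the ergodic case the identification is immediate: both limits equal \(\inf_n\frac1n\mathbb E[F_n]=\inf_n\frac1n\mathbb E[G_n]\). For the general case I would use either of two routes.

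\emph{Route (a): conditional expectations.} The nonergodic Kingman limit is \(\inf_n\frac1n\mathbb E[F_n\mid\mathcal I]\), where \(\mathcal I\) is the invariant \(\sigma\)-algebra, and the corresponding formula holds for \(G_n\). For any bounded \(\mathcal I\)-measurable \(h\), invariance of \(h\) gives
\[
	\mathbb E[hG_n]=\mathbb E[(h\circ\theta^{-n})(F_n\circ\theta^{-n})]=\mathbb E[hF_n].
\]
So \(\mathbb E[G_n\mid\mathcal I]=\mathbb E[F_n\mid\mathcal I]\), with care taken for the value \(-\infty\) (truncate at \(-M\)).

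\emph{Route (b): convergence in probability.} Kingman also gives convergence in probability, or in \(L^1\) after truncation. Since \(\frac1nG_n=(\frac1nF_n)\circ\theta^{-n}\) and \(\lambda_{\rm tr}\circ\theta^{-n}=\lambda_{\rm tr}\), measure preservation gives
\[
	\pr\bigl(|\tfrac1nG_n-\lambda_{\rm tr}|>\varepsilon\bigr)=\pr\bigl(|\tfrac1nF_n-\lambda_{\rm tr}|>\varepsilon\bigr)\to0.
\]
Thus \(\frac1nG_n\to\lambda_{\rm tr}\) in probability. Combined with the a.s. convergence to \(\lambda^-\), this forces \(\lambda^-=\lambda_{\rm tr}\) a.s.

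\textbf{Main obstacle.} The delicate point is the handling of \(-\infty\) values and truncation in route (b). I would replace \(F_n\) by \(\max(F_n,-nM)\), which is still subadditive, and pass \(M\to\infty\), treating the set \(\{\lambda_{\rm tr}=-\infty\}\) separately. On that set, \(\frac1nF_n\to-\infty\) in probability, and the same transfer through \(\theta^{-n}\) applies.
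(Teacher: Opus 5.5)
Your Step 1 and Route (a) coincide with the paper's proof: subadditivity of $G_n$ over $\theta^{-1}$, Kingman's theorem on $(\Omega,\mcF,\pr,\theta^{-1})$, and the identification $\mathbb E[F_n\circ\theta^{-n}\mid\mcI_\theta]=\mathbb E[F_n\mid\mcI_\theta]$ via invariance of $\mcI_\theta$-sets (the paper handles the value $-\infty$ by working with $H_n=-F_n\ge0$ instead of truncating), so the proposal is correct. Route (b) is also valid and avoids the conditional variational formula, and the truncation you flag as the main obstacle is unnecessary there: compose with the bounded homeomorphism $x\mapsto\arctan x$ of $[-\infty,0]$ (with $\arctan(-\infty):=-\pi/2$), which turns the a.s.\ convergence of $\frac1nF_n$ into convergence in probability that transfers to $\frac1nG_n$ by measure preservation and the invariance of $\lambda_{\rm tr}$.
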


\begin{proof}
    Set \(F_n(\omega):=\log\kappa_n(\omega)\), with the convention \(\log0=-\infty\).
    Set
    \[
       G_n(\omega)
       :=
       \log\kappa_{\omega;0:-n}
       =
       F_n\left(\theta^{-n}\omega\right).
    \]
    
    We first check subadditivity over \(S:=\theta^{-1}\).
    By \Cref{prop:random-submultiplicativity}, applied at \(\theta^{-(n+m)}\omega\), one has
    \[
       \kappa_{n+m}\left(\theta^{-(n+m)}\omega\right)
       \le
       \kappa_n\left(\theta^{-n}\omega\right)
       \kappa_m\left(\theta^{-(n+m)}\omega\right).
    \]
    Equivalently,
    \[
       G_{n+m}(\omega)
       \le
       G_n(\omega)+G_m\left(\theta^{-n}\omega\right).
    \]
    Since \(S^n\omega=\theta^{-n}\omega\), the sequence \(\left(G_n\right)_{n\ge1}\) is subadditive over \(S\).
    
    Each \(G_n\) is measurable and takes values in \(\left[-\infty,0\right]\).
    Hence \(G_n^+=0\in L^1\left(\pr\right)\).
    Kingman's theorem applied to the probability-preserving system \(\left(\Omega,\mcF,\pr,\theta^{-1}\right)\) gives the a.s. existence of
    \[
       \lambda_{\rm tr}^{\rm past}(\omega)
       :=
       \lim_{n\to\infty}{\dfrac{1}{n}}G_n(\omega).
    \]
    By the conditional form of Kingman's theorem for extended nonpositive subadditive cocycles,
    \[
       \lambda_{\rm tr}^{\rm past}
       =
       \inf_{n\ge1}
       {\dfrac{1}{n}}
       \mathbb E\left[G_n\mid\mcI_{\theta^{-1}}\right]
       \qquad
       \text{a.s.}
    \]
    The invariant \(\sigma\)-algebras of \(\theta\) and \(\theta^{-1}\) coincide, so \(\mcI_{\theta^{-1}}=\mcI_\theta\).
    
    It remains to identify the conditional expectations.
    Since \(G_n=F_n\circ\theta^{-n}\), it is enough to prove
    \[
       \mathbb E\left[F_n\circ\theta^{-n}\mid\mcI_\theta\right]
       =
       \mathbb E\left[F_n\mid\mcI_\theta\right].
    \]
    We use the extended nonpositive convention.
    Equivalently, we prove the corresponding identity for the nonnegative random variable \(H_n:=-F_n\).
    
    Let \(A\in\mcI_\theta\).
    Since \(\theta\) is invertible and probability preserving,
    \[
       \int_A H_n\circ\theta^{-n}\,d\pr
       =
       \int_{\theta^{-n}A} H_n\,d\pr.
    \]
    Since \(A\in\mcI_\theta\), we have \(\theta^{-n}A=A\) modulo null sets.
    Therefore
    \[
       \int_A H_n\circ\theta^{-n}\,d\pr
       =
       \int_A H_n\,d\pr.
    \]
    Thus
    \[
       \mathbb E\left[H_n\circ\theta^{-n}\mid\mcI_\theta\right]
       =
       \mathbb E\left[H_n\mid\mcI_\theta\right].
    \]
    Multiplying by \(-1\) gives
    \[
       \mathbb E\left[G_n\mid\mcI_\theta\right]
       =
       \mathbb E\left[F_n\mid\mcI_\theta\right].
    \]
    Hence
    \[
       \lambda_{\rm tr}^{\rm past}
       =
       \inf_{n\ge1}
       {\dfrac{1}{n}}
       \mathbb E\left[G_n\mid\mcI_\theta\right]
       =
       \inf_{n\ge1}
       {\dfrac{1}{n}}
       \mathbb E\left[F_n\mid\mcI_\theta\right]
       =
       \lambda_{\rm tr}.
    \]
    This proves the claim.
\end{proof}

We conclude this part of the random theory by identifying when the trace-Dobrushin Lyapunov exponent is strictly negative.
This is the key condition behind the exponential form of quenched memory loss.
The first observation is that, in the stationary setting, almost-sure decay of the product coefficients already rules out the borderline case \(\lambda_{\rm tr}=0\).

\begin{prop}
\label{prop:memory-loss-negative-exponent-equivalence}
    The following are equivalent:
    \begin{enumerate}
        \item \(\lambda_{\rm tr}(\omega)<0\), for \(\pr\)-a.e. \(\omega\).
       \item \( \kappa_{\omega;n:0}\longrightarrow0\), for \(\pr\)-a.e. \(\omega\).
       \item \( \kappa_{\omega;0:-n}\longrightarrow0\), for \(\pr\)-a.e. \(\omega\).
    \end{enumerate}
\end{prop}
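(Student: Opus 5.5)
The plan is to prove \(1\Rightarrow2\), \(1\Rightarrow3\), \(2\Rightarrow1\) and \(3\Rightarrow1\). The forward implications are immediate from \Cref{thm:random-lyapunov} and \Cref{prop:pullback-same-exponent}. If \(\lambda_{\rm tr}(\omega)<0\), then \(\frac1n\log\kappa_{\omega;n:0}\to\lambda_{\rm tr}(\omega)<0\). Hence \(\kappa_{\omega;n:0}\le e^{n\lambda_{\rm tr}(\omega)/2}\) eventually when \(\lambda_{\rm tr}(\omega)>-\infty\), and \(\kappa_{\omega;n:0}\) is eventually \(0\), or below any \(e^{-cn}\), when \(\lambda_{\rm tr}(\omega)=-\infty\). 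Either way \(\kappa_{\omega;n:0}\to0\). The same argument applied to \(\kappa_{\omega;0:-n}\), using \Cref{prop:pullback-same-exponent}, gives statement \(3\).

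For \(2\Rightarrow1\), the idea is to turn almost-sure decay into a positive-frequency supply of contractive blocks and then use submultiplicativity. Assume \(\kappa_n(\omega)\to0\) almost surely. Fix \(q\in(0,1)\) and set \(N(\omega):=\min\{n\ge1:\kappa_n(\omega)\le q\}\), which is finite almost surely. Choose \(L\) with \(\pr(N\le L)\ge 1-\varepsilon\). Since \(\kappa_n\) is nonincreasing in \(n\) (by submultiplicativity and \(\kappa\le1\)), the event \(E:=\{N\le L\}=\{\kappa_L\le q\}\). Cut \([0,n)\) greedily as follows. Starting at time \(j\), if \(\theta^j\omega\in E\) take the block \([j,j+L)\), which has coefficient \(\le q\); otherwise advance by one step, with coefficient \(\le1\). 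By \Cref{prop:random-submultiplicativity}, \(\kappa_n(\omega)\le q^{B_n(\omega)}\), where \(B_n\) counts the selected good blocks.

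The main obstacle is showing \(\liminf_n B_n/n>0\) almost surely without ergodicity. A clean alternative avoids the greedy path entirely. Use only the fixed lattice of blocks \([kL,(k+1)L)\), together with the bound \(\kappa_{kL}(\omega)\le\prod_{i<k}\kappa_L(\theta^{iL}\omega)\). Apply Birkhoff's theorem for \(\theta^L\) to \(\log\kappa_L\), in the extended form, or to its truncation \(\max(\log\kappa_L,\log q)\). This gives \(\lim\frac{1}{kL}\log\kappa_{kL}\le\frac1L\mathbb E[\max(\log\kappa_L,\log q)\mid\mcI_{\theta^L}]\). The right-hand side is at most \(\frac{\log q}{L}\pr(E\mid\mcI_{\theta^L})\).

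It then remains to show \(\pr(E\mid\mcI_{\theta^L})>0\) almost surely. This is where the choice of \(L\) has to depend on \(\omega\). I would handle it by letting \(L\) range over all integers and using that \(\kappa_L\to0\) almost surely, so \(\mathbf 1_{\{\kappa_L\le q\}}\to1\) almost surely. By conditional dominated convergence, \(\mathbb E[\mathbf 1_{\{\kappa_L\le q\}}\mid\mcI_\theta]\to1\). The \(\mcI_{\theta^L}\) versus \(\mcI_\theta\) issue needs care; one route is to average over the \(L\) shifted lattices \(r+L\mathbb Z\), \(r=0,\dots,L-1\), which converts conditioning to \(\mcI_\theta\). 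Hence for almost every \(\omega\) there is some \(L\) with positive conditional probability, and therefore \(\lambda_{\rm tr}(\omega)\le\frac{\log q}{L}\pr(\kappa_L\le q\mid\mcI_\theta)(\omega)<0\). Since the limit along multiples of \(L\) equals the full limit \(\lambda_{\rm tr}(\omega)\) by \Cref{thm:random-lyapunov}, this proves statement \(1\).

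For \(3\Rightarrow1\), run the identical argument for the past cocycle \(G_n=\log\kappa_{\omega;0:-n}\) over \(\theta^{-1}\). Its subadditivity was established in the proof of \Cref{prop:pullback-same-exponent}, and that proposition identifies its exponent with \(\lambda_{\rm tr}\).
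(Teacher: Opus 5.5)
Your proposal is correct, but for \(2\Rightarrow1\) it takes a different route from the paper.

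The paper invokes Kingman's conditional variational formula \(\lambda_{\rm tr}=\inf_n\frac1n\mathbb E[\log\kappa_n\mid\mcI_\theta]\) and argues on the invariant set \(A=\{\lambda_{\rm tr}=0\}\). There every \(\mathbb E[\log\kappa_n\mid\mcI_\theta]\) must vanish, so \(\kappa_n=1\) for all \(n\) a.e.\ on \(A\), which is incompatible with decay unless \(\pr(A)=0\).

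You instead build the one-sided bound \(\lambda_{\rm tr}\le\frac1L\mathbb E[\max(\log\kappa_L,\log q)\mid\mcI_\theta]\le\frac{\log q}{L}\pr(\kappa_L\le q\mid\mcI_\theta)\) by hand from Birkhoff's theorem on truncated block logarithms. You then use \(\pr(\kappa_L\le q\mid\mcI_\theta)\uparrow1\) a.s., which is conditional monotone convergence since \(\kappa_n\) is nonincreasing in \(n\). So for a.e.\ \(\omega\) some \(L\) gives a strictly negative bound.

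The shifted-lattice averaging you sketch is what makes this work without ergodicity, and it is valid. For each offset \(r<L\) one has \(\kappa_{r+kL}(\omega)\le\kappa_{kL}(\theta^r\omega)\le\prod_{i<k}\kappa_L(\theta^{r+iL}\omega)\). Birkhoff for \(\theta^L\) at \(\theta^r\omega\) then gives \(\lambda_{\rm tr}\le\frac1L\,\mathbb E[h\mid\mcI_{\theta^L}]\circ\theta^r\) a.s. Finally, \(\frac1L\sum_{r<L}\mathbb E[h\mid\mcI_{\theta^L}]\circ\theta^r=\mathbb E[h\mid\mcI_\theta]\). Since this is exactly where the non-ergodic subtlety lives, you should write it out rather than leave it as a remark.

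The paper's route is shorter because it takes the full conditional Kingman formula as given. That formula alone already yields \(\lambda_{\rm tr}\le\frac1L\mathbb E[\log\kappa_L\mid\mcI_\theta]\), so your Birkhoff step would become unnecessary. Yours needs only Birkhoff's theorem and sidesteps extended-valued conditional expectations by truncation. It also gives a quantitative pointwise estimate of \(\lambda_{\rm tr}(\omega)\) in terms of the conditional frequency of \(q\)-contractive blocks, close in spirit to the paper's saturation argument for removing ergodicity.

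For \(3\Rightarrow1\) there is a quicker reduction. Since \(\kappa_{\omega;0:-n}=\kappa_n(\theta^{-n}\omega)\) has the law of \(\kappa_n\), statement \(3\) gives \(\kappa_n\to0\) in probability. Monotonicity of \(\kappa_n\) in \(n\) then upgrades this to statement \(2\).
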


\begin{proof}
    Assume first that \(\lambda_{\rm tr}(\omega)<0\) for \(\pr\)-a.e. \(\omega\).
    By \Cref{thm:random-lyapunov},
    \[
       \lim_{n\to\infty}
       {\dfrac{1}{n}}\log\kappa_{\omega;n:0}
       =
       \lambda_{\rm tr}(\omega)
    \]
    for \(\pr\)-a.e. \(\omega\).
    Hence \(\kappa_{\omega;n:0}\to0\) for \(\pr\)-a.e. \(\omega\).
    
    By \Cref{prop:pullback-same-exponent},
    \[
       \lim_{n\to\infty}
       {\dfrac{1}{n}}\log\kappa_{\omega;0:-n}
       =
       \lambda_{\rm tr}(\omega)
    \]
    for \(\pr\)-a.e. \(\omega\).
    Thus \(\kappa_{\omega;0:-n}\to0\) for \(\pr\)-a.e. \(\omega\) as well.
    
    It remains to prove the converse.
    Assume that
    \[
       \kappa_{\omega;n:0}\longrightarrow0
       \qquad
       \text{for }\pr\text{-a.e. }\omega.
    \]
    Set \(F_n(\omega):=\log\kappa_{\omega;n:0}\), with the convention \(\log0=-\infty\).
    Let
    \[
       A
       :=
       \left\{
          \omega:\lambda_{\rm tr}(\omega)=0
       \right\}.
    \]
    Since \(\lambda_{\rm tr}\) is \(\theta\)-invariant, \(A\in\mcI_\theta\).
    By Kingman's conditional variational formula,
    \[
       \lambda_{\rm tr}
       =
       \inf_{n\ge1}
       {\dfrac{1}{n}}
       \mathbb E\left[F_n\mid\mcI_\theta\right]
       \qquad
       \text{a.s.}
    \]
    Since \(F_n\le0\), we have \(\mathbb E\left[F_n\mid\mcI_\theta\right]\le0\).
    On \(A\), for every \(n\ge1\),
    \[
       0
       =
       \lambda_{\rm tr}
       \le
       {\dfrac{1}{n}}
       \mathbb E\left[F_n\mid\mcI_\theta\right]
       \le
       0.
    \]
    Therefore
    \[
       \mathbb E\left[F_n\mid\mcI_\theta\right]
       =
       0
       \qquad
       \text{a.s. on }A.
    \]
    Equivalently, for \(H_n:=-F_n\ge0\),
    \[
       \mathbb E\left[H_n\mid\mcI_\theta\right]
       =
       0
       \qquad
       \text{a.s. on }A.
    \]
    Because \(A\in\mcI_\theta\),
    \[
       \mathbb E\left[\mathbf 1_A H_n\right]
       =
       \mathbb E\left[
          \mathbf 1_A
          \mathbb E\left[H_n\mid\mcI_\theta\right]
       \right]
       =
       0.
    \]
    Since \(\mathbf 1_AH_n\ge0\), this gives \(H_n=0\) for \(\pr\)-a.e. \(\omega\in A\).
    Thus \(F_n=0\), equivalently
    \[
       \kappa_{\omega;n:0}=1,
    \]
    for \(\pr\)-a.e. \(\omega\in A\), for each fixed \(n\ge1\).
    Taking a countable intersection over \(n\), we get
    \[
       \kappa_{\omega;n:0}=1
       \qquad
       \text{for every }n\ge1
    \]
    for \(\pr\)-a.e. \(\omega\in A\).
    
    This contradicts \(\kappa_{\omega;n:0}\to0\) on any subset of \(A\) of positive measure.
    Hence \(\pr\left(A\right)=0\).
    Since \(\lambda_{\rm tr}\in\left[-\infty,0\right]\), this proves \(\lambda_{\rm tr}<0\) a.s.
    
    Finally, assume instead that
    \[
       \kappa_{\omega;0:-n}\longrightarrow0
       \qquad
       \text{for }\pr\text{-a.e. }\omega.
    \]
    Set \(G_n(\omega):=\log\kappa_{\omega;0:-n}\).
    As in the proof of \Cref{prop:pullback-same-exponent}, the sequence \(\left(G_n\right)_{n\ge1}\) is subadditive over \(\theta^{-1}\).
    Applying the argument above to the probability-preserving system \(\left(\Omega,\mcF,\pr,\theta^{-1}\right)\) shows that the pullback Kingman exponent is negative a.s.
    By \Cref{prop:pullback-same-exponent}, this exponent agrees with \(\lambda_{\rm tr}\).
    Therefore \(\lambda_{\rm tr}<0\) a.s.
\end{proof}

%%%%%%%%%%%%%%%%%%%%%%%%%%%%%%%%%%%%%%%%%%%%%%%%%%%%%%%%%%%%%%%%%%%%%%%%%%%%
%%%%%%%%%%%%%%%%%%%%%%%%%%%%%%%%%%%%%%%%%%%%%%%%%%%%%%%%%%%%%%%%%%%%%%%%%%%%

\subsection{Sufficient criteria for negative exponent}
\label{sec:random-negative-criteria}

%%%%%%%%%%%%%%%%%%%%%%%%%%%%%%%%%%%%%%%%%%%%%%%%%%%%%%%%%%%%%%%%%%%%%%%%%%%%
%%%%%%%%%%%%%%%%%%%%%%%%%%%%%%%%%%%%%%%%%%%%%%%%%%%%%%%%%%%%%%%%%%%%%%%%%%%%

We now record practical criteria for checking \(\lambda_{\rm tr}<0\).
The sharp criterion is the existence of a finite block which is strictly contractive with positive probability.
Strict positivity and Doeblin minorization are useful sufficient mechanisms for producing such blocks.

The first criterion is intrinsic.
It says that, in the ergodic case, negativity of the Lyapunov exponent is equivalent to the existence of one fixed block length which contracts on a set of positive probability.

\begin{prop}
\label{prop:positive-probability-contracting-block}
    Assume that \(\theta\) is ergodic.
    Then \(\lambda_{\rm tr}<0\) if and only if there exists \(L\ge1\) such that
    \[
       \pr\left\{
          \omega:
          \kappa_{\omega;L:0}<1
       \right\}
       >
       0.
    \]
    Equivalently, \(\lambda_{\rm tr}<0\) if and only if there exists \(L\ge1\) such that
    \[
       \mathbb E\left[\log\kappa_{\omega;L:0}\right]<0,
    \]
    where the expectation is interpreted in the extended nonpositive sense.
\end{prop}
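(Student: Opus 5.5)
The proof runs through the ergodic form of Kingman's theorem in \Cref{thm:random-lyapunov}, which gives
\[
   \lambda_{\rm tr}=\inf_{n\ge1}\frac1n\,\mathbb E\left[\log\kappa_n\right],
\]
with the extended nonpositive convention for the expectation.

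\emph{Second equivalence.} We first show that the positive-probability condition and the expectation condition are the same at each fixed $L$. Since $\log\kappa_{\omega;L:0}\le 0$, we have $\mathbb E[\log\kappa_{\omega;L:0}]<0$ exactly when $-\log\kappa_{\omega;L:0}$ is not a.s.\ zero. This is the same as $\pr\{\kappa_{\omega;L:0}<1\}>0$. The value $\log0=-\infty$ on the event $\kappa=0$ causes no trouble: that event lies inside $\{\kappa<1\}$, and the convention $\mathbb E[F]=-\mathbb E[-F]$ handles it.

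\emph{Necessity.} Suppose $\lambda_{\rm tr}<0$. By the variational formula there is some $L$ with $\frac1L\,\mathbb E[\log\kappa_L]<0$, and the previous paragraph then gives $\pr\{\kappa_{\omega;L:0}<1\}>0$. A more elementary alternative is available: if $\kappa_L=1$ a.s.\ for every $L$, then $\frac1n\log\kappa_n\to0$ a.s., so $\lambda_{\rm tr}=0$.

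\emph{Sufficiency.} Suppose some $L$ has $\mathbb E[\log\kappa_L]<0$, possibly equal to $-\infty$. The variational formula gives
\[
   \lambda_{\rm tr}\le\frac1L\,\mathbb E[\log\kappa_L]<0
\]
immediately, and in the $-\infty$ case it gives $\lambda_{\rm tr}=-\infty$. For a self-contained check that avoids the infimum formula, we can argue directly. Subadditivity (\Cref{prop:random-submultiplicativity}) gives
\[
   \log\kappa_{kL}(\omega)\le\sum_{j=0}^{k-1}\log\kappa_L\left(\theta^{jL}\omega\right).
\]
Birkhoff's theorem for $\theta^L$ applies here, with the truncated variable $\max(\log\kappa_L,-M)$ used to deal with the value $-\infty$. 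One point needs care: $\theta^L$ need not be ergodic even when $\theta$ is, so the limit is the conditional expectation with respect to the $\theta^L$-invariant $\sigma$-algebra. Averaging over $L$ consecutive starting shifts brings this back to the $\theta$-invariant $\sigma$-algebra, where the expectation is the constant $\mathbb E[\log\kappa_L]<0$. Passing from the subsequence $kL$ to all $n$ uses monotonicity: $\kappa_n$ is nonincreasing in $n$ because each $\kappa_{\rm tr}(\Phi)\le1$.

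The main obstacle is this Birkhoff step for $\theta^L$, specifically the possible non-ergodicity of $\theta^L$ and the extended values $-\infty$. The cleanest route is to cite the ergodic Kingman infimum formula already recorded in \Cref{thm:random-lyapunov}. That reduces the whole proposition to the one-line observation, at each $L$, that the extended expectation of a nonpositive variable is strictly negative exactly when the variable is negative with positive probability.
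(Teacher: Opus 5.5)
Your proposal is correct and matches the paper's proof. Both reduce the statement, via the ergodic Kingman formula $\lambda_{\rm tr}=\inf_{n\ge1}\frac1n\mathbb E[\log\kappa_n]$ from \Cref{thm:random-lyapunov}, to the fact that a nonpositive extended random variable has strictly negative expectation exactly when it is negative with positive probability. The optional Birkhoff-for-$\theta^L$ check you sketch is also sound, but it is not needed.
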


\begin{proof}
    Set \(F_L\left(\omega\right):=\log\kappa_{\omega;L:0}\), with the convention \(\log0=-\infty\).
    Since \(0\le\kappa_{\omega;L:0}\le1\), one has \(F_L\le0\).
    Assume first that \(\pr\left\{F_L<0\right\}>0\) for some \(L\ge1\).
    Then \(-F_L\) is a nonnegative extended random variable which is positive on a set of positive probability.
    Hence
    \[
       \mathbb E\left[F_L\right]
       =
       -\mathbb E\left[-F_L\right]
       <
       0.
    \]
    By the ergodic variational formula in \Cref{thm:random-lyapunov},
    \[
       \lambda_{\rm tr}
       =
       \inf_{n\ge1}
       {\dfrac{1}{n}}
       \mathbb E\left[\log\kappa_{\omega;n:0}\right].
    \]
    Therefore
    \[
       \lambda_{\rm tr}
       \le
       {\dfrac{1}{L}}\mathbb E\left[F_L\right]
       <
       0.
    \]
    Conversely, assume that \(\lambda_{\rm tr}<0\).
    The same variational formula gives
    \[
       \inf_{n\ge1}
       {\dfrac{1}{n}}
       \mathbb E\left[\log\kappa_{\omega;n:0}\right]
       <
       0.
    \]
    Thus there exists \(L\ge1\) such that \(\mathbb E\left[F_L\right]<0\).
    Since \(F_L\le0\), this is possible only if \(\pr\left\{F_L<0\right\}>0\).
    Equivalently,
    \[
       \pr\left\{
          \omega:
          \kappa_{\omega;L:0}<1
       \right\}
       >
       0.
    \]
\end{proof}

The preceding proposition reduces the problem to producing a block with \(\kappa_{\omega;L:0}<1\) on a set of positive probability.
A convenient way to certify this is through a Doeblin-type minorization.
The following elementary lemma records the contraction consequence of such a minorization.

\begin{lemma}
\label{lem:minorization-contracts}
    Let \(S:\matrices\to\matrices\) be CPTP.
    Assume that there are \(\varepsilon\in\left(0,1\right]\) and \(\tau\in\mcS_d\) such that
    \[
       S\left(X\right)
       \ge
       \varepsilon\,\tr{X}\tau
       \qquad
       \text{for all }X\ge0.
    \]
    Then
    \[
       \kappa_{\rm tr}\left(S\right)\le1-\varepsilon.
    \]
\end{lemma}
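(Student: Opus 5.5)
The plan is to mimic the proof of \Cref{prop:md-bound}: split off the minorizing part of the channel and observe that it is killed by trace-zero inputs. If \(\varepsilon=1\), then \(S(X)\ge\tr{X}\tau\) for every \(X\ge0\), and trace preservation gives \(\tr{S(X)-\tr{X}\tau}=0\). A positive matrix with zero trace vanishes, so \(S(X)=\tr{X}\tau\) on positive matrices. By linearity \(S\) is the replacement channel \(R_\tau\), and \(\kappa_{\rm tr}(S)=0\) by \Cref{prop:dobr}. So I will assume \(\varepsilon<1\).

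In that case I define
\[
    T(X):=\frac{S(X)-\varepsilon\,\tr{X}\tau}{1-\varepsilon}.
\]
The hypothesis gives \(T(X)\ge0\) for \(X\ge0\), so \(T\) is positive. Since \(\tr{S(X)}=\tr{X}\) and \(\tr{\tau}=1\), we get \(\tr{T(X)}=\tr{X}\), so \(T\) is trace preserving. Then \(S=(1-\varepsilon)T+\varepsilon R_\tau\). For \(X\in\tracezero\) the replacement part vanishes, so \(S(X)=(1-\varepsilon)T(X)\).

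Positive trace-preserving maps are trace-norm contractive on self-adjoint matrices; this was shown in the proof of \Cref{prop:dobr}. Hence \(\norm{S(X)}_1\le(1-\varepsilon)\norm{X}_1\) for every \(X\in\tracezero\). Taking the supremum in \Cref{def:centered-trace-dobrushin} yields \(\kappa_{\rm tr}(S)\le1-\varepsilon\).

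There is no real obstacle. The only points needing care are that positivity of \(T\) is needed only on positive inputs, not complete positivity, and that the degenerate case \(\varepsilon=1\) is handled separately, as above. An alternative route is to note that \(\varepsilon\tau\) is a common lower bound for all \(S(P)\) with \(P\in\mcP_1\). Then \(\alpha_{\rm MD}(S)\ge\varepsilon\), and \Cref{prop:md-bound} gives the claim directly.
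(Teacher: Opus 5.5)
Your proof is correct and is essentially the paper's argument: the paper also handles $\varepsilon=1$ by showing $S=R_\tau$. For $\varepsilon<1$ it writes $S=(1-\varepsilon)\widetilde S$ on the self-adjoint trace-zero space, with $\widetilde S=(S-\varepsilon R_\tau)/(1-\varepsilon)$ positive and trace preserving, and then uses trace-norm contractivity. Your alternative route via $\alpha_{\rm MD}(S)\ge\varepsilon$ and \Cref{prop:md-bound} is also valid.
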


\begin{proof}
    Let \(R_\tau\left(X\right):=\tr{X}\tau\).
    The minorization assumption says that \(S-\varepsilon R_\tau\) is positive.
    Assume first that \(\varepsilon<1\).
    Then
    \[
       \widetilde S
       :=
       {\dfrac{S-\varepsilon R_\tau}{1-\varepsilon}}
    \]
    is positive and trace preserving.
    For every self-adjoint \(X\in\tracezero\), one has \(R_\tau\left(X\right)=0\), and therefore
    \[
       S\left(X\right)
       =
       \left(1-\varepsilon\right)\widetilde S\left(X\right).
    \]
    Since positive trace-preserving maps are trace-norm contractions on self-adjoint matrices,
    \[
       \norm{S\left(X\right)}_1
       \le
       \left(1-\varepsilon\right)\norm{X}_1.
    \]
    Taking the supremum over nonzero self-adjoint \(X\in\tracezero\) gives
    \[
       \kappa_{\rm tr}\left(S\right)\le1-\varepsilon.
    \]
        If \(\varepsilon=1\), then for every \(X\ge0\),
    \[
       S(X)-R_\tau(X)\ge0
       \qquad\text{and}\qquad
       \tr{S(X)-R_\tau(X)}=0.
    \]
    Hence \(S(X)=R_\tau(X)\) for every \(X\ge0\).
    By linearity, \(S=R_\tau\) on \(\matrices\), and therefore \(\kappa_{\rm tr}(S)=0\).
\end{proof}

Strict positivity is a familiar source of such a minorization.
Indeed, in finite dimension, a strictly positive channel sends the compact state space into the interior of the positive cone, and hence admits a uniform lower bound.
The next corollary applies this observation to random blocks.

\begin{cor}
    \label{cor:strict-positive-block-negative-exponent}
    Assume that \(\theta\) is ergodic.
    If there exists \(L\ge1\) such that
    \[
       \pr\left\{
          \omega:
          \Phi_{\omega;L:0}
          \text{ is strictly positive}
       \right\}
       >
       0,
    \]
    then \(\lambda_{\rm tr}<0\).
    In particular, if the cocycle is eventually strictly positive, meaning that for \(\pr\)-a.e. \(\omega\) there exists \(n=n\left(\omega\right)\ge1\) such that \(\Phi_{\omega;n:0}\) is strictly positive, then \(\lambda_{\rm tr}<0\).
\end{cor}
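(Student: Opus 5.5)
The plan is to reduce the corollary to \Cref{prop:positive-probability-contracting-block}, using \Cref{lem:minorization-contracts} as the bridge. Fix \(L\) as in the hypothesis and let \(E:=\{\omega:\Phi_{\omega;L:0}\text{ is strictly positive}\}\). I will show that \(E\subseteq\{\omega:\kappa_{\omega;L:0}<1\}\). Then \(\pr\{\kappa_{\omega;L:0}<1\}\ge\pr(E)>0\), and ergodicity together with \Cref{prop:positive-probability-contracting-block} gives \(\lambda_{\rm tr}<0\). Measurability of \(E\) is not really needed, since the containment gives the bound on the measurable set \(\{\kappa_{\omega;L:0}<1\}\) directly (using inner measure if necessary).

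The key deterministic step is this. Let \(S\) be a strictly positive CPTP map, meaning \(S(X)>0\) for every nonzero \(X\ge0\). Then \(S\) satisfies a Doeblin minorization. On the compact set \(\mcS_d\), the function \(\rho\mapsto\lambda_{\min}(S(\rho))\) is continuous and strictly positive, so it attains a minimum \(c>0\). Take \(\tau:=I/d\) and \(\varepsilon:=cd\). For each state \(\rho\) we have \(S(\rho)\ge cI=\varepsilon\tau\), and scaling gives \(S(X)\ge\varepsilon\tr{X}\tau\) for all \(X\ge0\). Taking traces shows \(\varepsilon\le1\), so \(\varepsilon\in(0,1]\). \Cref{lem:minorization-contracts} then yields \(\kappa_{\rm tr}(S)\le1-\varepsilon<1\). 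Applying this with \(S=\Phi_{\omega;L:0}\) for \(\omega\in E\) gives the containment above.

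For the ``in particular'' statement, let \(E_n:=\{\omega:\Phi_{\omega;n:0}\text{ strictly positive}\}\). Eventual strict positivity says \(\bigcup_{n\ge1}E_n\) has full measure. Countable subadditivity then forces \(\pr(E_n)>0\) for some \(n\), and the first part applies with \(L=n\).

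I expect no real obstacle. The only points needing care are the compactness argument that turns pointwise strict positivity into a uniform minorization constant, and the measurability or null-set bookkeeping, which the containment argument sidesteps.
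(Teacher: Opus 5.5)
Your proposal is correct and follows the paper's proof essentially step by step: compactness gives $S(\rho)\ge aI$ for all states, hence a minorization with $\tau=I/d$ and $\varepsilon=da\in(0,1]$; then \Cref{lem:minorization-contracts} gives the containment in $\{\kappa_{\omega;L:0}<1\}$, \Cref{prop:positive-probability-contracting-block} concludes, and the eventual case follows from the countable-union argument. The only difference is that the paper checks measurability of the strict-positivity event directly, via continuity of $S\mapsto\min_{\rho}\lambda_{\min}(S(\rho))$, whereas you sidestep it using monotonicity of (outer) measure; that is equally valid.
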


\begin{proof}
    Let
    \[
       E_L
       :=
       \left\{
          \omega:
          \Phi_{\omega;L:0}
          \text{ is strictly positive}
       \right\}.
    \]
    By assumption, \(\pr\left(E_L\right)>0\).
    This is a measurable set.
    Indeed, \(\omega\mapsto \Phi_{\omega;L:0}\) is measurable, and the map
    \[
       S\longmapsto
       \min_{\rho\in\mcS_d}
       \lambda_{\min}(S(\rho))
    \]
    is continuous in finite dimensions.
    Moreover, \(S\) is strictly positive if and only if this minimum is strictly positive.
    
    Fix \(\omega\in E_L\), and set \(S:=\Phi_{\omega;L:0}\).
    The map
    \[
       \rho\longmapsto\lambda_{\min}\left(S\left(\rho\right)\right)
    \]
    is continuous on the compact state space \(\mcS_d\).
    Since \(S\) is strictly positive, \(S\left(\rho\right)>0\) for every \(\rho\in\mcS_d\).
    Hence
    \[
       a_\omega
       :=
       \min_{\rho\in\mcS_d}
       \lambda_{\min}\left(S\left(\rho\right)\right)
       >
       0.
    \]
    Thus \(S\left(\rho\right)\ge a_\omega I\) for every \(\rho\in\mcS_d\).
    If \(X\ge0\) and \(X\ne0\), then \(X=\tr{X}\rho\) for some \(\rho\in\mcS_d\), so
    \[
       S\left(X\right)
       =
       \tr{X}S\left(\rho\right)
       \ge
       da_\omega\,\tr{X}{\dfrac Id}.
    \]
    The same inequality is trivial for \(X=0\).
    Set \(\varepsilon_\omega:=da_\omega\).
    Since \(S\) is trace preserving, \(0<\varepsilon_\omega\le1\).
    
    By \Cref{lem:minorization-contracts},
    \[
       \kappa_{\omega;L:0}
       =
       \kappa_{\rm tr}\left(S\right)
       \le
       1-\varepsilon_\omega
       <
       1.
    \]
    Therefore
    \[
       E_L
       \subset
       \left\{
          \omega:
          \kappa_{\omega;L:0}<1
       \right\}.
    \]
    Hence
    \[
       \pr\left\{
          \omega:
          \kappa_{\omega;L:0}<1
       \right\}
       >
       0.
    \]
    The conclusion follows from \Cref{prop:positive-probability-contracting-block}.
    
    If the cocycle is eventually strictly positive, then
    \[
       \pr\left(
          \bigcup_{L\ge1}
          \left\{
             \omega:
             \Phi_{\omega;L:0}
             \text{ is strictly positive}
          \right\}
       \right)
       =
       1.
    \]
    Since this is a countable union, there exists \(L\ge1\) such that
    \[
       \pr\left\{
          \omega:
          \Phi_{\omega;L:0}
          \text{ is strictly positive}
       \right\}
       >
       0.
    \]
    The first part applies.
\end{proof}

The strict-positivity assumption is stronger than necessary.
For the trace-Dobrushin exponent, it is enough to have a Doeblin minorization for a fixed random block, even if the block is not strictly positive.

\begin{cor}
\label{cor:random-doeblin-block}
    Assume that \(\theta\) is ergodic.
    Suppose that there exist \(L\ge1\), a measurable function \(\varepsilon:\Omega\to[0,1]\), and a measurable random state \(\tau:\Omega\to\mcS_d\) such that
    \[
       \Phi_{\omega;L:0}\left(X\right)
       \ge
       \varepsilon\left(\omega\right)\tr{X}\tau_\omega
       \qquad
       \text{for all }X\ge0
    \]
    for \(\pr\)-a.e. \(\omega\), and
    \[
       \mathbb E\left[
          \log\left(1-\varepsilon\right)
       \right]
       <
       0
    \]
    in the extended nonpositive sense.
    Then \(\lambda_{\rm tr}<0\).
    In particular, it is enough that \(\pr\left\{\varepsilon>0\right\}>0\).
\end{cor}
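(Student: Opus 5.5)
The plan is to reduce the statement to \Cref{prop:positive-probability-contracting-block} via \Cref{lem:minorization-contracts}, working fiberwise. Since \(\theta\) is ergodic, it suffices to show that \(\mathbb E[\log\kappa_{\omega;L:0}]<0\) in the extended nonpositive sense for the given block length \(L\). Equivalently, it suffices that \(\pr\{\kappa_{\omega;L:0}<1\}>0\).

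First, fix \(\omega\) in the full-measure set where the minorization holds. If \(\varepsilon(\omega)=0\), the minorization is vacuous, and we simply use \(\kappa_{\omega;L:0}\le1=1-\varepsilon(\omega)\) from \Cref{prop:dobr}. If \(\varepsilon(\omega)\in(0,1]\), \Cref{lem:minorization-contracts} applied to \(S=\Phi_{\omega;L:0}\) and \(\tau=\tau_\omega\) gives \(\kappa_{\omega;L:0}\le1-\varepsilon(\omega)\). So in all cases
\[
\kappa_{\omega;L:0}\le 1-\varepsilon(\omega)\qquad\text{for }\pr\text{-a.e. }\omega .
\]

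Next, take logarithms, using monotonicity of \(\log\) on \([0,1]\) with \(\log0=-\infty\). This gives \(\log\kappa_{\omega;L:0}\le\log(1-\varepsilon(\omega))\le0\) almost surely. Passing to the nonnegative variables \(-\log\kappa_{\omega;L:0}\ge-\log(1-\varepsilon)\ge0\), monotonicity of the integral (valid for \([0,\infty]\)-valued functions) gives
\[
\mathbb E[\log\kappa_{\omega;L:0}]\le\mathbb E[\log(1-\varepsilon)]<0 .
\]
Measurability of \(\omega\mapsto\kappa_{\omega;L:0}\) was established at the start of \Cref{sec:random}, so these expectations make sense. The variational formula in \Cref{thm:random-lyapunov} then yields \(\lambda_{\rm tr}\le\frac1L\mathbb E[\log\kappa_{\omega;L:0}]<0\). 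One could equally invoke the second characterization in \Cref{prop:positive-probability-contracting-block}.

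For the final assertion, suppose \(\pr\{\varepsilon>0\}>0\). Then \(-\log(1-\varepsilon)\) is a \([0,\infty]\)-valued variable that is strictly positive on a set of positive probability, so its expectation is strictly positive. Hence \(\mathbb E[\log(1-\varepsilon)]<0\), and the first part applies.

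There is no real obstacle. The only delicate point is handling the extended-valued expectations correctly, including the case \(\varepsilon=1\), where \(\log(1-\varepsilon)=-\infty\). Working throughout with nonnegative \([0,\infty]\)-valued variables avoids any integrability issues.
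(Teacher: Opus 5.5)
Your proof is correct and follows the paper's own argument. You apply \Cref{lem:minorization-contracts} fiberwise to get $\kappa_{\omega;L:0}\le 1-\varepsilon(\omega)$ almost surely, compare the extended nonpositive expectations, and conclude via the variational formula (equivalently \Cref{prop:positive-probability-contracting-block}). You also treat fibers with $\varepsilon(\omega)=0$ separately; the lemma is stated only for $\varepsilon\in(0,1]$ and does not literally cover them, a point the paper's proof passes over.
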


\begin{proof}
    By \Cref{lem:minorization-contracts},
    \[
       \kappa_{\omega;L:0}
       \le
       1-\varepsilon\left(\omega\right)
    \]
    for \(\pr\)-a.e. \(\omega\).
    Hence
    \[
       \mathbb E\left[
          \log\kappa_{\omega;L:0}
       \right]
       \le
       \mathbb E\left[
          \log\left(1-\varepsilon\right)
       \right]
       <
       0.
    \]
    The conclusion follows from \Cref{prop:positive-probability-contracting-block}.
    If \(\pr\left\{\varepsilon>0\right\}>0\), then \(\log\left(1-\varepsilon\right)<0\) on a set of positive probability and is never positive.
    Thus, its extended expectation is strictly negative.
\end{proof}

One common way to verify a Doeblin minorization is to find a common positive lower bound for the images of rank-one projections.
This is the block-level version of the quantum Markov--Dobrushin certificate.

\begin{cor}
\label{cor:random-md-block}
    Assume that \(\theta\) is ergodic.
    Suppose that there exist \(L\ge1\) and a measurable positive operator \(B_\omega\ge0\) such that
    \[
       B_\omega
       \le
       \Phi_{\omega;L:0}\left(P\right)
       \qquad
       \text{for every rank-one projection }P
    \]
    for \(\pr\)-a.e. \(\omega\), and
    \[
       \mathbb E\left[
          \log\left(1-\tr{B_\omega}\right)
       \right]
       <
       0.
    \]
    Then \(\lambda_{\rm tr}<0\).
\end{cor}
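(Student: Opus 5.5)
The plan is to reduce \Cref{cor:random-md-block} to \Cref{prop:md-bound} applied fiberwise, followed by the same expectation argument used in \Cref{cor:random-doeblin-block}.

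Fix \(\omega\) in the full-measure set where the lower-bound hypothesis holds. The operator \(B_\omega\) is then an admissible common lower bound for the CPTP map \(S:=\Phi_{\omega;L:0}\) in the sense of \Cref{def:state-md-minorization}. Consequently \(\alpha_{\rm MD}(S)\ge\tr{B_\omega}\). By \Cref{prop:md-bound},
\[
   \kappa_{\omega;L:0}\le 1-\alpha_{\rm MD}(S)\le 1-\tr{B_\omega}.
\]
The argument in the proof of \Cref{prop:md-bound} also shows that \(0\le\tr{B_\omega}\le1\). Therefore \(\log(1-\tr{B_\omega})\in[-\infty,0]\) is a well-defined extended nonpositive random variable. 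Its measurability follows from the measurability of \(\omega\mapsto B_\omega\).

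Monotonicity of the logarithm then gives \(\log\kappa_{\omega;L:0}\le\log(1-\tr{B_\omega})\) almost surely, with the convention \(\log0=-\infty\). Taking extended expectations, in the sense \(\mathbb E[F]=-\mathbb E[-F]\) for nonpositive \(F\), preserves this inequality. Hence
\[
   \mathbb E\left[\log\kappa_{\omega;L:0}\right]\le\mathbb E\left[\log(1-\tr{B_\omega})\right]<0.
\]
The second characterization in \Cref{prop:positive-probability-contracting-block}, which uses ergodicity, then yields \(\lambda_{\rm tr}<0\).

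An alternative route would define \(\varepsilon(\omega):=\tr{B_\omega}\) and \(\tau_\omega:=B_\omega/\tr{B_\omega}\) whenever the trace is positive, and then invoke \Cref{cor:random-doeblin-block} directly. That route requires checking the minorization \(S(X)\ge\tr{X}B_\omega\) for all \(X\ge0\) via spectral decomposition, and it needs some care with measurability on the set where the trace vanishes.

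There is no genuine obstacle in this proof. The only points requiring care are the range \(\tr{B_\omega}\le1\), which guarantees that the logarithm is defined, and the handling of the extended-valued expectation. I would therefore use the first route, since it bypasses the measurable-selection issue for \(\tau_\omega\).
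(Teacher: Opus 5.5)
Your proof is correct, and its overall structure matches the paper's: a fiberwise bound \(\kappa_{\omega;L:0}\le 1-\tr{B_\omega}\), then monotonicity of the extended expectation, then \Cref{prop:positive-probability-contracting-block}.

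The only difference is which contraction lemma supplies the fiberwise bound. The paper sets \(\varepsilon(\omega):=\tr{B_\omega}\) and \(\tau_\omega:=B_\omega/\varepsilon(\omega)\), taking an arbitrary state when \(\varepsilon(\omega)=0\). It then uses spectral decomposition to get the Doeblin minorization \(\Phi_{\omega;L:0}(X)\ge\tr{X}B_\omega=\varepsilon(\omega)\tr{X}\tau_\omega\) for all \(X\ge0\). Finally it invokes \Cref{cor:random-doeblin-block}, which runs \Cref{lem:minorization-contracts} and then exactly your expectation argument.

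You instead apply \Cref{prop:md-bound} directly. That proposition already contains the same spectral-decomposition step, so your route saves one layer of citation.

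The measurability concern you raise about \(\tau_\omega\) is harmless in the paper's route. One can take a fixed state on the measurable set \(\{\tr{B_\omega}=0\}\).
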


\begin{proof}
    Let \(\varepsilon\left(\omega\right):=\tr{B_\omega}\).
    If \(\varepsilon\left(\omega\right)>0\), set \(\tau_\omega:=B_\omega/\varepsilon\left(\omega\right)\), and choose \(\tau_\omega\) arbitrarily when \(\varepsilon\left(\omega\right)=0\).
    
    For every state \(\rho\), write a spectral decomposition \(\rho=\sum_i p_iP_i\) into rank-one projections.
    Then
    \[
       \Phi_{\omega;L:0}\left(\rho\right)
       =
       \sum_i p_i\Phi_{\omega;L:0}\left(P_i\right)
       \ge
       \sum_i p_iB_\omega
       =
       B_\omega.
    \]
    By homogeneity, for every \(X\ge0\),
    \[
       \Phi_{\omega;L:0}\left(X\right)
       \ge
       \tr{X}B_\omega
       =
       \varepsilon\left(\omega\right)\tr{X}\tau_\omega.
    \]
    Thus \Cref{cor:random-doeblin-block} applies.
\end{proof}

The preceding criteria show that eventual strict positivity is a sufficient mechanism for a negative trace-Dobrushin exponent.
It is not a necessary mechanism.
The next example shows that replacement mixing can occur even when no power of the channel is strictly positive.

\begin{example}
\label{ex:amplitude-random-strictness}
    Let the base space be trivial, and let the single channel be the qubit amplitude-damping channel \(T_\gamma\), where \(0<\gamma<1\).
    In Bloch coordinates, centered state differences transform as
    \[
       \left(x,y,z\right)
       \longmapsto
       \left(
          \sqrt{1-\gamma}\,x,
          \sqrt{1-\gamma}\,y,
          \left(1-\gamma\right)z
       \right).
    \]
    Since the trace norm of a centered qubit difference equals the Euclidean norm of the corresponding Bloch-vector difference,
    \[
       \kappa_{\rm tr}\left(T_\gamma\right)
       =
       \sqrt{1-\gamma}
       <
       1.
    \]
    The same computation gives
    \[
       \kappa_{\rm tr}\left(T_\gamma^n\right)
       =
       \left(\sqrt{1-\gamma}\right)^n.
    \]
    Thus
    \[
       \lambda_{\rm tr}
       =
       \log\sqrt{1-\gamma}
       <
       0.
    \]
    The channel therefore has exponential replacement mixing, and the attracting replacement state is the pure state \(\ket{0}\bra{0}\).
    
    However,
    \[
       T_\gamma^n\left(\ket{0}\bra{0}\right)
       =
       \ket{0}\bra{0}
       \qquad
       \text{for every }n\ge1.
    \]
    Thus no power of \(T_\gamma\) maps every nonzero positive operator to a strictly positive operator.
    Consequently, \(T_\gamma\) is not eventually strictly positive.
    This shows that, for CPTP cocycles, eventual strict positivity is only a sufficient subclass of the negative trace-Dobrushin exponent regime.
\end{example}

%%%%%%%%%%%%%%%%%%%%%%%%%%%%%%%%%%%%%%%%%%%%%%%%%%%%%%%%%%%%%%%%%%%%%%%%%%%%
%%%%%%%%%%%%%%%%%%%%%%%%%%%%%%%%%%%%%%%%%%%%%%%%%%%%%%%%%%%%%%%%%%%%%%%%%%%%

\paragraph{Removing ergodicity.}

    The ergodicity assumption in the preceding positive-probability criteria is used only to rule out invariant components on which no good block occurs.
    There is a direct non-ergodic replacement that does not require choosing ergodic components.
    
    For a measurable set \(E\subset\Omega\), define its two-sided saturation by
    \[
       \operatorname{Sat}(E)
       :=
       \bigcup_{j\in\mathbb Z}\theta^{-j}E .
    \]
    Thus, in the general probability-preserving case, the ergodic assumption
    \[
       \pr(E)>0
    \]
    should be replaced by
    \[
       \pr\!\left(\operatorname{Sat}(E)\right)=1.
    \]
    Equivalently, every invariant measurable set \(A\in\mcF\) with \(\pr(A)>0\) satisfies
    \[
       \pr(A\cap E)>0.
    \]
    Under this replacement, the preceding positive-probability sufficient criteria remain valid without assuming that \(\theta\) is ergodic.
    
    Indeed, let
    \[
       F_n(\omega):=\log\kappa_{\omega;n:0},
       \qquad F_n\le0,
    \]
    and let
    \[
       \lambda_{\rm tr}(\omega)
       =
       \lim_{n\to\infty}\frac1nF_n(\omega)
    \]
    be the Kingman exponent.  Suppose that \(E=\bigcup_{r\ge1}E_r\), where for each \(r\) there is
    a block length \(L_r\) such that
    \[
       E_r\subset
       \left\{
          \omega:\kappa_{\omega;L_r:0}<1
       \right\}.
    \]
    Assume
    \[
       \pr \!\left(\operatorname{Sat}(E)\right)=1.
    \]
    We claim that \(\lambda_{\rm tr}<0\) a.s.  Let
    \[
       Z:=\{\omega:\lambda_{\rm tr}(\omega)=0\}.
    \]
    Since \(\lambda_{\rm tr}\le0\) and is \(\theta\)-invariant, \(Z\) is invariant.  If
    \(\pr (Z)>0\), then the saturation assumption gives
    \[
       \pr (Z\cap E)>0.
    \]
    Since \(E=\bigcup_{r\ge1}E_r\) is a countable union, there exists \(r\) such that
    \[
       \pr (Z\cap E_r)>0.
    \]
    On \(E_r\) one has \(F_{L_r}<0\), while \(F_{L_r}\le0\) everywhere.  Hence
    \[
       \int_Z F_{L_r}\,d\pr <0.
    \]
    Applying Kingman's variational formula to the restricted probability-preserving system on
    \(Z\), normalized by \(\pr (Z)\), gives
    \[
       \int_Z \lambda_{\rm tr}\,d\pr 
       \le
       \frac1{L_r}\int_Z F_{L_r}\,d\pr 
       <0.
    \]
    But by definition \(\lambda_{\rm tr}=0\) on \(Z\), a contradiction.  Therefore
    \(\pr (Z)=0\), and hence \(\lambda_{\rm tr}<0\) a.s.
    
    For example, in \Cref{cor:random-doeblin-block}, if ergodicity is dropped, it is enough to assume
    \[
       \pr \!\left(
          \operatorname{Sat}\{\omega:\varepsilon(\omega)>0\}
       \right)=1.
    \]
    Indeed, the Doeblin minorization gives
    \[
       \kappa_{\omega;L:0}
       \le
       1-\varepsilon(\omega),
    \]
    so the event \(\{\varepsilon>0\}\) is contained in
    \[
       \left\{
          \omega:\kappa_{\omega;L:0}<1
       \right\}.
    \]
    The preceding saturation principle therefore gives \(\lambda_{\rm tr}<0\) a.s.

%%%%%%%%%%%%%%%%%%%%%%%%%%%%%%%%%%%%%%%%%%%%%%%%%%%%%%%%%%%%%%%%%%%%%%%%%%%%
%%%%%%%%%%%%%%%%%%%%%%%%%%%%%%%%%%%%%%%%%%%%%%%%%%%%%%%%%%%%%%%%%%%%%%%%%%%%

\subsection{Stationary random states and replacement estimates}
\label{sec:random-pullback-state}

%%%%%%%%%%%%%%%%%%%%%%%%%%%%%%%%%%%%%%%%%%%%%%%%%%%%%%%%%%%%%%%%%%%%%%%%%%%%
%%%%%%%%%%%%%%%%%%%%%%%%%%%%%%%%%%%%%%%%%%%%%%%%%%%%%%%%%%%%%%%%%%%%%%%%%%%%

The Lyapunov exponent controls the diameter of long channel products.
To obtain convergence to a rank-one channel, we must identify the moving center of the image sets.
This center is constructed by fixing the terminal time and sending the initial time to the remote past.

For integers \(s<t\), recall that
\[
   \Phi_{\omega;t:s}
   :=
   \Phi_{\theta^{t-1}\omega}
   \circ
   \Phi_{\theta^{t-2}\omega}
   \circ\cdots\circ
   \Phi_{\theta^s\omega},
   \qquad
   \kappa_{\omega;t:s}
   :=
   \kappa_{\rm tr}\left(\Phi_{\omega;t:s}\right).
\]
Thus the forward product is \(\Phi_\omega^{(n)}=\Phi_{\omega;n:0}\), and the pullback product ending at \(\omega\) is
\[
   \Phi_{\omega;0:-n}
   =
   \Phi_{\theta^{-1}\omega}
   \circ
   \Phi_{\theta^{-2}\omega}
   \circ\cdots\circ
   \Phi_{\theta^{-n}\omega}.
\]
The product \(\Phi_{\omega;0:-n}\) is the chronological product over the past interval from time \(-n\) to time \(0\).
It is not the reverse-ordered product on the same interval.

\begin{prop}
\label{prop:random-pullback-state}
    Assume that \(\kappa_{\omega;0:-n}\to0\) as \(n\to\infty\) for \(\pr\)-a.e. \(\omega\).
    Fix \(\tau\in\mcS_d\), and set \(\rho_{\omega,n}:=\Phi_{\omega;0:-n}\left(\tau\right)\).
    Then \(\rho_{\omega,n}\) converges in trace norm for \(\pr\)-a.e. \(\omega\), and the limit
    \(\rho_\omega:=\lim_{n\to\infty}\Phi_{\omega;0:-n}\left(\tau\right)\) is independent of the choice of \(\tau\).
    After modifying on a null set, the map \(\omega\mapsto\rho_\omega\) is measurable and satisfies
    \[
       \Phi_\omega\left(\rho_\omega\right)=\rho_{\theta\omega}
       \qquad
       \text{for }\pr\text{-a.e. }\omega.
    \]
    For the associated random replacement channel \(R_\omega\left(X\right):=\tr{X}\rho_\omega\), the following estimates hold for \(\pr\)-a.e. \(\omega\) and every \(n\ge1\):
    \[
    \begin{aligned}
       \sup_{\sigma\in\mcS_d}
       \norm{
          \Phi_{\omega;0:-n}\left(\sigma\right)-\rho_\omega
       }_1
       &\le
       2\kappa_{\omega;0:-n},
       \\
       \norm{
          \rho_{\omega,n}-\rho_\omega
       }_1
       &\le
       2\kappa_{\omega;0:-n},
       \\
       \norm{
          \Phi_{\omega;0:-n}-R_\omega
       }_{1\to1}
       &\le
       4\kappa_{\omega;0:-n}.
    \end{aligned}
    \]
    Moreover, if \(\eta:\Omega\to\mcS_d\) is any measurable random state, then
    \[
       \norm{
          \Phi_{\omega;0:-n}
          \left(\eta_{\theta^{-n}\omega}\right)
          -
          \rho_\omega
       }_1
       \le
       2\kappa_{\omega;0:-n}
    \]
    for \(\pr\)-a.e. \(\omega\) and every \(n\ge1\).
    Finally, \(\rho\) is the unique stationary random state, up to \(\pr\)-a.e. equality.
\end{prop}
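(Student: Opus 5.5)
The plan is to reduce everything fiberwise to the deterministic pullback theory of \Cref{prop:pullback-entrance-law} and \Cref{thm:pullback-canonical-replacement}.

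\emph{Fiberwise reduction.} For fixed \(\omega\), set \(\Psi^\omega_k:=\Phi_{\theta^k\omega}\), \(k\in\mbZ\), so that \(\Phi_{\omega;t:s}\) is the deterministic product \(\Psi^\omega_{t:s}\). The hypothesis only gives \(\kappa_{\omega;0:-n}\to0\), i.e.\ pullback memory loss at terminal time \(0\). Pullback memory loss at every terminal time \(t\) would follow from \(\kappa_{\omega;t:s}=\kappa_{\theta^t\omega;0:s-t}\). So the first step is to restrict to
\[
\Omega_0:=\bigcap_{j\in\mbZ}\theta^{-j}\{\kappa_{\cdot;0:-n}\to0\}.
\]
This set is \(\theta\)-invariant and of full measure, since \(\theta\) is measure preserving and the intersection is countable. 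On \(\Omega_0\), \Cref{prop:pullback-entrance-law} yields a unique pullback boundary state \((\rho^\omega_t)_{t\in\mbZ}\) with \(\rho^\omega_t=\lim_{s\to-\infty}\Phi_{\omega;t:s}(\tau)\), independent of \(\tau\). We then set \(\rho_\omega:=\rho^\omega_0\).

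\emph{Stationarity and the estimates.} Reindexing gives \(\Phi_{\theta\omega;0:-n}=\Phi_{\omega;1:1-n}\), so \(\rho_{\theta\omega}=\rho^\omega_1=\Phi_\omega(\rho^\omega_0)=\Phi_\omega(\rho_\omega)\) on \(\Omega_0\).

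The three displayed bounds then come from the deterministic results.
\begin{itemize}
\item The first is the upper bound in \Cref{prop:pullback-entrance-law} at \(t=0\), \(s=-n\); it follows because \(\rho_\omega=\Phi_{\omega;0:-n}(\rho^\omega_{-n})\) and \(2\kappa\) is the diameter of the image.
\item The second is the special case \(\sigma=\tau\) of the first.
\item The third is \Cref{prop:det-radius-replacement} with reference state \(\rho^\omega_{-n}\), since \(R_\omega=R^{(\rho^\omega_{-n})}_{0:-n}\).
\end{itemize}
The bound for a measurable random state \(\eta\) is the first bound with \(\sigma=\eta_{\theta^{-n}\omega}\). On the null complement of \(\Omega_0\) we define \(\rho_\omega:=\tau\).

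\emph{Measurability.} This is the main technical point, though a mild one. For each \(n\), \(\omega\mapsto\Phi_{\omega;0:-n}(\tau)\) is measurable, being a finite composition of measurable maps into a finite-dimensional space evaluated at a fixed \(\tau\). Hence \(\rho\) is a pointwise limit of measurable maps on the measurable set \(\Omega_0\), and is therefore measurable. The set \(\Omega_0\) is itself measurable because \(\omega\mapsto\kappa_{\omega;0:-n}=\kappa_n(\theta^{-n}\omega)\) is measurable, as noted in the paper. We must also check that \(\Omega_0\) is genuinely \(\theta\)-invariant, so that stationarity holds for every \(\omega\in\Omega_0\) and not just almost surely.

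\emph{Uniqueness.} Let \(\eta\) be another measurable stationary random state, so that \(\Phi_\omega(\eta_\omega)=\eta_{\theta\omega}\) off a null set \(N\). Outside the null set \(\bigcup_j\theta^{j}N\), iteration gives \(\eta_\omega=\Phi_{\omega;0:-n}(\eta_{\theta^{-n}\omega})\) for all \(n\). The last estimate then gives \(\|\eta_\omega-\rho_\omega\|_1\le2\kappa_{\omega;0:-n}\to0\), so \(\eta=\rho\) almost surely.

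Alternatively, one can note that \((\eta_{\theta^t\omega})_{t}\) is a pullback boundary state for the fiber and invoke the uniqueness part of \Cref{prop:pullback-entrance-law}.

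The only real care needed is bookkeeping the null sets, which must be made \(\theta\)-invariant, and the reindexing identity \(\Phi_{\theta^k\omega;t:s}=\Phi_{\omega;t+k:s+k}\).
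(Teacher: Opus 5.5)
Your proof is correct. It uses the same two mechanisms as the paper: the diameter bound $2\kappa_{\omega;0:-n}$ from \Cref{prop:dobr}, and \Cref{prop:det-radius-replacement} with a pulled-back reference state. Your uniqueness argument, iterating the stationarity of $\eta$ backward off $\bigcup_j\theta^jN$, is exactly the paper's.

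The organization differs, though. The paper never invokes \Cref{prop:pullback-entrance-law}. It works only at terminal time $0$ on the non-invariant set $\Omega_0=\{\kappa_{\cdot;0:-n}\to0\}$ and redoes the Cauchy argument there. It proves stationarity on $\Omega_0\cap\theta^{-1}\Omega_0$. It then obtains $\rho_\omega=\Phi_{\omega;0:-n}(\rho_{\theta^{-n}\omega})$ by iterating the almost-sure stationarity identity backward along the orbit on a further full-measure set $\Omega_*$. So its reference state in \Cref{prop:det-radius-replacement} is $\rho_{\theta^{-n}\omega}$.

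You instead pay up front by intersecting over the whole orbit. Via $\kappa_{\omega;t:s}=\kappa_{\theta^t\omega;0:s-t}$, this gives each fiber pullback memory loss at every terminal time, as \Cref{prop:pullback-entrance-law} requires. In exchange you get several things without any orbit iteration of an a.e.\ identity:
\begin{itemize}
\item the whole boundary family $(\rho^\omega_t)_{t\in\mbZ}$;
\item the identity $\rho_\omega=\Phi_{\omega;0:-n}(\rho^\omega_{-n})$;
\item stationarity at every point of a $\theta$-invariant full-measure set.
\end{itemize}
The same reindexing you used for $t=1$ also gives $\rho^\omega_t=\rho_{\theta^t\omega}$ for all $t$ there.

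The paper's route is self-contained in the random setting and uses the hypothesis only at terminal time $0$. Yours is shorter, reuses the deterministic theory as a black box, and makes the fiberwise reduction explicit.
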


\begin{proof}
    Let
    \[
       \Omega_0
       :=
       \left\{
          \omega:
          \kappa_{\omega;0:-n}\to0
       \right\}.
    \]
    Then \(\pr\left(\Omega_0\right)=1\).
    Fix \(\omega\in\Omega_0\).
    Let \(r>n\).
    The cocycle identity gives
    \[
       \rho_{\omega,r}
       =
       \Phi_{\omega;0:-n}
       \left(
          \Phi_{\omega;-n:-r}\left(\tau\right)
       \right),
       \qquad
       \rho_{\omega,n}
       =
       \Phi_{\omega;0:-n}\left(\tau\right).
    \]
    Both inputs of \(\Phi_{\omega;0:-n}\) are states.
    By \Cref{prop:dobr},
    \[
       \norm{
          \rho_{\omega,r}-\rho_{\omega,n}
       }_1
       \le
       2\kappa_{\rm tr}\left(\Phi_{\omega;0:-n}\right)
       =
       2\kappa_{\omega;0:-n}.
    \]
    Since \(\kappa_{\omega;0:-n}\to0\), the sequence \(\left(\rho_{\omega,n}\right)_{n\ge1}\) is Cauchy.
    The state space \(\mcS_d\) is complete in trace norm.
    Hence \(\rho_{\omega,n}\) has a trace-norm limit.
    
    The same estimate proves the independence of the reference state.
    Indeed, if \(\tau,\tau'\in\mcS_d\), then
    \[
       \norm{
          \Phi_{\omega;0:-n}\left(\tau\right)
          -
          \Phi_{\omega;0:-n}\left(\tau'\right)
       }_1
       \le
       2\kappa_{\omega;0:-n}
       \longrightarrow0.
    \]
    
    Measurability follows from pointwise convergence.
    For each \(n\), the map \(\omega\mapsto\rho_{\omega,n}\) is measurable because it is obtained from finitely many measurable compositions.
    The pointwise limit on \(\Omega_0\) is therefore measurable.
    Fix a state \(\tau_0\in\mcS_d\), and define \(\rho_\omega:=\tau_0\) on \(\Omega\setminus\Omega_0\).
    
    We now prove stationarity.
    Let
    \[
       \Omega_1:=\Omega_0\cap\theta^{-1}\Omega_0 .
    \]
    Then \(\pr(\Omega_1)=1\).
    For \(\omega\in\Omega_1\), the pullback limits exist at both \(\omega\) and \(\theta\omega\).
    Using continuity and the cocycle identity,
    \[
       \Phi_\omega(\rho_\omega)
       =
       \lim_{n\to\infty}
       \Phi_\omega
       \left(
          \Phi_{\omega;0:-n}(\tau)
       \right)
       =
       \lim_{n\to\infty}
       \Phi_{\omega;1:-n}(\tau).
    \]
    On the other hand,
    \[
       \rho_{\theta\omega}
       =
       \lim_{m\to\infty}
       \Phi_{\theta\omega;0:-m}(\tau).
    \]
    Taking \(m=n+1\), we have
    \[
       \Phi_{\theta\omega;0:-(n+1)}
       =
       \Phi_{\omega;1:-n}.
    \]
    Therefore
    \[
       \rho_{\theta\omega}
       =
       \lim_{n\to\infty}
       \Phi_{\omega;1:-n}(\tau).
    \]
    Thus
    \[
       \Phi_\omega(\rho_\omega)=\rho_{\theta\omega}
    \]
    for \(\pr\)-a.e. \(\omega\).
    
    Let
    \[
       S
       :=
       \{\omega:\Phi_\omega(\rho_\omega)=\rho_{\theta\omega}\}.
    \]
    Then \(\pr(S)=1\).
    Set
    \[
       \Omega_*
       :=
       \Omega_0
       \cap
       S
       \cap
       \bigcap_{j\ge1}
       \{\omega:\theta^{-j}\omega\in S\}.
    \]
    Since \(\theta\) is invertible and measure preserving, \(\pr(\Omega_*)=1\).
    For every \(\omega\in\Omega_*\), stationarity holds at \(\theta^{-j}\omega\) for every \(j\ge0\).
    Hence, for every \(n\ge1\),
    \[
       \rho_\omega
       =
       \Phi_{\omega;0:-n}
       \left(
          \rho_{\theta^{-n}\omega}
       \right).
    \]
    
    Applying \Cref{prop:dobr} to the two inputs \(\sigma\) and \(\rho_{\theta^{-n}\omega}\) gives
    \[
       \norm{
          \Phi_{\omega;0:-n}\left(\sigma\right)-\rho_\omega
       }_1
       \le
       2\kappa_{\omega;0:-n}
    \]
    for every \(\sigma\in\mcS_d\).
    Taking the supremum over \(\sigma\in\mcS_d\) gives the state-level estimate.
    Taking \(\sigma=\tau\) gives
    \[
       \norm{
          \rho_{\omega,n}-\rho_\omega
       }_1
       \le
       2\kappa_{\omega;0:-n}.
    \]
    
    The replacement-channel estimate follows from \Cref{prop:det-radius-replacement}.
    Indeed, for the deterministic product \(\Phi_{\omega;0:-n}\), use the reference state \(\rho_{\theta^{-n}\omega}\).
    The associated evolved reference state is
    \[
       \Phi_{\omega;0:-n}
       \left(
          \rho_{\theta^{-n}\omega}
       \right)
       =
       \rho_\omega.
    \]
    Therefore, the corresponding replacement channel is exactly \(R_\omega\).
    Hence
    \[
       \norm{
          \Phi_{\omega;0:-n}-R_\omega
       }_{1\to1}
       \le
       4\kappa_{\rm tr}
       \left(\Phi_{\omega;0:-n}\right)
       =
       4\kappa_{\omega;0:-n}.
    \]
    
    The random-boundary estimate follows by taking \(\sigma=\eta_{\theta^{-n}\omega}\) in the state-level estimate.
    This is valid because \(\eta_{\theta^{-n}\omega}\in\mcS_d\).
    
    It remains to prove uniqueness.
    Let \(\eta:\Omega\to\mcS_d\) be any stationary random state.
    Let
    \[
       S_\eta
       :=
       \left\{
          \omega:
          \Phi_\omega\left(\eta_\omega\right)=\eta_{\theta\omega}
       \right\}.
    \]
    Then \(\pr\left(S_\eta\right)=1\).
    Replacing \(\Omega_*\) by
    \[
       \Omega_*
       \cap
       \bigcap_{j\ge1}\theta^j S_\eta,
    \]
    we may assume that the stationarity identity for \(\eta\) can also be iterated backwards.
    Thus, for every \(\omega\in\Omega_*\) and every \(n\ge1\),
    \[
       \eta_\omega
       =
       \Phi_{\omega;0:-n}
       \left(
          \eta_{\theta^{-n}\omega}
       \right).
    \]
    By the random-boundary estimate applied to \(\eta\),
    \[
       \norm{
          \eta_\omega-\rho_\omega
       }_1
       \le
       2\kappa_{\omega;0:-n}.
    \]
    Letting \(n\to\infty\) gives \(\eta_\omega=\rho_\omega\) for \(\pr\)-a.e. \(\omega\).
    Thus, the stationary random state is unique up to \(\pr\)-a.e. equality.
\end{proof}

The pullback construction identifies the stationary random center.
Once this center is known, forward trace-memory loss gives convergence to the replacement channel at the future endpoint.

\begin{prop}
\label{prop:random-forward-replacement}
    Let \(\rho:\Omega\to\mcS_d\) be a stationary random state, and define \(R_\omega\left(X\right):=\tr{X}\rho_\omega\).
    Then, for \(\pr\)-a.e. \(\omega\) and every \(n\ge1\),
    \[
    \begin{aligned}
       \sup_{\sigma\in\mcS_d}
       \norm{
          \Phi_{\omega;n:0}\left(\sigma\right)-\rho_{\theta^n\omega}
       }_1
       &\le
       2\kappa_{\omega;n:0},
       \\
       \norm{
          \Phi_{\omega;n:0}-R_{\theta^n\omega}
       }_{1\to1}
       &\le
       4\kappa_{\omega;n:0}.
    \end{aligned}
    \]
    Consequently, if \(\kappa_{\omega;n:0}\to0\) for \(\pr\)-a.e. \(\omega\), then
    \[
       \norm{
          \Phi_{\omega;n:0}-R_{\theta^n\omega}
       }_{1\to1}
       \longrightarrow0
       \qquad
       \text{for }\pr\text{-a.e. }\omega.
    \]
\end{prop}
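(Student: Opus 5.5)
The plan is to reduce everything to the deterministic estimates of \Cref{prop:det-radius-replacement} and \Cref{prop:dobr}, applied fiberwise. The only genuinely random input is that the stationarity identity can be iterated forward along almost every orbit.

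First, let \(S:=\{\omega:\Phi_\omega(\rho_\omega)=\rho_{\theta\omega}\}\), which has full measure by assumption. Set
\[
   \Omega_1:=\bigcap_{j\ge0}\theta^{-j}S .
\]
Since \(\theta\) is measure preserving and the intersection is countable, \(\pr(\Omega_1)=1\). For \(\omega\in\Omega_1\), the identity \(\Phi_{\theta^j\omega}(\rho_{\theta^j\omega})=\rho_{\theta^{j+1}\omega}\) holds for every \(j\ge0\). Induction on \(n\) with the cocycle identity \(\Phi_{\omega;n+1:0}=\Phi_{\theta^n\omega}\circ\Phi_{\omega;n:0}\) then gives
\[
   \Phi_{\omega;n:0}(\rho_\omega)=\rho_{\theta^n\omega},
   \qquad n\ge1 .
\]

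Next, fix \(\omega\in\Omega_1\), \(n\ge1\), and \(\sigma\in\mcS_d\). Both \(\sigma\) and \(\rho_\omega\) are states, so \Cref{prop:dobr}, applied to the CPTP map \(\Phi_{\omega;n:0}\), gives
\[
   \norm{\Phi_{\omega;n:0}(\sigma)-\rho_{\theta^n\omega}}_1
   =
   \norm{\Phi_{\omega;n:0}(\sigma)-\Phi_{\omega;n:0}(\rho_\omega)}_1
   \le
   2\kappa_{\omega;n:0}.
\]
Taking the supremum over \(\sigma\) yields the state-level bound.

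For the map-level bound, regard \(\Phi_{\omega;n:0}\) as a deterministic product and apply \Cref{prop:det-radius-replacement} with reference state \(\tau=\rho_\omega\). The evolved reference state is \(\rho_{\theta^n\omega}\), so the reference replacement channel is exactly \(R_{\theta^n\omega}\). This gives \(\norm{\Phi_{\omega;n:0}-R_{\theta^n\omega}}_{1\to1}\le4\kappa_{\omega;n:0}\).

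Finally, the ``consequently'' clause follows by intersecting \(\Omega_1\) with the full-measure set on which \(\kappa_{\omega;n:0}\to0\). The only point requiring care is the first step: the stationarity identity holds only almost surely, and we need it simultaneously along the whole forward orbit for the same \(\omega\). The countable intersection \(\Omega_1\) handles this, so all estimates hold on \(\Omega_1\) uniformly in \(n\).
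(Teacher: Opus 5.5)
Your proof is correct and follows the paper's argument step for step. The paper likewise uses the full-measure set \(\bigcap_{j\ge0}\theta^{-j}S\) to iterate stationarity along the forward orbit, applies \Cref{prop:dobr} to the pair \(\sigma,\rho_\omega\), applies \Cref{prop:det-radius-replacement} with reference state \(\rho_\omega\), and intersects with the set where \(\kappa_{\omega;n:0}\to0\) for the final clause.
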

    
\begin{proof}
    Let
    \[
       S
       :=
       \left\{
          \omega:
          \Phi_\omega\left(\rho_\omega\right)=\rho_{\theta\omega}
       \right\}.
    \]
    Since \(\rho\) is stationary, \(\pr\left(S\right)=1\).
    Replacing \(S\) by
    \[
       \Omega_*
       :=
       \bigcap_{j\ge0}\theta^{-j}S,
    \]
    we obtain another full-measure set.
    For every \(\omega\in\Omega_*\), the stationarity identity holds at \(\omega,\theta\omega,\theta^2\omega,\ldots\).
    Therefore, for every \(n\ge1\),
    \[
       \Phi_{\omega;n:0}\left(\rho_\omega\right)
       =
       \rho_{\theta^n\omega}.
    \]
    
    Fix \(\omega\in\Omega_*\), \(n\ge1\), and \(\sigma\in\mcS_d\).
    By \Cref{prop:dobr},
    \[
       \norm{
          \Phi_{\omega;n:0}\left(\sigma\right)-\rho_{\theta^n\omega}
       }_1
       =
       \norm{
          \Phi_{\omega;n:0}\left(\sigma\right)
          -
          \Phi_{\omega;n:0}\left(\rho_\omega\right)
       }_1
       \le
       2\kappa_{\omega;n:0}.
    \]
    Taking the supremum over \(\sigma\in\mcS_d\) gives the state-level estimate.
    
    The replacement-channel estimate follows from \Cref{prop:det-radius-replacement}.
    Apply that proposition to the deterministic product \(\Phi_{\omega;n:0}\) with reference state \(\rho_\omega\).
    The evolved reference state is
    \[
       \Phi_{\omega;n:0}\left(\rho_\omega\right)
       =
       \rho_{\theta^n\omega}.
    \]
    Thus the associated replacement channel is \(R_{\theta^n\omega}\), and hence
    \[
       \norm{
          \Phi_{\omega;n:0}-R_{\theta^n\omega}
       }_{1\to1}
       \le
       4\kappa_{\rm tr}
       \left(
          \Phi_{\omega;n:0}
       \right)
       =
       4\kappa_{\omega;n:0}.
    \]
    
    If \(\kappa_{\omega;n:0}\to0\) on a full-measure set, then intersecting that set with \(\Omega_*\) gives the claimed convergence.
\end{proof}

By combining \Cref{prop:random-pullback-state} and \Cref{prop:random-forward-replacement} we obtain the following corollary. 

\begin{cor}
\label{cor:random-two-sided-replacement}
    Assume that
    \[
       \kappa_{\omega;0:-n}\to0
       \qquad
       \text{and}
       \qquad
       \kappa_{\omega;n:0}\to0
    \]
    as \(n\to\infty\), for \(\pr\)-a.e. \(\omega\).
    Then there is a unique stationary random state \(\rho:\Omega\to\mcS_d\), up to \(\pr\)-a.e. equality.
    With \(R_\omega\left(X\right):=\tr{X}\rho_\omega\), we have that as $n \to \infty$
    \[
       \norm{
          \Phi_{\omega;0:-n}-R_\omega
       }_{1\to1}
       \longrightarrow0
       \qquad
       \text{and}
       \qquad
       \norm{
          \Phi_{\omega;n:0}-R_{\theta^n\omega}
       }_{1\to1}
       \longrightarrow0
    \]
    for \(\pr\)-a.e. \(\omega\).
\end{cor}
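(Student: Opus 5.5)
The corollary follows by splicing \Cref{prop:random-pullback-state} and \Cref{prop:random-forward-replacement}. The only point needing care is the null sets: the state \(\rho\) built from pullback products must be the one fed into the forward estimate.

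First, the pullback hypothesis \(\kappa_{\omega;0:-n}\to0\) a.s. is exactly the hypothesis of \Cref{prop:random-pullback-state}. That proposition then supplies:
\begin{itemize}
\item a measurable \(\rho:\Omega\to\mcS_d\), modified on a null set, with \(\Phi_\omega(\rho_\omega)=\rho_{\theta\omega}\) almost surely;
\item uniqueness of \(\rho\) among stationary random states, up to \(\pr\)-a.e. equality;
\item the bound \(\norm{\Phi_{\omega;0:-n}-R_\omega}_{1\to1}\le4\kappa_{\omega;0:-n}\) for \(\pr\)-a.e. \(\omega\) and every \(n\ge1\).
\end{itemize}
Let \(\Omega_1\) be the full-measure set on which this bound holds for all \(n\) and on which the pullback coefficients tend to zero. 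On \(\Omega_1\), the first convergence is immediate.

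Second, since \(\rho\) is a stationary random state, \Cref{prop:random-forward-replacement} applies verbatim to this \(\rho\). It yields a full-measure set \(\Omega_2\) (namely \(\bigcap_{j\ge0}\theta^{-j}S\), with \(S\) the set where stationarity holds) on which
\[
\norm{\Phi_{\omega;n:0}-R_{\theta^n\omega}}_{1\to1}\le4\kappa_{\omega;n:0}
\]
for all \(n\ge1\). Intersect with the full-measure set \(\Omega_3\) where \(\kappa_{\omega;n:0}\to0\). On \(\Omega_2\cap\Omega_3\) the forward convergence follows.

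Finally, \(\Omega_1\cap\Omega_2\cap\Omega_3\) is a countable intersection of full-measure sets, so it has full measure, and both convergences hold on it simultaneously.

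There is no genuine obstacle. The only subtle point is that the forward proposition must be applied to the same measurable representative \(\rho\) produced by the pullback construction. Its forward stationarity along whole orbits is guaranteed by the orbit intersection in \Cref{prop:random-forward-replacement}. Uniqueness is inherited directly from \Cref{prop:random-pullback-state}, and changing \(\rho\) on a null set affects neither the statements nor the a.s. conclusions.
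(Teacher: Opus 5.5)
Your proposal is correct and follows the paper's own proof. It takes $\rho$, its uniqueness, and the pullback bound $\norm{\Phi_{\omega;0:-n}-R_\omega}_{1\to1}\le 4\kappa_{\omega;0:-n}$ from \Cref{prop:random-pullback-state}, applies \Cref{prop:random-forward-replacement} to that same stationary $\rho$, and intersects with the full-measure set where $\kappa_{\omega;n:0}\to0$; your explicit null-set bookkeeping only spells out what the paper leaves implicit.
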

\begin{proof}
    By the assumption \(\kappa_{\omega;0:-n}\to0\) for \(\pr\)-a.e. \(\omega\), \Cref{prop:random-pullback-state} gives a stationary random state
    \[
       \rho_\omega
       =
       \lim_{n\to\infty}
       \Phi_{\omega;0:-n}(\tau),
    \]
    independent of the reference state \(\tau\). 
    The same proposition gives uniqueness up to \(\pr\)-a.e. equality and the estimate
    \[
       \left\|
          \Phi_{\omega;0:-n}-R_\omega
       \right\|_{1\to1}
       \le
       4\kappa_{\omega;0:-n}
    \]
    on a full-measure set, for every \(n\ge1\). Hence
    \[
       \left\|
          \Phi_{\omega;0:-n}-R_\omega
       \right\|_{1\to1}
       \to0
    \]
    for \(\pr\)-a.e. \(\omega\).
    Since \(\rho\) is stationary, \Cref{prop:random-forward-replacement} applies. 
    Therefore, on another full-measure set,
    \[
       \left\|
          \Phi_{\omega;n:0}-R_{\theta^n\omega}
       \right\|_{1\to1}
       \le
       4\kappa_{\omega;n:0}
    \]
    for every \(n\ge1\). By the assumption
    \[
       \kappa_{\omega;n:0}\to0,
    \]
    we obtain
    \[
       \left\|
          \Phi_{\omega;n:0}-R_{\theta^n\omega}
       \right\|_{1\to1}
       \to0
    \]
    for \(\pr\)-a.e. \(\omega\).
\end{proof}

%%%%%%%%%%%%%%%%%%%%%%%%%%%%%%%%%%%%%%%%%%%%%%%%%%%%%%%%%%%%%%%%%%%%%%%%%%%%
%%%%%%%%%%%%%%%%%%%%%%%%%%%%%%%%%%%%%%%%%%%%%%%%%%%%%%%%%%%%%%%%%%%%%%%%%%%%

\subsection{Quenched rates of convergence}
\label{section:random_rates}

%%%%%%%%%%%%%%%%%%%%%%%%%%%%%%%%%%%%%%%%%%%%%%%%%%%%%%%%%%%%%%%%%%%%%%%%%%%%
%%%%%%%%%%%%%%%%%%%%%%%%%%%%%%%%%%%%%%%%%%%%%%%%%%%%%%%%%%%%%%%%%%%%%%%%%%%%

The Dobrushin-Lyapunov exponent gives the exponential rate at which the centered part of the random product contracts.
The pullback construction identifies the moving replacement center.
We now combine these two facts to obtain quenched replacement estimates in the induced trace norm.

\randomrates*

\begin{proof}
    By \Cref{prop:memory-loss-negative-exponent-equivalence}, the assumption that \(\lambda_{\rm tr}(\omega)<0\) $\pr$-a.s. implies both
    \[
       \kappa_{\omega;n:0}\to0
       \qquad
       \text{and}
       \qquad
       \kappa_{\omega;0:-n}\to0
    \]
    for \(\pr\)-a.e. \(\omega\). 
    Hence \Cref{cor:random-two-sided-replacement} gives a stationary random state   \(\rho:\Omega\to\mcS_d\), unique up to \(\pr\)-a.e. equality, and the qualitative forward and pullback replacement limits. 
    In particular,
    \[
       \Phi_\omega(\rho_\omega)=\rho_{\theta\omega}
       \qquad
       \text{for }\pr\text{-a.e. }\omega,
    \]
    so \(\rho\) is dynamically stationary in the sense of  \Cref{def:dynamically-stationary-random-state}.

    It remains to prove the exponential estimates.
    Let \(E_0\) be the full-measure set on which
    \[
       \lambda_{\rm tr}(\omega)<0,
       \qquad
       \lambda_{\rm tr}(\theta\omega)=\lambda_{\rm tr}(\omega),
    \]
    and on which the forward and pullback Lyapunov limits hold. Replacing \(E_0\) by
    \[
       E:=\bigcap_{k\in\mathbb Z}\theta^{-k}E_0,
    \]
    we may assume that \(E\) is \(\theta\)-invariant and still has full measure.
    Define
    \[
       \bar\lambda_{\rm tr}(\omega)
       :=
       \begin{cases}
          \lambda_{\rm tr}(\omega), & \omega\in E,\\
          -2, & \omega\notin E.
       \end{cases}
    \]
    Then \(\bar\lambda_{\rm tr}\) is measurable, \(\theta\)-invariant, agrees with \(\lambda_{\rm tr}\) almost surely, and satisfies
    \[
       \bar\lambda_{\rm tr}(\omega)<0
       \qquad
       \text{for every }\omega.
    \]
    Now set
    \[
       \beta(\omega)
       :=
       \begin{cases}
          \dfrac12\bar\lambda_{\rm tr}(\omega),
          & -\infty<\bar\lambda_{\rm tr}(\omega)<0,\\[0.8em]
          -1,
          & \bar\lambda_{\rm tr}(\omega)=-\infty.
       \end{cases}
    \]
    Then \(\beta:\Omega\to(-\infty,0)\) is measurable and \(\theta\)-invariant.
    Moreover,
    \[
       \lambda_{\rm tr}(\omega)<\beta(\omega)<0
       \qquad
       \text{for }\pr\text{-a.e. }\omega.
    \]
    Define
    \[
       \widetilde C_\beta^+(\omega)
       :=
       \max\left\{
       1 \, , \,
       \sup_{n\ge1}
       e^{-\beta(\omega)n}\kappa_{\omega;n:0}
       \right\}, 
    \]
    and
    \[
       \widetilde C_\beta^-(\omega)
       :=
       \max\left\{
       1 \, , \,
       \sup_{n\ge1}
       e^{-\beta(\omega)n}\kappa_{\omega;0:-n}\right\}. 
    \]
    These random variables are measurable because the suprema are countable.

    By \Cref{thm:random-lyapunov} and \Cref{prop:pullback-same-exponent},
    \[
       \lim_{n\to\infty}
       \frac1n\log\kappa_{\omega;n:0}
       =
       \lambda_{\rm tr}(\omega),
       \qquad
       \lim_{n\to\infty}
       \frac1n\log\kappa_{\omega;0:-n}
       =
       \lambda_{\rm tr}(\omega)
    \]
    for \(\pr\)-a.e. \(\omega\), with the convention \(\log0=-\infty\).
    Since
    \[
       \lambda_{\rm tr}(\omega)<\beta(\omega),
    \]
    it follows that, for \(\pr\)-a.e. \(\omega\), both suprema above are finite.
    Indeed, for such \(\omega\), there exists \(N(\omega)\) such that for all
    \(n\ge N(\omega)\),
    \[
       \kappa_{\omega;n:0}
       \le
       e^{\beta(\omega)n},
       \qquad
       \kappa_{\omega;0:-n}
       \le
       e^{\beta(\omega)n}.
    \]
    Hence, the tails of the two suprema are bounded by \(1\), and only finitely many initial terms can make the suprema larger than \(1\).
    Redefine \(\widetilde C_\beta^+\) and \(\widetilde C_\beta^-\) to be \(1\) on the null sets where they are not finite, and call the resulting random variables \(C_\beta^+\) and \(C_\beta^-\). Then \(C_\beta^+\) and \(C_\beta^-\) are measurable  and a.s. finite, and for \(\pr\)-a.e. \(\omega\) and every \(n\ge1\),
    \[
       \kappa_{\omega;n:0}
       \le
       C_\beta^+(\omega)e^{\beta(\omega)n},
       \qquad
       \kappa_{\omega;0:-n}
       \le
       C_\beta^-(\omega)e^{\beta(\omega)n}.
    \]

    Combining these estimates with \Cref{prop:random-forward-replacement} gives
    \[
       \norm{
          \Phi_{\omega;n:0}-R_{\theta^n\omega}
       }_{1\to1}
       \le
       4\kappa_{\omega;n:0}
       \le
       4C_\beta^+(\omega)e^{\beta(\omega)n}.
    \]
    Similarly, combining the pullback estimate from \Cref{prop:random-pullback-state} gives
    \[
       \norm{
          \Phi_{\omega;0:-n}-R_\omega
       }_{1\to1}
       \le
       4\kappa_{\omega;0:-n}
       \le
       4C_\beta^-(\omega)e^{\beta(\omega)n}.
    \]
    This proves the two claimed operator-norm estimates.
\end{proof}

The particular choice of \(\beta\) is not essential. 
The same proof works for any measurable \(\theta\)-invariant random variable
\[
   \beta:\Omega\to(-\infty,0)
\]
satisfying
\[
   \lambda_{\rm tr}(\omega)<\beta(\omega)<0
   \qquad
   \text{for }\pr\text{-a.e. }\omega.
\]

We conclude this section with an example where a unique stationary state, but $\lambda_{\rm tr}$ is not negative. 

\begin{example}
\label{ex:unique-stationary-not-negative-exponent}
    Take the deterministic one-point base
    \[
       \Omega=\{\omega_0\},
       \qquad
       \theta\omega_0=\omega_0,
       \qquad
       \pr(\{\omega_0\})=1.
    \]
    Let \(d=2\), and define a quantum channel \(\Phi:M_2\to M_2\) by
    \[
       \Phi(X)
       =
       K_0XK_0^\dagger+K_1XK_1^\dagger,
       \qquad
       K_0:=|1\rangle\langle0|,
       \qquad
       K_1:=|0\rangle\langle1|.
    \]
    Equivalently, for
    \[
       X=
       \begin{pmatrix}
          x_{00} & x_{01}\\
          x_{10} & x_{11}
       \end{pmatrix},
    \]
    one has
    \[
       \Phi(X)
       =
       \begin{pmatrix}
          x_{11} & 0\\
          0 & x_{00}
       \end{pmatrix}.
    \]
    Since
    \[
       K_0^\dagger K_0+K_1^\dagger K_1
       =
       |0\rangle\langle0|+|1\rangle\langle1|
       =
       I,
    \]
    the map \(\Phi\) is completely positive and trace-preserving.
    A dynamically stationary random state is, in this deterministic case, simply a
    fixed state of \(\Phi\).  Let
    \[
       \rho=
       \begin{pmatrix}
          a & z\\
          \overline z & 1-a
       \end{pmatrix}
       \in\mcS_2 .
    \]
    Then
    \[
       \Phi(\rho)
       =
       \begin{pmatrix}
          1-a & 0\\
          0 & a
       \end{pmatrix}.
    \]
    Hence \(\Phi(\rho)=\rho\) forces \(z=0\) and \(a=1/2\). 
    Therefore, the unique dynamically stationary state is
    \[
       \rho_*=\frac12 I.
    \]
    However, the trace-Dobrushin Lyapunov exponent is not negative.  Let
    \[
       X:=|0\rangle\langle0|-|1\rangle\langle1|
       =
       \begin{pmatrix}
          1 & 0\\
          0 & -1
       \end{pmatrix}.
    \]
    Then \(X=X^\dagger\), \(\tr X=0\), and \(\norm{X}_1=2\).  Moreover,
    \[
       \Phi(X)=-X,
    \]
    and hence
    \[
       \Phi^n(X)=(-1)^nX
       \qquad
       \forall n\ge1.
    \]
    Therefore
    \[
       \kappa_{\rm tr}(\Phi^n)
       \ge
       \frac{\norm{\Phi^n(X)}_1}{\norm{X}_1}
       =
       1.
    \]
    Since \(\kappa_{\rm tr}(\Phi^n)\le1\) for every CPTP map, we get
    \[
       \kappa_{\rm tr}(\Phi^n)=1
       \qquad
       \forall n\ge1.
    \]
    Consequently,
    \[
       \lambda_{\rm tr}
       =
       \lim_{n\to\infty}
       \frac1n\log\kappa_{\rm tr}(\Phi^n)
       =
       0.
    \]

    Thus, the uniqueness of the dynamically stationary random state does not imply
    \(\lambda_{\rm tr}<0\).  In particular, uniqueness alone does not imply replacement
    mixing.
\end{example}

%%%%%%%%%%%%%%%%%%%%%%%%%%%%%%%%%%%%%%%%%%%%%%%%%%%%%%%%%%%%%%%%%%%%%%%%%%%%
%%%%%%%%%%%%%%%%%%%%%%%%%%%%%%%%%%%%%%%%%%%%%%%%%%%%%%%%%%%%%%%%%%%%%%%%%%%%

\subsection{Annealed bounds under stochastic mixing}
\label{sec:annealed-mixing}

%%%%%%%%%%%%%%%%%%%%%%%%%%%%%%%%%%%%%%%%%%%%%%%%%%%%%%%%%%%%%%%%%%%%%%%%%%%%
%%%%%%%%%%%%%%%%%%%%%%%%%%%%%%%%%%%%%%%%%%%%%%%%%%%%%%%%%%%%%%%%%%%%%%%%%%%% 

We now impose a stochastic decorrelation condition on the random channel environment.  
The object to which the condition is applied is the stationary two-sided sequence of one-step channels
\[
    X_j(\omega):=\Phi_{\theta^j\omega},
    \qquad j\in\mbZ .
\]
The use of two-sided time is natural here because the base map \(\theta\) is invertible and the pullback construction of the stationary random state uses negative times.

Since the symbol \(\rho\) is already used for density matrices and for the stationary random state \(\rho_\omega\), we denote the maximal-correlation coefficient by \(\varrho_{\rm mc}\).  
For real-valued square-integrable random variables \(U,V\), write
\[
    \operatorname{Cov}(U,V)
    :=
    \mathbb E[UV]-\mathbb E[U]\mathbb E[V].
\]
For sub-\(\sigma\)-algebras \(\mcA,\mcB\subseteq\mcF\), define
\begin{equation}
\label{eq:maximal-correlation-sigma-fields}
    \varrho_{\rm mc}(\mcA,\mcB)
    :=
    \sup
    \left\{
       |\mathbb E[UV]|:
       \begin{array}{l}
       U\in L^2(\mcA),\ V\in L^2(\mcB),\\
       \mathbb E U=\mathbb E V=0,\quad
       \|U\|_{L^2}=\|V\|_{L^2}=1
       \end{array}
    \right\},
\end{equation}
with the convention that the supremum is \(0\) if the class is empty.  
We also use the standard strong-mixing coefficients
\begin{align}
\label{eq:psi-mixing-sigma-fields}
    \psi_{\rm mix}(\mcA,\mcB)
    &:=
    \sup
    \left\{
        \left|
        \frac{\pr(A\cap B)}{\pr(A)\pr(B)}-1
        \right|:
        A\in\mcA,\ B\in\mcB,\ \pr(A)\pr(B)>0
    \right\},\\
\label{eq:phi-mixing-sigma-fields}
    \varphi_{\rm mix}(\mcA,\mcB)
    &:=
    \sup
    \left\{
        |\pr(B\mid A)-\pr(B)|:
        A\in\mcA,\ B\in\mcB,\ \pr(A)>0
    \right\}.
\end{align}
Thus \(0\le\varrho_{\rm mc}\le1\), \(0\le\psi_{\rm mix}\le\infty\), and \(0\le\varphi_{\rm mix}\le1\).  
Vanishing of any one of these coefficients is equivalent to independence of \(\mcA\) and \(\mcB\). 
We use the standard background on these coefficients from \cite{bradley2005basic,bradley2007introduction}.

The estimates needed below are the following classical covariance inequalities.
For sub-\(\sigma\)-algebras \(\mcA,\mcB\subseteq\mcF\),
\begin{align}
\label{eq:mixing-cov-rho}
    |\operatorname{Cov}(U,V)|
    &\le
    \varrho_{\rm mc}(\mcA,\mcB)\,
    \|U\|_{L^2}\|V\|_{L^2},
    &&
    U\in L^2(\mcA),\ V\in L^2(\mcB),\\
\label{eq:mixing-cov-psi}
    |\operatorname{Cov}(U,V)|
    &\le
    \psi_{\rm mix}(\mcA,\mcB)\,
    \|U\|_{L^1}\|V\|_{L^1},
    &&
    U\in L^1(\mcA),\ V\in L^1(\mcB),\\
\label{eq:mixing-cov-varphi}
    |\operatorname{Cov}(U,V)|
    &\le
    2\varphi_{\rm mix}(\mcA,\mcB)\,
    \|U\|_{L^1}\|V\|_{L^\infty},
    &&
    U\in L^1(\mcA),\ V\in L^\infty(\mcB).
\end{align}
The first inequality is immediate from the definition of maximal correlation, after centering \(U\) and \(V\).  
The latter two are standard strong-mixing covariance bounds; see \cite[\S1.2, Theorem~3]{doukhan2012mixing} and \cite[Theorem~3.9]{bradley2007introduction}.

We now pass from pairs of \(\sigma\)-algebras to the channel process.  
For integers \(a<b\), set
\[
    \mcG_a^b:=\sigma(X_j:a\le j<b),
\]
and define the past and future \(\sigma\)-algebras
\[
    \mcG_{-\infty}^k:=\sigma(X_j:j<k),
    \qquad
    \mcG_k^\infty:=\sigma(X_j:j\ge k).
\]
The half-open convention matches the product convention
\[
    \Phi_{\omega;b:a}
    =
    \Phi_{\theta^{b-1}\omega}
    \circ\cdots\circ
    \Phi_{\theta^a\omega},
    \qquad a<b,
\]
because \(\Phi_{\omega;b:a}\) is measurable with respect to \(\mcG_a^b\).

\begin{dfn}
\label{dfn:channel-environment-mixing-coefficients}
    For \(m\in\mbN\), define
    \begin{align}
    \label{eq:varrho-mixing-coefficient}
        \varrho_m
        &:=
        \sup_{k\in\mbZ}
        \varrho_{\rm mc}
        \left(
            \mcG_{-\infty}^{k},
            \mcG_{k+m}^{\infty}
        \right),\\
    \label{eq:psi-channel-mixing-coefficient}
        \psi_m
        &:=
        \sup_{k\in\mbZ}
        \psi_{\rm mix}
        \left(
            \mcG_{-\infty}^{k},
            \mcG_{k+m}^{\infty}
        \right),\\
    \label{eq:varphi-channel-mixing-coefficient}
        \varphi_m
        &:=
        \sup_{k\in\mbZ}
        \varphi_{\rm mix}
        \left(
            \mcG_{-\infty}^{k},
            \mcG_{k+m}^{\infty}
        \right).
    \end{align}
    We say that the channel environment is \(\varrho\)-mixing if
    \(\varrho_m\to0\), \(\psi\)-mixing if \(\psi_m\to0\), and
    \(\varphi\)-mixing if \(\varphi_m\to0\).
\end{dfn}

The process \((X_j)_{j\in\mbZ}\) is strictly stationary, since \(X_j=X_0\circ\theta^j\) and \(\theta\) preserves \(\pr\).  
The coefficients in \Cref{dfn:channel-environment-mixing-coefficients} measure only the randomness
seen by the channel process; they need not detect unrelated invariant factors of the ambient probability space.  
The sequences \((\varrho_m)\), \((\psi_m)\), and \((\varphi_m)\) are nonincreasing in \(m\).
If the random variables \((X_j)_{j\in\mbZ}\) are jointly independent, then
\[
    \varrho_m=\psi_m=\varphi_m=0,
    \qquad m\ge1.
\]

The stronger coefficients imply the maximal-correlation condition used in the proofs.

\begin{prop}
\label{prop:mixing-coefficient-comparisons}
    For any sub-\(\sigma\)-algebras \(\mcA,\mcB\subseteq\mcF\),
    \[
        \varrho_{\rm mc}(\mcA,\mcB)
        \le
        \psi_{\rm mix}(\mcA,\mcB),
    \]
    and
    \[
        \varrho_{\rm mc}(\mcA,\mcB)
        \le
        2
        \varphi_{\rm mix}(\mcA,\mcB)^{1/2}
        \varphi_{\rm mix}(\mcB,\mcA)^{1/2}
        \le
        2\varphi_{\rm mix}(\mcA,\mcB)^{1/2}.
    \]
\end{prop}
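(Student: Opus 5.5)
The first inequality will follow from the \(\psi\)-covariance bound \eqref{eq:mixing-cov-psi}. Take \(U\in L^2(\mcA)\) and \(V\in L^2(\mcB)\), centered with unit \(L^2\) norms. Then \(\mathbb E[UV]=\operatorname{Cov}(U,V)\), so \eqref{eq:mixing-cov-psi} gives
\[
|\mathbb E[UV]|\le\psi_{\rm mix}(\mcA,\mcB)\|U\|_{L^1}\|V\|_{L^1}.
\]
Since \(\pr\) is a probability measure, Cauchy--Schwarz gives \(\|U\|_{L^1}\le\|U\|_{L^2}=1\), and likewise \(\|V\|_{L^1}\le 1\). Taking the supremum yields \(\varrho_{\rm mc}\le\psi_{\rm mix}\). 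The case \(\psi_{\rm mix}=\infty\) is trivial, and so is the case where the class of admissible \(U,V\) is empty.

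For the \(\varphi\)-bound I would follow the classical Ibragimov argument rather than cite \eqref{eq:mixing-cov-varphi}, because that inequality only yields \(L^1\cdot L^\infty\) control. The target is the symmetric estimate
\[
|\operatorname{Cov}(U,V)|\le 2\varphi_{\rm mix}(\mcA,\mcB)^{1/2}\varphi_{\rm mix}(\mcB,\mcA)^{1/2}\|U\|_{L^2}\|V\|_{L^2}.
\]

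\textbf{Step 1: simple functions.} First reduce to simple functions \(U=\sum_i a_i\mathbf 1_{A_i}\) and \(V=\sum_j b_j\mathbf 1_{B_j}\), where \(\{A_i\}\subset\mcA\) and \(\{B_j\}\subset\mcB\) are finite partitions. Simple functions are dense in \(L^2\), and both sides of the target estimate are continuous in \(L^2\), so this reduction costs nothing.

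\textbf{Step 2: the two Cauchy--Schwarz applications.} Write
\[
\operatorname{Cov}(U,V)=\sum_{i,j}a_ib_j\,\pr(A_i)\bigl(\pr(B_j\mid A_i)-\pr(B_j)\bigr),
\]
and split the weight as
\[
\pr(A_i)\bigl(\pr(B_j|A_i)-\pr(B_j)\bigr)=\bigl[\pr(A_i)^{1/2}|\pr(B_j|A_i)-\pr(B_j)|^{1/2}\bigr]\cdot\bigl[\pr(A_i)^{1/2}|\pr(B_j|A_i)-\pr(B_j)|^{1/2}\bigr].
\]
Apply Cauchy--Schwarz over \((i,j)\), putting \(a_i\) on the first factor and \(b_j\) on the second.

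The first resulting sum is
\[
\sum_i a_i^2\pr(A_i)\sum_j|\pr(B_j|A_i)-\pr(B_j)|.
\]
The inner sum is at most \(2\varphi_{\rm mix}(\mcA,\mcB)\): split the indices \(j\) according to the sign of \(\pr(B_j|A_i)-\pr(B_j)\), and take the union of the \(B_j\) on each side as a single event in \(\mcB\). So the first sum is bounded by \(2\varphi_{\rm mix}(\mcA,\mcB)\|U\|_2^2\).

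The second resulting sum is
\[
\sum_j b_j^2\sum_i|\pr(A_i\cap B_j)-\pr(A_i)\pr(B_j)|.
\]
Here I would rewrite each term as \(\pr(B_j)\,|\pr(A_i|B_j)-\pr(A_i)|\) and use the same sign-splitting argument with the roles of \(\mcA\) and \(\mcB\) exchanged. This gives the bound \(2\varphi_{\rm mix}(\mcB,\mcA)\|V\|_2^2\).

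\textbf{Step 3: conclusion.} Multiplying the two bounds and taking the square root gives the displayed symmetric estimate. Centering and normalizing \(U,V\) and taking the supremum then yields \(\varrho_{\rm mc}\le 2\varphi_{\rm mix}(\mcA,\mcB)^{1/2}\varphi_{\rm mix}(\mcB,\mcA)^{1/2}\). The final inequality in the statement is immediate from \(\varphi_{\rm mix}(\mcB,\mcA)\le1\).

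I expect the main obstacle to be the bookkeeping in Step 2. The weights have to be split correctly so that each Cauchy--Schwarz factor produces a conditional-probability sum controlled by the right mixing coefficient, with \(\mcA\) conditioning in one factor and \(\mcB\) in the other. The density argument in Step 1 is routine.
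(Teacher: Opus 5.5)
Your proof is correct. For \(\varrho_{\rm mc}\le\psi_{\rm mix}\) you do exactly what the paper does: apply \eqref{eq:mixing-cov-psi} to centered, \(L^2\)-normalized \(U,V\) and use \(\|U\|_{L^1}\le\|U\|_{L^2}\).

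For the \(\varphi\)-bound the paper gives no argument. It simply invokes Doob's classical maximal-correlation lemma, with pointers to Bradley and Peligrad. You instead reconstruct the classical Ibragimov-type proof from the definitions, and your version is sound. After reducing to finite partitions, you apply a single Cauchy--Schwarz over the pairs \((i,j)\), with the weight \(w_{ij}=|\pr(A_i\cap B_j)-\pr(A_i)\pr(B_j)|\) split evenly between the two factors. This yields \(\sum_i a_i^2\sum_j w_{ij}\le 2\varphi_{\rm mix}(\mcA,\mcB)\|U\|_{L^2}^2\), by conditioning on \(A_i\) and sign-splitting over the disjoint \(B_j\). It also yields \(\sum_j b_j^2\sum_i w_{ij}\le 2\varphi_{\rm mix}(\mcB,\mcA)\|V\|_{L^2}^2\), by conditioning on \(B_j\) instead.

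The citation buys brevity. Your route buys a self-contained verification, and it produces the symmetric middle term directly. That term is sharper than the one-sided bound \(2\varphi_{\rm mix}(\mcA,\mcB)^{1/2}\), which is the form in which Doob's lemma is usually quoted, so both inequalities of the statement come out of one computation. You are also right that \eqref{eq:mixing-cov-varphi}, being an \(L^1\)--\(L^\infty\) estimate, would not by itself give \(L^2\)--\(L^2\) control.

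Two cosmetic fixes:
\begin{enumerate}
\item The product of your two brackets is \(w_{ij}\), the absolute value of the weight, not the signed weight. So first bound \(|\operatorname{Cov}(U,V)|\le\sum_{i,j}|a_i||b_j|w_{ij}\), and then apply Cauchy--Schwarz.
\item Discard cells with \(\pr(A_i)=0\) or \(\pr(B_j)=0\) before writing conditional probabilities; they contribute nothing.
\end{enumerate}
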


\begin{proof}
    If \(\psi_{\rm mix}(\mcA,\mcB)=\infty\), the first estimate is trivial.
    Otherwise, take centered real random variables \(U\in L^2(\mcA)\), \(V\in L^2(\mcB)\) with
    \(\|U\|_{L^2}=\|V\|_{L^2}=1\). 
    By \eqref{eq:mixing-cov-psi},
    \[
        |\mathbb E[UV]|
        =
        |\operatorname{Cov}(U,V)|
        \le
        \psi_{\rm mix}(\mcA,\mcB)\,
        \|U\|_{L^1}\|V\|_{L^1}
        \le
        \psi_{\rm mix}(\mcA,\mcB).
    \]
    Taking the supremum over such \(U,V\) gives  \(\varrho_{\rm mc}(\mcA,\mcB)\le\psi_{\rm mix}(\mcA,\mcB)\).
    The comparison with \(\varphi_{\rm mix}\) is the classical maximal-correlation bound of Doob; see \cite[Lemma~7.1]{doob1942stochastic} and also
    \cite{bradley2005basic,peligrad83noteon}.
\end{proof}

Applying \Cref{prop:mixing-coefficient-comparisons} with
\[
    \mcA=\mcG_{-\infty}^{k},
    \qquad
    \mcB=\mcG_{k+m}^{\infty},
\]
and then taking the supremum over \(k\in\mbZ\), gives
\[
    \varrho_m\le \psi_m,
    \qquad
    \varrho_m\le 2\varphi_m^{1/2},
    \qquad m\ge1.
\]
Consequently, either \(\psi_m\to0\) or \(\varphi_m\to0\) implies \(\varrho_m\to0\).
We finally record the separation property that will be used repeatedly.  If
\(r,s,t\in\mbN\), then
\[
    \kappa_{\omega;r:0}
    =
    \kappa_{\rm tr}\!\left(\Phi_{\omega;r:0}\right)
\]
is \(\mcG_{-\infty}^{r}\)-measurable, while
\[
    \kappa_{\theta^{r+s}\omega;t:0}
    =
    \kappa_{\rm tr}\!\left(\Phi_{\omega;r+s+t:r+s}\right)
\]
is \(\mcG_{r+s}^{\infty}\)-measurable.  Therefore
\eqref{eq:mixing-cov-rho} gives
\[
\begin{aligned}
    \left|
    \operatorname{Cov}
    \left(
        \kappa_{\omega;r:0},
        \kappa_{\theta^{r+s}\omega;t:0}
    \right)
    \right|  
    &\qquad\le
    \varrho_s
    \left\|\kappa_{\omega;r:0}\right\|_{L^2}
    \left\|\kappa_{\theta^{r+s}\omega;t:0}\right\|_{L^2}.
\end{aligned}
\]
This covariance estimate is the probabilistic input behind the annealed contraction recursion.

\begin{lemma}
\label{lemma:kappa-annealed-recursion}
    Set
    \[
        a_n:=\mathbb E\!\left[\kappa_{\omega;n:0}\right],
        \qquad n\ge1.
    \]
    Then, for all \(r,s,t\in\mbN\),
    \[
       a_{r+s+t}
       \le
       a_r a_t+\varrho_s(a_r a_t)^{1/2}.
    \]
\end{lemma}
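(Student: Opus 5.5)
We will prove the recursion by factoring the length-\((r+s+t)\) product into three consecutive blocks and discarding the middle one. By the cocycle identity,
\[
    \Phi_{\omega;r+s+t:0}
    =
    \Phi_{\omega;r+s+t:r+s}\circ\Phi_{\omega;r+s:r}\circ\Phi_{\omega;r:0}.
\]
Applying \Cref{prop:submult} twice, and using \(\kappa_{\rm tr}(\Phi_{\omega;r+s:r})\le1\) for the CPTP middle block (\Cref{prop:dobr}), gives pointwise
\[
    \kappa_{\omega;r+s+t:0}
    \le
    \kappa_{\omega;r:0}\,\kappa_{\omega;r+s+t:r+s}
    =
    \kappa_{\omega;r:0}\,\kappa_{\theta^{r+s}\omega;t:0}.
\]
Taking expectations, we obtain \(a_{r+s+t}\le\mathbb E[UV]\), where \(U:=\kappa_{\omega;r:0}\) and \(V:=\kappa_{\theta^{r+s}\omega;t:0}\).

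Next we split \(\mathbb E[UV]=\mathbb E[U]\mathbb E[V]+\operatorname{Cov}(U,V)\). Since \(\theta\) preserves \(\pr\), we have \(\mathbb E[V]=a_t\), while \(\mathbb E[U]=a_r\) by definition. The variables \(U\) and \(V\) take values in \([0,1]\) and are therefore in \(L^2\). Moreover, \(U\) is \(\mcG_{-\infty}^{r}\)-measurable and \(V\) is \(\mcG_{r+s}^{\infty}\)-measurable, by the separation remark preceding the lemma. Hence \eqref{eq:mixing-cov-rho}, with \(k=r\) and gap \(s\), together with the definition of \(\varrho_s\) as a supremum over \(k\), yields
\[
    |\operatorname{Cov}(U,V)|
    \le
    \varrho_s\norm{U}_{L^2}\norm{V}_{L^2}.
\]

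The remaining step is to bound the \(L^2\) norms by first moments. Since \(0\le U\le1\), we have \(U^2\le U\), so \(\norm{U}_{L^2}\le(\mathbb E U)^{1/2}=a_r^{1/2}\). In the same way, \(\norm{V}_{L^2}\le a_t^{1/2}\), again using stationarity. Combining the three estimates gives
\[
    a_{r+s+t}\le a_ra_t+\varrho_s(a_ra_t)^{1/2}.
\]

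Only one point needs care: the measurability claim that \(\kappa_{\rm tr}(\Phi_{\omega;b:a})\) is \(\mcG_a^b\)-measurable. This holds because \(\kappa_{\rm tr}\) is a continuous function of the finite-dimensional linear map, and the map \(\omega\mapsto\Phi_{\omega;b:a}\) is a continuous function of \((X_a,\dots,X_{b-1})\). With that in place, the rest is routine.
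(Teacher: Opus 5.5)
Your proposal is correct and follows the paper's proof essentially step for step. Both arguments use the three-block factorization with the middle gap block discarded via $\kappa_{\rm tr}\le 1$, then the maximal-correlation covariance bound for $U=\kappa_{\omega;r:0}$ and $V=\kappa_{\theta^{r+s}\omega;t:0}$ across the gap $s$, stationarity to get $\mathbb{E}[V]=a_t$, and $U^2\le U$, $V^2\le V$ to turn the $L^2$ norms into square roots of first moments (your remark justifying the $\mcG_a^b$-measurability of $\kappa_{\rm tr}(\Phi_{\omega;b:a})$ by continuity fills in a point the paper only asserts).
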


\begin{proof}
    We first separate the product into two distant blocks and one middle block.
    By the cocycle identity,
    \[
        \Phi_{\omega;r+s+t:0}
        =
        \Phi_{\omega;r+s+t:r+s}
        \circ
        \Phi_{\omega;r+s:r}
        \circ
        \Phi_{\omega;r:0}.
    \]
    Using the submultiplicativity of \(\kappa_{\rm tr}\) from
    \Cref{prop:submult}, and the bound \(\kappa_{\rm tr}(T)\le1\) for every
    CPTP map \(T\), we get
    \[
    \begin{aligned}
        \kappa_{\omega;r+s+t:0}
        &\le
        \kappa_{\omega;r+s+t:r+s}\,
        \kappa_{\omega;r+s:r}\,
        \kappa_{\omega;r:0}        \\
        &\le
        \kappa_{\omega;r+s+t:r+s}\,
        \kappa_{\omega;r:0}.
    \end{aligned}
    \]
    Since
    \[
        \Phi_{\omega;r+s+t:r+s}
        =
        \Phi_{\theta^{r+s}\omega;t:0},
    \]
    this becomes
    \[
        \kappa_{\omega;r+s+t:0}
        \le
        \kappa_{\omega;r:0}\,
        \kappa_{\theta^{r+s}\omega;t:0}.
    \]
    Taking expectations gives
    \begin{equation}
    \label{eq:kappa-annealed-product-start}
        a_{r+s+t}
        \le
        \mathbb E\!\left[
            \kappa_{\omega;r:0}\,
            \kappa_{\theta^{r+s}\omega;t:0}
        \right].
    \end{equation}

    Now set
    \[
        U(\omega):=\kappa_{\omega;r:0},
        \qquad
        V(\omega):=\kappa_{\theta^{r+s}\omega;t:0}.
    \]
    The variable \(U\) is measurable with respect to
    \(\mcG_0^r\), hence with respect to \(\mcG_{-\infty}^{r}\).
    The variable \(V\) is measurable with respect to
    \(\mcG_{r+s}^{r+s+t}\), hence with respect to
    \(\mcG_{r+s}^{\infty}\).
    Therefore, by the maximal-correlation covariance bound
    \eqref{eq:mixing-cov-rho} and by the definition of \(\varrho_s\),
    \[
        |\operatorname{Cov}(U,V)|
        \le
        \varrho_s\,
        \|U\|_{L^2}\,
        \|V\|_{L^2}.
    \]
    Hence
    \[
        \mathbb E[UV]
        \le
        \mathbb E[U]\mathbb E[V]
        +
        \varrho_s
        \left(\mathbb E[U^2]\right)^{1/2}
        \left(\mathbb E[V^2]\right)^{1/2}.
    \]

    By stationarity of the channel sequence,
    \[
        \mathbb E[V]
        =
        \mathbb E\!\left[\kappa_{\theta^{r+s}\omega;t:0}\right]
        =
        \mathbb E\!\left[\kappa_{\omega;t:0}\right]
        =
        a_t.
    \]
    Also \(\mathbb E[U]=a_r\). Since \(0\le \kappa_{\rm tr}(T)\le1\) for every
    CPTP map \(T\), we have \(U^2\le U\) and \(V^2\le V\). Therefore
    \[
        \left(\mathbb E[U^2]\right)^{1/2}
        \le
        a_r^{1/2},
        \qquad
        \left(\mathbb E[V^2]\right)^{1/2}
        \le
        a_t^{1/2}.
    \]
    Combining these estimates with \eqref{eq:kappa-annealed-product-start},
    we obtain
    \[
        a_{r+s+t}
        \le
        a_r a_t
        +
        \varrho_s(a_r a_t)^{1/2}.
    \]
\end{proof}
\begin{lemma}
\label{lemma:kappa-annealed-superpoly}
    Assume that
    \[
        \lambda_{\rm tr}(\omega)<0
        \qquad
        \text{for }\pr\text{-a.e. }\omega,
    \]
    and that the channel environment is \(\varrho\)-mixing, i.e.
    \(\varrho_m\to0\).  Then, for every \(p\in\mbN\), there exists
    \(A_p<\infty\) such that
    \[
       \mathbb E\!\left[\kappa_{\omega;n:0}\right]
       \le
       A_p n^{-p},
       \qquad n\ge1.
    \]
\end{lemma}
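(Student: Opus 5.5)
The plan is to combine qualitative decay of \(a_n:=\mathbb E[\kappa_{\omega;n:0}]\) with the recursion of \Cref{lemma:kappa-annealed-recursion}, used along a doubling sequence of lengths. Along that sequence each step multiplies \(a\) by a factor that can be made as small as we like. This yields polynomial decay with an arbitrary exponent.

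\emph{Step 1 (qualitative decay and monotonicity).} By \Cref{prop:memory-loss-negative-exponent-equivalence}, the hypothesis \(\lambda_{\rm tr}<0\) a.s. gives \(\kappa_{\omega;n:0}\to0\) a.s. Since \(0\le\kappa\le1\), dominated convergence gives \(a_n\to0\). Moreover, \Cref{prop:submult} together with \(\kappa_{\rm tr}\le1\) gives \(\kappa_{\omega;n+1:0}\le\kappa_{\omega;n:0}\). Hence \((a_n)\) is nonincreasing, which lets us interpolate between the lengths of the sequence constructed below.

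\emph{Step 2 (geometric contraction along a doubling sequence).} Fix \(p\in\mbN\) and put \(\varepsilon:=2^{-(p+1)}\). Using \(\varrho_m\to0\), choose \(s\) with \(\varrho_s\le\varepsilon/2\). Using \(a_n\to0\), choose \(n_0\) with \(a_{n_0}\le\varepsilon/2\). Define \(n_{k+1}:=2n_k+s\). Apply \Cref{lemma:kappa-annealed-recursion} with \(r=t=n_k\), together with \(a_{n_k}\le a_{n_0}\), which holds by monotonicity since \(n_k\ge n_0\). This gives
\[
a_{n_{k+1}}\le a_{n_k}^2+\varrho_s a_{n_k}=a_{n_k}\bigl(a_{n_k}+\varrho_s\bigr)\le \varepsilon\, a_{n_k}.
\]
By induction, \(a_{n_k}\le\varepsilon^{k}\). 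Also \(n_k+s=2^k(n_0+s)\), so \(n_k\le 2^k(n_0+s)\).

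\emph{Step 3 (interpolation).} For \(n\ge n_0\), let \(k\) be maximal with \(n_k\le n\). Then \(n<n_{k+1}\le 2^{k+1}(n_0+s)\). Monotonicity gives
\[
a_n\le a_{n_k}\le 2^{-(p+1)k}=2^{p+1}\,2^{-(p+1)(k+1)}\le 2^{p+1}(n_0+s)^{p+1}\,n^{-(p+1)}\le A_p\,n^{-p}.
\]
For the finitely many \(n<n_0\), we enlarge \(A_p\) using \(a_n\le1\).

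The only delicate point is that \(\varrho_m\to0\) comes with no rate. For this reason we cannot bootstrap polynomial exponents directly. The doubling scheme avoids the issue, because the required smallness enters only once, through the fixed pair \((s,n_0)\). Beyond that, the argument needs only the qualitative decay \(a_n\to0\) and the monotonicity of \((a_n)\).
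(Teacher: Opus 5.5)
Your proposal is correct and follows the paper's proof essentially step for step. Monotonicity together with dominated convergence gives $a_n\downarrow 0$; then \Cref{lemma:kappa-annealed-recursion} with $r=t$ along the doubling sequence $n_{k+1}=2n_k+s$ gives geometric decay of $a_{n_k}$ with a ratio fixed in advance, and monotonicity interpolates to all $n$. The paper uses ratio $4^{-p}$ with $\varrho_M+a_{N_0}\le 4^{-p}$. Your identity $n_k+s=2^k(n_0+s)$ bounds the sequence a bit more cleanly than the paper's assumption $M\le N_0$ and its restriction to $i\ge 2$, but the difference is only cosmetic.
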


\begin{proof}
    Put
    \[
        a_n:=\mathbb E\!\left[\kappa_{\omega;n:0}\right],
        \qquad n\ge1.
    \]
    We first record two elementary facts about \((a_n)\).

    The sequence \((a_n)\) is nonincreasing.  
    Indeed, if \(1\le m<n\), then the cocycle identity gives
    \[
        \Phi_{\omega;n:0}
        =
        \Phi_{\omega;n:m}\circ \Phi_{\omega;m:0}.
    \]
    By submultiplicativity of \(\kappa_{\rm tr}\),
    \[
        \kappa_{\omega;n:0}
        \le
        \kappa_{\omega;n:m}\kappa_{\omega;m:0}
        \le
        \kappa_{\omega;m:0},
    \]
    because every CPTP map has trace-Dobrushin coefficient at most \(1\).
    Taking expectations gives \(a_n\le a_m\).

    Next, \(a_n\to0\).  Since
    \[
        \lambda_{\rm tr}(\omega)
        =
        \lim_{n\to\infty}
        \frac1n\log \kappa_{\omega;n:0}
        <0
    \]
    for \(\pr\)-a.e. \(\omega\), we have
    \[
        \kappa_{\omega;n:0}\longrightarrow0
        \qquad
        \text{for }\pr\text{-a.e. }\omega.
    \]
    Since \(0\le \kappa_{\omega;n:0}\le1\), dominated convergence gives
    \(a_n\to0\).

    Fix \(p\in\mbN\), and set
    \[
        q:=4^{-p}.
    \]
    Since \(\varrho_m\to0\) and \(a_n\to0\), we may choose
    \(M,N_0\in\mbN\), with \(M\le N_0\), such that
    \[
        \varrho_M+a_{N_0}\le q.
    \]
    Define
    \[
        N_i:=2^iN_0+(2^i-1)M,
        \qquad i\in\mbN_0.
    \]
    Then \(N_{i+1}=2N_i+M\).  Applying
    \Cref{lemma:kappa-annealed-recursion} with \(r=t=N_i\) and \(s=M\),
    we obtain
    \[
        a_{N_{i+1}}
        \le
        a_{N_i}^2+\varrho_M a_{N_i}
        =
        a_{N_i}\bigl(a_{N_i}+\varrho_M\bigr).
    \]
    Since \(N_i\ge N_0\) and \((a_n)\) is nonincreasing,
    \[
        a_{N_i}\le a_{N_0}.
    \]
    Hence
    \[
        a_{N_i}+\varrho_M
        \le
        a_{N_0}+\varrho_M
        \le q,
    \]
    and therefore
    \[
        a_{N_{i+1}}\le q a_{N_i}.
    \]
    Iterating gives
    \[
        a_{N_i}
        \le
        q^i a_{N_0}
        =
        4^{-pi}a_{N_0},
        \qquad i\in\mbN_0.
    \]

    Now let \(n\ge N_0\), and choose \(i\in\mbN_0\) such that
    \[
        N_i\le n<N_{i+1}.
    \]
    By monotonicity,
    \[
        a_n\le a_{N_i}\le 4^{-pi}a_{N_0}.
    \]
    Since \(M\le N_0\),
    \[
        N_{i+1}
        =
        2^{i+1}N_0+(2^{i+1}-1)M
        \le
        2^{i+2}N_0.
    \]
    For \(i\ge2\), this implies
    \[
        N_{i+1}\le 4^iN_0.
    \]
    Thus, whenever the chosen index satisfies \(i\ge2\),
    \[
        n<N_{i+1}\le4^iN_0,
    \]
    and hence
    \[
        4^{-pi}\le \left(\frac{N_0}{n}\right)^p.
    \]
    Therefore, for all \(n\ge N_2\),
    \[
        a_n
        \le
        a_{N_0}N_0^p n^{-p}.
    \]

    Finally, enlarge the constant to cover the finitely many values
    \(1\le n<N_2\).  For example, take
    \[
        A_p
        :=
        \max\left\{
            a_{N_0}N_0^p,\,
            \max_{1\le n<N_2} n^p a_n
        \right\}.
    \]
    Then
    \[
        a_n
        =
        \mathbb E\!\left[\kappa_{\omega;n:0}\right]
        \le
        A_p n^{-p}
        \qquad
        \text{for all }n\ge1.
    \]
\end{proof}

\begin{lemma}
\label{lemma:kappa-annealed-independent}
    Assume that
    \[
        \lambda_{\rm tr}(\omega)<0
        \qquad
        \text{for }\pr\text{-a.e. }\omega,
    \]
    and that the random variables \((X_j)_{j\in\mbZ}\) are jointly independent.
    Then there exist constants \(A<\infty\) and \(\eta>0\) such that
    \[
       \mathbb E\!\left[\kappa_{\omega;n:0}\right]
       \le
       Ae^{-\eta n},
       \qquad n\ge1.
    \]
\end{lemma}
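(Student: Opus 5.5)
Under independence every $\varrho_s$ vanishes, so \Cref{lemma:kappa-annealed-recursion} becomes a pure submultiplicativity statement. The plan is to use that to get exponential decay of $a_n:=\mathbb E[\kappa_{\omega;n:0}]$ from a single block length where $a_n<1$, and then pass to all $n$ by monotonicity.

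First, independence of $(X_j)_{j\in\mbZ}$ gives $\varrho_s=0$ for every $s\ge1$. The recursion of \Cref{lemma:kappa-annealed-recursion} then reads
\[
    a_{r+s+t}\le a_ra_t,\qquad r,s,t\in\mbN .
\]
One could even skip the gap: $\kappa_{\omega;r+t:0}\le\kappa_{\omega;r:0}\kappa_{\theta^r\omega;t:0}$, and the two factors are measurable with respect to $\mcG_0^r$ and $\mcG_r^{r+t}$, which are independent. This gives $a_{r+t}\le a_ra_t$ directly, using stationarity for $\mathbb E[\kappa_{\theta^r\omega;t:0}]=a_t$. I would use this gap-free version.

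Second, as in the proof of \Cref{lemma:kappa-annealed-superpoly}, $a_n\to0$. This follows from $\lambda_{\rm tr}<0$ a.s., which gives $\kappa_{\omega;n:0}\to0$ a.s., together with dominated convergence and $0\le\kappa\le1$. Also $(a_n)$ is nonincreasing. So we can fix $N_0$ with $q:=a_{N_0}\le 1/2<1$.

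Third, iterating submultiplicativity gives $a_{kN_0}\le q^k$. For general $n\ge N_0$, write $k=\lfloor n/N_0\rfloor$. Monotonicity then gives
\[
    a_n\le a_{kN_0}\le q^{k}\le q^{-1}e^{-\eta n},\qquad \eta:=\frac{|\log q|}{N_0},
\]
exactly as at the end of \Cref{prop:uniform-good-block}. For $n<N_0$, the bound $a_n\le1\le q^{-1}e^{-\eta n}$ already holds, so $A:=q^{-1}$ works for all $n$.

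The only delicate point is justifying the factorization of the expectation. The product of the two coefficients is the product of bounded functions of disjoint independent families of $X_j$, and measurability of these coefficients was established earlier in the paper. No ergodicity is needed. If the independence is only assumed jointly for the two-sided sequence, the conclusion is unchanged.
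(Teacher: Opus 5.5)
Your proposal is correct and is essentially the paper's proof. The paper also fixes $N_0$ with $a_{N_0}\le 1/2$, bounds $\kappa_{\omega;qN_0:0}$ by a product of independent, identically distributed block coefficients to get $a_{qN_0}\le a_{N_0}^q$ (you obtain the same bound by iterating $a_{r+t}\le a_ra_t$), and then uses monotonicity of $(a_n)$. One small slip: if $a_{N_0}=0$, your choices $A=q^{-1}$ and $\eta=|\log q|/N_0$ are undefined, so use only $q\le 1/2$ as the paper does, giving $A=2$ and $\eta=\log 2/N_0$.
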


\begin{proof}
    Put
    \[
        a_n:=\mathbb E\!\left[\kappa_{\omega;n:0}\right],
        \qquad n\ge1.
    \]
    Since \(\lambda_{\rm tr}(\omega)<0\) for \(\pr\)-a.e. \(\omega\), we have
    \[
        \kappa_{\omega;n:0}\longrightarrow0
        \qquad
        \text{for }\pr\text{-a.e. }\omega.
    \]
    As \(0\le \kappa_{\omega;n:0}\le1\), dominated convergence gives
    \[
        a_n\longrightarrow0.
    \]
    Choose \(N_0\in\mbN\) such that
    \[
        a_{N_0}\le \frac12.
    \]

    We first estimate the product at multiples of \(N_0\).  For \(q\in\mbN\), the cocycle identity gives
    \[
        \Phi_{\omega;qN_0:0}
        =
        \Phi_{\omega;qN_0:(q-1)N_0}
        \circ
        \cdots
        \circ
        \Phi_{\omega;2N_0:N_0}
        \circ
        \Phi_{\omega;N_0:0}.
    \]
    Hence, by submultiplicativity of \(\kappa_{\rm tr}\),
    \[
        \kappa_{\omega;qN_0:0}
        \le
        \prod_{j=0}^{q-1}
        \kappa_{\omega;(j+1)N_0:jN_0}.
    \]
    Equivalently,
    \[
        \kappa_{\omega;(j+1)N_0:jN_0}
        =
        \kappa_{\theta^{jN_0}\omega;N_0:0}.
    \]
    Each factor is a function of the block
    \[
        X_{jN_0},X_{jN_0+1},\ldots,X_{(j+1)N_0-1}.
    \]
    These blocks are disjoint, and the sequence \((X_j)_{j\in\mbZ}\) is jointly independent.  
    Therefore, the random variables
    \[
        \kappa_{\theta^{jN_0}\omega;N_0:0},
        \qquad j=0,\ldots,q-1,
    \]
    are independent.  
    By stationarity, they all have the same law as \(\kappa_{\omega;N_0:0}\).  
    Thus
    \[
    \begin{aligned}
        a_{qN_0}
        &=
        \mathbb E\!\left[\kappa_{\omega;qN_0:0}\right]  
        &\le
        \mathbb E\!\left[
        \prod_{j=0}^{q-1}
        \kappa_{\theta^{jN_0}\omega;N_0:0}
        \right] 
        &=
        \prod_{j=0}^{q-1}
        \mathbb E\!\left[
        \kappa_{\theta^{jN_0}\omega;N_0:0}
        \right] 
        &=
        a_{N_0}^q
        \le
        2^{-q}.
    \end{aligned}
    \]

    We next pass from multiples of \(N_0\) to arbitrary times.  
    The sequence \((a_n)\) is nonincreasing.  
    Indeed, if \(m<n\), then
    \[
        \Phi_{\omega;n:0}
        =
        \Phi_{\omega;n:m}\circ\Phi_{\omega;m:0},
    \]
    so
    \[
        \kappa_{\omega;n:0}
        \le
        \kappa_{\omega;n:m}\kappa_{\omega;m:0}
        \le
        \kappa_{\omega;m:0},
    \]
    because \(\kappa_{\rm tr}(T)\le1\) for every CPTP map \(T\). 
    Taking expectations gives \(a_n\le a_m\).

    Let \(n\ge N_0\), and set
    \[
        q:=\left\lfloor \frac{n}{N_0}\right\rfloor .
    \]
    Then \(q\ge1\) and \(qN_0\le n\).  
    Therefore, by monotonicity,
    \[
        a_n
        \le
        a_{qN_0}
        \le
        2^{-q}.
    \]
    Since
    \[
        q\ge \frac{n}{N_0}-1,
    \]
    we get
    \[
        2^{-q}
        \le
        2^{1-n/N_0}
        =
        2\exp\!\left(-\frac{\log 2}{N_0}n\right).
    \]
    Hence the estimate holds for all \(n\ge N_0\) with
    \[
        \eta:=\frac{\log 2}{N_0}.
    \]

    Finally, since \(0\le a_n\le1\), the same bound holds for the finitely many values \(1\le n<N_0\) after increasing the prefactor.  
    Thus there exist \(A<\infty\) and \(\eta>0\) such that
    \[
        a_n
        =
        \mathbb E\!\left[\kappa_{\omega;n:0}\right]
        \le
        Ae^{-\eta n},
        \qquad n\ge1.
    \]
\end{proof}

We are now ready to prove \Cref{thm:random-annealed-deviations}.

\randomannealeddeviations*

\begin{proof}
    Set
    \[
       K_n^+(\omega):=\kappa_{\omega;n:0},
       \qquad
       K_n^-(\omega):=\kappa_{\omega;0:-n}.
    \]
    By \Cref{prop:memory-loss-negative-exponent-equivalence}, the assumption
    \(\lambda_{\rm tr}<0\) a.s. implies
    \[
       \kappa_{\omega;n:0}\to0
       \qquad\text{and}\qquad
       \kappa_{\omega;0:-n}\to0
    \]
    for \(\pr\)-a.e. \(\omega\). Thus \Cref{cor:random-two-sided-replacement}
    applies and gives the unique stationary random state \(\rho\).
    
    We first record the pointwise deterministic comparison with the Dobrushin
    coefficients. By \Cref{prop:random-forward-replacement},
    \[
       \norm{
          \Phi_{\omega;n:0}-R_{\theta^n\omega}
       }_{1\to1}
       \le
       4\kappa_{\omega;n:0}
       =
       4K_n^+(\omega)
    \]
    for \(\pr\)-a.e. \(\omega\) and every \(n\ge1\). By
    \Cref{prop:random-pullback-state},
    \[
       \norm{
          \Phi_{\omega;0:-n}-R_{\omega}
       }_{1\to1}
       \le
       4\kappa_{\omega;0:-n}
       =
       4K_n^-(\omega)
    \]
    for \(\pr\)-a.e. \(\omega\) and every \(n\ge1\). Hence
    \[
       \Delta_n^{\rm op}(\omega)
       \le
       4\max\{K_n^+(\omega),K_n^-(\omega)\}
       \le
       4\bigl(K_n^+(\omega)+K_n^-(\omega)\bigr).
    \]
    Taking expectations gives
    \[
       \mathbb E[\Delta_n^{\rm op}]
       \le
       4\mathbb E[K_n^+]+4\mathbb E[K_n^-].
    \]
    But
    \[
       K_n^-(\omega)
       =
       \kappa_{\omega;0:-n}
       =
       \kappa_n(\theta^{-n}\omega),
    \]
    and \(\theta\) is probability preserving. Therefore
    \[
       \mathbb E[K_n^-]
       =
       \mathbb E[\kappa_n\circ\theta^{-n}]
       =
       \mathbb E[\kappa_n]
       =
       \mathbb E[K_n^+].
    \]
    Consequently,
    \[
       \mathbb E[\Delta_n^{\rm op}]
       \le
       8\,\mathbb E[\kappa_n].
    \]
    
    Now apply the annealed Dobrushin-coefficient estimate already proved for
    \(\kappa_n=\kappa_{\omega;n:0}\). If \(\varrho_m\to0\), then for every
    \(p\in\mathbb N\) there is \(A_p<\infty\) such that
    \[
       \mathbb E[\kappa_n]\le A_p n^{-p}.
    \]
    Thus
    \[
       \mathbb E[\Delta_n^{\rm op}]
       \le
       8A_p n^{-p}.
    \]
    Renaming \(8A_p\) as \(C_p\) proves the first claim.
    
    If the variables \((X_j)_{j\in\mathbb Z}\) are jointly independent, the same
    annealed coefficient estimate gives constants \(A<\infty\) and
    \(\gamma>0\) such that
    \[
       \mathbb E[\kappa_n]\le A e^{-\gamma n}.
    \]
    Therefore
    \[
       \mathbb E[\Delta_n^{\rm op}]
       \le
       8A e^{-\gamma n}.
    \]
    Renaming \(8A\) as \(C\) proves the exponential claim.
    
    It remains only to explain the state-level statement. For every state
    \(\sigma\in\mcS_d\), \(\tr\sigma=1\), and therefore
    \[
       R_{\theta^n\omega}(\sigma)
       =
       \tr(\sigma)\rho_{\theta^n\omega}
       =
       \rho_{\theta^n\omega}.
    \]
    Hence
    \[
       \norm{\Phi_{\omega;n:0}(\sigma)-\rho_{\theta^n\omega}}_1
       =
       \norm{
          \left(\Phi_{\omega;n:0}-R_{\theta^n\omega}\right)(\sigma)
       }_1
       \le
       \norm{\Phi_{\omega;n:0}-R_{\theta^n\omega}}_{1\to1}.
    \]
    Taking the supremum over \(\sigma\in\mcS_d\) gives the forward state-level
    bound. The pullback bound is identical:
    \[
       \sup_{\sigma\in\mcS_d}
       \norm{\Phi_{\omega;0:-n}(\sigma)-\rho_\omega}_1
       \le
       \norm{\Phi_{\omega;0:-n}-R_\omega}_{1\to1}.
    \]
    Thus
    \[
       \Delta_n^{\rm st}(\omega)\le \Delta_n^{\rm op}(\omega),
    \]
    and the already proved operator estimates imply the state estimates.
\end{proof}

%%%%%%%%%%%%%%%%%%%%%%%%%%%%%%%%%%%%%%%%%%%%%%%%%%%%%%%%%%%%%%%%%%%%%%%%%%%%
%%%%%%%%%%%%%%%%%%%%%%%%%%%%%%%%%%%%%%%%%%%%%%%%%%%%%%%%%%%%%%%%%%%%%%%%%%%%

\section{Inhomogeneous MPS}

%%%%%%%%%%%%%%%%%%%%%%%%%%%%%%%%%%%%%%%%%%%%%%%%%%%%%%%%%%%%%%%%%%%%%%%%%%%%
%%%%%%%%%%%%%%%%%%%%%%%%%%%%%%%%%%%%%%%%%%%%%%%%%%%%%%%%%%%%%%%%%%%%%%%%%%%%

\subsection{Deterministic MPS from Inhomogeneous CPTP Transfer Products}
\label{subsec:det-cptp-mps}

%%%%%%%%%%%%%%%%%%%%%%%%%%%%%%%%%%%%%%%%%%%%%%%%%%%%%%%%%%%%%%%%%%%%%%%%%%%%
%%%%%%%%%%%%%%%%%%%%%%%%%%%%%%%%%%%%%%%%%%%%%%%%%%%%%%%%%%%%%%%%%%%%%%%%%%%%

The notation in this subsection is the deterministic MPS notation introduced in \Cref{sec:intro-mps}.
Thus \(\mcK\) is the physical single-site Hilbert space, \(\mcH\) is the auxiliary or bond Hilbert space, \(K_i^{[n]}\) are left-canonical site-dependent tensors, \(\Phi_n\) are the associated CPTP auxiliary channels, and \(\Theta_{m,n}\) are the right-tail transfer products.
We also use the block products \(K_{\mathbf i}^{[a,b]}\), the inserted maps \(\widehat X_{[a,b]}\), the trace-closed vectors \(\ket{\Psi_n}\), and the trace-closed finite-volume states from \Cref{sec:intro-mps}.
We write
\[
	\mcS\left(\mcH\right)
	:=
	\left\{
		\rho\in B\left(\mcH\right):
		\rho\ge0,\,
		\tr{\rho}=1
	\right\},
	\qquad
	D_{\mcH}:=\dim\mcH.
\]

For completeness, we recall the two transfer identities used below.
For every \(m\ge0\), set \(\Theta_{m,m}:=\operatorname{id}_{B(\mcH)}\).
For \(0\le m\le n\), we use \(\Theta_{m,m}:=\operatorname{id}_{B(\mcH)}\), and for \(m<n\),
\[
	\Theta_{m,n}
	=
	\Phi_{m+1}\circ\Phi_{m+2}\circ\cdots\circ\Phi_n.
\]
Moreover, for \(m<k<n\),
\[
	\Theta_{m,n}
	=
	\Theta_{m,k}\circ\Theta_{k,n}.
\]

Define a reversed-time sequence by
\[
	\widetilde\Phi_{-r}:=\Phi_r,
	\qquad
	r\ge1,
\]
and, if needed, set \(\widetilde\Phi_t:=\operatorname{id}_{B(\mcH)}\) for \(t\ge0\).
With the chronological convention of \Cref{sec:det}, one has, for \(0\le m<n\),
\[
	\Theta_{m,n}
	=
	\widetilde\Phi_{-m:-n}.
\]
Thus the deterministic estimates of \Cref{sec:det} apply to the MPS right-tail products after reversing the time index.
In particular, the hypothesis \(\kappa_{\rm tr}(\Theta_{q,n})\to0\) for fixed \(q\) is the MPS form of pullback trace-memory loss.

\begin{lemma}
\label{lem:mps-inserted-transfer-estimates}
	Let \(1\le a\le b\), and let \(X\in B\left(\mcK_{[a,b]}\right)\).
	Then
	\[
		\norm{
			\widehat X_{[a,b]}
		}_{1\to1}
		\le
		\norm{X}_\infty.
	\]
	Moreover, for every \(Y\in B\left(\mcH\right)\),
	\[
		\left|
			\tr{
				\widehat X_{[a,b]}(Y)
			}
		\right|
		\le
		\norm{X}_\infty\norm{Y}_1.
	\]
\end{lemma}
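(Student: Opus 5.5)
Write \(V:=\sum_{\mathbf i}\ket{\mathbf i}\otimes K_{\mathbf i}^{[a,b]}\), viewed as an operator \(\mcH\to\mcK_{[a,b]}\otimes\mcH\) (up to the ordering convention of the tensor factors).
The left-canonical condition iterated over the sites \(a,\dots,b\) gives
\[
	V^*V=\sum_{\mathbf i}\left(K_{\mathbf i}^{[a,b]}\right)^*K_{\mathbf i}^{[a,b]}=I_{\mcH}.
\]
This follows by peeling off the innermost factor \(\sum_{i_a}(K^{[a]}_{i_a})^*K^{[a]}_{i_a}=I\) and proceeding outward. Hence \(V\) is an isometry.
The plan is to represent the inserted map as
\[
	\widehat X_{[a,b]}(Y)
	=
	\operatorname{Tr}_{\mcK_{[a,b]}}\!\left[(X^{T}\otimes I_{\mcH})\,VYV^*\right],
\]
or a variant with \(X\) acting on the appropriate side. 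The first concrete step is to check the index placement: expanding, the \((\mathbf j,\mathbf i)\) block of \(VYV^*\) is \(K_{\mathbf j}YK_{\mathbf i}^*\), and the partial trace against \(X\) picks up the coefficient \(\bra{\mathbf i}X\ket{\mathbf j}\). The precise form (transpose or not) is immaterial for norms, since \(\norm{X^T}_\infty=\norm{X}_\infty\).

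Given this representation, the trace bound is immediate. We have
\[
	\tr{\widehat X_{[a,b]}(Y)}=\operatorname{Tr}\left[(X'\otimes I)VYV^*\right],
\]
where \(X'\in\{X,X^T\}\). By Hölder's inequality and the isometry property,
\[
	\left|\operatorname{Tr}\left[(X'\otimes I)VYV^*\right]\right|
	\le
	\norm{X'\otimes I}_\infty\norm{VYV^*}_1
	\le
	\norm{X}_\infty\norm{Y}_1 .
\]

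For the \(1\to1\) bound we use that the partial trace is contractive in trace norm, which follows by duality since \(\norm{Z\otimes I}_\infty=\norm{Z}_\infty\). Therefore
\[
	\norm{\widehat X_{[a,b]}(Y)}_1
	\le
	\norm{(X'\otimes I)VYV^*}_1
	\le
	\norm{X}_\infty\norm{VYV^*}_1
	\le
	\norm{X}_\infty\norm{Y}_1 .
\]
Taking the supremum over \(Y\ne0\) gives the claim.

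The main obstacle is purely bookkeeping: confirming that the partial-trace formula reproduces exactly \(\sum_{\mathbf i,\mathbf j}\bra{\mathbf i}X\ket{\mathbf j}K_{\mathbf j}YK_{\mathbf i}^*\) with the paper's ordering \(K_{\mathbf i}^{[a,b]}=K^{[a]}_{i_a}\cdots K^{[b]}_{i_b}\), and that \(V^*V=I\) holds for this ordering. The latter follows because the innermost factor in \(K_{\mathbf i}^*K_{\mathbf i}\) is the site \(a\) tensor. If the direct formula misplaces the transpose, I would instead verify the identity by duality, pairing with an arbitrary \(Z\in B(\mcH)\), computing \(\tr{Z\,\widehat X(Y)}\), and matching both sides.
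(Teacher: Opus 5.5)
Your proof is correct and is essentially the paper's argument in the Schr\"odinger rather than the Heisenberg picture. The paper uses the same isometry \(V_{[a,b]}\) but bounds the dual map \(\widehat X_{[a,b]}^\sharp(A)=V_{[a,b]}^*(X\otimes A)V_{[a,b]}\) in operator norm and concludes by trace duality, whereas you bound the predual form \(\widehat X_{[a,b]}(Y)=\operatorname{Tr}_{\mcK_{[a,b]}}\bigl[(X\otimes I_{\mcH})VYV^*\bigr]\) directly via H\"older and contractivity of the partial trace. With the paper's index conventions your block computation gives exactly \(X'=X\), so the transpose hedge is unnecessary.
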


\begin{proof}
	Define \(V_{[a,b]}:\mcH\to\mcK_{[a,b]}\otimes\mcH\) by
	\[
		V_{[a,b]}\xi
		:=
		\sum_{\mathbf i}
		\ket{\mathbf i}\otimes K_{\mathbf i}^{[a,b]}\xi,
		\qquad
		\xi\in\mcH.
	\]
	The left-canonical condition gives \(V_{[a,b]}^*V_{[a,b]}=I_{\mcH}\).
	Now,
    \[
    	V_{[a,b]}^*V_{[a,b]}
    	=
    	\sum_{\mathbf i}
    	\left(
    		K_{\mathbf i}^{[a,b]}
    	\right)^*
    	K_{\mathbf i}^{[a,b]}
    	=
    	I_{\mcH},
    \]
    where the last equality follows by summing successively over the indices \(i_a,\ldots,i_b\) and using the left-canonical identities.
    Thus \(V_{[a,b]}\) is an isometry.
	With respect to the trace pairing, the dual map of \(\widehat X_{[a,b]}\) is
	\[
		\widehat X_{[a,b]}^\sharp(A)
		=
		V_{[a,b]}^*
		\left(
			X\otimes A
		\right)
		V_{[a,b]},
		\qquad
		A\in B\left(\mcH\right).
	\]
	Indeed,
	\[
		\tr{
			A\widehat X_{[a,b]}(Y)
		}
		=
		\tr{
			\widehat X_{[a,b]}^\sharp(A)Y
		},
		\qquad
		A,Y\in B\left(\mcH\right).
	\]
	Since \(V_{[a,b]}\) is an isometry,
	\[
		\norm{
			\widehat X_{[a,b]}^\sharp(A)
		}_\infty
		\le
		\norm{X}_\infty\norm{A}_\infty.
	\]
	By trace duality, this proves \(\norm{\widehat X_{[a,b]}}_{1\to1}\le\norm{X}_\infty\).

	Taking \(A=I_{\mcH}\), set
	\[
		F_X^{[a,b]}
		:=
		V_{[a,b]}^*
		\left(
			X\otimes I_{\mcH}
		\right)
		V_{[a,b]}.
	\]
	Then \(\tr{\widehat X_{[a,b]}(Y)}=\tr{F_X^{[a,b]}Y}\), and \(\norm{F_X^{[a,b]}}_\infty\le\norm{X}_\infty\).
	Hence
	\[
		\left|
			\tr{
				\widehat X_{[a,b]}(Y)
			}
		\right|
		\le
		\norm{X}_\infty\norm{Y}_1.
	\]
\end{proof}

For a linear map \(L:B\left(\mcH\right)\to B\left(\mcH\right)\), define its superoperator trace by
\[
	\operatorname{Tr}_{\rm sup}(L)
	:=
	\sum_{\alpha,\beta=1}^{D_{\mcH}}
	\tr{
		E_{\alpha\beta}^*
		L(E_{\alpha\beta})
	},
	\qquad
	E_{\alpha\beta}:=\ket{e_\alpha}\bra{e_\beta}.
\]
This is the ordinary trace of \(L\) as a linear operator on \(B\left(\mcH\right)\) equipped with the Hilbert--Schmidt inner product.
In particular, it is independent of the chosen orthonormal basis \(\{e_\alpha\}_{\alpha=1}^{D_{\mcH}}\).
We collect a few lemmas that are required for the subsequent analysis. 

\begin{lemma}
\label{lem:mps-superoperator-trace-estimates}
	For every linear map \(L:B\left(\mcH\right)\to B\left(\mcH\right)\),
	\[
		\left|
			\operatorname{Tr}_{\rm sup}(L)
		\right|
		\le
		D_{\mcH}^2\norm{L}_{1\to1}.
	\]
	Consequently, for every \(X\in B\left(\mcK_{[a,b]}\right)\),
	\[
		\left|
			\operatorname{Tr}_{\rm sup}
			\left(
				\widehat X_{[a,b]}\circ L
			\right)
		\right|
		\le
		D_{\mcH}^2
		\norm{X}_\infty
		\norm{L}_{1\to1}.
	\]
	If \(\rho\in\mcS\left(\mcH\right)\) and \(R_\rho(Y):=\tr{Y}\rho\), then
	\[
		\operatorname{Tr}_{\rm sup}
		\left(
			L\circ R_\rho
		\right)
		=
		\tr{L(\rho)}.
	\]
\end{lemma}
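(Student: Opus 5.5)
The plan is to prove the three assertions of \Cref{lem:mps-superoperator-trace-estimates} separately, each from the defining formula for \(\operatorname{Tr}_{\rm sup}\) in the matrix-unit basis.

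\textbf{First estimate.} Fix an orthonormal basis \(\{e_\alpha\}\) and set \(E_{\alpha\beta}=\ket{e_\alpha}\bra{e_\beta}\). Each \(E_{\alpha\beta}\) is rank one, so \(\norm{E_{\alpha\beta}}_1=1\) and \(\norm{E_{\alpha\beta}^*}_\infty=1\). For a single term, Hölder's inequality gives
\[
	\left|\tr{E_{\alpha\beta}^*L(E_{\alpha\beta})}\right|
	\le
	\norm{E_{\alpha\beta}^*}_\infty\norm{L(E_{\alpha\beta})}_1
	\le
	\norm{L}_{1\to1}.
\]
Summing over the \(D_{\mcH}^2\) pairs \((\alpha,\beta)\) yields \(\left|\operatorname{Tr}_{\rm sup}(L)\right|\le D_{\mcH}^2\norm{L}_{1\to1}\).

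\textbf{Second estimate.} The induced \(1\to1\) norm is submultiplicative, so \(\norm{\widehat X_{[a,b]}\circ L}_{1\to1}\le\norm{\widehat X_{[a,b]}}_{1\to1}\norm{L}_{1\to1}\). By \Cref{lem:mps-inserted-transfer-estimates}, \(\norm{\widehat X_{[a,b]}}_{1\to1}\le\norm{X}_\infty\). Applying the first estimate to the composite map \(\widehat X_{[a,b]}\circ L\) then gives the claimed bound.

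\textbf{Identity.} Since \(\tr{E_{\alpha\beta}}=\delta_{\alpha\beta}\), we have \(R_\rho(E_{\alpha\beta})=\delta_{\alpha\beta}\rho\), and therefore
\[
	\operatorname{Tr}_{\rm sup}(L\circ R_\rho)
	=
	\sum_\alpha\tr{E_{\alpha\alpha}^*L(\rho)}
	=
	\tr{\Bigl(\sum_\alpha E_{\alpha\alpha}\Bigr)L(\rho)}
	=
	\tr{L(\rho)},
\]
because \(\sum_\alpha E_{\alpha\alpha}=I_{\mcH}\). Equivalently, \(L\circ R_\rho\) is the rank-one operator \(Y\mapsto\langle I,Y\rangle_{\rm HS}\,L(\rho)\) on the Hilbert--Schmidt space, and its trace is \(\langle I,L(\rho)\rangle_{\rm HS}=\tr{L(\rho)}\).

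None of these steps is a genuine obstacle. The only point to check is that the \(1\to1\) bound on \(L\) is applied to the rank-one, not necessarily self-adjoint, inputs \(E_{\alpha\beta}\). This is legitimate because \(\norm{\,\cdot\,}_{1\to1}\) is defined as a supremum over all complex matrices.
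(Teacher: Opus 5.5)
Your proposal is correct and follows the paper's proof essentially step for step: a termwise H\"older bound using $\norm{E_{\alpha\beta}}_1=\norm{E_{\alpha\beta}^*}_\infty=1$, then the first estimate combined with \Cref{lem:mps-inserted-transfer-estimates}, and finally $R_\rho(E_{\alpha\beta})=\delta_{\alpha\beta}\rho$ summed over the matrix units. Your explicit use of submultiplicativity of the $1\to1$ norm and the rank-one Hilbert--Schmidt reformulation of the identity are harmless clarifications of steps the paper leaves implicit.
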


\begin{proof}
	The first estimate follows from the definition of \(\operatorname{Tr}_{\rm sup}\), since \(\norm{E_{\alpha\beta}}_1=1\) and \(\norm{E_{\alpha\beta}^*}_\infty=1\).
	The second estimate follows from the first estimate and \Cref{lem:mps-inserted-transfer-estimates}.
	For the last identity, note that \(R_\rho(E_{\alpha\beta})=\delta_{\alpha\beta}\rho\).
	Therefore
	\[
	\begin{aligned}
		\operatorname{Tr}_{\rm sup}
		\left(
			L\circ R_\rho
		\right)
		&=
		\sum_{\alpha,\beta}
		\tr{
			E_{\alpha\beta}^*
			L(R_\rho(E_{\alpha\beta}))
		} \\
		&=
		\sum_{\alpha}
		\tr{
			E_{\alpha\alpha}L(\rho)
		} \\
		&=
		\tr{L(\rho)}.
	\end{aligned}
	\]
\end{proof}

\begin{lemma}
\label{lem:mps-finite-transfer-formula}
	Let \(1\le m\le n\), and let \(X\in B\left(\mcK_{[1,m]}\right)\).
	Then
	\[
		\bra{\Psi_n}
		\left(
			X\otimes I_{\mcK}^{\otimes(n-m)}
		\right)
		\ket{\Psi_n}
		=
		\operatorname{Tr}_{\rm sup}
		\left(
			\widehat X_{[1,m]}\circ\Theta_{m,n}
		\right).
	\]
	In particular,
	\[
		\inner{\Psi_n}{\Psi_n}
		=
		\operatorname{Tr}_{\rm sup}
		\left(
			\Theta_{0,n}
		\right)
		=
		\operatorname{Tr}_{\rm sup}
		\left(
			\widehat I_{[1,m]}\circ\Theta_{m,n}
		\right).
	\]
	Hence, whenever \(\inner{\Psi_n}{\Psi_n}\ne0\),
	\[
		\varphi_n(X)
		=
		\frac{
			\operatorname{Tr}_{\rm sup}
			\left(
				\widehat X_{[1,m]}\circ\Theta_{m,n}
			\right)
		}{
			\operatorname{Tr}_{\rm sup}
			\left(
				\widehat I_{[1,m]}\circ\Theta_{m,n}
			\right)
		}.
	\]
\end{lemma}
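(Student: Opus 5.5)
The plan is to expand $\ket{\Psi_n}$ in coordinates and rewrite the double sum over multi-indices as a superoperator trace. I will write $\tr{K_{\mathbf j}^{[1,n]}}=\sum_\alpha\bra{e_\alpha}K_{\mathbf j}^{[1,n]}\ket{e_\alpha}$ for $\mathbf j=(j_1,\ldots,j_n)$, and its complex conjugate as $\sum_\beta\bra{e_\beta}(K_{\mathbf i}^{[1,n]})^*\ket{e_\beta}$. Because the embedded observable acts as the identity on sites $m+1,\ldots,n$, the matrix element
\[
\bra{\mathbf i}\bigl(X\otimes I_{\mcK}^{\otimes(n-m)}\bigr)\ket{\mathbf j}
=\bra{\mathbf i'}X\ket{\mathbf j'}\,\delta_{\mathbf i''\mathbf j''}
\]
factorizes. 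Here I split $\mathbf i=(\mathbf i',\mathbf i'')$ and $\mathbf j=(\mathbf j',\mathbf j'')$ into the parts on $[1,m]$ and on $[m+1,n]$.

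Substituting, the left-hand side becomes
\[
\sum_{\alpha,\beta}\sum_{\mathbf i',\mathbf j',\mathbf k}\bra{\mathbf i'}X\ket{\mathbf j'}\,\bra{e_\alpha}K_{\mathbf j'}^{[1,m]}K_{\mathbf k}^{[m+1,n]}\ket{e_\alpha}\,\bra{e_\beta}(K_{\mathbf k}^{[m+1,n]})^*(K_{\mathbf i'}^{[1,m]})^*\ket{e_\beta}.
\]
I then recombine the product of the two scalars as
\[
\bra{e_\alpha}K_{\mathbf j'}^{[1,m]}K_{\mathbf k}^{[m+1,n]}E_{\alpha\beta}(K_{\mathbf k}^{[m+1,n]})^*(K_{\mathbf i'}^{[1,m]})^*\ket{e_\beta}
=\tr{E_{\alpha\beta}^*\,K_{\mathbf j'}^{[1,m]}K_{\mathbf k}^{[m+1,n]}E_{\alpha\beta}(K_{\mathbf k}^{[m+1,n]})^*(K_{\mathbf i'}^{[1,m]})^*},
\]
using $E_{\alpha\beta}=\ket{e_\alpha}\bra{e_\beta}$ and $E_{\alpha\beta}^*=\ket{e_\beta}\bra{e_\alpha}$.

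Summing over $\mathbf k$ produces $\Theta_{m,n}(E_{\alpha\beta})$, since the Kraus operators of the composition $\Phi_{m+1}\circ\cdots\circ\Phi_n$ are exactly the products $K_{\mathbf k}^{[m+1,n]}$. The outer sum over $\mathbf i',\mathbf j'$ weighted by $\bra{\mathbf i'}X\ket{\mathbf j'}$ is precisely $\widehat X_{[1,m]}$ applied to $\Theta_{m,n}(E_{\alpha\beta})$. Summing over $\alpha,\beta$ then gives $\operatorname{Tr}_{\rm sup}(\widehat X_{[1,m]}\circ\Theta_{m,n})$. When $m=n$ the $\mathbf k$-sum is empty and $\Theta_{n,n}=\operatorname{id}$, so the same computation applies without change.

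For the second claim I set $X=I$, so that $\bra{\mathbf i'}I\ket{\mathbf j'}=\delta_{\mathbf i'\mathbf j'}$. This gives $\widehat I_{[1,m]}=\Theta_{0,m}$, and the composition identity $\Theta_{0,m}\circ\Theta_{m,n}=\Theta_{0,n}$ yields both displayed expressions for $\inner{\Psi_n}{\Psi_n}$. The final formula for $\varphi_n(X)$ is then just the quotient of the two identities.

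The main obstacle is bookkeeping rather than analysis. The conjugated trace must be paired with the correct operator ordering, so that the $E_{\alpha\beta}$ sits between the $K$'s and their adjoints rather than producing a transpose. To guard against this I will check the recombination step directly on rank-one matrix units before summing.
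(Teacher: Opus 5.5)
Your proposal is correct and is essentially the paper's proof. The paper isolates your matrix-unit recombination as the identity $\operatorname{Tr}_{\rm sup}(Y\mapsto AYB^*)=\tr{A}\tr{B^*}$, applies it termwise to the Kraus expansion of $\widehat X_{[1,m]}\circ\Theta_{m,n}$, and then uses $\widehat I_{[1,m]}\circ\Theta_{m,n}=\Theta_{0,n}$ just as you do. One wording fix: in the case $m=n$, the tail multi-index is the single empty tuple with $K^{[n+1,n]}=I_{\mcH}$, not an empty sum.
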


\begin{proof}
	We use the elementary identity
	\[
		\operatorname{Tr}_{\rm sup}
		\left(
			Y\mapsto AYB^*
		\right)
		=
		\tr{A}\tr{B^*},
		\qquad
		A,B\in B\left(\mcH\right).
	\]
	This identity follows immediately from the definition of \(\operatorname{Tr}_{\rm sup}\) and the matrix units \(E_{\alpha\beta}\).

	Let \(\mathbf i,\mathbf j\) run over the block \([1,m]\), and let \(\boldsymbol\ell\) run over the tail \([m+1,n]\).
	If \(n=m\), the tail multi-index is empty and \(K_{\boldsymbol\ell}^{[m+1,m]}:=I_{\mcH}\).
	Expanding the finite-volume vector gives
	\[
	\begin{aligned}
		&\bra{\Psi_n}
		\left(
			X\otimes I_{\mcK}^{\otimes(n-m)}
		\right)
		\ket{\Psi_n} \\
		&\qquad =
		\sum_{\mathbf i,\mathbf j}
		\sum_{\boldsymbol\ell}
		\bra{\mathbf i}X\ket{\mathbf j}
		\overline{
			\tr{
				K_{\mathbf i}^{[1,m]}
				K_{\boldsymbol\ell}^{[m+1,n]}
			}
		}
		\tr{
			K_{\mathbf j}^{[1,m]}
			K_{\boldsymbol\ell}^{[m+1,n]}
		}.
	\end{aligned}
	\]
	On the other hand,
	\[
	\begin{aligned}
		\widehat X_{[1,m]}
		\circ
		\Theta_{m,n}
		=
		\sum_{\mathbf i,\mathbf j}
		\sum_{\boldsymbol\ell}
		\bra{\mathbf i}X\ket{\mathbf j}
		\left(
			Y\mapsto
			K_{\mathbf j}^{[1,m]}
			K_{\boldsymbol\ell}^{[m+1,n]}
			Y
			\left(
				K_{\mathbf i}^{[1,m]}
				K_{\boldsymbol\ell}^{[m+1,n]}
			\right)^*
		\right).
	\end{aligned}
	\]
	Applying the elementary superoperator-trace identity to each summand gives exactly the preceding expansion.
	This proves the first formula.

	Taking \(X=I_{\mcK_{[1,m]}}\) gives
	\[
		\inner{\Psi_n}{\Psi_n}
		=
		\operatorname{Tr}_{\rm sup}
		\left(
			\widehat I_{[1,m]}\circ\Theta_{m,n}
		\right).
	\]
	Since \(\widehat I_{[1,m]}\circ\Theta_{m,n}=\Theta_{0,n}\), this also gives
	\[
		\inner{\Psi_n}{\Psi_n}
		=
		\operatorname{Tr}_{\rm sup}
		\left(
			\Theta_{0,n}
		\right).
	\]
	The normalized formula follows by dividing by \(\inner{\Psi_n}{\Psi_n}\).
\end{proof}

Recall that the one-sided quasi-local spin algebra is
\[
	\mcA_{\mbN}
	:=
	\overline{
		\bigcup_{m\ge1}
		B\left(\mcK_{[1,m]}\right)
	}^{\norm{\cdot}_\infty},
\]
where \(B\left(\mcK_{[1,m]}\right)\) is identified with a subalgebra of \(B\left(\mcK_{[1,n]}\right)\), for \(m\le n\), through the embedding \(X\mapsto X\otimes I_{\mcK}^{\otimes(n-m)}\).
This is the standard \(C^*\)-inductive-limit construction of the quasi-local algebra of a one-sided quantum spin chain \cite{BratteliRobinson1987,BratteliRobinson1997,FannesNachtergaeleWerner1992}.
A family of states
\[
	\psi_m:B\left(\mcK_{[1,m]}\right)\to\mbC,
	\qquad
	m\ge1,
\]
is called compatible if
\[
	\psi_{m+1}\left(X\otimes I_{\mcK}\right)
	=
	\psi_m\left(X\right),
	\qquad
	X\in B\left(\mcK_{[1,m]}\right).
\]
By the universal property of the inductive limit, every compatible family of local states defines a unique state on \(\mcA_{\mbN}\).

We are now ready to prove \Cref{thm:intro-mps-thermodynamic-limit}. 

\introMpsThermodynamicLimit*

\begin{proof}
	We first convert the right-tail MPS products into the pullback products of \Cref{sec:det}.
	Define a two-sided sequence of CPTP maps on \(B(\mcH)\) by
	\[
		\widetilde\Phi_{-r}
		:=
		\Phi_r,
		\qquad
		r\ge1,
	\]
	and by
	\[
		\widetilde\Phi_t
		:=
		\operatorname{id}_{B(\mcH)},
		\qquad
		t\ge0.
	\]
	With the chronological convention of \Cref{sec:det}, one has, for \(0\le q<n\),
	\[
		\widetilde\Phi_{-q:-n}
		=
		\Theta_{q,n}.
	\]
	Indeed,
	\[
		\widetilde\Phi_{-q:-n}
		=
		\widetilde\Phi_{-q-1}
		\circ
		\widetilde\Phi_{-q-2}
		\circ\cdots\circ
		\widetilde\Phi_{-n}
		=
		\Phi_{q+1}\circ\Phi_{q+2}\circ\cdots\circ\Phi_n.
	\]

	We check the pullback hypothesis for the reversed sequence.
	If \(t=-q\le0\), then the assumption gives
	\[
		\kappa_{\rm tr}
		\left(
			\widetilde\Phi_{t:-n}
		\right)
		=
		\kappa_{\rm tr}
		\left(
			\Theta_{q,n}
		\right)
		\longrightarrow0.
	\]
	If \(t>0\), then
	\[
		\widetilde\Phi_{t:-n}
		=
		\widetilde\Phi_{t:0}
		\circ
		\widetilde\Phi_{0:-n}
		=
		\Theta_{0,n},
	\]
	because the maps \(\widetilde\Phi_0,\ldots,\widetilde\Phi_{t-1}\) are identities.
	Hence
	\[
		\kappa_{\rm tr}
		\left(
			\widetilde\Phi_{t:-n}
		\right)
		\le
		\kappa_{\rm tr}
		\left(
			\Theta_{0,n}
		\right)
		\longrightarrow0.
	\]
	Thus, the reversed two-sided sequence satisfies pullback trace-memory loss.

	By \Cref{thm:pullback-canonical-replacement}, there is a unique pullback boundary family
	\[
		\widetilde\rho_t\in\mcS(\mcH),
		\qquad
		t\in\mbZ,
	\]
	satisfying
	\[
		\widetilde\rho_{t+1}
		=
		\widetilde\Phi_t(\widetilde\rho_t),
		\qquad
		t\in\mbZ.
	\]
	Define, for \(r\ge1\),
	\[
		\rho_r
		:=
		\widetilde\rho_{-r+1}.
	\]
	Then
	\[
		\rho_r
		=
		\widetilde\rho_{-r+1}
		=
		\widetilde\Phi_{-r}(\widetilde\rho_{-r})
		=
		\Phi_r(\rho_{r+1}),
	\]
	so \((\rho_r)_{r\ge1}\) is a right boundary sequence for the MPS transfer products.

	We next prove the uniqueness of this right boundary sequence.
	Suppose that \((\eta_r)_{r\ge1}\subset\mcS(\mcH)\) is another sequence satisfying
	\[
		\eta_r
		=
		\Phi_r(\eta_{r+1}),
		\qquad
		r\ge1.
	\]
	Extend it to a two-sided family by setting
	\[
		\widetilde\eta_{-r+1}
		:=
		\eta_r,
		\qquad
		r\ge1,
	\]
	and
	\[
		\widetilde\eta_t
		:=
		\eta_1,
		\qquad
		t\ge0.
	\]
	Since \(\widetilde\Phi_t=\operatorname{id}\) for \(t\ge0\), this family satisfies
	\[
		\widetilde\eta_{t+1}
		=
		\widetilde\Phi_t(\widetilde\eta_t),
		\qquad
		t\in\mbZ.
	\]
	Thus, for every \(s<t\),
	\[
		\widetilde\eta_t
		=
		\widetilde\Phi_{t:s}(\widetilde\eta_s).
	\]
	By the quantitative estimate in \Cref{thm:det-replacement-approximation}, applied to the product \(\widetilde\Phi_{t:s}\) with reference state \(\widetilde\eta_s\), the replacement channel \(Y\mapsto\tr{Y}\widetilde\eta_t\) satisfies
	\[
		\norm{
			\widetilde\Phi_{t:s}
			-
			\left(
				Y\mapsto\tr{Y}\widetilde\eta_t
			\right)
		}_{1\to1}
		\le
		4
		\kappa_{\rm tr}
		\left(
			\widetilde\Phi_{t:s}
		\right).
	\]
	Letting \(s\to-\infty\), the right-hand side tends to zero for every fixed \(t\).
	Therefore \((\widetilde\eta_t)_{t\in\mbZ}\) is also a family of pullback replacement centers.
	The uniqueness statement in \Cref{thm:pullback-canonical-replacement} gives
	\[
		\widetilde\eta_t
		=
		\widetilde\rho_t,
		\qquad
		t\in\mbZ.
	\]
	In particular, \(\eta_r=\rho_r\) for every \(r\ge1\).
	This proves the uniqueness of the right boundary sequence.

	We now construct the infinite-volume physical state.
	For \(m\ge1\), define
	\[
		\varphi_\infty^{(m)}(X)
		:=
		\tr{
			\widehat X_{[1,m]}(\rho_{m+1})
		},
		\qquad
		X\in\mcA_{[1,m]}.
	\]
	We first show that \(\varphi_\infty^{(m)}\) is a state.
	Let
	\[
		V_m:\mcH\to\mcK_{[1,m]}\otimes\mcH
	\]
	be the isometry
	\[
		V_m\xi
		:=
		\sum_{\mathbf i}
		\ket{\mathbf i}\otimes K_{\mathbf i}^{[1,m]}\xi.
	\]
	For \(X\in\mcA_{[1,m]}\), the scalar functional obtained by tracing the inserted transfer map is
	\[
		\tr{
			\widehat X_{[1,m]}(Y)
		}
		=
		\tr{F_XY},
		\qquad
		Y\in B(\mcH),
	\]
	where
	\[
		F_X
		=
		V_m^*
		\left(
			X\otimes I_{\mcH}
		\right)
		V_m.
	\]
	The map \(X\mapsto F_X\) is positive and unital.
	Since \(\rho_{m+1}\) is a state, \(X\mapsto\tr{F_X\rho_{m+1}}\) is a state on \(\mcA_{[1,m]}\).
	Hence each \(\varphi_\infty^{(m)}\) is a state.

	We next prove compatibility.
	For \(X\in\mcA_{[1,m]}\) and \(Y\in B(\mcH)\), inserting the identity at site \(m+1\) gives
	\[
		\widehat{X\otimes I_{\mcK}}_{[1,m+1]}(Y)
		=
		\widehat X_{[1,m]}
		\left(
			\Phi_{m+1}(Y)
		\right).
	\]
	Using \(\rho_{m+1}=\Phi_{m+1}(\rho_{m+2})\), we obtain
	\[
	\begin{aligned}
		\varphi_\infty^{(m+1)}
		\left(
			X\otimes I_{\mcK}
		\right)
		&=
		\tr{
			\widehat{X\otimes I_{\mcK}}_{[1,m+1]}
			(\rho_{m+2})
		} \\
		&=
		\tr{
			\widehat X_{[1,m]}
			\left(
				\Phi_{m+1}(\rho_{m+2})
			\right)
		} \\
		&=
		\tr{
			\widehat X_{[1,m]}(\rho_{m+1})
		} \\
		&=
		\varphi_\infty^{(m)}(X).
	\end{aligned}
	\]
	Thus, the local states are compatible.

    The compatible family defines a linear functional on the algebraic local algebra
    \[
    	\mcA_{\rm loc}
    	:=
    	\bigcup_{m\ge1}\mcA_{[1,m]}.
    \]
    Indeed, if an observable is represented in two different local algebras, compatibility shows that the two values coincide.
    We denote this functional by \(\varphi_{\rm loc}\).
    For \(X\in\mcA_{[1,m]}\), we have
    \[
    	\left|
    		\varphi_{\rm loc}(X)
    	\right|
    	=
    	\left|
    		\varphi_\infty^{(m)}(X)
    	\right|
    	\le
    	\norm{X}_\infty,
    \]
    because \(\varphi_\infty^{(m)}\) is a state on \(\mcA_{[1,m]}\).
    Also \(\varphi_{\rm loc}(I)=1\), and therefore \(\norm{\varphi_{\rm loc}}=1\).
    Hence, if \((X_j)_j\subset\mcA_{\rm loc}\) is Cauchy in \(\norm{\cdot}_\infty\), then \((\varphi_{\rm loc}(X_j))_j\) is Cauchy in \(\mbC\).
    It follows that \(\varphi_{\rm loc}\) extends uniquely by continuity to a bounded linear functional \(\varphi_\infty\) on
    \[
    	\mcA_{\mbN}
    	=
    	\overline{
    		\mcA_{\rm loc}
    	}^{\norm{\cdot}_\infty}.
    \]
    The extension is positive.
    Indeed, let \(A\in\mcA_{\mbN}\) satisfy \(A\ge0\).
    Choose \(B_j\in\mcA_{\rm loc}\) such that \(B_j\to A^{1/2}\) in norm.
    Then \(B_j^*B_j\in\mcA_{\rm loc}\), \(B_j^*B_j\ge0\), and \(B_j^*B_j\to A\) in norm.
    Therefore
    \[
    	\varphi_\infty(A)
    	=
    	\lim_{j\to\infty}
    	\varphi_{\rm loc}(B_j^*B_j)
    	\ge0.
    \]
    Since \(\varphi_\infty(I)=1\), the functional \(\varphi_\infty\) is a state.
    The displayed formula for \(\varphi_\infty(X)\) holds on every local algebra by construction.
    Uniqueness follows because \(\mcA_{\rm loc}\) is dense in \(\mcA_{\mbN}\).

	It remains to prove the convergence of the normalized trace-closed MPS states.
	Fix \(m\ge1\) and \(X\in\mcA_{[1,m]}\).
	For \(n\ge m\), set
	\[
		N_n(X)
		:=
		\operatorname{Tr}_{\rm sup}
		\left(
			\widehat X_{[1,m]}\circ\Theta_{m,n}
		\right),
	\]
	and
	\[
		Z_n
		:=
		\operatorname{Tr}_{\rm sup}
		\left(
			\widehat I_{[1,m]}\circ\Theta_{m,n}
		\right).
	\]
	By \Cref{lem:mps-finite-transfer-formula}, one has
	\[
		Z_n
		=
		\inner{\Psi_n}{\Psi_n},
	\]
	and, whenever \(Z_n\ne0\),
	\[
		\varphi_n(X)
		=
		\frac{N_n(X)}{Z_n}.
	\]

	Define the replacement channel
	\[
		R_{m+1}(Y)
		:=
		\tr{Y}\rho_{m+1},
		\qquad
		Y\in B(\mcH).
	\]
	We claim that
	\[
		\norm{
			\Theta_{m,n}
			-
			R_{m+1}
		}_{1\to1}
		\le
		4
		\kappa_{\rm tr}
		\left(
			\Theta_{m,n}
		\right).
	\]
	To see this, use the reversed product identity
	\[
		\Theta_{m,n}
		=
		\widetilde\Phi_{-m:-n}.
	\]
	Also,
	\[
		\rho_{m+1}
		=
		\widetilde\rho_{-m}
		=
		\widetilde\Phi_{-m:-n}(\widetilde\rho_{-n})
		=
		\Theta_{m,n}(\rho_{n+1}).
	\]
	Therefore \(R_{m+1}\) is exactly the reference-state replacement channel for the product \(\widetilde\Phi_{-m:-n}\) with reference state \(\widetilde\rho_{-n}=\rho_{n+1}\).
	The quantitative estimate in \Cref{thm:det-replacement-approximation} gives the claimed bound.

	We now compare the trace-closed numerator to its limit.
	For any linear map \(L:B(\mcH)\to B(\mcH)\), the elementary superoperator-trace estimate gives
	\[
		\left|
			\operatorname{Tr}_{\rm sup}(L)
		\right|
		\le
		D_{\mcH}^2
		\norm{L}_{1\to1}.
	\]
	The inserted-transfer estimate gives
	\[
		\norm{
			\widehat X_{[1,m]}
		}_{1\to1}
		\le
		\norm{X}_\infty.
	\]
	Hence
	\[
	\begin{aligned}
		\left|
			N_n(X)
			-
			\operatorname{Tr}_{\rm sup}
			\left(
				\widehat X_{[1,m]}\circ R_{m+1}
			\right)
		\right|
		&\le
		D_{\mcH}^2
		\norm{
			\widehat X_{[1,m]}
			\circ
			\left(
				\Theta_{m,n}
				-
				R_{m+1}
			\right)
		}_{1\to1} \\
		&\le
		4D_{\mcH}^2
		\norm{X}_\infty
		\kappa_{\rm tr}
		\left(
			\Theta_{m,n}
		\right).
	\end{aligned}
	\]

	We identify the limiting superoperator trace.
	If \(R_\rho(Y):=\tr{Y}\rho\), then for every linear map \(L:B(\mcH)\to B(\mcH)\),
	\[
		\operatorname{Tr}_{\rm sup}(L\circ R_\rho)
		=
		\tr{L(\rho)}.
	\]
	Indeed, \(R_\rho(E_{\alpha\beta})=\delta_{\alpha\beta}\rho\), and summing over the matrix units gives the identity.
	Therefore
	\[
		\operatorname{Tr}_{\rm sup}
		\left(
			\widehat X_{[1,m]}\circ R_{m+1}
		\right)
		=
		\tr{
			\widehat X_{[1,m]}(\rho_{m+1})
		}
		=
		\varphi_\infty(X).
	\]
	Consequently,
	\[
		\left|
			N_n(X)
			-
			\varphi_\infty(X)
		\right|
		\le
		4D_{\mcH}^2
		\norm{X}_\infty
		\kappa_{\rm tr}
		\left(
			\Theta_{m,n}
		\right).
	\]

	Applying the same estimate with \(X=I_{\mcK_{[1,m]}}\) gives the normalization estimate.
	Since \(\widehat I_{[1,m]}\) is CPTP and \(\tr{\rho_{m+1}}=1\),
	\[
		\operatorname{Tr}_{\rm sup}
		\left(
			\widehat I_{[1,m]}\circ R_{m+1}
		\right)
		=
		\tr{
			\widehat I_{[1,m]}(\rho_{m+1})
		}
		=
		1.
	\]
	Thus
	\[
		\left|
			Z_n-1
		\right|
		\le
		4D_{\mcH}^2
		\kappa_{\rm tr}
		\left(
			\Theta_{m,n}
		\right).
	\]

	Since \(\kappa_{\rm tr}(\Theta_{m,n})\to0\) for fixed \(m\), there is \(n_0\ge m+1\) such that, for all \(n\ge n_0\),
	\[
		4D_{\mcH}^2
		\kappa_{\rm tr}
		\left(
			\Theta_{m,n}
		\right)
		\le
		\frac12.
	\]
	For such \(n\), the estimate \(|Z_n-1|\le1/2\) implies that \(Z_n\ne0\).
	Hence \(\inner{\Psi_n}{\Psi_n}=Z_n\ne0\), and the normalized finite-volume state \(\varphi_n\) is defined.
	Moreover, \(|Z_n|^{-1}\le2\).

	Finally, using \(|\varphi_\infty(X)|\le\norm{X}_\infty\), we obtain
	\[
	\begin{aligned}
		\left|
			\varphi_n(X)
			-
			\varphi_\infty(X)
		\right|
		&=
		\left|
			\frac{N_n(X)}{Z_n}
			-
			\varphi_\infty(X)
		\right| \\
		&\le
		2
		\left|
			N_n(X)
			-
			\varphi_\infty(X)
		\right|
		+
		2
		\norm{X}_\infty
		\left|
			Z_n-1
		\right| \\
		&\le
		16D_{\mcH}^2
		\norm{X}_\infty
		\kappa_{\rm tr}
		\left(
			\Theta_{m,n}
		\right).
	\end{aligned}
	\]
	The right-hand side tends to zero as \(n\to\infty\).
	Therefore
	\[
		\varphi_n(X)
		\longrightarrow
		\varphi_\infty(X).
	\]
	This proves the theorem.
\end{proof}

	Under the hypotheses of \Cref{thm:intro-mps-thermodynamic-limit}, the proof is quantitative.
	Fix \(m\ge1\), and set
	\[
		K_{m,n}
		:=
		\kappa_{\rm tr}\left(\Theta_{m,n}\right),
		\qquad
		n\ge m.
	\]
	If
	\[
		4D_{\mcH}^2K_{m,n}
		\le
		\frac12,
		\qquad
		D_{\mcH}:=\dim\mcH,
	\]
	then \(\inner{\Psi_n}{\Psi_n}\ne0\).
	Moreover, for every \(X\in\mcA_{[1,m]}\),
	\[
		\left|
			\varphi_n(X)
			-
			\varphi_\infty(X)
		\right|
		\le
		16D_{\mcH}^2
		\norm{X}_\infty
		K_{m,n}.
	\]
	Thus, every upper bound on \(K_{m,n}\) gives the corresponding thermodynamic-limit rate for the trace-closed MPS expectations.
	The deterministic clocks of \Cref{sec:det-rates} apply to \(K_{m,n}\) after reversing the time index.
	Indeed, if \(\widetilde\Phi_{-j}:=\Phi_j\) for \(j\ge1\), then, for \(m<n\),
	\[
		\Theta_{m,n}
		=
		\widetilde\Phi_{-m:-n}.
	\]
	In particular, if \(a_j\in[0,1]\) and
	\[
		\kappa_{\rm tr}\left(\Phi_j\right)
		\le
		1-a_j,
		\qquad
		j=m+1,\ldots,n,
	\]
	then \Cref{prop:det-contraction-clocks} gives
	\[
		K_{m,n}
		\le
		\prod_{j=m+1}^{n}
		\left(1-a_j\right)
		\le
		\exp\left(
			-\sum_{j=m+1}^{n}a_j
		\right).
	\]
	Also, for every \(r\in(0,1)\), if
	\[
		G_r(m,n)
		:=
		\#\left\{
			j\in\left\{m+1,\ldots,n\right\}:
			a_j\ge r
		\right\},
	\]
	then
	\[
		K_{m,n}
		\le
		\left(1-r\right)^{G_r(m,n)}.
	\]

	Likewise, the good-block clocks of \Cref{prop:uniform-good-block} translate directly to right-tail products.
	For example, suppose that there exist integers \(\ell\ge1\), \(M\ge\ell\), and \(q\in(0,1)\) such that every window of \(M\) consecutive auxiliary maps contains a consecutive length-\(\ell\) subblock whose trace-Dobrushin coefficient is at most \(q\).
	That is, for every admissible \(p\), there exists \(u\in\{p,\ldots,p+M-\ell\}\) such that
	\[
		\kappa_{\rm tr}
		\left(
			\Theta_{u,u+\ell}
		\right)
		\le q.
	\]
	Then
	\[
		K_{m,n}
		\le
		q^{
			\left\lfloor
				\frac{n-m}{M}
			\right\rfloor
		}.
	\]
	Hence, no additional MPS-specific rate argument is needed once the right-tail trace-Dobrushin coefficient has been estimated.

    \medskip

We now prove \Cref{thm:intro-mps-correlation-bound}. 

\introMpsCorrelationBound*

\begin{proof}
	We use the infinite-volume state and the right boundary sequence constructed in \Cref{thm:intro-mps-thermodynamic-limit}.
	Write the observables as elements of the left-anchored algebras by setting
	\[
		A^{[1,q]}
		:=
		I_{\mcK}^{\otimes(p-1)}
		\otimes A
		\in\mcA_{[1,q]}.
	\]
	Set
	\[
		\mcT_A
		:=
		\widehat{A^{[1,q]}}_{[1,q]},
		\qquad
		\mcT_B
		:=
		\widehat B_{[r,s]}.
	\]
	Since \(q+1<r\), the gap transfer map is
	\[
		T_{\rm gap}
		:=
		\widehat I_{[q+1,r-1]}
		=
		\Theta_{q,r-1}.
	\]
    By the block factorization of the inserted transfer maps,
    \[
    	\widehat{
    		A^{[1,q]}
    		\otimes
    		I_{\mcK}^{\otimes(r-q-1)}
    		\otimes
    		B
    	}_{[1,s]}
    	=
    	\mcT_A
    	\circ
    	T_{\rm gap}
    	\circ
    	\mcT_B.
    \]
    Similarly,
    \[
    	\widehat{
    		I_{\mcK}^{\otimes(r-1)}
    		\otimes B
    	}_{[1,s]}
    	=
    	\widehat I_{[1,r-1]}
    	\circ
    	\mcT_B.
    \]
	The product \(AB\) denotes the product of the canonically embedded observables on \(\mcA_{\mbN}\).
	Since the supports are disjoint, this is the same as the tensor observable
	\[
		A^{[1,q]}
		\otimes
		I_{\mcK}^{\otimes(r-q-1)}
		\otimes
		B
		\in
		\mcA_{[1,s]}.
	\]

	Define
	\[
		Y_B
		:=
		\mcT_B(\rho_{s+1}),
		\qquad
		z_B
		:=
		\tr{Y_B}.
	\]
	We first identify \(z_B\) with the expectation of \(B\).
	As an observable on \([1,s]\), \(B\) is \(I_{\mcK}^{\otimes(r-1)}\otimes B\).
	Therefore
	\[
		\varphi_\infty(B)
		=
		\tr{
			\widehat I_{[1,r-1]}
			\left(
				\mcT_B(\rho_{s+1})
			\right)
		}.
	\]
	The map \(\widehat I_{[1,r-1]}\) is trace-preserving.
	Hence
	\[
		\varphi_\infty(B)
		=
		\tr{Y_B}
		=
		z_B.
	\]

	Next, the boundary recursion gives
	\[
		\rho_r
		=
		\widehat I_{[r,s]}(\rho_{s+1}).
	\]
	It also gives
	\[
		\rho_{q+1}
		=
		T_{\rm gap}(\rho_r).
	\]
	The block factorization of the inserted transfer maps yields
	\[
		\varphi_\infty(AB)
		=
		\tr{
			\mcT_A
			\left(
				T_{\rm gap}
				\left(
					Y_B
				\right)
			\right)
		}.
	\]
	The same formula with only \(A\) inserted gives
	\[
		\varphi_\infty(A)
		=
		\tr{
			\mcT_A
			\left(
				T_{\rm gap}
				\left(
					\rho_r
				\right)
			\right)
		}.
	\]
	Using \(z_B=\varphi_\infty(B)\), we obtain
	\[
		\varphi_\infty(AB)
		-
		\varphi_\infty(A)\varphi_\infty(B)
		=
		\tr{
			\mcT_A
			\left(
				T_{\rm gap}
				\left(
					Y_B-z_B\rho_r
				\right)
			\right)
		}.
	\]

	Set
	\[
		Z_B
		:=
		Y_B-z_B\rho_r.
	\]
	Then
	\[
		\tr{Z_B}
		=
		0.
	\]
	Moreover,
	\[
		\norm{Y_B}_1
		\le
		\norm{\mcT_B}_{1\to1}
		\norm{\rho_{s+1}}_1
		=
		\norm{\mcT_B}_{1\to1}.
	\]
	Thus
	\[
		|z_B|
		=
		|\tr{Y_B}|
		\le
		\norm{Y_B}_1
		\le
		\norm{\mcT_B}_{1\to1}.
	\]
	Since \(\norm{\rho_r}_1=1\), it follows that
	\[
		\norm{Z_B}_1
		\le
		2\norm{\mcT_B}_{1\to1}.
	\]

	By \Cref{rem:complex-tracezero-control}, applied to the CPTP map \(T_{\rm gap}\), we have
	\[
		\norm{
			T_{\rm gap}(Z_B)
		}_1
		\le
		2
		\kappa_{\rm tr}(T_{\rm gap})
		\norm{Z_B}_1.
	\]
	Therefore
	\[
		\norm{
			T_{\rm gap}(Z_B)
		}_1
		\le
		4
		\norm{\mcT_B}_{1\to1}
		\kappa_{\rm tr}(T_{\rm gap}).
	\]

	We now test this trace-zero error against the left observable \(A\).
	The inserted-transfer compression estimate gives
	\[
		\left|
			\tr{
				\mcT_A(W)
			}
		\right|
		\le
		\norm{A^{[1,q]}}_\infty
		\norm{W}_1
		=
		\norm{A}_\infty
		\norm{W}_1
	\]
	for every \(W\in B(\mcH)\).
	Applying this estimate with \(W=T_{\rm gap}(Z_B)\) gives
	\[
	\begin{aligned}
		\left|
			\varphi_\infty(AB)
			-
			\varphi_\infty(A)\varphi_\infty(B)
		\right|
		&\le
		\norm{A}_\infty
		\norm{
			T_{\rm gap}(Z_B)
		}_1 \\
		&\le
		4
		\norm{A}_\infty
		\norm{\mcT_B}_{1\to1}
		\kappa_{\rm tr}(T_{\rm gap}).
	\end{aligned}
	\]
	Finally, the inserted-transfer norm estimate gives
	\[
		\norm{\mcT_B}_{1\to1}
		=
		\norm{
			\widehat B_{[r,s]}
		}_{1\to1}
		\le
		\norm{B}_\infty.
	\]
	Since \(T_{\rm gap}=\Theta_{q,r-1}\), we conclude that
	\[
		\left|
			\varphi_\infty(AB)
			-
			\varphi_\infty(A)\varphi_\infty(B)
		\right|
		\le
		4
		\norm{A}_\infty
		\norm{B}_\infty
		\kappa_{\rm tr}\left(\Theta_{q,r-1}\right).
	\]
	This proves the stated correlation bound.
\end{proof}

%%%%%%%%%%%%%%%%%%%%%%%%%%%%%%%%%%%%%%%%%%%%%%%%%%%%%%%%%%%%%%%%%%%%%%%%%%%%
%%%%%%%%%%%%%%%%%%%%%%%%%%%%%%%%%%%%%%%%%%%%%%%%%%%%%%%%%%%%%%%%%%%%%%%%%%%%

\subsection{Random MPS from Stationary CPTP Transfer Cocycles}
\label{subsec:random-cptp-mps}

%%%%%%%%%%%%%%%%%%%%%%%%%%%%%%%%%%%%%%%%%%%%%%%%%%%%%%%%%%%%%%%%%%%%%%%%%%%%
%%%%%%%%%%%%%%%%%%%%%%%%%%%%%%%%%%%%%%%%%%%%%%%%%%%%%%%%%%%%%%%%%%%%%%%%%%%%

The notation in this subsection is the random MPS notation introduced in \Cref{sec:intro-random-mps}.
Thus \((\Omega,\mcF,\pr,\theta)\) is an invertible probability-preserving system, \(K_i(\omega)\) is a measurable left-canonical tensor field, and \(\Phi_\omega\) is the associated CPTP transfer cocycle.
The tensor placed at site \(n\) is \(K_i^{[n]}(\omega)=K_i(\theta^{-n}\omega)\), so the spatial direction of the MPS follows the inverse base transformation.
Since \(\theta\) is invertible and \(\pr\)-preserving, the map \(\tau:=\theta^{-1}\) is also \(\pr\)-preserving.
Equivalently, \(K_i^{[n]}(\omega)=K_i(\tau^n\omega)\), so the tensor sequence is stationary over the base \((\Omega,\mcF,\pr,\tau)\).
We keep the random-channel notation of \Cref{sec:random}, which is written over the original base transformation \(\theta\).
With this convention, the MPS right-tail products are pullback products for the \(\theta\)-cocycle.
For \(0\le m<n\), one has
\begin{equation}
\label{eq:random-mps-right-tail-pullback}
    \Theta_{\omega;m,n}
    =
    \Phi_{\theta^{-(m+1)}\omega}
    \circ
    \cdots
    \circ
    \Phi_{\theta^{-n}\omega}
    =
    \Phi_{\theta^{-m}\omega;0:-(n-m)}.
\end{equation}
This identity is the bridge between the random replacement theory of \Cref{thm:negative-exponent-replacement} and the deterministic MPS estimates of \Cref{subsec:det-cptp-mps}.
It is also the reason that the pullback constants \(C_\beta^-\), rather than the forward constants \(C_\beta^+\), appear in the random MPS bounds.
The random boundary state used below is still the \(\theta\)-stationary pullback state, namely \(\Phi_\eta(\rho_\eta)=\rho_{\theta\eta}\).
This stationarity relation gives the deterministic right-boundary sequence \(\rho_r^\omega=\rho_{\theta^{-(r-1)}\omega}\) for each fixed realization \(\omega\).
Thus, the proofs below are obtained by applying the deterministic thermodynamic limit and clustering results fiberwise, with the pullback contraction supplied by \Cref{thm:negative-exponent-replacement}.

\introRandomMpsThermodynamicLimit*

\begin{proof}

    Let \(\rho:\Omega\to\mcS(\mcH)\), \(\beta\), \(C_\beta^-\), and \(\Omega_*\) be given by \Cref{thm:negative-exponent-replacement}.
    We use \(C_\beta^-\) as constructed in the proof of \Cref{thm:negative-exponent-replacement}.
    In that proof, \(C_\beta^-\) is chosen so that there is a full-measure set \(E_{\rm coef}^-\) on which, for every \(\eta\in E_{\rm coef}^-\) and every \(L\ge1\),
    \[
        \kappa_{\rm tr}
        \left(
            \Phi_{\eta;0:-L}
        \right)
        \le
        C_\beta^-(\eta)e^{\beta(\eta)L}.
    \]
    Let \(E_{\rm stat}\) be the full-measure set on which
    \[
        \Phi_\eta(\rho_\eta)=\rho_{\theta\eta}
    \]
    holds.
    Let \(E_{\rm can}\) be the full-measure set on which the left-canonical identities hold.
    Let \(E_{\rm fin}\) be the full-measure set on which \(C_\beta^-(\eta)<\infty\).
    Let \(E_\beta\) be the full-measure set on which \(\beta(\theta\eta)=\beta(\eta)\) and \(\beta(\eta)<0\).
    After replacing \(\Omega_*\) by
    \[
        \Omega_*
        \cap
        \bigcap_{j\in\mbZ}
        \theta^{-j}
        \left(
            E_{\rm coef}^-
            \cap
            E_{\rm stat}
            \cap
            E_{\rm can}
            \cap
            E_{\rm fin}
            \cap
            E_\beta
        \right),
    \]
    we may assume that \(\Omega_*\) is still \(\theta\)-invariant and has full measure, and that all the above properties hold for every \(\eta=\theta^j\omega\), \(j\in\mbZ\), whenever \(\omega\in\Omega_*\).

	Fix \(\omega\in\Omega_*\).
	For this realization, the tensors
	\[
		K_i^{[n]}(\omega)
		=
		K_i(\theta^{-n}\omega)
	\]
	define a deterministic inhomogeneous left-canonical MPS.
	The associated site-\(n\) auxiliary channel is
	\[
		\Phi_n^\omega
		=
		\Phi_{\theta^{-n}\omega}.
	\]
	For \(0\le m<n\), the deterministic right-tail product is
	\[
		\Theta_{\omega;m,n}
		=
		\Phi_{\theta^{-(m+1)}\omega}
		\circ
		\Phi_{\theta^{-(m+2)}\omega}
		\circ\cdots\circ
		\Phi_{\theta^{-n}\omega}.
	\]
	Equivalently,
	\[
		\Theta_{\omega;m,n}
		=
		\Phi_{\theta^{-m}\omega;0:-(n-m)}.
	\]

	We verify the deterministic memory-loss hypothesis for this realization.
	Fix \(q\ge0\).
	For \(n>q\), set \(L:=n-q\).
	Then
	\[
		\kappa_{\rm tr}
		\left(
			\Theta_{\omega;q,n}
		\right)
		=
		\kappa_{\rm tr}
		\left(
			\Phi_{\theta^{-q}\omega;0:-L}
		\right).
	\]
	By the pullback coefficient estimate,
	\[
		\kappa_{\rm tr}
		\left(
			\Theta_{\omega;q,n}
		\right)
		\le
		C_\beta^-(\theta^{-q}\omega)
		e^{\beta(\theta^{-q}\omega)(n-q)}.
	\]
	Since \(\beta\) is \(\theta\)-invariant,
	\[
		\beta(\theta^{-q}\omega)=\beta(\omega).
	\]
	Thus
	\[
		\kappa_{\rm tr}
		\left(
			\Theta_{\omega;q,n}
		\right)
		\le
		C_\beta^-(\theta^{-q}\omega)
		e^{\beta(\omega)(n-q)}
		\longrightarrow0.
	\]
	Hence, the deterministic hypothesis of \Cref{thm:intro-mps-thermodynamic-limit} holds for the realization \(\omega\).

    By the deterministic memory-loss estimate just proved, \Cref{thm:intro-mps-thermodynamic-limit} applies to the fixed realization \(\omega\).
    Let
    \[
        \left(
            \widetilde\rho_r^\omega
        \right)_{r\ge1}
    \]
    and \(\widetilde\varphi_{\infty,\omega}\) denote the right boundary sequence and infinite-volume state produced by that fiberwise application.
    We now identify this fiberwise boundary sequence with the random stationary boundary supplied by \Cref{thm:negative-exponent-replacement}.
    For \(r\ge1\), define
    \[
        \rho_r^\omega
        :=
        \rho_{\theta^{-(r-1)}\omega}.
    \]
    Then
    \[
        \Phi_r^\omega(\rho_{r+1}^\omega)
        =
        \Phi_{\theta^{-r}\omega}
        \left(
            \rho_{\theta^{-r}\omega}
        \right)
        =
        \rho_{\theta^{-(r-1)}\omega}
        =
        \rho_r^\omega.
    \]
    Thus \((\rho_r^\omega)_{r\ge1}\) is a right boundary sequence for the fixed realization \(\omega\).
    By uniqueness in the fiberwise application of \Cref{thm:intro-mps-thermodynamic-limit},
    \[
        \widetilde\rho_r^\omega
        =
        \rho_r^\omega
        =
        \rho_{\theta^{-(r-1)}\omega}
        \qquad
        r\ge1.
    \]
    In particular,
    \[
        \widetilde\rho_{m+1}^\omega
        =
        \rho_{\theta^{-m}\omega}.
    \]
    Therefore, for every \(m\ge1\) and every \(X\in\mcA_{[1,m]}\),
    \[
        \widetilde\varphi_{\infty,\omega}(X)
        =
        \tr{
            \widehat X_{\omega;[1,m]}
            \left(
                \rho_{\theta^{-m}\omega}
            \right)
        }.
    \]
    For \(\omega\in\Omega_*\), set
    \[
        \varphi_{\infty,\omega}
        :=
        \widetilde\varphi_{\infty,\omega}.
    \]
    This proves the claimed local formula on \(\Omega_*\).

	It remains to prove the quantitative trace-closed convergence estimate.
	Fix \(m\ge1\) and \(X\in\mcA_{[1,m]}\).
	The quantitative estimate in the proof of \Cref{thm:intro-mps-thermodynamic-limit} gives, for all sufficiently large \(n\),
	\[
		\left|
			\varphi_{n,\omega}(X)
			-
			\varphi_{\infty,\omega}(X)
		\right|
		\le
		16D_{\mcH}^2
		\norm{X}_\infty
		\kappa_{\rm tr}
		\left(
			\Theta_{\omega;m,n}
		\right).
	\]
	Using
	\[
		\Theta_{\omega;m,n}
		=
		\Phi_{\theta^{-m}\omega;0:-(n-m)},
	\]
	we obtain
	\[
		\kappa_{\rm tr}
		\left(
			\Theta_{\omega;m,n}
		\right)
		\le
		C_\beta^-(\theta^{-m}\omega)
		e^{\beta(\theta^{-m}\omega)(n-m)}.
	\]
	Again \(\beta(\theta^{-m}\omega)=\beta(\omega)\).
	Therefore
	\[
		\left|
			\varphi_{n,\omega}(X)
			-
			\varphi_{\infty,\omega}(X)
		\right|
		\le
		16D_{\mcH}^2
		\norm{X}_\infty
		C_\beta^-(\theta^{-m}\omega)
		e^{\beta(\omega)(n-m)}
	\]
	for all sufficiently large \(n\).
	Since \(\beta(\omega)<0\), the right-hand side tends to zero.
	This proves the local convergence of the normalized trace-closed expectations.
	The same fiberwise deterministic result also gives \(\inner{\Psi_n^\omega}{\Psi_n^\omega}\ne0\) for all sufficiently large \(n\), so the normalized trace-closed expectations are well-defined in this range.

	We now prove the measurability of the random state field.
    Fix a reference state \(\psi_0\) on \(\mcA_{\mbN}\).
    For \(\omega\notin\Omega_*\), define
    \[
        \varphi_{\infty,\omega}
        :=
        \psi_0.
    \]
    First fix a local observable \(X\in\mcA_{[1,m]}\).
    On \(\Omega_*\), the local formula gives
    \[
        \varphi_{\infty,\omega}(X)
        =
        \tr{
            \widehat X_{\omega;[1,m]}
            \left(
                \rho_{\theta^{-m}\omega}
            \right)
        }.
    \]
    The map
    \[
        \omega\mapsto
        \widehat X_{\omega;[1,m]}
    \]
    is measurable because it is a finite sum of finite products of the measurable matrices \(K_i(\theta^{-r}\omega)\).
    The map
    \[
        \omega\mapsto
        \rho_{\theta^{-m}\omega}
    \]
    is measurable because \(\rho\) and \(\theta^{-m}\) are measurable.
    Since \(\Omega_*\) is measurable and the field is equal to the fixed state \(\psi_0\) on \(\Omega\setminus\Omega_*\), the map
    \[
        \omega\mapsto
        \varphi_{\infty,\omega}(X)
    \]
    is measurable for every local \(X\).
    Choose a countable norm-dense set \(\mcD\) in the algebraic local algebra \(\bigcup_{m\ge1}\mcA_{[1,m]}\), for instance, the local observables with rational real and imaginary matrix coefficients in the fixed product bases.
    For every \(X\in\mcD\), the coordinate map
    \[
        \omega\mapsto
        \varphi_{\infty,\omega}(X)
    \]
    is measurable.
    If \(Z\in\mcA_{\mbN}\), choose \(X_j\in\mcD\) with \(\norm{X_j-Z}_\infty\to0\).
    Since every \(\varphi_{\infty,\omega}\) is a state,
    \[
        \left|
            \varphi_{\infty,\omega}(X_j)
            -
            \varphi_{\infty,\omega}(Z)
        \right|
        \le
        \norm{X_j-Z}_\infty
    \]
    uniformly in \(\omega\).
    Hence \(\omega\mapsto\varphi_{\infty,\omega}(Z)\) is measurable for every \(Z\in\mcA_{\mbN}\).
    Thus \(\omega\mapsto\varphi_{\infty,\omega}\) is weak-\(*\) measurable.

    It remains to prove uniqueness up to \(\pr\)-a.e. equality.
    Let \(\chi_\infty:\Omega\to\mcS(\mcA_{\mbN})\) be another weak-\(*\) measurable state field satisfying the same local formula on a full-measure set.
    For each \(X\in\mcD\), the coordinate maps
    \[
        \omega\mapsto
        \chi_{\infty,\omega}(X)
        \qquad
        \text{and}
        \qquad
        \omega\mapsto
        \varphi_{\infty,\omega}(X)
    \]
    agree on a full-measure set.
    Since \(\mcD\) is countable, there is one full-measure set on which they agree for every \(X\in\mcD\).
    By norm density of \(\mcD\) and continuity of states, one has
    \[
        \chi_{\infty,\omega}
        =
        \varphi_{\infty,\omega}
    \]
    on that full-measure set.
    Therefore, the random infinite-volume state field is unique up to \(\pr\)-a.e. equality.
\end{proof}

We now prove \Cref{thm:intro-random-mps-correlation-bound}.

\introRandomMpsCorrelationBound*

\begin{proof}
    Fix \(\omega\in\Omega_*\).
    By the proof of \Cref{thm:intro-random-mps-thermodynamic-limit}, the deterministic realization
    \[
        K_i^{[n]}(\omega)
        =
        K_i(\theta^{-n}\omega)
    \]
    satisfies the hypotheses of \Cref{thm:intro-mps-thermodynamic-limit}.
    Moreover, the state \(\varphi_{\infty,\omega}\) is the fiberwise deterministic infinite-volume state for this realization.
    Therefore the deterministic correlation bound of \Cref{thm:intro-mps-correlation-bound} applies to \(\varphi_{\infty,\omega}\).

    Let
    \[
        L:=r-q-1.
    \]
    Since \(q+1<r\), one has \(L\ge1\).
    The identity-transfer product across the gap is
    \[
        \Theta_{\omega;q,r-1}
        =
        \Phi_{\theta^{-(q+1)}\omega}
        \circ
        \Phi_{\theta^{-(q+2)}\omega}
        \circ
        \cdots
        \circ
        \Phi_{\theta^{-(r-1)}\omega}.
    \]
    Equivalently,
    \[
        \Theta_{\omega;q,r-1}
        =
        \Phi_{\theta^{-q}\omega;0:-L}.
    \]

    By \Cref{thm:intro-mps-correlation-bound},
    \[
        \left|
            \varphi_{\infty,\omega}(AB)
            -
            \varphi_{\infty,\omega}(A)
            \varphi_{\infty,\omega}(B)
        \right|
        \le
        4
        \norm{A}_\infty
        \norm{B}_\infty
        \kappa_{\rm tr}
        \left(
            \Theta_{\omega;q,r-1}
        \right).
    \]
    By the pullback coefficient estimate used in the proof of \Cref{thm:intro-random-mps-thermodynamic-limit},
    \[
        \kappa_{\rm tr}
        \left(
            \Theta_{\omega;q,r-1}
        \right)
        =
        \kappa_{\rm tr}
        \left(
            \Phi_{\theta^{-q}\omega;0:-L}
        \right)
        \le
        C_\beta^-(\theta^{-q}\omega)
        e^{\beta(\theta^{-q}\omega)L}.
    \]
    Since \(\beta\) is \(\theta\)-invariant on \(\Omega_*\),
    \[
        \beta(\theta^{-q}\omega)
        =
        \beta(\omega).
    \]
    Combining the last two estimates gives
    \[
        \left|
            \varphi_{\infty,\omega}(AB)
            -
            \varphi_{\infty,\omega}(A)
            \varphi_{\infty,\omega}(B)
        \right|
        \le
        4
        \norm{A}_\infty
        \norm{B}_\infty
        C_\beta^-(\theta^{-q}\omega)
        e^{\beta(\omega)L}.
    \]
    This proves the quenched correlation estimate.

    For fixed \(A\) and \(B\), the left-hand side is measurable in \(\omega\) because \(\omega\mapsto\varphi_{\infty,\omega}\) is weak-\(*\) measurable.
\end{proof}

\begin{cor}
\label{cor:random-mps-high-probability-clustering}
    Assume the hypotheses of \Cref{thm:intro-random-mps-correlation-bound}.
    Let \(A\in\mcA_{[p,q]}\) and \(B\in\mcA_{[r,s]}\), where \(1\le p\le q\) and \(q+1<r\le s\).
    Set
    \[
        L:=r-q-1.
    \]
    Define the connected correlation random variable
    \[
        \Gamma_{A,B}(\omega)
        :=
        \left|
            \varphi_{\infty,\omega}(AB)
            -
            \varphi_{\infty,\omega}(A)
            \varphi_{\infty,\omega}(B)
        \right|.
    \]
    If the channel environment is \(\varrho\)-mixing, then for every \(u\in\mbN\) there exists \(C_u<\infty\) such that
    \[
        \pr
        \left\{
            \Gamma_{A,B}
            \le
            C_u
            \norm{A}_\infty
            \norm{B}_\infty
            L^{-u}
        \right\}
        \ge
        1-L^{-u}.
    \]
    If the random variables \((X_j)_{j\in\mbZ}\) are jointly independent, then there exist constants \(C<\infty\) and \(\gamma>0\) such that
    \[
        \pr
        \left\{
            \Gamma_{A,B}
            \le
            C
            \norm{A}_\infty
            \norm{B}_\infty
            e^{-\gamma L}
        \right\}
        \ge
        1-e^{-\gamma L}.
    \]
\end{cor}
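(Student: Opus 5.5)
The plan is to combine the deterministic clustering bound with the annealed coefficient estimates through Markov's inequality. The random quantity $C_\beta^-$ is avoided entirely. Fix $\omega\in\Omega_*$. The proof of \Cref{thm:intro-random-mps-correlation-bound} applies \Cref{thm:intro-mps-correlation-bound} fiberwise, giving the pointwise bound
\[
	\Gamma_{A,B}(\omega)
	\le
	4\norm{A}_\infty\norm{B}_\infty\,
	\kappa_{\rm tr}\left(\Theta_{\omega;q,r-1}\right)
	=
	4\norm{A}_\infty\norm{B}_\infty\,
	\kappa_{\rm tr}\left(\Phi_{\theta^{-q}\omega;0:-L}\right).
\]
By the definitions, $\kappa_{\rm tr}(\Phi_{\theta^{-q}\omega;0:-L})=\kappa_L(\theta^{-q-L}\omega)$. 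Since $\theta$ preserves $\pr$, the random variable $K(\omega):=\kappa_L(\theta^{-q-L}\omega)$ has the same law as $\kappa_L=\kappa_{\omega;L:0}$. In particular $\mathbb E[K]=a_L$. Measurability of $\Gamma_{A,B}$ was noted at the end of the previous proof.

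Next, I bound $a_L$. In the $\varrho$-mixing case, \Cref{lemma:kappa-annealed-superpoly} with $p=2u$ gives $a_L\le A_{2u}L^{-2u}$. Markov's inequality then yields
\[
	\pr\left\{K> A_{2u}L^{-u}\right\}\le \frac{a_L}{A_{2u}L^{-u}}\le L^{-u}.
\]
On the complement, $\Gamma_{A,B}\le 4A_{2u}\norm{A}_\infty\norm{B}_\infty L^{-u}$. We then take $C_u:=4A_{2u}$, enlarged if needed so that $A_{2u}>0$. If $A_{2u}=0$, then $K=0$ a.s. and the claim is trivial.

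In the independent case, \Cref{lemma:kappa-annealed-independent} gives $a_L\le Ae^{-\eta L}$. Markov's inequality at the threshold $Ae^{-\eta L/2}$ gives $\pr\{K>Ae^{-\eta L/2}\}\le e^{-\eta L/2}$. Setting $\gamma:=\eta/2$ and $C:=4A$, with $A\ge1$ without loss of generality, finishes the proof.

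The only delicate point is checking that the pointwise bound holds on the full-measure set $\Omega_*$ and that the law identification of $K$ is correct. That identification comes from the index shift $\kappa_{\omega;0:-L}=\kappa_L(\theta^{-L}\omega)$ applied at $\theta^{-q}\omega$. Both are immediate from the earlier results, so I expect no real obstacle beyond bookkeeping of constants.
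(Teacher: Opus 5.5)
Your proposal is correct and follows the paper's proof essentially step for step. It uses the fiberwise clustering bound, the identification $\kappa_{\rm tr}(\Theta_{\omega;q,r-1})=\kappa_{\theta^{-(r-1)}\omega;L:0}$ together with its equality in law to $\kappa_{\omega;L:0}$, and Markov's inequality applied to \Cref{lemma:kappa-annealed-superpoly} (with exponent $2u$) and \Cref{lemma:kappa-annealed-independent} (with $\gamma=\eta/2$). The only cosmetic difference is that the paper normalizes the constants as $\max\{A_{2u},1\}$ and $\max\{A,1\}$, whereas you treat the degenerate case $A_{2u}=0$ separately.
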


\begin{proof}
    By \Cref{thm:intro-random-mps-correlation-bound}, for every \(\omega\in\Omega_*\),
    \[
        \Gamma_{A,B}(\omega)
        \le
        4
        \norm{A}_\infty
        \norm{B}_\infty
        \kappa_{\rm tr}
        \left(
            \Theta_{\omega;q,r-1}
        \right).
    \]
    The gap product is
    \[
        \Theta_{\omega;q,r-1}
        =
        \Phi_{\theta^{-q}\omega;0:-L}.
    \]
    Hence
    \[
        \kappa_{\rm tr}
        \left(
            \Theta_{\omega;q,r-1}
        \right)
        =
        \kappa_{\theta^{-(r-1)}\omega;L:0}.
    \]
    Since \(\theta\) preserves \(\pr\), the random variable \(\kappa_{\theta^{-(r-1)}\omega;L:0}\) has the same law as \(\kappa_{\omega;L:0}\).
    Suppose first that the channel environment is \(\varrho\)-mixing.
    By \Cref{lemma:kappa-annealed-superpoly}, applied with exponent \(2u\), there exists \(A_{2u}<\infty\) such that
    \[
        \mbE
        \left[
            \kappa_{\rm tr}
            \left(
                \Theta_{\omega;q,r-1}
            \right)
        \right]
        =
        \mbE
        \left[
            \kappa_{\omega;L:0}
        \right]
        \le
        A_{2u}L^{-2u}.
    \]
    Set
    \[
        B_u:=\max\{A_{2u},1\}.
    \]
    Markov's inequality gives
    \[
    \begin{aligned}
        \pr
        \left\{
            \kappa_{\rm tr}
            \left(
                \Theta_{\omega;q,r-1}
            \right)
            >
            B_uL^{-u}
        \right\}
        &\le
        \frac{
            A_{2u}L^{-2u}
        }{
            B_uL^{-u}
        }
        \\
        &\le
        L^{-u}.
    \end{aligned}
    \]
    Therefore, with probability at least \(1-L^{-u}\),
    \[
        \Gamma_{A,B}(\omega)
        \le
        4B_u
        \norm{A}_\infty
        \norm{B}_\infty
        L^{-u}.
    \]
    This proves the \(\varrho\)-mixing claim with \(C_u:=4B_u\).

    Now suppose that the random variables \((X_j)_{j\in\mbZ}\) are jointly independent.
    By \Cref{lemma:kappa-annealed-independent}, there exist \(A<\infty\) and \(\eta>0\) such that
    \[
        \mbE
        \left[
            \kappa_{\rm tr}
            \left(
                \Theta_{\omega;q,r-1}
            \right)
        \right]
        =
        \mbE
        \left[
            \kappa_{\omega;L:0}
        \right]
        \le
        Ae^{-\eta L}.
    \]
    Set
    \[
        B:=\max\{A,1\},
        \qquad
        \gamma:=\frac{\eta}{2}.
    \]
    Markov's inequality gives
    \[
    \begin{aligned}
        \pr
        \left\{
            \kappa_{\rm tr}
            \left(
                \Theta_{\omega;q,r-1}
            \right)
            >
            Be^{-\gamma L}
        \right\}
        &\le
        \frac{
            Ae^{-\eta L}
        }{
            Be^{-\gamma L}
        }
        \\
        &\le
        e^{-\gamma L}.
    \end{aligned}
    \]
    Therefore, with probability at least \(1-e^{-\gamma L}\),
    \[
        \Gamma_{A,B}(\omega)
        \le
        4B
        \norm{A}_\infty
        \norm{B}_\infty
        e^{-\gamma L}.
    \]
    This proves the independent case with \(C:=4B\).
\end{proof}
%%%%%%%%%%%%%%%%%%%%%%%%%%%%%%%%%%%%%%%%%%%%%%%%%%%%%%%%%%%%%%%%%%%%%%%%%%%%
%%%%%%%%%%%%%%%%%%%%%%%%%%%%%%%%%%%%%%%%%%%%%%%%%%%%%%%%%%%%%%%%%%%%%%%%%%%%

\section*{Acknowledgments}
\addcontentsline{toc}{section}{Acknowledgments}

%%%%%%%%%%%%%%%%%%%%%%%%%%%%%%%%%%%%%%%%%%%%%%%%%%%%%%%%%%%%%%%%%%%%%%%%%%%%
%%%%%%%%%%%%%%%%%%%%%%%%%%%%%%%%%%%%%%%%%%%%%%%%%%%%%%%%%%%%%%%%%%%%%%%%%%%%

    The authors acknowledge support by the Danish e-infrastructure Consortium (DeiC) grant 5260-00014. 
    Parts of this work were also supported by Villum Fonden Grant No. 25452 and Grant No. 60842, as well as QMATH Center of Excellence Grant No. 10059. 

%%%%%%%%%%%%%%%%%%%%%%%%%%%%%%%%%%%%%%%%%%%%%%%%%%%%%%%%%%%%%%%%%%%%%%%%%%%%
%%%%%%%%%%%%%%%%%%%%%%%%%%%%%%%%%%%%%%%%%%%%%%%%%%%%%%%%%%%%%%%%%%%%%%%%%%%%

\begin{appendix}

%%%%%%%%%%%%%%%%%%%%%%%%%%%%%%%%%%%%%%%%%%%%%%%%%%%%%%%%%%%%%%%%%%%%%%%%%%%%
%%%%%%%%%%%%%%%%%%%%%%%%%%%%%%%%%%%%%%%%%%%%%%%%%%%%%%%%%%%%%%%%%%%%%%%%%%%%

\section{Additional comparison with contraction coefficients}
\label{app:comparison-coefficients}

%%%%%%%%%%%%%%%%%%%%%%%%%%%%%%%%%%%%%%%%%%%%%%%%%%%%%%%%%%%%%%%%%%%%%%%%%%%%
%%%%%%%%%%%%%%%%%%%%%%%%%%%%%%%%%%%%%%%%%%%%%%%%%%%%%%%%%%%%%%%%%%%%%%%%%%%%

The main text uses only the state-level Markov--Dobrushin comparison in \Cref{subsec:comparison-one-step-criteria}.
For completeness, we record here how the trace-Dobrushin product coefficient compares with several other contraction mechanisms.
These coefficients answer related but distinct questions.
Some are computable one-step upper bounds.
Some are adapted to projective geometry or faithful-state geometry.
Some are designed for spatial many-body dynamics or for completely bounded memory loss.
There is therefore no single linear hierarchy containing all of them.

%%%%%%%%%%%%%%%%%%%%%%%%%%%%%%%%%%%%%%%%%%%%%%%%%%%%%%%%%%%%%%%%%%%%%%%%%%%%
%%%%%%%%%%%%%%%%%%%%%%%%%%%%%%%%%%%%%%%%%%%%%%%%%%%%%%%%%%%%%%%%%%%%%%%%%%%%

\subsection*{CP-order quantum Doeblin coefficients}
\label{app:comparison-doeblin}

%%%%%%%%%%%%%%%%%%%%%%%%%%%%%%%%%%%%%%%%%%%%%%%%%%%%%%%%%%%%%%%%%%%%%%%%%%%%
%%%%%%%%%%%%%%%%%%%%%%%%%%%%%%%%%%%%%%%%%%%%%%%%%%%%%%%%%%%%%%%%%%%%%%%%%%%%

The state-level Markov--Dobrushin coefficient only asks for a common lower bound at the level of output states.
A stronger certificate asks for a lower bound in the completely positive order.
For \(\tau\in\mcS_d\), write \(R_\tau(X):=\tr{X}\tau\) for the replacement channel with output \(\tau\).

\begin{definition}[CP-order quantum Doeblin coefficient]
\label{def:cp-order-doeblin}
Let \(\Phi:\matrices\to\matrices\) be CPTP.
The CP-order quantum Doeblin coefficient is
\[
   \alpha_{\rm Doeb}(\Phi)
   :=
   \max
   \left\{
        c\in[0,1]:
        \exists\tau\in\mcS_d
        \text{ such that }
        \Phi-cR_\tau \text{ is completely positive}
   \right\}.
\]
\end{definition}

Equivalently, this coefficient can be characterized through degradability from the erasure channel:
\[
   \alpha_{\rm Doeb}(\Phi)
   =
   \sup\{\varepsilon:E_\varepsilon\succeq_{\rm deg}\Phi\},
\]
where \(E_\varepsilon\) is the erasure channel and \(\succeq_{\rm deg}\) denotes the degradability order; see \cite{Hirche2024}.
We use the normalized Choi convention
\[
   J(\Phi)
   :=
   (\Phi\otimes\operatorname{id})(|\Omega\rangle\langle\Omega|),
   \qquad
   |\Omega\rangle
   :=
   d^{-1/2}\sum_{i=1}^d |i\rangle\otimes |i\rangle .
\]
With this convention, \(J(R_\tau)=\tau\otimes I/d\), and the coefficient has the semidefinite-program form
\[
   \alpha_{\rm Doeb}(\Phi)
   =
   \max_{\widehat\tau\ge0}
   \left\{
      \tr{\widehat\tau}:
      \widehat\tau\otimes I/d\le J(\Phi)
   \right\}.
\]
With the opposite input-output Choi convention, the tensor factors are interchanged.

The coefficient gives the trace-distance contraction bound
\[
   \kappa_{\rm tr}(\Phi)
   =
   \eta^{\rm Tr}(\Phi)
   \le
   1-\alpha_{\rm Doeb}(\Phi).
\]
Indeed, if \(\Phi-cR_\tau\) is completely positive and \(c<1\), then
\[
   \Phi
   =
   cR_\tau+(1-c)\Psi
\]
for a CPTP map \(\Psi\).
Therefore, for all states \(\rho,\sigma\in\mcS_d\),
\[
   \Phi(\rho)-\Phi(\sigma)
   =
   (1-c)\bigl(\Psi(\rho)-\Psi(\sigma)\bigr),
\]
and trace-norm contractivity gives
\[
   \norm{\Phi(\rho)-\Phi(\sigma)}_1
   \le
   (1-c)\norm{\rho-\sigma}_1.
\]
Optimizing over \(c\) gives the displayed bound.
The case \(c=1\) is the replacement-channel case and gives zero trace-distance contraction.

The paper \cite{Hirche2024} also introduces transpose and Hermitian-relaxed variants.
The transpose variant is
\[
   \alpha_T(\Phi)
   :=
   \alpha_{\rm Doeb}(\mathsf T\circ\Phi)
\]
whenever \(\mathsf T\circ\Phi\) is completely positive, where \(\mathsf T\) denotes matrix transposition.
If \(\mathsf T\circ\Phi\) is not completely positive, we omit this term, or equivalently set \(\alpha_T(\Phi)=-\infty\) in the maximum below.
The Hermitian relaxation is
\[
   \alpha_H(\Phi)
   :=
   \max
   \left\{
        c\in[0,1]:
        \exists X=X^*,\ \tr{X}=1,\ 
        \Phi-cR_X \text{ is completely positive}
   \right\},
\]
where \(R_X(Y):=\tr{Y}X\).
In the definition of \(\alpha_H\), the operator \(X\) is not required to be positive, so \(R_X\) need not be a quantum channel.
It is only a Hermitian relaxation used to bound the trace-distance contraction coefficient.
These variants give an improved estimate
\[
   \eta^{\rm Tr}(\Phi)
   \le
   1-\max\{\alpha_H(\Phi),\alpha_T(\Phi)\}
\]
when the corresponding quantities are applicable.

The CP-order Doeblin condition implies state-level Markov--Dobrushin minorization.
Indeed, if \(\Phi-cR_\tau\) is completely positive, then for every \(P\in\mcP_1\),
\[
   c\tau
   =
   cR_\tau(P)
   \le
   \Phi(P).
\]
Therefore
\[
   \alpha_{\rm Doeb}(\Phi)
   \le
   \alpha_{\rm MD}(\Phi),
   \qquad
   1-\alpha_{\rm MD}(\Phi)
   \le
   1-\alpha_{\rm Doeb}(\Phi).
\]
Thus, CP-order Doeblin minorization is a stronger certificate than state-level Markov--Dobrushin minorization.
Its advantage is that it is algebraic and SDP-computable.
Its disadvantage is that it can be too rigid.

\begin{example}
\label{ex:cp-vs-state-doeblin}
    Let \(d\ge2\), and consider
    \[
       \Phi(X)
       =
       \frac{1}{d+1}\bigl(\tr{X}I+X^T\bigr).
    \]
    This map is CPTP.
    Indeed, with the normalized Choi convention fixed above,
    \[
       J(\Phi)
       =
       \frac{1}{d(d+1)}(I+F),
    \]
    where \(F\) is the flip operator on \(\mbC^d\otimes\mbC^d\).
    Thus \(J(\Phi)\ge0\), because \(I+F\) is twice the projection onto the symmetric subspace.
    Trace preservation is immediate from the formula, since
    \[
       \tr{\Phi(X)}
       =
       \frac{1}{d+1}\bigl(d\,\tr{X}+\tr{X}\bigr)
       =
       \tr{X}.
    \]
    
    For every \(P\in\mcP_1\),
    \[
       \Phi(P)
       =
       \frac{1}{d+1}(I+P^T)
       \ge
       \frac{1}{d+1}I.
    \]
    Hence
    \[
       \alpha_{\rm MD}(\Phi)
       \ge
       \frac{d}{d+1}
       >
       0.
    \]
    
    On the other hand, \(J(\Phi)\) is supported on the symmetric subspace \(\operatorname{Sym}^2(\mbC^d)\).
    Suppose that \(\Phi-cR_\tau\) were completely positive for some \(c>0\) and some \(\tau\in\mcS_d\).
    Then
    \[
       0
       \le
       cJ(R_\tau)
       \le
       J(\Phi).
    \]
    For positive operators \(0\le A\le B\), one has \(\operatorname{supp}(A)\subseteq\operatorname{supp}(B)\).
    Therefore \(J(R_\tau)\) would have to be supported on \(\operatorname{Sym}^2(\mbC^d)\).
    But
    \[
       J(R_\tau)
       =
       \tau\otimes I/d,
    \]
    and hence
    \[
       \operatorname{supp}J(R_\tau)
       =
       \operatorname{supp}(\tau)\otimes\mbC^d.
    \]
    Choose \(0\ne u\in\operatorname{supp}(\tau)\).
    Since \(d\ge2\), choose \(v\in\mbC^d\) with \(v\perp u\).
    Then \(u\otimes v\in\operatorname{supp}J(R_\tau)\), but \(u\otimes v\) is not symmetric, because \(F(u\otimes v)=v\otimes u\ne u\otimes v\).
    This contradicts the required support inclusion.
    Therefore no such \(c>0\) exists, and
    \[
       \alpha_{\rm Doeb}(\Phi)=0.
    \]
    Thus, CP-order Doeblin minorization is strictly stronger than state-level Markov--Dobrushin minorization.
\end{example}

%%%%%%%%%%%%%%%%%%%%%%%%%%%%%%%%%%%%%%%%%%%%%%%%%%%%%%%%%%%%%%%%%%%%%%%%%%%%
%%%%%%%%%%%%%%%%%%%%%%%%%%%%%%%%%%%%%%%%%%%%%%%%%%%%%%%%%%%%%%%%%%%%%%%%%%%%

\subsection*{Hilbert--Birkhoff projective coefficients}
\label{app:comparison-hilbert}

%%%%%%%%%%%%%%%%%%%%%%%%%%%%%%%%%%%%%%%%%%%%%%%%%%%%%%%%%%%%%%%%%%%%%%%%%%%%
%%%%%%%%%%%%%%%%%%%%%%%%%%%%%%%%%%%%%%%%%%%%%%%%%%%%%%%%%%%%%%%%%%%%%%%%%%%%

Projective methods belong to a different geometric framework.
They are designed for positive cone-preserving maps and are especially natural when the map is not trace-preserving.

\begin{definition}[Hilbert--Birkhoff projective coefficient]
    Let \(A,B\ge0\) be nonzero.
    Define
    \[
       h(A,B)
       :=
       \log\big(\sup(A/B)\sup(B/A)\big),
       \qquad
       \sup(A/B):=\inf\{\lambda>0:A\le\lambda B\},
    \]
    with the convention that \(\sup(A/B)=\infty\) if no such \(\lambda\) exists.
    If \(T\) is positive and cone preserving, its projective diameter is
    \[
       \Delta(T):=\sup_{A,B\ge0,\ A,B\ne0}h(TA,TB),
    \]
    and its Hilbert--Birkhoff projective contraction coefficient is
    \[
       \eta_{\rm HB}(T):=\tanh(\Delta(T)/4).
    \]
\end{definition}

The coefficient \(\eta_{\rm HB}\) is submultiplicative and gives Perron--Frobenius type control for positive maps; see \cite{Birkhoff1957,Bushell1973,ReebKastoryanoWolf2011}.
For trace-preserving positive maps on the positive semidefinite cone, finite projective diameter implies base-norm, hence trace-norm, contraction.
In particular, when \(\Delta(T)<\infty\),
\[
   \kappa_{\rm tr}(T)
   \le
   \tanh(\Delta(T)/4)
   <
   1.
\]
Thus finite projective diameter is a sufficient condition for strict trace contraction.

However, this condition is not necessary.
The finite-diameter projective criterion may fail to detect trace-norm contraction toward a boundary state.

\begin{example}
\label{ex:amplitude}
    For \(0<\gamma<1\), define the qubit amplitude-damping channel
    \[
       T_\gamma(X)=K_0XK_0^*+K_1XK_1^*,
    \]
    where
    \[
       K_0=
       \begin{pmatrix}
       1&0\\
       0&\sqrt{1-\gamma}
       \end{pmatrix},
       \qquad
       K_1=
       \begin{pmatrix}
       0&\sqrt{\gamma}\\
       0&0
       \end{pmatrix}.
    \]
    On centered Bloch vectors, the linear part is
    \[
       (x,y,z)\mapsto
       \big(\sqrt{1-\gamma}\,x,\sqrt{1-\gamma}\,y,(1-\gamma)z\big).
    \]
    Therefore
    \[
       \kappa_{\rm tr}(T_\gamma)=\sqrt{1-\gamma}<1.
    \]
    In particular, \(T_\gamma^n(\rho)\to |0\rangle\langle0|\) exponentially for every state \(\rho\).
    
    Nevertheless,
    \[
       T_\gamma^n(|0\rangle\langle0|)
       =
       |0\rangle\langle0|
       \qquad\text{for all }n.
    \]
    Moreover,
    \[
       T_\gamma^n(|1\rangle\langle1|)
       =
       \bigl(1-(1-\gamma)^n\bigr)|0\rangle\langle0|
       +
       (1-\gamma)^n|1\rangle\langle1|,
    \]
    which has full support for every finite \(n\).
    Thus the two outputs \(T_\gamma^n(|0\rangle\langle0|)\) and \(T_\gamma^n(|1\rangle\langle1|)\) have different supports.
    Their Hilbert projective distance is infinite, and hence
    \[
       \Delta(T_\gamma^n)=\infty
       \qquad\text{for every }n\ge1.
    \]
    Thus strict trace contraction can hold even when all finite-step projective diameters are infinite.
    
    The same example also shows that one-step minorization is not necessary for trace contraction.
    Suppose \(B\le T_\gamma(P)\) for all \(P\in\mcP_1\).
    Since \(T_\gamma(|0\rangle\langle0|)=|0\rangle\langle0|\), the operator \(B\) must be of the form \(B=b|0\rangle\langle0|\) with \(b\ge0\).
    Now take
    \[
       |\psi_\varepsilon\rangle
       =
       \sqrt{1-\varepsilon}|0\rangle+\sqrt{\varepsilon}|1\rangle,
       \qquad
       P_\varepsilon=|\psi_\varepsilon\rangle\langle\psi_\varepsilon|.
    \]
    A direct computation gives
    \[
       T_\gamma(P_\varepsilon)-b|0\rangle\langle0|\ge0
       \quad\Longrightarrow\quad
       b\le \gamma\varepsilon.
    \]
    Since this must hold for all \(\varepsilon>0\), one gets \(b=0\).
    Hence
    \[
       \alpha_{\rm MD}(T_\gamma)=0.
    \]
    Since CP-order Doeblin minorization implies state-level Markov--Dobrushin minorization, it follows also that
    \[
       \alpha_{\rm Doeb}(T_\gamma)=0.
    \]
    Thus
    \[
       \alpha_{\rm MD}(T_\gamma)=0,
       \qquad
       \alpha_{\rm Doeb}(T_\gamma)=0,
       \qquad
       \kappa_{\rm tr}(T_\gamma)<1.
    \]
    This example shows that strict trace contraction is strictly more general than strict positivity, finite projective diameter, CP-order Doeblin minorization, and one-step Markov--Dobrushin minorization.
\end{example}

%%%%%%%%%%%%%%%%%%%%%%%%%%%%%%%%%%%%%%%%%%%%%%%%%%%%%%%%%%%%%%%%%%%%%%%%%%%%
%%%%%%%%%%%%%%%%%%%%%%%%%%%%%%%%%%%%%%%%%%%%%%%%%%%%%%%%%%%%%%%%%%%%%%%%%%%%

\subsection*{Projective contraction coefficients}
\label{app:comparison-cnum}

%%%%%%%%%%%%%%%%%%%%%%%%%%%%%%%%%%%%%%%%%%%%%%%%%%%%%%%%%%%%%%%%%%%%%%%%%%%%
%%%%%%%%%%%%%%%%%%%%%%%%%%%%%%%%%%%%%%%%%%%%%%%%%%%%%%%%%%%%%%%%%%%%%%%%%%%%

The random positive-cocycle literature also uses a normalized projective coefficient which is closely related to, but not identical with, the Hilbert--Birkhoff contraction coefficient.

\begin{definition}[Projective contraction coefficient]
    For states \(A,B\in\mcS_d\), set
    \[
       m(A,B)
       :=
       \sup\{\lambda\ge0:\lambda B\le A\},
    \]
    and define
    \[
       d_{\rm proj}(A,B)
       :=
       \frac{1-m(A,B)m(B,A)}{1+m(A,B)m(B,A)}.
    \]
    Let \(\varphi:\matrices\to\matrices\) be a positive map such that \(\tr{\varphi(A)}>0\) for every \(A\in\mcS_d\).
    Define its normalized projective action by
    \[
       \varphi\cdot A
       :=
       \frac{\varphi(A)}{\tr{\varphi(A)}}.
    \]
    The projective contraction coefficient is
    \[
       c(\varphi)
       :=
       \sup_{A,B\in\mcS_d}
       d_{\rm proj}(\varphi\cdot A,\varphi\cdot B).
    \]
\end{definition}

This coefficient is useful for positive maps and random positive cocycles, where the maps need not be trace preserving; see \cite{MovassaghSchenker2021,MovassaghSchenker2022}.
It measures the diameter of the normalized image of the state space in a bounded projective metric.

Let \(h\) denote Hilbert's projective metric on the positive semidefinite cone.
For comparable nonzero positive operators \(A,B\), one has
\[
   h(A,B)
   =
   -\log\bigl(m(A,B)m(B,A)\bigr),
\]
and therefore
\[
   d_{\rm proj}(A,B)
   =
   \tanh\!\left(\frac{h(A,B)}{2}\right).
\]
Thus \(d_{\rm proj}\) is a bounded transform of Hilbert's projective metric.
However, the coefficient \(c(\varphi)\) is not the same as the Hilbert--Birkhoff contraction coefficient
\[
   \eta_{\rm HB}(\varphi)
   =
   \tanh\!\left(\frac{\Delta(\varphi)}{4}\right),
\]
where \(\Delta(\varphi)\) is the Hilbert projective diameter.
The former records the bounded projective diameter of the normalized image, while the latter is the Birkhoff--Hopf Lipschitz contraction ratio for Hilbert's projective metric.

In the CPTP case, the normalization is trivial, but \(c\) remains a projective coefficient rather than a trace-norm memory coefficient.
It is well adapted to strict positivity and projective convergence.
By contrast, \(\kappa_{\rm tr}\) is adapted to trace-norm replacement mixing, including convergence toward nonfaithful boundary states.

For instance, in the amplitude-damping example of \Cref{ex:amplitude}, one has
\[
   \kappa_{\rm tr}(T_\gamma^n)
   =
   (1-\gamma)^{n/2}
   \longrightarrow 0.
\]
Nevertheless,
\[
   T_\gamma^n(|0\rangle\langle0|)
   =
   |0\rangle\langle0|
\]
and
\[
   T_\gamma^n(|1\rangle\langle1|)
   =
   \bigl(1-(1-\gamma)^n\bigr)|0\rangle\langle0|
   +
   (1-\gamma)^n|1\rangle\langle1|.
\]
The first output is rank one, whereas the second has full support for every finite \(n\).
Hence \[m(T_\gamma^n(|0\rangle\langle0|),T_\gamma^n(|1\rangle\langle1|))=0,\] and therefore
\[
   c(T_\gamma^n)=1
   \qquad
   \text{for every }n\ge1.
\]
Thus, the projective contraction coefficient does not detect the trace-norm replacement convergence of amplitude damping toward a boundary state.
This is exactly the kind of boundary-attractor behavior for which the centered trace-Dobrushin coefficient is the more appropriate CPTP product coefficient.

%%%%%%%%%%%%%%%%%%%%%%%%%%%%%%%%%%%%%%%%%%%%%%%%%%%%%%%%%%%%%%%%%%%%%%%%%%%%
%%%%%%%%%%%%%%%%%%%%%%%%%%%%%%%%%%%%%%%%%%%%%%%%%%%%%%%%%%%%%%%%%%%%%%%%%%%%

\subsection*{Entropy, Riemannian, and \texorpdfstring{\(\chi^2\)}{chi-squared} coefficients}
\label{app:comparison-entropy-chi2}

%%%%%%%%%%%%%%%%%%%%%%%%%%%%%%%%%%%%%%%%%%%%%%%%%%%%%%%%%%%%%%%%%%%%%%%%%%%%
%%%%%%%%%%%%%%%%%%%%%%%%%%%%%%%%%%%%%%%%%%%%%%%%%%%%%%%%%%%%%%%%%%%%%%%%%%%%

Entropy and \(L^2\)-type contraction coefficients are adapted to divergence and faithful-state geometries.
They are extremely useful for entropy production, functional inequalities, and local spectral analysis.
They are not the same object as the centered trace-Dobrushin coefficient \(\kappa_{\rm tr}\).

\begin{definition}[Relative entropy contraction coefficient]
    For a CPTP map \(\Phi\), define
    \[
       \eta_{\rm D}(\Phi)
       :=
       \sup_{\rho,\sigma}
       \frac{D(\Phi(\rho)\|\Phi(\sigma))}{D(\rho\|\sigma)},
    \]
    where
    \[
       D(\rho\|\sigma)
       :=
       \tr{\rho(\log\rho-\log\sigma)}.
    \]
    The supremum is taken over pairs of states with \(0<D(\rho\|\sigma)<\infty\).
\end{definition}

\begin{definition}[Riemannian contraction coefficient]
    Fix a monotone Riemannian metric \(g^{\mathfrak m}\) on the faithful state space.
    Assume that \(\Phi(\rho)>0\) whenever \(\rho>0\).
    The associated Riemannian contraction coefficient is
    \[
       \eta_{\rm Riem,\mathfrak m}(\Phi)
       :=
       \sup_{\rho>0}
       \sup_{\substack{X=X^*\\ \tr{X}=0\\ X\ne0}}
       \frac{
          g^{\mathfrak m}_{\Phi(\rho)}(\Phi(X),\Phi(X))
       }{
          g^{\mathfrak m}_{\rho}(X,X)
       }.
    \]
\end{definition}

\begin{definition}[\(\chi^2\) contraction coefficient]
    Let \(\omega>0\) be a faithful state with \(\Phi(\omega)=\omega\), and let \(\chi^2_\omega\) be a chosen quantum \(\chi^2\)-divergence associated with a monotone metric.
    The corresponding \(\chi^2\) contraction coefficient is
    \[
       \eta_{\chi^2,\omega}(\Phi)
       :=
       \sup_{\rho\ne\omega}
       \frac{\chi^2_\omega(\Phi(\rho))}{\chi^2_\omega(\rho)}.
    \]
\end{definition}

When the Riemannian coefficient is defined, the comparison results for monotone metrics imply inequalities of the form
\[
   \eta^{\rm Tr}(\Phi)
   \le
   \sqrt{\eta_{\rm Riem,\mathfrak m}(\Phi)}
\]
in the appropriate setting; see \cite{HiaiRuskai2016,Hirche2024}.
Thus, strict Riemannian contraction implies strict trace-distance contraction.

For a unital qubit channel with Bloch matrix \(A_\Phi\), the usual comparison takes the simple form
\[
   \eta_{\rm Riem,\mathfrak m}(\Phi)
   =
   \eta_{\rm D}(\Phi)
   =
   \|A_\Phi\|_\infty^2,
   \qquad
   \eta^{\rm Tr}(\Phi)
   =
   \|A_\Phi\|_\infty .
\]
Thus, in this case, the trace-distance coefficient is the square root of the corresponding Riemannian and relative-entropy contraction coefficient.

These comparison inequalities are useful, but they do not replace \(\kappa_{\rm tr}\).
The entropy, Riemannian, and \(\chi^2\) coefficients are tied to divergence, infinitesimal, or faithful-reference-state geometries.
By contrast, \(\kappa_{\rm tr}\) is defined directly on the self-adjoint trace-zero space and remains well behaved for boundary-attracting channels.
For example, amplitude damping has a nonfaithful attracting state, so the centered trace-Dobrushin coefficient still captures trace-norm replacement mixing, while faithful-state Riemannian coefficients are not the natural tool for that boundary limit.

%%%%%%%%%%%%%%%%%%%%%%%%%%%%%%%%%%%%%%%%%%%%%%%%%%%%%%%%%%%%%%%%%%%%%%%%%%%%
%%%%%%%%%%%%%%%%%%%%%%%%%%%%%%%%%%%%%%%%%%%%%%%%%%%%%%%%%%%%%%%%%%%%%%%%%%%%

\subsection*{Quantum Wasserstein--Dobrushin conditions}
\label{app:comparison-wasserstein}

%%%%%%%%%%%%%%%%%%%%%%%%%%%%%%%%%%%%%%%%%%%%%%%%%%%%%%%%%%%%%%%%%%%%%%%%%%%%
%%%%%%%%%%%%%%%%%%%%%%%%%%%%%%%%%%%%%%%%%%%%%%%%%%%%%%%%%%%%%%%%%%%%%%%%%%%%

The many-body quantum Dobrushin framework uses a spatial influence matrix rather than a scalar auxiliary-channel coefficient.
It is designed to control how local deviations propagate through a spatial quantum system.

A typical setup starts from a many-body update \(\Phi\) decomposed into local pieces \(\Phi_i\), where \(\Phi_i\) is the part of the dynamics associated with updating site \(i\).
The relevant metric is a quantum Wasserstein-1 norm \(\norm{\cdot}_{W_1}\) on traceless Hermitian perturbations.
In the formulation of \cite{BakshiLiuMoitraTang2025}, the influence of a deviation at site \(j\) on the update at site \(i\) is measured by a quantity of the form
\[
   D^{(\Phi)}_{ij}
   :=
   \sup
   \left\{
      \norm{\Phi_i(X)}_{W_1}:
      X=X^*,\ \tr{X}=0,\ \operatorname{tr}_j X=0,\ \norm{X}_{W_1}=1
   \right\},
\]
up to the normalization convention used in the decomposition of \(\Phi\).
Equivalently, one may take the supremum over pairs of \(j\)-neighboring quantum states.
A quantum Wasserstein--Dobrushin condition is then a contraction condition on this influence matrix, for instance, a bound on its column sums or on its \(1\to1\) norm strictly below one.

This is a spatial path-coupling criterion.
It controls the propagation of local deviations and yields rapid mixing for suitable many-body dynamics.
In the high-temperature Gibbs-state setting, the same framework can also be used to prove structural consequences such as decay of conditional mutual information.
This is a different problem from the scalar trace-memory-loss problem studied in the present paper.
Here the deviation is the whole auxiliary trace-zero component of a finite-dimensional channel product.
In the Wasserstein--Dobrushin setting, the deviation is spatially localized and the coefficient records how that localization propagates across sites.
Thus the two theories use Dobrushin-type language for different geometries: \(\kappa_{\rm tr}\) is a scalar trace-norm memory coefficient for auxiliary channel products, while \(D^{(\Phi)}\) is a spatial influence matrix for many-body quantum dynamics.

%%%%%%%%%%%%%%%%%%%%%%%%%%%%%%%%%%%%%%%%%%%%%%%%%%%%%%%%%%%%%%%%%%%%%%%%%%%%
%%%%%%%%%%%%%%%%%%%%%%%%%%%%%%%%%%%%%%%%%%%%%%%%%%%%%%%%%%%%%%%%%%%%%%%%%%%%

\subsection*{Diamond-norm variants}
\label{app:comparison-diamond}

%%%%%%%%%%%%%%%%%%%%%%%%%%%%%%%%%%%%%%%%%%%%%%%%%%%%%%%%%%%%%%%%%%%%%%%%%%%%
%%%%%%%%%%%%%%%%%%%%%%%%%%%%%%%%%%%%%%%%%%%%%%%%%%%%%%%%%%%%%%%%%%%%%%%%%%%%

One can define completely bounded versions of centered memory loss, but the most naive unconstrained version is not useful for ordinary state-memory loss.

\begin{definition}[Unconstrained diamond centered coefficient]
    For a CPTP map \(\Phi:\matrices\to\matrices\), define
    \[
       \kappa_\diamond(\Phi)
       :=
       \sup_{r\ge1}
       \sup_{\substack{X=X^*\in \matrices\otimes M_r\\ \tr{X}=0,\ X\ne0}}
       \frac{
          \norm{(\Phi\otimes\operatorname{id}_r)(X)}_1
       }{
          \norm{X}_1
       }.
    \]
    Here \(\tr{X}=0\) denotes the global trace on \(\matrices\otimes M_r\).
\end{definition}

For every channel \(\Phi\), one has
\[
   \kappa_\diamond(\Phi)=1.
\]
Indeed, since \(\Phi\otimes\operatorname{id}_r\) is positive and trace preserving, trace-norm contractivity on self-adjoint matrices gives \(\kappa_\diamond(\Phi)\le1\).
Conversely, choose \(r\ge2\), let \(Y=Y^*\in M_r\) be nonzero with \(\tr{Y}=0\), and let \(\rho_*\in\mcS_d\) be any state.
Then \(X:=\rho_*\otimes Y\) is self-adjoint and globally trace zero, and
\[
   (\Phi\otimes\operatorname{id}_r)(X)
   =
   \Phi(\rho_*)\otimes Y.
\]
Since \(\Phi(\rho_*)\) is a state, \(\norm{\Phi(\rho_*)}_1=1\), and hence
\[
   \norm{(\Phi\otimes\operatorname{id}_r)(X)}_1
   =
   \norm{\Phi(\rho_*)}_1\norm{Y}_1
   =
   \norm{Y}_1
   =
   \norm{X}_1.
\]
Thus, the supremum is at least one, and so \(\kappa_\diamond(\Phi)=1\).

The obstruction is that the global trace-zero condition permits deviations that are entirely stored in the reference system.
These deviations are invisible to the system channel and therefore survive unchanged.
Consequently, the unconstrained diamond-centered coefficient cannot detect ordinary state-memory loss.

Meaningful completely bounded variants must remove this reference-only obstruction.
For example, one may restrict to perturbations with fixed reference marginal, equivalently to self-adjoint \(X\in\matrices\otimes M_r\) satisfying
\[
   \operatorname{Tr}_{\matrices}(X)=0,
\]
or one may compare the amplified channel directly with an amplified replacement channel,
\[
   \Phi\otimes\operatorname{id}_r
   \quad\text{versus}\quad
   R_{\rho_*}\otimes\operatorname{id}_r.
\]
Such variants measure entanglement-assisted memory loss.
They are relevant if one wants stability against arbitrary reference systems, but they are not needed for the MPS boundary-stability problems considered in this paper.

%%%%%%%%%%%%%%%%%%%%%%%%%%%%%%%%%%%%%%%%%%%%%%%%%%%%%%%%%%%%%%%%%%%%%%%%%%%%
%%%%%%%%%%%%%%%%%%%%%%%%%%%%%%%%%%%%%%%%%%%%%%%%%%%%%%%%%%%%%%%%%%%%%%%%%%%%

\subsection*{Summary of implications}
\label{app:comparison-summary}

%%%%%%%%%%%%%%%%%%%%%%%%%%%%%%%%%%%%%%%%%%%%%%%%%%%%%%%%%%%%%%%%%%%%%%%%%%%
%%%%%%%%%%%%%%%%%%%%%%%%%%%%%%%%%%%%%%%%%%%%%%%%%%%%%%%%%%%%%%%%%%%%%%%%%%%%

The directly comparable one-step certificates satisfy the quantitative chain
\[
   \alpha_{\rm Doeb}(\Phi)
   \le
   \alpha_{\rm MD}(\Phi),
   \qquad
   \kappa_{\rm tr}(\Phi)
   \le
   \eta_{\rm MD}(\Phi)
   =
   1-\alpha_{\rm MD}(\Phi)
   \le
   1-\alpha_{\rm Doeb}(\Phi).
\]
Equivalently,
\[
   \alpha_{\rm Doeb}(\Phi)>0
   \Longrightarrow
   \alpha_{\rm MD}(\Phi)>0
   \Longrightarrow
   \kappa_{\rm tr}(\Phi)<1.
\]
Both implications are proper.
\Cref{ex:cp-vs-state-doeblin} shows that state-level Markov--Dobrushin minorization can hold when CP-order Doeblin minorization fails.
\Cref{ex:amplitude} shows that strict trace contraction can hold when both one-step minorization criteria fail.

Projective and faithful-state routes give additional sufficient conditions.
For a positive cone-preserving map with finite Hilbert projective diameter,
\[
   \Delta(\Phi)<\infty
   \Longrightarrow
   \kappa_{\rm tr}(\Phi)
   \le
   \eta_{\rm HB}(\Phi)
   =
   \tanh\!\left(\Delta(\Phi)/4\right)
   <
   1.
\]
In the faithful Riemannian setting where the comparison theorem applies,
\[
   \eta^{\rm Tr}(\Phi)
   \le
   \sqrt{\eta_{\rm Riem,\mathfrak m}(\Phi)}.
\]
Hence, by \(\eta^{\rm Tr}(\Phi)=\kappa_{\rm tr}(\Phi)\),
\[
   \eta_{\rm Riem,\mathfrak m}(\Phi)<1
   \Longrightarrow
   \kappa_{\rm tr}(\Phi)<1.
\]
These routes are not part of the same linear inclusion chain.
Projective coefficients are adapted to cone-preserving positive maps, especially non-trace-preserving ones.
Entropy, Riemannian, and \(\chi^2\) coefficients are adapted to faithful-state geometries.
Quantum Wasserstein--Dobrushin coefficients are adapted to spatial many-body dynamics.
Diamond variants are useful only after specifying how reference systems are constrained.

For products, the intrinsic object is the block coefficient
\[
   \kappa_{t:s}
   :=
   \kappa_{\rm tr}(\Phi_{t:s}).
\]
By \Cref{prop:dobr},
\[
   \kappa_{t:s}<1
   \Longleftrightarrow
   \sup_{\substack{\rho,\sigma\in\mcS_d\\ \rho\ne\sigma}}
   \frac{
      \norm{\Phi_{t:s}(\rho)-\Phi_{t:s}(\sigma)}_1
   }{
      \norm{\rho-\sigma}_1
   }
   <1.
\]
Thus \(\kappa_{t:s}<1\) is exactly strict trace contraction on state differences for the whole block.
This block-level statement is the sharp trace-memory-loss criterion used in the rest of the paper.

The one-step part of the hierarchy is
\[
   \left\{\Phi:\alpha_{\rm Doeb}(\Phi)>0\right\}
   \subsetneq
   \left\{\Phi:\alpha_{\rm MD}(\Phi)>0\right\}
   \subsetneq
   \left\{\Phi:\kappa_{\rm tr}(\Phi)<1\right\}.
\]
At the product level, this is further enlarged:
\[
   \left\{\text{one-step strict trace contraction}\right\}
   \subsetneq
   \left\{\text{finite-block strict trace contraction}\right\}.
\]
The first line compares one-step certificates for a single channel.
The second line is a product-level statement.
It allows loss of memory to arise only after composing several channels.
\Cref{ex:alternating-dephasing} gives a two-step block with
\[
   \kappa_{\rm tr}(D_Z)=\kappa_{\rm tr}(D_X)=1,
   \qquad
   \kappa_{\rm tr}(D_X\circ D_Z)=0.
\]

%%%%%%%%%%%%%%%%%%%%%%%%%%%%%%%%%%%%%%%%%%%%%%%%%%%%%%%%%%%%%%%%%%%%%%%%%%%%
%%%%%%%%%%%%%%%%%%%%%%%%%%%%%%%%%%%%%%%%%%%%%%%%%%%%%%%%%%%%%%%%%%%%%%%%%%%%

\section{Bistochastic Hilbert--Schmidt contraction}
\label{subsec:bistochastic-hs-gaps}

%%%%%%%%%%%%%%%%%%%%%%%%%%%%%%%%%%%%%%%%%%%%%%%%%%%%%%%%%%%%%%%%%%%%%%%%%%%%
%%%%%%%%%%%%%%%%%%%%%%%%%%%%%%%%%%%%%%%%%%%%%%%%%%%%%%%%%%%%%%%%%%%%%%%%%%%%

We record a Hilbert--Schmidt sufficient criterion for a negative trace-Dobrushin exponent in the bistochastic case.
The criterion is based on contraction of the channel on the complex trace-zero Hilbert--Schmidt subspace.
Since every bistochastic channel fixes the maximally mixed state \(I/d\), such contraction forces random products to approach the completely depolarizing replacement channel
\[
   \Omega\left(X\right)
   :=
   \tr{X}\frac{I}{d}.
\]
Thus, the conclusion is depolarizing replacement toward \(I/d\), not merely dephasing.

This condition is the fixed-dimensional analogue of the second-singular-value condition that appears in the quantum-expander literature.
In that literature, one often studies asymptotic families of channels on growing matrix algebras, together with Kraus-rank sparsity, Hilbert--Schmidt contraction, and fixed-state entropy conditions.
Here, the matrix algebra is fixed, and no Kraus-rank sparsity assumption is imposed.
We only use the trace-zero Hilbert--Schmidt contraction as a sufficient condition for trace-norm replacement mixing of random bistochastic products.
Averages of random unitaries and unitary-design constructions provide important sources of unital Hilbert--Schmidt contraction; see \cite{Hastings2007,Lancien2024}.
For the broader quantum-expander framework and recent random-channel constructions, see \cite{LancienYoussef2023}.
Related transfer-operator contraction estimates also appear in random MPS and PEPS correlation-length estimates; see \cite{LancienPerezGarcia2022}.

Throughout this subsection, write
\[
   \norm{X}_2
   :=
   \tr{X^*X}^{1/2}
\]
for the Hilbert--Schmidt norm.
We also introduce the complex trace-zero space
\[
   \mathsf H_0^{\mbC}
   :=
   \left\{
      X\in\matrices:
      \tr{X}=0
   \right\}.
\]
The symbol \(\tracezero\) continues to denote the real self-adjoint trace-zero space.

\begin{definition}[Bistochastic Hilbert--Schmidt contraction coefficient]
    Let \(T:\matrices\to\matrices\) be bistochastic, meaning CPTP and unital.
    Define
    \[
       s_0\left(T\right)
       :=
       \sup_{X\in\mathsf H_0^{\mbC}\setminus\left\{0\right\}}
       \frac{\norm{T\left(X\right)}_2}{\norm{X}_2}.
    \]
    We call \(s_0\left(T\right)\) the trace-zero Hilbert--Schmidt contraction coefficient of \(T\).
    Equivalently, \(s_0\left(T\right)\) is the largest singular value of the restriction \(T|_{\mathsf H_0^{\mbC}}\) with respect to the Hilbert--Schmidt inner product.
    We say that \(T\) is Hilbert--Schmidt contractive on the trace-zero subspace if
    \[
       s_0\left(T\right)<1.
    \]
    In this case, \(1-s_0\left(T\right)\) is the associated trace-zero Hilbert--Schmidt singular-value gap.
\end{definition}

For every bistochastic channel \(T\), one has \(0\le s_0\left(T\right)\le1\).
Indeed, complete positivity and unitality give the Schwarz inequality
\[
   T\left(X\right)^*T\left(X\right)
   \le
   T\left(X^*X\right),
\]
and trace preservation gives
\[
   \norm{T\left(X\right)}_2^2
   =
   \tr{T\left(X\right)^*T\left(X\right)}
   \le
   \tr{T\left(X^*X\right)}
   =
   \tr{X^*X}
   =
   \norm{X}_2^2.
\]
Thus \(T\) is Hilbert--Schmidt nonexpansive, and in particular \(s_0\left(T\right)\le1\).
Moreover, \(T\mapsto s_0\left(T\right)\) is continuous on the set of bistochastic maps.
This follows because \(T\mapsto T|_{\mathsf H_0^{\mbC}}\) is a continuous restriction map and the Hilbert--Schmidt operator norm is continuous in finite dimension.
Hence, if \(\omega\mapsto\Phi_\omega\) is a measurable assignment of bistochastic channels, then
\[
   \omega\mapsto s_0\left(\Phi_\omega\right)
\]
is measurable.
We use the convention \(\log0=-\infty\).

\begin{prop}
\label{prop:hs-contraction-controls-kappa}
    Assume that \(\Phi_\omega\) is bistochastic almost surely.
    Then the constant random state
    \(
       \rho_\omega
       =
       I/d
    \)
    is dynamically stationary in the sense of \Cref{def:dynamically-stationary-random-state}.
    Moreover, for every \(n\ge1\),
    \[
       \kappa_{\omega;n:0}
       \le
       \sqrt d
       \prod_{j=0}^{n-1}
       s_0\left(\Phi_{\theta^j\omega}\right).
    \]
    Consequently,
    \[
       \lambda_{\rm tr}\left(\omega\right)
       \le
       \limsup_{n\to\infty}
       \frac{1}{n}
       \sum_{j=0}^{n-1}
       \log s_0\left(\Phi_{\theta^j\omega}\right)
    \]
    whenever the right-hand side is defined in \(\left[-\infty,0\right]\).
\end{prop}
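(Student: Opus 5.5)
The plan has three parts: stationarity, the norm-comparison bound on \(\kappa\), and the Lyapunov consequence. \textbf{Stationarity} is immediate. Each \(\Phi_\omega\) is unital, so \(\Phi_\omega(I/d)=I/d=\rho_{\theta\omega}\) on the full-measure set where \(\Phi_\omega\) is bistochastic. The constant map is trivially measurable, and \Cref{def:dynamically-stationary-random-state} is satisfied.

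\textbf{The bound on \(\kappa_{\omega;n:0}\).} Every CPTP map preserves \(\mathsf H_0^{\mbC}\), so each factor \(\Phi_{\theta^j\omega}\) maps \(\mathsf H_0^{\mbC}\) into itself. Submultiplicativity of operator norms on that subspace then gives
\[
\norm{\Phi_{\omega;n:0}(X)}_2\le \prod_{j=0}^{n-1}s_0(\Phi_{\theta^j\omega})\,\norm{X}_2,\qquad X\in\mathsf H_0^{\mbC}.
\]
To pass to \(\kappa_{\rm tr}\), take \(X\in\tracezero\) and use the standard norm comparisons on \(\matrices\):
\[
\norm{Y}_1\le\sqrt d\,\norm{Y}_2,\qquad \norm{X}_2\le\norm{X}_1 .
\]
Combining these yields
\[
\norm{\Phi_{\omega;n:0}(X)}_1\le\sqrt d\prod_j s_0(\Phi_{\theta^j\omega})\norm{X}_1 .
\]
Taking the supremum over nonzero \(X\in\tracezero\) gives the claimed bound. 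Measurability is not an issue here: the bound holds pointwise on the full-measure set where all \(\Phi_{\theta^j\omega}\) are bistochastic, and that set is a countable intersection of full-measure sets.

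\textbf{The Lyapunov consequence.} Take logarithms, with \(\log 0=-\infty\):
\[
\frac1n\log\kappa_{\omega;n:0}\le\frac{\log\sqrt d}{n}+\frac1n\sum_{j=0}^{n-1}\log s_0(\Phi_{\theta^j\omega}).
\]
By \Cref{thm:random-lyapunov} the left side converges almost surely to \(\lambda_{\rm tr}(\omega)\). The term \((\log\sqrt d)/n\) tends to \(0\), so taking \(\limsup\) on the right gives the inequality. If some \(s_0(\Phi_{\theta^j\omega})=0\), then \(\kappa_{\omega;m:0}=0\) for all \(m>j\), and both sides equal \(-\infty\), consistent with the convention.

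\textbf{Expected difficulty.} The main point needing care is the norm-comparison step, since \(s_0\) is defined on the complex trace-zero space while \(\kappa_{\rm tr}\) lives on the real self-adjoint one. This is harmless: \(\tracezero\subset\mathsf H_0^{\mbC}\), so restricting to \(X\in\tracezero\) only uses the bound on a subspace. The factor \(\sqrt d\) comes solely from the Cauchy--Schwarz inequality \(\norm{Y}_1\le\sqrt d\norm{Y}_2\), together with \(\norm{X}_2\le\norm{X}_1\).
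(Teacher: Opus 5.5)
Your proposal is correct and follows essentially the same argument as the paper. Stationarity of \(I/d\) comes from unitality; the bound on \(\kappa_{\omega;n:0}\) comes from invariance of the complex trace-zero space together with \(\norm{Y}_1\le\sqrt d\,\norm{Y}_2\) and \(\norm{X}_2\le\norm{X}_1\); and the exponent estimate follows by taking logarithms and letting \(n\to\infty\), with your remarks on the full-measure set and the case \(s_0=0\) only making explicit what the paper leaves implicit.
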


\begin{proof}
    Since each \(\Phi_\omega\) is unital and trace preserving,
    \[
       \Phi_\omega\left(I/d\right)
       =
       I/d.
    \]
    Hence, the constant random state \(\rho_\omega=I/d\) is stationary.
    Let \(X\in\tracezero\setminus\left\{0\right\}\).
    Then \(X\in\mathsf H_0^{\mbC}\).
    By the comparison between trace norm and Hilbert--Schmidt norm,
    \[
       \norm{\Phi_{\omega;n:0}\left(X\right)}_1
       \le
       \sqrt d\,
       \norm{\Phi_{\omega;n:0}\left(X\right)}_2.
    \]
    The complex trace-zero space \(\mathsf H_0^{\mbC}\) is invariant under every trace-preserving map.
    Therefore,
    \[
       \norm{\Phi_{\omega;n:0}\left(X\right)}_2
       \le
       \left(
          \prod_{j=0}^{n-1}
          s_0\left(\Phi_{\theta^j\omega}\right)
       \right)
       \norm{X}_2.
    \]
    Since \(\norm{X}_2\le\norm{X}_1\), we obtain
    \[
       \norm{\Phi_{\omega;n:0}\left(X\right)}_1
       \le
       \sqrt d
       \left(
          \prod_{j=0}^{n-1}
          s_0\left(\Phi_{\theta^j\omega}\right)
       \right)
       \norm{X}_1.
    \]
    Taking the supremum over \(X\in\tracezero\setminus\left\{0\right\}\) gives the displayed bound on \(\kappa_{\omega;n:0}\).
    With the convention \(\log0=-\infty\), this bound implies
    \[
       \frac{1}{n}\log\kappa_{\omega;n:0}
       \le
       \frac{1}{n}\log\sqrt d
       +
       \frac{1}{n}
       \sum_{j=0}^{n-1}
       \log s_0\left(\Phi_{\theta^j\omega}\right).
    \]
    Letting \(n\to\infty\) gives the exponent estimate, because \(n^{-1}\log\sqrt d\to0\).
\end{proof}

The preceding proposition turns Hilbert--Schmidt contraction along the random product into an upper bound on the trace-Dobrushin exponent.

\begin{theorem}
\label{thm:bistochastic-hs-contraction-criterion}
    Assume that \(\theta\) is ergodic and that \(\Phi_\omega\) is bistochastic almost surely.
    If \(\mbE\left[
          \log s_0\left(\Phi_\omega\right)
       \right]
       <
       0
    \)
    in the extended nonpositive sense, then
    \[
       \lambda_{\rm tr}
       \le
       \mbE\left[
          \log s_0\left(\Phi_\omega\right)
       \right]
       <
       0.
    \]
    Consequently, the cocycle has quenched exponential replacement mixing with the constant stationary random state \(\rho_\omega
       =
       I/d\).
    Equivalently, the forward products converge exponentially to the completely depolarizing replacement channel
    \[
       R_{\rm unif}\left(X\right)
       :=
       \tr{X}\frac{I}{d}.
    \]
    In particular, for every deterministic \(\beta<0\) satisfying \(\lambda_{\rm tr}<\beta\), there is an a.s. finite measurable random variable \(C_\beta\left(\omega\right)\) such that, for every \(n\ge1\),
    \[
       \sup_{\sigma\in\mcS_d}
       \norm{
          \Phi_{\omega;n:0}\left(\sigma\right)-I/d
       }_1
       \le
       C_\beta\left(\omega\right)e^{\beta n},
    \]
    and
    \[
       \norm{
          \Phi_{\omega;n:0}-R_{\rm unif}
       }_{1\to1}
       \le
       C_\beta\left(\omega\right)e^{\beta n}.
    \]
\end{theorem}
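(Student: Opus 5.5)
The plan is to combine \Cref{prop:hs-contraction-controls-kappa} with Birkhoff's ergodic theorem, and then feed the resulting negative exponent into \Cref{thm:negative-exponent-replacement} and \Cref{prop:random-forward-replacement}.

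Set $g(\omega):=\log s_0(\Phi_\omega)\in[-\infty,0]$. This function is measurable by the continuity remark preceding \Cref{prop:hs-contraction-controls-kappa}. Since $g\le 0$, the pointwise ergodic theorem applies in its extended form to the nonnegative function $-g$. Under ergodicity this gives, almost surely,
\[
\frac1n\sum_{j=0}^{n-1}g(\theta^j\omega)\to \mbE[g],
\]
including the case $\mbE[g]=-\infty$. To handle that case I would truncate: apply Birkhoff to $\max\{g,-M\}$ and let $M\to\infty$. The limit then equals the $\limsup$ appearing in \Cref{prop:hs-contraction-controls-kappa}, so $\lambda_{\rm tr}(\omega)\le \mbE[g]<0$ almost surely. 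Because $\theta$ is ergodic, $\lambda_{\rm tr}$ is almost surely constant by \Cref{thm:random-lyapunov}, and this yields the stated inequality.

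Next, \Cref{thm:negative-exponent-replacement} gives a unique dynamically stationary random state. The constant field $I/d$ is stationary by \Cref{prop:hs-contraction-controls-kappa}, so uniqueness identifies $\rho_\omega=I/d$ almost surely. The replacement channel is then $R_{\theta^n\omega}=R_{\rm unif}$ for every $n$. Applying \Cref{prop:random-forward-replacement} to this stationary state gives
\[
\sup_\sigma\|\Phi_{\omega;n:0}(\sigma)-I/d\|_1\le 2\kappa_{\omega;n:0}
\quad\text{and}\quad
\|\Phi_{\omega;n:0}-R_{\rm unif}\|_{1\to1}\le 4\kappa_{\omega;n:0}.
\]

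For the rate, fix a deterministic $\beta$ with $\lambda_{\rm tr}<\beta<0$. Since $\frac1n\log\kappa_{\omega;n:0}\to\lambda_{\rm tr}$ almost surely, there is an almost surely finite $N(\omega)$ such that $\kappa_{\omega;n:0}\le e^{\beta n}$ for all $n\ge N(\omega)$. Define
\[
C_\beta(\omega):=4\max\Big\{1,\sup_{n\ge1}e^{-\beta n}\kappa_{\omega;n:0}\Big\},
\]
and set it to $1$ on the null set where it is infinite. This is measurable as a countable supremum, and it is almost surely finite because only finitely many terms can exceed $1$; this is the same argument used in the proof of \Cref{thm:negative-exponent-replacement}. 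Both displayed estimates then follow from the bounds of the previous paragraph.

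The main obstacle is the extended-valued Birkhoff step, where $\mbE[g]=-\infty$ may occur or $s_0$ may vanish on a set of positive measure. The truncation argument settles it: for each $M$ the $\limsup$ is at most $\mbE[\max\{g,-M\}]$, and this tends to $\mbE[g]$ by monotone convergence. The rest of the proof is bookkeeping of null sets.
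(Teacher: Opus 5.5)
Your proof is correct, but it turns the Hilbert--Schmidt product bound of \Cref{prop:hs-contraction-controls-kappa} into an exponent bound by a different mechanism than the paper.

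\textbf{The paper's route.} The argument is annealed and never invokes Birkhoff's theorem. It takes expectations in $\log\kappa_{\omega;n:0}\le\log\sqrt d+\sum_{j<n}g_M(\theta^j\omega)$. Using only the $\theta$-invariance of $\pr$, this gives $\mbE[\log\kappa_{\omega;n:0}]\le\log\sqrt d+n\,\mbE[g_M]$. It then applies the variational formula $\lambda_{\rm tr}=\inf_{n}\frac1n\mbE[\log\kappa_{\omega;n:0}]$ from \Cref{thm:random-lyapunov}, and lets $M\to\infty$ by monotone convergence exactly as you do.

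\textbf{Your route.} Your argument is quenched. Birkhoff applied to the truncations $g_M$ controls the pathwise $\limsup$ in \Cref{prop:hs-contraction-controls-kappa}. Ergodicity is used only to make the Birkhoff limit and $\lambda_{\rm tr}$ deterministic.

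\textbf{Trade-offs.} The paper's version transfers verbatim to \Cref{cor:block-hs-contraction-criterion}, where $\theta^L$ need not be ergodic. There a pathwise Birkhoff argument along $\theta^L$ would produce a conditional expectation on the $\theta^L$-invariant $\sigma$-algebra. You would then need one extra integration step, using that $\lambda_{\rm tr}$ is a.s.\ constant. Your version avoids Kingman's variational formula and yields the almost-sure bound directly.

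\textbf{The rate step.} Your explicit construction of $C_\beta$ from $\sup_{n}e^{-\beta n}\kappa_{\omega;n:0}$, for the prescribed deterministic $\beta$, is more self-contained than the paper's treatment. The paper appeals to \Cref{thm:negative-exponent-replacement}, whose $\beta$ is a specific choice made in its proof. It therefore implicitly relies on the remark following that proof, namely that any $\theta$-invariant $\beta$ with $\lambda_{\rm tr}<\beta<0$ works equally well.
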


\begin{proof}
    Set
    \[
       g\left(\omega\right)
       :=
       \log s_0\left(\Phi_\omega\right)
       \in
       \left[-\infty,0\right].
    \]
    For \(M\ge1\), define
    \[
       g_M\left(\omega\right)
       :=
       \max\left\{
          g\left(\omega\right),
          -M
       \right\}.
    \]
    Then \(g_M\in L^1\left(\pr\right)\), \(g\le g_M\le0\), and \(g_M\downarrow g\) pointwise as \(M\to\infty\).
    By \Cref{prop:hs-contraction-controls-kappa},
    \[
       \log\kappa_{\omega;n:0}
       \le
       \log\sqrt d
       +
       \sum_{j=0}^{n-1}
       g\left(\theta^j\omega\right)
       \le
       \log\sqrt d
       +
       \sum_{j=0}^{n-1}
       g_M\left(\theta^j\omega\right).
    \]
    Taking expectations gives
    \[
       \mbE\left[
          \log\kappa_{\omega;n:0}
       \right]
       \le
       \log\sqrt d
       +
       n\mbE\left[g_M\right].
    \]
    Since \(\theta\) is ergodic, Kingman's variational formula in \Cref{thm:random-lyapunov} gives
    \[
       \lambda_{\rm tr}
       =
       \inf_{n\ge1}
       \frac{1}{n}
       \mbE\left[
          \log\kappa_{\omega;n:0}
       \right].
    \]
    Therefore
    \[
       \lambda_{\rm tr}
       \le
       \inf_{n\ge1}
       \left(
          \frac{1}{n}\log\sqrt d
          +
          \mbE\left[g_M\right]
       \right)
       =
       \mbE\left[g_M\right].
    \]
    Letting \(M\to\infty\) and using monotone convergence for \(-g_M\uparrow -g\), we obtain
    \[
       \lambda_{\rm tr}
       \le
       \mbE\left[g\right]
       =
       \mbE\left[
          \log s_0\left(\Phi_\omega\right)
       \right].
    \]
    The assumption \(\mbE\left[g\right]<0\) gives \(\lambda_{\rm tr}<0\).

    The constant random state \(\rho_\omega=I/d\) is stationary because every \(\Phi_\omega\) is unital and trace preserving.
    The uniqueness part of \Cref{thm:negative-exponent-replacement} identifies this constant random state with the stationary random state produced by the negative-exponent theorem.
    The state-level exponential estimate follows from \Cref{thm:negative-exponent-replacement}.
    The \(1\to1\) replacement-channel estimate follows from the operator-norm part of \Cref{thm:negative-exponent-replacement}, after increasing \(C_\beta\left(\omega\right)\) if necessary.
\end{proof}

The preceding argument only uses Hilbert--Schmidt contraction of the relevant product.
Therefore, the same criterion applies when such contraction appears first after forming fixed-length blocks.

\begin{cor}
\label{cor:block-hs-contraction-criterion}
    Assume that \(\theta\) is ergodic and that \(\Phi_\omega\) is bistochastic almost surely.
    For \(L\ge1\), define \(s_{0,L}\left(\omega\right)
       :=
       s_0\left(\Phi_{\omega;L:0}\right).
    \)
    If
    \[
       \mbE\left[
          \log s_{0,L}
       \right]
       <
       0
    \]
    for some \(L\ge1\), in the extended nonpositive sense, then \(\lambda_{\rm tr}< 0\).
    More precisely,
    \[
       \lambda_{\rm tr}
       \le
       \frac{1}{L}
       \mbE\left[
          \log s_{0,L}
       \right]
       <
       0.
    \]
    Consequently, the cocycle has quenched exponential replacement mixing with the constant stationary random state
    \[
       \rho_\omega
       =
       I/d.
    \]
    Equivalently, the forward products converge exponentially to the completely depolarizing replacement channel
    \[
       R_{\rm unif}\left(X\right)
       :=
       \tr{X}\frac{I}{d}.
    \]
\end{cor}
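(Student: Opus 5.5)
The plan is to reduce the block statement to \Cref{thm:bistochastic-hs-contraction-criterion}, and its proof, by working with the $L$-step blocks $\Phi_{\omega;L:0}$ instead of single steps. The first step is a pathwise bound in the spirit of \Cref{prop:hs-contraction-controls-kappa}. For $n=kL$ I would write
\[
\Phi_{\omega;kL:0}=\Phi_{\theta^{(k-1)L}\omega;L:0}\circ\cdots\circ\Phi_{\omega;L:0}.
\]
Each block is bistochastic and maps $\mathsf H_0^{\mbC}$ into itself. Combining this with $\norm{\cdot}_1\le\sqrt d\norm{\cdot}_2$ and $\norm{\cdot}_2\le\norm{\cdot}_1$, as in that proposition, gives
\[
\kappa_{\omega;kL:0}\le\sqrt d\prod_{j=0}^{k-1}s_{0,L}\left(\theta^{jL}\omega\right).
\]
Measurability of $s_{0,L}$ follows from continuity of $s_0$ and measurability of $\omega\mapsto\Phi_{\omega;L:0}$.

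The second step is to take expectations. I would truncate $g:=\log s_{0,L}$ at $g_M:=\max\{g,-M\}$ exactly as in the proof of \Cref{thm:bistochastic-hs-contraction-criterion}. Since $\theta^{jL}$ preserves $\pr$, this yields
\[
\mbE\left[\log\kappa_{\omega;kL:0}\right]\le\log\sqrt d+k\,\mbE[g_M].
\]
Ergodicity of $\theta$ lets me use the variational formula of \Cref{thm:random-lyapunov}. Because the infimum runs over all $n\ge1$, I may restrict it to $n=kL$, which gives
\[
\lambda_{\rm tr}\le\inf_{k}\left(\frac{\log\sqrt d}{kL}+\frac1L\mbE[g_M]\right)=\frac1L\mbE[g_M].
\]
Letting $M\to\infty$ by monotone convergence then gives $\lambda_{\rm tr}\le\frac1L\mbE[\log s_{0,L}]<0$. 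Note that $\theta^L$ itself need not be ergodic. That causes no difficulty: I only use that $\theta^L$ preserves $\pr$, and the variational formula is taken for $\theta$.

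The third step is the conclusion. With $\lambda_{\rm tr}<0$ established, \Cref{thm:negative-exponent-replacement} produces a unique stationary random state together with the exponential forward estimates. Every $\Phi_\omega$ is unital, so $I/d$ is stationary; by uniqueness it is the stationary state, and the replacement channel is $R_{\rm unif}$.

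The only step needing care is the second. It requires handling the extended-valued expectation when $s_{0,L}=0$ with positive probability, which the truncation argument takes care of, and it requires correctly passing from the subsequence $n=kL$ to the full infimum, which the variational formula takes care of.
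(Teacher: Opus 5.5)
Your proposal is correct and follows the paper's proof essentially step for step. It uses the same block factorization to get $\kappa_{\omega;kL:0}\le\sqrt d\prod_{j=0}^{k-1}s_{0,L}(\theta^{jL}\omega)$, the same truncation $\max\{\log s_{0,L},-M\}$, the same restriction of Kingman's variational infimum to $n=kL$ (including the remark that ergodicity of $\theta^L$ is not needed), and the same identification of the stationary state as $I/d$ via uniqueness in \Cref{thm:negative-exponent-replacement}.
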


\begin{proof}
    Set
    \[
       h\left(\omega\right)
       :=
       \log s_{0,L}\left(\omega\right)
       \in
       \left[-\infty,0\right].
    \]
    For \(M\ge1\), define
    \[
       h_M\left(\omega\right)
       :=
       \max\left\{
          h\left(\omega\right),
          -M
       \right\}.
    \]
    Then \(h_M\in L^1\left(\pr\right)\), \(h\le h_M\le0\), and \(h_M\downarrow h\) pointwise as \(M\to\infty\).

    Fix \(q\ge1\).
    The product \(\Phi_{\omega;qL:0}\) decomposes into the \(q\) bistochastic blocks
    \[
       \Phi_{\omega;qL:0}
       =
       \Phi_{\theta^{\left(q-1\right)L}\omega;L:0}
       \circ
       \cdots
       \circ
       \Phi_{\theta^L\omega;L:0}
       \circ
       \Phi_{\omega;L:0}.
    \]
    Repeating the Hilbert--Schmidt estimate from \Cref{prop:hs-contraction-controls-kappa} at the block level gives
    \[
       \kappa_{\omega;qL:0}
       \le
       \sqrt d
       \prod_{r=0}^{q-1}
       s_{0,L}\left(\theta^{rL}\omega\right).
    \]
    Hence, with the convention \(\log0=-\infty\),
    \[
       \log\kappa_{\omega;qL:0}
       \le
       \log\sqrt d
       +
       \sum_{r=0}^{q-1}
       h\left(\theta^{rL}\omega\right)
       \le
       \log\sqrt d
       +
       \sum_{r=0}^{q-1}
       h_M\left(\theta^{rL}\omega\right).
    \]
    Taking expectations and using stationarity of \(\theta\) gives
    \[
       \mbE\left[
          \log\kappa_{\omega;qL:0}
       \right]
       \le
       \log\sqrt d
       +
       q\mbE\left[h_M\right].
    \]
    No ergodicity of \(\theta^L\) is needed for this estimate.
    By Kingman's variational formula in \Cref{thm:random-lyapunov},
    \[
       \lambda_{\rm tr}
       \le
       \frac{1}{qL}
       \mbE\left[
          \log\kappa_{\omega;qL:0}
       \right].
    \]
    Therefore
    \[
       \lambda_{\rm tr}
       \le
       \frac{1}{qL}\log\sqrt d
       +
       \frac{1}{L}
       \mbE\left[h_M\right].
    \]
    Letting \(q\to\infty\) gives
    \[
       \lambda_{\rm tr}
       \le
       \frac{1}{L}
       \mbE\left[h_M\right].
    \]
    Letting \(M\to\infty\) and using monotone convergence for \(-h_M\uparrow -h\), we obtain
    \[
       \lambda_{\rm tr}
       \le
       \frac{1}{L}
       \mbE\left[h\right]
       =
       \frac{1}{L}
       \mbE\left[
          \log s_{0,L}
       \right].
    \]
    The assumed negativity of this expectation gives \(\lambda_{\rm tr}<0\).
    The replacement-mixing conclusions follow from \Cref{thm:negative-exponent-replacement}.
    The stationary random state is \(\rho_\omega=I/d\) because every block and every one-step channel is bistochastic.
\end{proof}

%%%%%%%%%%%%%%%%%%%%%%%%%%%%%%%%%%%%%%%%%%%%%%%%%%%%%%%%%%%%%%%%%%%%%%%%%%%%
%%%%%%%%%%%%%%%%%%%%%%%%%%%%%%%%%%%%%%%%%%%%%%%%%%%%%%%%%%%%%%%%%%%%%%%%%%%

\end{appendix}

%%%%%%%%%%%%%%%%%%%%%%%%%%%%%%%%%%%%%%%%%%%%%%%%%%%%%%%%%%%%%%%%%%%%%%%%%%%%
%%%%%%%%%%%%%%%%%%%%%%%%%%%%%%%%%%%%%%%%%%%%%%%%%%%%%%%%%%%%%%%%%%%%%%%%%%%

\addcontentsline{toc}{section}{Bibliography}
\printbibliography

@Article{Stinespring1955,
  author    = {Stinespring, W. Forrest},
  journal   = {Proceedings of the American Mathematical Society},
  title     = {Positive Functions on C*-Algebras},
  year      = {1955},
  issn      = {0002-9939},
  month     = Apr,
  number    = {2},
  pages     = {211},
  volume    = {6},
  doi       = {10.2307/2032342},
  publisher = {JSTOR},
}

@Article{Choi1975,
  author    = {Choi, Man-Duen},
  journal   = {Linear Algebra and its Applications},
  title     = {Completely positive linear maps on complex matrices},
  year      = {1975},
  issn      = {0024-3795},
  month     = jun,
  number    = {3},
  pages     = {285--290},
  volume    = {10},
  doi       = {10.1016/0024-3795(75)90075-0},
  publisher = {Elsevier BV},
}

@book{Kraus1983, 
    title={States, Effects, and Operations Fundamental Notions of Quantum Theory}, 
    author={Kraus, Karl and B{\"o}hm, Arno and Dollard, John D and Wootters, WH}, 
    ISBN={9783540387251}, 
    url={http://dx.doi.org/10.1007/3-540-12732-1}, 
    DOI={10.1007/3-540-12732-1}, 
    journal={Lecture Notes in Physics}, 
    publisher={Springer Berlin Heidelberg}, year={1983} 
}

@Book{NielsenChuang2000,
  author    = {Nielsen, Michael A. and Chuang, Isaac L.},
  publisher = {Cambridge University Press},
  title     = {Quantum Computation and Quantum Information: 10th Anniversary Edition},
  year      = {2012},
  isbn      = {9780511976667},
  month     = Jun,
  doi       = {10.1017/cbo9780511976667},
}

@Book{Watrous2018,
  author    = {Watrous, John},
  publisher = {Cambridge University Press},
  title     = {The Theory of Quantum Information},
  year      = {2018},
  isbn      = {9781107180567},
  month     = apr,
  doi       = {10.1017/9781316848142},
}

@Article{CarbonePautrat2016,
  author    = {Carbone, Raffaella and Pautrat, Yan},
  journal   = {Reports on Mathematical Physics},
  title     = {Irreducible Decompositions and Stationary States of Quantum Channels},
  year      = {2016},
  issn      = {0034-4877},
  month     = Jun,
  number    = {3},
  pages     = {293--313},
  volume    = {77},
  doi       = {10.1016/s0034-4877(16)30032-5},
  publisher = {Elsevier BV},
}

@Article{BurgarthEtAl2013,
  author    = {Burgarth, D and Chiribella, G and Giovannetti, V and Perinotti, P and Yuasa, K},
  journal   = {New Journal of Physics},
  title     = {Ergodic and mixing quantum channels in finite dimensions},
  year      = {2013},
  issn      = {1367-2630},
  month     = Jul,
  number    = {7},
  pages     = {073045},
  volume    = {15},
  doi       = {10.1088/1367-2630/15/7/073045},
  publisher = {IOP Publishing},
}

@Article{FannesNachtergaeleWerner1992,
  author    = {Fannes, M. and Nachtergaele, B. and Werner, R. F.},
  journal   = {Communications in Mathematical Physics},
  title     = {Finitely correlated states on quantum spin chains},
  year      = {1992},
  issn      = {1432-0916},
  month     = Mar,
  number    = {3},
  pages     = {443--490},
  volume    = {144},
  doi       = {10.1007/bf02099178},
  publisher = {Springer Science and Business Media LLC},
}

@Article{EvansHoeghKrohn1978,
  author    = {David E. Evans and Raphael Høegh-Krohn},
  journal   = {Journal of the London Mathematical Society},
  title     = {Spectral Properties of Positive Maps on C*-Algebras},
  year      = {1978},
  issn      = {0024-6107},
  number    = {2},
  pages     = {345-355},
  volume    = {s2-17},
  doi       = {10.1112/jlms/s2-17.2.345},
  publisher = {Wiley},
  url       = {https://api.semanticscholar.org/CorpusID:56109123},
}

@Article{HiaiRuskai2016,
  author    = {Hiai, Fumio and Ruskai, Mary Beth},
  journal   = {Journal of Mathematical Physics},
  title     = {Contraction coefficients for noisy quantum channels},
  year      = {2015},
  issn      = {1089-7658},
  month     = Dec,
  number    = {1},
  volume    = {57},
  doi       = {10.1063/1.4936215},
  publisher = {AIP Publishing},
}

@InProceedings{Hirche2024,
  author    = {Hirche, Christoph},
  booktitle = {2024 IEEE International Symposium on Information Theory (ISIT)},
  title     = {Quantum Doeblin Coefficients: A Simple Upper Bound on Contraction Coefficients},
  year      = {2024},
  month     = Jul,
  pages     = {557--562},
  publisher = {IEEE},
  doi       = {10.1109/isit57864.2024.10619667},
}

@Article{Birkhoff1957,
  author    = {Birkhoff, Garrett},
  journal   = {Transactions of the American Mathematical Society},
  title     = {Extensions of Jentzsch’s Theorem},
  year      = {1957},
  issn      = {0002-9947},
  month     = May,
  number    = {1},
  pages     = {219},
  volume    = {85},
  doi       = {10.2307/1992971},
  publisher = {JSTOR},
}

@Article{Bushell1973,
  author    = {Bushell, P. J.},
  journal   = {Archive for Rational Mechanics and Analysis},
  title     = {Hilbert’s metric and positive contraction mappings in a Banach space},
  year      = {1973},
  issn      = {1432-0673},
  number    = {4},
  pages     = {330--338},
  volume    = {52},
  doi       = {10.1007/bf00247467},
  publisher = {Springer Science and Business Media LLC},
}

@Article{ReebKastoryanoWolf2011,
  author    = {Reeb, David and Kastoryano, Michael J. and Wolf, Michael M.},
  journal   = {Journal of Mathematical Physics},
  title     = {Hilbert’s projective metric in quantum information theory},
  year      = {2011},
  issn      = {1089-7658},
  month     = Aug,
  number    = {8},
  volume    = {52},
  doi       = {10.1063/1.3615729},
  publisher = {AIP Publishing},
}

@Article{BakshiLiuMoitraTang2025,
  author    = {Bakshi, Ainesh and Liu, Allen and Moitra, Ankur and Tang, Ewin},
  title     = {A Dobrushin condition for quantum Markov chains: Rapid mixing and conditional mutual information at high temperature},
  year      = {2025},
  copyright = {Creative Commons Attribution 4.0 International},
  doi       = {10.48550/ARXIV.2510.08542},
  publisher = {arXiv},
}

@Article{Kingman1973,
  author    = {Kingman, J. F. C.},
  journal   = {The Annals of Probability},
  title     = {Subadditive Ergodic Theory},
  year      = {1973},
  issn      = {0091-1798},
  month     = Dec,
  number    = {6},
  volume    = {1},
  doi       = {10.1214/aop/1176996798},
  publisher = {Institute of Mathematical Statistics},
}

@Article{Dobrushin1956,
  author    = {Dobrushin, R. L.},
  journal   = {Theory of Probability \&; Its Applications},
  title     = {Central Limit Theorem for Nonstationary Markov Chains. I},
  year      = {1956},
  issn      = {1095-7219},
  month     = Jan,
  number    = {1},
  pages     = {65--80},
  volume    = {1},
  doi       = {10.1137/1101006},
  publisher = {Society for Industrial \& Applied Mathematics (SIAM)},
}

@Article{MovassaghSchenker2021,
  author    = {Movassagh, Ramis and Schenker, Jeffrey},
  journal   = {Physical Review X},
  title     = {Theory of Ergodic Quantum Processes},
  year      = {2021},
  issn      = {2160-3308},
  number    = {4},
  pages     = {041001},
  volume    = {11},
  doi       = {10.1103/physrevx.11.041001},
  publisher = {American Physical Society (APS)},
}

@Article{MovassaghSchenker2022,
  author    = {Movassagh, Ramis and Schenker, Jeffrey},
  journal   = {Communications in Mathematical Physics},
  title     = {An Ergodic Theorem for Quantum Processes with Applications to Matrix Product States},
  year      = {2022},
  issn      = {1432-0916},
  month     = jul,
  number    = {3},
  pages     = {1175--1196},
  volume    = {395},
  doi       = {10.1007/s00220-022-04448-0},
  publisher = {Springer Science and Business Media LLC},
}

@Article{Souissi2026,
  author    = {Souissi, Abdessatar},
  title     = {Ergodic Theory of Inhomogeneous Quantum Processes},
  year      = {2025},
  copyright = {Creative Commons Attribution 4.0 International},
  doi       = {10.48550/ARXIV.2506.12280},
  publisher = {arXiv},
}

@Article{PerezGarciaVerstraeteWolfCirac2007,
  author    = {Perez-Garcia, D. and Verstraete, F. and Wolf, M.M. and Cirac, J.I.},
  journal   = {Quantum Information and Computation},
  title     = {Matrix product state representations},
  year      = {2007},
  issn      = {1533-7146},
  month     = Jul,
  number    = {5-6},
  pages     = {401--430},
  volume    = {7},
  doi       = {10.26421/qic7.5-6-1},
  publisher = {Rinton Press},
}

@Article{VerstraeteMurgCirac2008,
  author    = {Verstraete, F. and Murg, V. and Cirac, J.I.},
  journal   = {Advances in Physics},
  title     = {Matrix product states, projected entangled pair states, and variational renormalization group methods for quantum spin systems},
  year      = {2008},
  issn      = {1460-6976},
  month     = Mar,
  number    = {2},
  pages     = {143--224},
  volume    = {57},
  doi       = {10.1080/14789940801912366},
  publisher = {Informa UK Limited},
}

@Article{Dobrushin1970,
  author    = {Dobrushin, R. L.},
  journal   = {Theory of Probability \&; Its Applications},
  title     = {Prescribing a System of Random Variables by Conditional Distributions},
  year      = {1970},
  issn      = {1095-7219},
  month     = Jan,
  number    = {3},
  pages     = {458--486},
  volume    = {15},
  doi       = {10.1137/1115049},
  publisher = {Society for Industrial \& Applied Mathematics (SIAM)},
}

@Article{AccardiLuSouissi2021,
  author    = {Accardi, Luigi and Lu, Yun Gang and Souissi, Abdessatar},
  journal   = {Open Systems \&; Information Dynamics},
  title     = {A Markov–Dobrushin Inequality for Quantum Channels},
  year      = {2021},
  issn      = {1793-7191},
  month     = Dec,
  number    = {04},
  volume    = {28},
  doi       = {10.1142/s1230161221500189},
  publisher = {World Scientific Pub Co Pte Ltd},
}

@Article{SouissiBarhoumi2025,
  author    = {Souissi, Abdessatar and Barhoumi, Abdessatar},
  title     = {An Exponential Mixing Condition for Quantum Channels},
  year      = {2024},
  copyright = {arXiv.org perpetual, non-exclusive license},
  doi       = {10.48550/ARXIV.2407.16031},
  publisher = {arXiv},
}

@Article{Hastings2007,
  author    = {Hastings, M. B.},
  journal   = {Physical Review A},
  title     = {Random unitaries give quantum expanders},
  year      = {2007},
  issn      = {1094-1622},
  month     = Sep,
  number    = {3},
  pages     = {032315},
  volume    = {76},
  doi       = {10.1103/physreva.76.032315},
  publisher = {American Physical Society (APS)},
}

@Article{Lancien2024,
  author    = {Lancien, Cécilia},
  journal   = {International Mathematics Research Notices},
  title     = {Optimal Quantum (Tensor Product) Expanders From Unitary Designs},
  year      = {2026},
  issn      = {1687-0247},
  month     = Jan,
  number    = {2},
  volume    = {2026},
  doi       = {10.1093/imrn/rnaf383},
  publisher = {Oxford University Press (OUP)},
}

@Article{LancienPerezGarcia2022,
  author    = {Lancien, Cécilia and Pérez-García, David},
  journal   = {Annales Henri Poincaré},
  title     = {Correlation Length in Random MPS and PEPS},
  year      = {2021},
  issn      = {1424-0661},
  month     = Aug,
  number    = {1},
  pages     = {141--222},
  volume    = {23},
  doi       = {10.1007/s00023-021-01087-4},
  publisher = {Springer Science and Business Media LLC},
}

@Article{LancienYoussef2023,
  author    = {Lancien, Cécilia and Youssef, Pierre},
  journal   = {Letters in Mathematical Physics},
  title     = {A note on quantum expanders},
  year      = {2025},
  issn      = {1573-0530},
  month     = Nov,
  number    = {6},
  volume    = {115},
  doi       = {10.1007/s11005-025-02028-6},
  publisher = {Springer Science and Business Media LLC},
}

@Book{walters2000introduction,
  author    = {Walters, Peter},
  publisher = {Springer Science \& Business Media},
  title     = {An introduction to ergodic theory},
  year      = {2000},
  isbn      = {978-0-387-95152-2},
  volume    = {79},
}

@Article{bradley2005basic,
  author    = {Bradley, Richard C},
  journal   = {Probability Surveys},
  title     = {Basic Properties of Strong Mixing Conditions. A Survey and Some Open Questions},
  year      = {2005},
  issn      = {1549-5787},
  number    = {none},
  pages     = {107 -- 144},
  volume    = {2},
  doi       = {10.1214/154957805100000104},
  publisher = {Institute of Mathematical Statistics},
}

@book{bradley2007introduction,
   AUTHOR = {Bradley, Richard C.},
     TITLE = {Introduction to strong mixing conditions. {V}ol. 1},
 PUBLISHER = {Kendrick Press, Heber City, UT},
      YEAR = {2007},
     PAGES = {xviii+539},
      ISBN = {0-9740427-6-5},
   MRCLASS = {60-02 (37A25 37A50 60F05)},
  MRNUMBER = {2325294},
}

@Book{doukhan2012mixing,
  author    = {Doukhan, Paul},
  publisher = {Springer New York},
  title     = {Mixing: properties and examples},
  year      = {1994},
  isbn      = {9781461226420},
  volume    = {85},
  doi       = {10.1007/978-1-4612-2642-0},
  issn      = {0930-0325},
  journal   = {Lecture Notes in Statistics},
}

@book {doob1942stochastic,
    AUTHOR = {Doob, J. L.},
     TITLE = {Stochastic processes},
    SERIES = {Wiley Classics Library},
      NOTE = {Reprint of the 1953 original,
              A Wiley-Interscience Publication},
 PUBLISHER = {John Wiley \& Sons, Inc., New York},
      YEAR = {1990},
     PAGES = {viii+654},
      ISBN = {0-471-52369-0},
   MRCLASS = {60-02 (01A75 60-01 60Gxx 60Jxx)},
  MRNUMBER = {1038526},
}

@article{peligrad83noteon,
 ISSN = {00018678},
 URL = {http://www.jstor.org/stable/1426446},
 author = {Magda Peligrad},
 journal = {Advances in Applied Probability},
 number = {2},
 pages = {461--464},
 publisher = {Applied Probability Trust},
 title = {A Note on Two Measures of Dependence and Mixing Sequences},
 urldate = {2024-07-04},
 volume = {15},
 year = {1983},
DOI={https://doi.org/10.2307/1426446}
}

@Article{Souissi_2025,
  author    = {Souissi, Abdessatar and Barhoumi, Abdessatar},
  journal   = {Quantum Information Processing},
  title     = {An Exponential Mixing Condition for Quantum Channels: Application to Matrix Product States},
  year      = {2025},
  issn      = {1573-1332},
  month     = May,
  number    = {5},
  volume    = {24},
  doi       = {10.1007/s11128-025-04762-1},
  publisher = {Springer Science and Business Media LLC},
}

@Article{asym,
  author    = {Pathirana, Lubashan and Schenker, Jeffrey},
  journal   = {arXiv preprint arXiv:2509.08924},
  title     = {Asymptotic Behavior of Random Time-Inhomogeneous Markovian Quantum Dynamics},
  year      = {2025},
  copyright = {Creative Commons Attribution 4.0 International},
  doi       = {10.48550/ARXIV.2509.08924},
  publisher = {arXiv},
  URL       = {https://arxiv.org/abs/2509.08924},
}

@article{PS23,
    title = {{Law of large numbers and central limit theorem for ergodic quantum processes}},
    year = {2023},
    journal = {Journal of Mathematical Physics},
    author = {Pathirana, Lubashan and Schenker, Jeffrey},
    number = {8},
    month = {8},
    volume = {64},
    publisher = {AIP Publishing},
    url = {http://dx.doi.org/10.1063/5.0153483},
    doi = {10.1063/5.0153483},
    issn = {1089-7658}
}

@Article{MPS,
  author    = {Pathirana, Lubashan and Werner, Albert H.},
  journal   = {arXiv preprint arXiv:2510.07561},
  title     = {Correlation Lengths for Stochastic Matrix Product States},
  year      = {2025},
  copyright = {Creative Commons Attribution 4.0 International},
  doi       = {10.48550/ARXIV.2510.07561},
  publisher = {arXiv},
  URL       = {https://arxiv.org/abs/2510.07561},
}

@book{BratteliRobinson1997,
  author    = {Bratteli, Ola and Robinson, Derek W.},
  title     = {Operator Algebras and Quantum Statistical Mechanics:
               Equilibrium States. Models in Quantum Statistical Mechanics},
  edition   = {2},
  publisher = {Springer},
  year      = {1997},
  series    = {Theoretical and Mathematical Physics},
  doi       = {10.1007/978-3-662-03444-6}
}

@book{BratteliRobinson1987,
  author    = {Bratteli, Ola and Robinson, Derek W.},
  title     = {Operator Algebras and Quantum Statistical Mechanics 1:
               {C}*- and {W}*-Algebras, Symmetry Groups, Decomposition of States},
  edition   = {2},
  publisher = {Springer},
  year      = {1987},
  series    = {Texts and Monographs in Physics}
}

%%%%%%%%%%%%%%%%%%%%%%%%%%%%%%%%%%%%%%%%%%%%%%%%%%%%%%%%%%%%%%%%%%%%%%%%%%%
%%%%%%%%%%%%%%%%%%%%%%%%%%%%%%%%%%%%%%%%%%%%%%%%%%%%%%%%%%%%%%%%%%%%%%%%%%%

\end{document}